\providecommand{\tabularnewline}{\\}
\newenvironment{cellvarwidth}[1][t]
    {\begin{varwidth}[#1]{\linewidth}}
    {\@finalstrut\@arstrutbox\end{varwidth}}
\providecommand{\algorithmname}{Algorithm}
\def\RSthmtxt{theorem~}\newref{thm}{name = \RSthmtxt}}
\def\RSlemtxt{lemma~}\newref{lem}{name = \RSlemtxt}}
\theoremstyle{plain}
\newtheorem{thm}{\protect\theoremname}
\theoremstyle{definition}
\newtheorem{example}{\protect\examplename}
\theoremstyle{plain}
\newtheorem*{thm*}{\protect\theoremname}
\theoremstyle{plain}
\newtheorem{defn}[thm]{\protect\definitionname}
\theoremstyle{plain}
\theoremstyle{remark}
\theoremstyle{remark}
\newtheorem{rem}[thm]{\protect\remarkname}
\theoremstyle{plain}
\newtheorem{lem}[thm]{\protect\lemmaname}
\theoremstyle{remark}
\newtheorem{claim}[thm]{\protect\claimname}
\theoremstyle{plain}
\newtheorem{prop}[thm]{\protect\propositionname}
\theoremstyle{remark}
\newtheorem*{claim*}{\protect\claimname}
\theoremstyle{plain}
\theoremstyle{plain}
\definecolor{purple}{RGB}{0,0,0} %
\definecolor{KB}{rgb}{0.4,0.3,0.9}
\newif\ifcommentfootnote %
  \newcommand{\anote}[1]{\footnote{\color{purple}[Andrea: #1]}}
  \newcommand{\anote}[1]{{\color{purple}[Andrea: #1]}}
\newcommand{\atul}[1]{#1}
\newcommand{\newblue}[1]{#1}
\providecommand{\assumptionname}{Assumption}
\providecommand{\claimname}{Claim}
\providecommand{\conjecturename}{Conjecture}
\providecommand{\definitionname}{Definition}
\providecommand{\examplename}{Example}
\providecommand{\lemmaname}{Lemma}
\providecommand{\notationname}{Notation}
\providecommand{\propositionname}{Proposition}
\providecommand{\remarkname}{Remark}
\providecommand{\theoremname}{Theorem}
\newenvironment{construction}[1][htb]
  {\renewcommand{\algorithmname}{Construction}%
   \begin{algorithm}[#1]%
  }{\end{algorithm}}
\newcommand{\IHPC}{A*STAR Quantum Innovation Centre (Q.InC), Institute of High Performance Computing (IHPC), Agency for Science, Technology and Research (A*STAR), Singapore.\looseness=-1}
\newcommand{\CQuERE}{Centre for Quantum Engineering, Research and Education, TCG CREST,  India.\looseness=-1}
\begin{document}

\title{A computational test of quantum contextuality, \\and even simpler proofs of quantumness}

\author{
Atul Singh Arora\footnote{Joint Center for Quantum Information and Computer Science (QuICS), University of Maryland. IQIM, Caltech.}, 
Kishor Bharti\footnote{\IHPC~\CQuERE}, 
Alexandru Cojocaru\footnote{School of Informatics, University of Edinburgh, UK. QuICS, University of Maryland.}, 
Andrea Coladangelo\footnote{Paul G. Allen School of Computer Science and Engineering, University of Washington, USA.}
}

\date{~}

\maketitle
\begin{abstract}
  Bell non-locality is a fundamental feature of quantum mechanics whereby measurements performed on ``spatially separated'' quantum systems can exhibit correlations that cannot be understood as revealing predetermined values. This is a special case of the more general phenomenon of ``quantum contextuality'', which says that such correlations can occur even when the 
  measurements are not necessarily on separate quantum systems, but are merely ``compatible'' (i.e.\ commuting). Crucially, while any non-local game yields an experiment that demonstrates quantum advantage by leveraging the ``spatial separation'' of two or more devices (and in fact several such demonstrations have been conducted successfully in recent years), the same is not true for quantum contextuality: finding the contextuality analogue of such an experiment is arguably one of the central open questions in the foundations of quantum mechanics.

  In this work, we show that an arbitrary contextuality game can be compiled into an operational ``test of contextuality'' involving a single quantum device, 
  by only making the assumption that the device is computationally bounded. Our work is inspired by the recent work of Kalai et al. (STOC '23) that converts any non-local game into a classical test of quantum advantage with a single device. The central idea in their work is to use cryptography to enforce spatial separation within subsystems of a single quantum device. Our work can be seen as using cryptography to enforce ``temporal separation'', i.e.\ to restrict communication between sequential measurements.

  Beyond contextuality, we employ our ideas to design a ``proof of quantumness'' that, to the best of our knowledge, is arguably even simpler than the ones proposed in the literature so far. 
\end{abstract}
\global\long\def\polylog{{\rm poly}({\rm log}(n))}%
\global\long\def\poly{{\rm poly}(n)}%
\global\long\def\ply#1{{\rm poly}(#1)}%
\global\long\def\dQC{{\rm QC}_{d}}%
\global\long\def\dCQ{{\rm CQ}_{d}}%
\global\long\def\tr{{\rm tr}}%
\global\long\def\perm#1#2{\!_{#1}P_{#2}}%
\global\long\def\comb#1#2{\,{}_{#1}C_{#2}}%
\global\long\def\paths{{\rm paths}}%
\global\long\def\parts{{\rm parts}}%
\global\long\def\mat{{\rm mat}}%
\global\long\def\td{{\rm TD}}%
\global\long\def\f{\mathcal{F}}%
\global\long\def\g{\mathcal{G}}%
\global\long\def\negl{\mathsf{negl}}%
\global\long\def\ngl#1{{\mathsf{negl}}(#1)}%
\global\long\def\CQd{\mathsf{CQ_{d}}}%
\global\long\def\QCd{\mathsf{QC_{d}}}%
\global\long\def\BQP{\mathsf{BQP}}%
\global\long\def\BPP{\mathsf{BPP}}%
\global\long\def\QNC{\mathsf{QNC}}%
\global\long\def\CQ#1{\mathsf{CQ_{#1}}}%
\global\long\def\QC#1{\mathsf{QC_{#1}}}%
\global\long\def\CQdp{\mathsf{CQ_{d'}}}%
\global\long\def\QCdp{\mathsf{QC_{d'}}}%
\global\long\def\CH#1{#1\text{-}\mathsf{CodeHashing}}%
\global\long\def\CQC#1{\mathsf{CQC}_{#1}}%
\global\long\def\BQNC{\mathsf{BQNC}}%
\global\long\def\TD{{\rm TD}}%

\global\long\def\NP{\mathsf{NP}}%

\global\long\def\ig{\mathsf{InstanceGen}}%

\global\long\def\qua{\mathsf{qu}}%
\global\long\def\cla{\mathsf{cl}}%

\global\long\def\Gen{\mathsf{Gen}}%
\global\long\def\gen{\mathsf{Gen}}%

\global\long\def\verify{\mathsf{Verify}}%

\global\long\def\prove{\mathsf{Prove}}%

\global\long\def\pk{\mathsf{pk}}%

\global\long\def\sk{\mathsf{sk}}%

\global\long\def\classCQC#1{\BPP^{\QNC_{#1}^{\BPP}}}%

\global\long\def\classQC#1{\QNC_{#1}^{\BPP}}%

\global\long\def\classCQ#1{\BPP^{\QNC_{#1}}}%

\global\long\def\Pivalid{\Pi_{{\rm valid}}}%

\global\long\def\bbF{\mathbb{F}}%

\global\long\def\bit{\{0,1\}}%

\global\long\def\DualLabels{{\rm D.DualLabels}}%

\global\long\def\StdLabels{{\rm D.StdLabels}}%

\global\long\def\AllLabels{{\rm D.AllLabels}}%

\global\long\def\red{{\rm red}}%

\global\long\def\blue{{\rm blue}}%

\global\long\def\colour{\mathsf{O}}%

\global\long\def\QMA{\mathsf{QMA}}%

\global\long\def\QCMA{\mathsf{QCMA}}%

\global\long\def\G{\mathsf{G}}%

\global\long\def\Call{C^{{\rm all}}}%

\global\long\def\spectr{{\mathsf{Spectr}}}%

\global\long\def\coninstance{({\cal H},\left|\psi\right\rangle ,\mathbf{O})}%

\global\long\def\usym{{\cal U}\text{-}{\rm Sym}}%

\global\long\def\relabel{{\rm relabel}}%

\global\long\def\QFHE{\mathsf{QFHE}}%
\global\long\def\FHE{\mathsf{FHE}}%

\global\long\def\Enc{\mathsf{Enc}}%

\global\long\def\Eval{\mathsf{Eval}}%

\global\long\def\cEval{\mathsf{cEval}}%

\global\long\def\calA{{\cal A}}%

\global\long\def\inc{{\rm incnst}}%

\global\long\def\valNC{{\rm valNC}}%

\global\long\def\valQu{{\rm valQu}}%

\global\long\def\inv{\mathsf{Inv}}%

\global\long\def\supp{\mathsf{Supp}}%

\global\long\def\chk{\mathsf{Chk}}%

\global\long\def\samp{\mathsf{Samp}}%

\global\long\def\calD{\mathcal{D}}%

\global\long\def\calL{\mathcal{L}}%

\global\long\def\calM{\mathcal{M}}%

\global\long\def\calY{{\cal Y}}%

\global\long\def\calX{{\cal X}}%

\global\long\def\calF{{\cal F}}%

\global\long\def\calC{{\cal C}}%

\global\long\def\calD{{\cal D}}%

\global\long\def\calP{{\cal P}}%

\global\long\def\calU{{\cal U}}%

\global\long\def\phase{\mathsf{phase}}%

\global\long\def\enc{\mathsf{Enc}}%

\global\long\def\Enc{\mathsf{Enc}}%

\global\long\def\dec{\mathsf{Dec}}%

\global\long\def\Dec{\mathsf{Dec}}%

\global\long\def\OP{\mathsf{OPad}}%

\global\long\def\OPad{\mathsf{OPad}}%

\global\long\def\opad{\mathsf{OPad}}%

\global\long\def\truthtable{\mathsf{TruthTable}}%

\global\long\def\accept{\mathsf{accept}}%

\global\long\def\nonnegl{\mathsf{nonnegl}}%

\global\long\def\pred{\mathsf{pred}}%

\global\long\def\C{\mathsf{context}}%

\global\long\def\PPT{\mathsf{PPT}}%

\global\long\def\QPT{\mathsf{QPT}}%

\global\long\def\const{\mathsf{const}}%

\global\long\def\skp{\mathsf{skip}}%

\global\long\def\Cprimeall{\mathsf{C^{\prime all}}}%

\global\long\def\init{\mathsf{Init}}%

\global\long\def\test{\mathsf{Test}}%

\global\long\def\apply{\mathsf{Apply}}%

\global\long\def\clm{\mathsf{challenger\text{-}lm}}%

\global\long\def\plm{\mathsf{prover\text{-}lm}}%

\global\long\def\prover{\mathsf{Prover}}%

\global\long\def\challenger{\mathsf{Challenger}}%

\global\long\def\conv{\mathsf{conv}}%

\global\long\def\circuit{\mathsf{Circuit}}%

\global\long\def\valClassical{\mathsf{valClassical}}%

\global\long\def\valQuantum{\mathsf{valQuantum}}%

\global\long\def\qubitEnc{\mathsf{qubitEnc}}%
\global\long\def\qubitDec{\mathsf{qubitDec}}%

\global\long\def\sSamp{\mathsf{sSamp}}%
\global\long\def\qProver{\mathsf{qProver}}%

\global\long\def\calS{\mathcal{S}}%

\global\long\def\qstrat{\mathsf{qstrat}}

\global\long\def\calO{\mathcal O}%
\global\long\def\calH{\mathcal H}%

\pagebreak{}

\tableofcontents{}

\newpage{}

\pagenumbering{arabic}

\section{Introduction}\label{subsec:intro}
One of the most intriguing features of quantum mechanics is that, in general, observable properties of a quantum system, usually referred to as ``observables'', do not seem to hold a precise value until they are measured. In technical jargon, quantum mechanics is not a ``local hidden variable theory''. While this feature is now well-understood, Einstein, Podolski, and Rosen, in their paper \cite{epr1935quantum}, originally conjectured that a local hidden variable explanation of quantum mechanics should exist. It was only years later that Bell~\cite{bell1964einstein}, and subsequently Clauser, Horne, Shimony, and Holt (CHSH)~\cite{clauser1969proposed}, in their seminal works, proposed an operational test, i.e.\ an experiment, capable of ruling out a local hidden variable explanation of quantum mechanics. More precisely, they showed that there exists an experiment involving %
measurements on ``spatially separated'' quantum systems such that the outcomes exhibit correlations that cannot be explained by a local hidden variable theory. Such an experiment, usually referred to as a Bell test or a \emph{non-local game}, has been performed convincingly multiple times \cite{hensen2015loophole,giustina2015significant,shalm2015strong,li2018test,rosenfeld2017event,storz2023loophole}. Crucially, a Bell test rules out a local hidden variable theory under the assumption that the devices involved in the experiment are non-communicating (which is usually enforced through ``spatial separation''). %

Bell non-locality may be viewed as a special case of the more general phenomenon of quantum contextuality~\cite{Specker1960-bg,Koc}, which says that such correlations can occur even when the %
measurements are not necessarily on separate quantum systems, but are merely ``compatible'' (i.e.\ commuting). Contextuality has a long tradition in the foundations of quantum mechanics~\cite{budroni2022kochen}. %
However, unlike a non-local game, a more general \emph{contextuality game}\footnote{We choose to use the term contextuality ``game'' here to preserve the analogy with a non-local game. However, in the contextuality literature, the more commonly used term is contextuality ``scenario''. We refer the reader to the technical overview (\Subsecref{tech-overview-contextuality}) for details.} %
does not in general have a corresponding operational test. By an operational test, we mean a test that can be carried out on a device by interacting with it classically, and importantly, without having to make bespoke assumptions about its inner workings. Thus, even though some contextuality games are even simpler than non-local games~\cite{klyachko2008simple}, a satisfactory approach to compiling arbitrary contextuality games into operational ``tests of contextuality'' is missing. This is not for lack of trying---numerous attempts have been made that inevitably have to either resort to strong assumptions about the quantum hardware or assumptions that are hard to enforce in practice, such as the device being essentially \emph{memoryless}.\footnote{These assumptions are typically referred to as ``loopholes'' in the literature:~\cite{lapkiewicz2011experimental,um2013experimental,jerger2016contextuality,zhan2017experimental,malinowski2018probing,leupold2018sustained,zhang2019experimental,um2020randomness,wang2022significant,hu2023self,liu2023experimental}.}
Thus, one of the central open questions in the foundations of quantum mechanics is:

\begin{center}
  {\em Is there a way to compile an arbitrary contextuality game into an ``operational test of contextuality''? %
  }
\end{center}

Beyond demonstrating the presence of genuine quantum behaviour, non-local games can be employed to achieve a much more fine-grained control over the behaviour of untrusted quantum devices: for example, they allow a classical user to verify the correctness of full-fledged quantum computations, by interacting with two non-communicating quantum devices \cite{reichardt2012classical, coladangelo2019verifier}. Of course, any guarantee obtained via non-local games hinges on the physical (and non-falsifiable) assumption that the devices involved are non-communicating. To circumvent the need for this assumption, a lot of the attention in recent years has shifted to the computational setting. This exploration was kick-started by Mahadev's seminal work \cite{mahadev2018classical} showing, via cryptographic techniques, that the verification of quantum computations can be achieved with a single quantum device, under the assumption that the device is computationally bounded. She and her collaborators~\cite{BCMVV21} later proposed what can be thought of as the analogue of a Bell/CHSH experiment with a single device---they proposed a simple test that an efficient quantum device can pass, but that an efficient classical device cannot. Since then, various works have proposed increasingly efficient ``proofs of quantumness'' in this setting~\cite{BKVV20, alagic2020non, MCVY22, KLVY22, alnawakhtha2022lattice,BGK+23}. %
The goal of this line of work is to simplify these tests to the point that they can be implemented on current quantum devices. An experimental realisation of such a proof of quantumness would be a milestone for the field of quantum computation, as it would constitute the first \emph{efficiently verifiable} demonstration of quantum advantage. Towards this goal, the second question that we consider in this work is:

\begin{center}
  {\em Can contextuality help realise simpler proofs of quantumness?}
\end{center}

\subsection{Our results}
\label{sec:results}

Our first contribution is a positive answer to the first question. We show that, using cryptographic techniques, an arbitrary contextuality game can be compiled into an ``operational test of contextuality'' involving a single quantum device, where the only assumption is that the device is computationally bounded.\footnote{{In the contextuality jargon, one might say that all ``loopholes'' are being replaced by a computational assumption.}}
A \emph{contextuality game} involves a single player and a referee (unlike non-local games that always involve more than one player). In an execution of the game, the referee asks the player to measure a \emph{context}---a set of commuting observables---and the player wins if the measurement outcomes satisfy certain constraints. Importantly, a given observable may appear in multiple contexts. If the player uses a strategy where the values of these observables are ``predetermined'' (which is the analogue of a ``local hidden variable'' strategy in the non-local game setting), then the highest winning probability achievable is referred to as the \emph{non-contextual value} of the game, denoted by $\valNC$.\footnote{We emphasise that a non-contextual strategy is such that, if an ``observable'' appears in multiple contexts, then the \emph{same} predetermined value should be returned by the player for that observable in \emph{all} contexts in which it appears. This is precisely the difficulty in realising an ``operational test of contextuality'': how does the referee enforce that the player is consistent across contexts?} On the other hand, if the player uses a quantum strategy, then the highest winning probability achievable is the \emph{quantum value}, denoted by $\valQu$ (which is strictly greater than $\valNC$ for contextuality games of interest).\footnote{A quantum strategy is such that, if an observable appears in multiple contexts, then the \emph{same} observable should be measured to obtain the answer in all such contexts.} 
\newblue{
We formalise what we mean by an ``operational test of contextuality'' by  introducing a property called \emph{faithfulness}. Intuitively, this requires that the test has a strong correspondence to some contextuality game.}
We defer more precise definitions to \Subsecref{criteria}. We show the following. %

\begin{thm}[informal, simplified]\label{thm:infMain1}
Any contextuality game $\G$ can be compiled into a
  single prover, {\mbox{2-round} (i.e.\ \mbox{4-message})}
  \newblue{operational test of contextuality, under standard cryptographic assumptions. More precisely, the test is faithful to $\G$ and satisfies the following:}
   \begin{itemize}
  \item (Completeness) There is a quantum polynomial-time (QPT) prover that wins with probability at least
  \[
    \frac{1}{2}(1 + \valQu) -\negl \,.
  \]
  \item (Soundness) Any probabilistic polynomial-time (PPT) prover wins with probability at most
  \[
    \frac{1}{2}(1 + \valNC) +\negl \,.
  \]
  \end{itemize}
  Here $\negl$ are (possibly different) negligible functions of a security parameter. 
\end{thm}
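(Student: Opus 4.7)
The plan is to adapt the KLVY-style compilation of non-local games to the contextuality setting, using quantum fully homomorphic encryption ($\QFHE$) to enforce a form of ``temporal separation'' between two sequential measurements of commuting observables. Concretely, I would define the compiled test of $\G$ as follows. The verifier samples a context $C$ from $\G$'s distribution together with two observables $A,B\in C$ and flips a fair coin $c\in\{0,1\}$; it then sends the public key and the ciphertext $\Enc(A)$ to the prover. The prover returns a ciphertext $\alpha$. In round~2 the verifier sends $B^{\star}=A$ in the clear if $c=0$ (the \emph{consistency} branch) and $B^{\star}=B$ otherwise (the \emph{game} branch); the prover replies with a plaintext value $b$. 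The verifier decrypts $a=\Dec(\sk,\alpha)$ and accepts iff $a=b$ when $c=0$, or iff $(a,b)$ is a winning pair for context $C$ in $\G$ when $c=1$. Contexts of size greater than two are handled by composing such pairwise tests, or by reducing $\G$ to an equivalent pairwise game.

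Completeness is witnessed by a $\QPT$ prover that homomorphically implements the optimal quantum strategy of $\G$: it prepares the optimal state $\ket{\psi}$, uses $\QFHE.\Eval$ on a classical description of a measurement circuit indexed by $A$ to produce $\alpha=\Enc(a)$, and retains the post-measurement state on a separate register. Upon receiving $B^{\star}$ in the clear it measures $B^{\star}$ on the post-measurement state; since $A$ and $B^{\star}$ commute (they lie in the same context), this realises the joint measurement of $\{A,B^{\star}\}$ on $\ket{\psi}$. The $c=0$ branch then accepts with probability $1$ (re-measuring $A$ on an $A$-eigenstate returns the same eigenvalue) and the $c=1$ branch with probability $\valQu(\G)$, giving overall acceptance probability $\tfrac{1}{2}(1+\valQu(\G))-\negl$ once the $\QFHE$ correctness error is accounted for.

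For soundness, the goal is to show that any $\PPT$ prover is essentially forced into a non-contextual assignment. From such a prover I would extract a classical function $v$ on observables by setting $v(O)$ to be the distribution of $\Dec(\sk,\alpha)$ obtained when the verifier sends $\Enc(O)$ in round~1; this is well-defined because round~1 is a $\PPT$ computation on $\Enc(O)$. In the game branch, a hybrid argument using $\QFHE$ semantic security shows that the prover's round-2 output is computationally close to its behaviour when $\Enc(A)$ is replaced by $\Enc(0)$; in that hybrid $b$ depends only on $B^{\star}$, defining a second function $g$ with $b\approx g(B)$. The consistency branch forces $v\approx g$ on the induced marginal over observables (any non-negligible discrepancy would be caught by the $a=b$ check), so the joint law of $(a,b)$ in the game branch is negligibly close to $(v(A),v(B))$---precisely a non-contextual strategy---and is accepted with probability at most $\valNC(\G)+\negl$. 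Averaging over $c$ yields the claimed $\tfrac{1}{2}(1+\valNC(\G))+\negl$ bound, and the extracted $v$ witnesses the faithfulness of the compilation to $\G$.

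The hard part will be the soundness hybrid that swaps $\Enc(A)$ for $\Enc(0)$: because the verifier holds $\sk$ and the acceptance predicate is $\sk$-dependent, plain $\mathsf{IND}$-$\mathsf{CPA}$ security does not by itself imply indistinguishability of acceptance probabilities across the two hybrids. I would plan to resolve this either by assuming malicious circuit privacy of $\QFHE$ (so that $\alpha$ is statistically close to a fresh encryption of a value fixed by the prover's view of $\pk$ alone), or by designing the reduction so that it internally simulates the verifier's decryption on a freshly generated key. A related subtlety is gluing the two ``positional'' functions $v$ and $g$ into a single coherent non-contextual assignment in the presence of prover randomness correlated across rounds, which calls for a careful averaging argument over the verifier's choices of $C$, $A$, and $B$.
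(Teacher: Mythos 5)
There is a genuine gap, and it is exactly the barrier this paper is built to circumvent: your completeness argument assumes the honest prover can ``retain the post-measurement state on a separate register'' and then measure $B^{\star}$ on it in the clear. Under any $\QFHE$ scheme of the kind you would use (Mahadev/Brakerski), the state left over after homomorphically evaluating the measurement of $A$ is \emph{encrypted}: the prover holds $X^{x}Z^{z}\ket{\psi_{a,A}}$ together with only a classical encryption of the pad $(x,z)$. Measuring $O_{B}$ directly on this padded state does not reproduce the joint statistics of measuring $\{A,B\}$ on $\ket{\psi}$, so the $c=1$ branch does not achieve $\valQu$, and even the $c=0$ consistency branch fails in general (the pad need not commute with $O_A$). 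The two obvious repairs both fail: revealing the pad (or $\sk$) to the prover lets a PPT prover decrypt $\Enc(A)$ and win trivially, while keeping everything encrypted in round 2 lets a PPT prover satisfy the predicate ``under the hood'' of the homomorphic evaluation. What is missing is the paper's new primitive, the oblivious (Pauli) pad: the prover first \emph{re-encrypts} its padded post-measurement state with a fresh pad whose key the verifier (but not a PPT prover) can recover from a classical string $s'$; the verifier then reveals the \emph{combined} key $k$, which the honest QPT prover uses by measuring the conjugated observable $U_{k}O_{q}U_{k}^{\dagger}$, while $\opad$ security guarantees that $k$ is computationally indistinguishable from uniform to any PPT prover, so revealing it does not damage soundness. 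Without this (or an equivalent mechanism), the protocol you describe cannot simultaneously be complete with value $\tfrac12(1+\valQu)$ and sound at $\tfrac12(1+\valNC)$.

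On soundness, your hybrid swapping $\Enc(A)$ for $\Enc(0)$ and the $v$-versus-$g$ gluing is also not how the paper proceeds, and the $\sk$-dependence worry you flag is real but is avoided differently: the paper's reduction rewinds the PPT prover over \emph{all} second-round questions (with a uniformly random $k$, justified by $\opad$ security) to extract an entire truth table $\tau$ conditioned on the first message, shows that winning non-negligibly above $\tfrac12(1+\valNC)$ forces the truth-table distributions for two distinct encrypted first questions to differ non-negligibly, and uses that difference to break 2-IND security of the $\QFHE$ scheme --- no decryption of the prover's ciphertext is ever needed in the reduction, so neither circuit privacy nor simulating decryption is required. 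Finally, note that ``composing pairwise tests'' for $|C|>2$ is not straightforward; the paper instead designs dedicated $(|C|-1,1)$ and $(|C|,1)$ compilers with explicitly $|C|$-dependent completeness--soundness bounds, and faithfulness (including classical completeness of the simulated transcript) additionally uses the classical range-sampling property of the $\opad$ and classical evaluation of classical circuits for $\QFHE$, which your proposal does not address.
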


The bounds in the statement above are for games with contexts of size two (which subsume %
``$2$-player'' non-local games, see \Figref{whatcompileswhat}). The general bounds are similar but depend on the size of the contexts and are stated later. %
Notably, the number of messages, even in the general case (which subsumes non-local games with any number of players), remains constant (i.e.\ four). The cryptographic assumptions we make are the existence of (1) a quantum homomorphic encryption (QFHE) scheme\footnote{With an assumption on the form of encrypted ciphertexts, which is satisfied by both Mahadev's and Brakerski's QFHE schemes, \cite{brakerski18,mahadev2020classical}; see \Subsecref{techoverviewQFHE}.} and (2) a new primitive, the \emph{oblivious pad}, which we introduce below. QFHE can be realised assuming the quantum hardness of the Learning With Errors problem (LWE)~\cite{LWE}. We show how to construct the oblivious pad under the same assumption in the quantum random oracle model---which in turn can be heuristically instantiated using a cryptographic hash function, such as SHA3.\footnote{In the random oracle
model~\cite{BR93} (ROM), a hash function $f$ is modelled as a uniformly random black-box function: parties can evaluate it by sending a query $x$ and receiving $f(x)$ in return. In the \emph{quantum} random oracle model (QROM), such queries can also be made in superposition. %
A proof of security in this model is taken to be evidence for security of the protocol when the black-box is replaced by, for example, a suitable hash function $f$. This is because, informally, any attack on the resulting protocol must necessarily exploit the structure of $f$. %
} 

\begin{figure}
  \begin{centering}
    \includegraphics{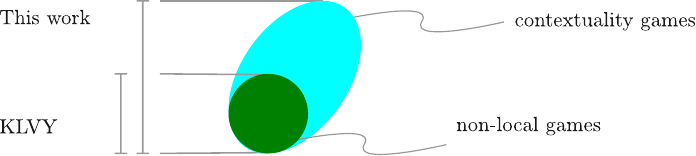} %
    \par\end{centering}
  \caption{The compiler in this work compiles a much larger set of games compared to the one in \cite{KLVY22}.}\label{fig:whatcompileswhat}
\end{figure}

Our result is motivated by the recent work of Kalai et al.\ \cite{KLVY22} (KLVY from here on) that converts any non-local game into a test of quantumness with a single device. %
Consider a non-local game with two players, Alice and Bob. The central idea behind the KLVY compiler is to use
cryptography to enforce spatial separation within a \emph{single} quantum device, i.e.\ to enforce that Alice and Bob's measurements occur on separate subsystems. The KLVY compiler relies on the following mechanism to cryptographically enforce spatial separation: Alice's question and answer are encrypted (using a quantum fully homomorphic encryption (QFHE) scheme), while Bob's question and answer are in the clear. The referee, who holds the decryption key, can then test the correlation across the two. Unfortunately, this approach does not extend to contextuality, at least not in any direct way, since in a contextuality game there is no notion of Alice and Bob.

In contrast, our work can be seen as using cryptography to enforce ``temporal separation'', 
i.e.\ to restrict communication between sequential measurements.
In a nutshell, our idea is to ask the first question under a homomorphic encryption and the second question in the clear, as in KLVY, but with the following important difference. In KLVY, the quantum device prepares an \emph{entangled} state, whose first half is encrypted, and used to homomorphically answer the first question, while the second half is not, and is used to answer the second question in the clear. In our protocol, there are no %
separate subsystems between the two rounds: instead, the encrypted post-measurement state from the first round (which results from the measurement performed to obtain the encrypted answer) is \emph{re-used} in the second round. The technical barrier is, of course, the following: how can the post-measurement state be re-used if it is still encrypted? The two most natural approaches do not work:
\begin{itemize}
\item[(i)] Providing the decryption key to the quantum device is clearly insecure as it allows the device to learn the first question in the clear.
\item[(ii)] Homomorphically encrypting the second question does not work either because the quantum device can correlate its answers to the two questions ``under the hood of the homomorphic encryption''.
\end{itemize}
We circumvent this barrier by introducing a procedure that allows the prover to \emph{obliviously} ``re-encrypt'' the post-measurement state non-interactively. This re-encryption procedure is such that it allows the verifier to achieve the following: the verifier can now reveal some information that allows a quantum prover to access the post-measurement state in the clear, while a PPT prover \emph{does not learn the first question}. More precisely, the verifier does not directly reveal the original decryption key, which would clearly be insecure, as pointed out earlier. Instead, the verifier expects the prover to ``re-encrypt'' its state using the new procedure, and then the verifier is able to safely reveal the resulting ``overall'' decryption key. %
Crucially, while a classical prover is unable to make use of the additional information to beat the classical value $\valNC$, we show that there is an efficient quantum prover that can access the post-measurement state in the clear, and proceed to achieve the quantum value $\valQu$. 

The main technical tool that we introduce to formalise this idea, which may find applications elsewhere, is a primitive that we call {\emph{oblivious (Pauli) pad}}. The oblivious pad takes as input a state $\ket{\psi} $ and a public key $\pk$ and produces a Pauli-padded state $X^xZ^z \ket{\psi}$ together with a string $s$ (which can be thought of as encrypting $x$ and $z$ using $\pk$). 
The string $s$ can be used to recover $x, z$ given the corresponding secret key $\sk$. 
The security requirement is modelled as the following distinguishing game between a PPT prover and a challenger: 
\begin{itemize}
    \item The challenger generates public and secret keys $(\pk,\sk)$ and sends $\pk$ to the prover.
    \item The prover produces a string $s$ and sends it to the challenger.
    \item The challenger either returns $(x, z)$ (as decoded using $s$ and $\sk$), or a fresh pad $(\tilde{x}, \tilde{z})$ sampled uniformly at random.
\end{itemize}
We require that no PPT prover can distinguish between the two cases with non-negligible advantage.\footnote{Note that there is a QPT prover that \emph{can} distinguish these two cases with non-negligible advantage.}
We show that an oblivious pad can be realised based on ideas from~\cite{BCMVV21} in the quantum random oracle model.
Crucially, while in KLVY the second phase of the protocol happens on a \emph{different} subsystem, the oblivious pad is what allows us to carry out the second phase on the \emph{same} subsystem, {as depicted in Figure~\ref{fig:schematicprot}}.

\begin{figure}
  \begin{centering}
    \includegraphics{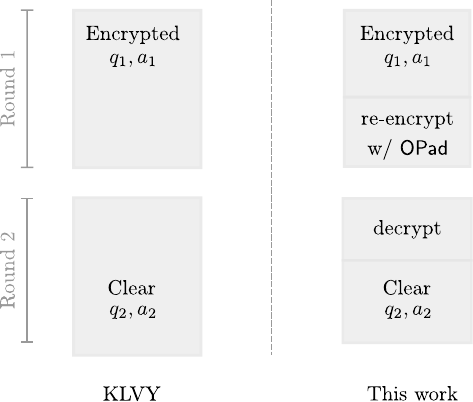}
    \par\end{centering}
  \caption{A schematic comparing the non-local game compiler~\cite{KLVY22} with our contextuality game compiler. %
  The key idea in KLVY is to ask the first question of a non-local game under a homomorphic encryption and the second one in the clear, with the prover using two entangled subsystems (one that is encrypted, and one in the clear). In our compiler, the oblivious pad ($\opad$) allows the prover to ``re-encrypt'' its post-measurement state, just before Round 2. Upon obtaining information about the ``re-encryption'' that took place, the verifier can then safely reveal the ``overall'' decryption key in Round 2, allowing the prover to proceed with the next measurement in the clear.} %
  \label{fig:schematicprot}
\end{figure}

Our second contribution streamlines the ideas introduced to prove \Thmref{infMain1} in order to obtain a 2-round proof of quantumness relying on the classical hardness of the Learning-with-Errors (LWE) problem~\cite{LWE}. Our construction has the main advantage of being simpler than existing ones in the literature, in the sense explained below.
Our proof of quantumness makes use of Noisy Trapdoor Claw-Free functions (NTCFs), introduced in \cite{BCMVV21} (but it does not require the NTCFs to have an ``adaptive hardcore bit'' property). It relies on the particular structure of the ``encrypted CNOT operation'' introduced in Mahadev's QFHE scheme \cite{mahadev2020classical}. We informally state our result, but we defer the construction to the technical overview. %
\begin{thm}[Informal]
Assuming the classical hardness of LWE, there exists a 2-round (i.e.\ 4-message) proof of quantumness with the following properties:
\begin{itemize}
    \item It requires only one coherent evaluation of an NTCF, and one layer of single-qubit Hadamard gates.
    \item The quantum device only needs to maintain a single qubit coherent in-between the two rounds.
\end{itemize}
\end{thm}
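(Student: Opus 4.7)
The approach is to compile the CHSH game into a single-prover test using only the ``encrypted CNOT'' gadget of Mahadev's QFHE scheme, which is itself built from NTCFs. This gadget alone suffices to homomorphically simulate Alice's measurement in CHSH, and, crucially, its soundness can be argued without appealing to the adaptive hardcore bit property --- which is the most delicate part of existing NTCF constructions, and whose removal is what enables the simplifications claimed in the theorem (one NTCF evaluation, one layer of Hadamards, a single coherent qubit between rounds).

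\textbf{The protocol.} In round one, the verifier samples Alice's CHSH question $c \in \{0,1\}$ together with an NTCF key pair whose claw structure secretly encodes $c$, and sends the key. The prover (i) applies a single layer of Hadamards to prepare $\ket{+}$ on an ancilla, (ii) coherently evaluates the NTCF once into a fresh image register, (iii) measures the image to obtain a string $y$, and returns $y$. A single residual qubit $Q$ survives into round two. In round two, the verifier sends Bob's question $t \in \{0,1\}$; the prover measures $Q$ in the $Z$-basis if $t=0$, or applies one Hadamard and measures in the $Z$-basis if $t=1$, and returns the outcome bit $r$. The verifier uses the NTCF trapdoor to decode $y$, infers the Pauli correction induced on $Q$ by the encrypted CNOT, and checks a CHSH-style predicate on $(c,t,r)$.

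\textbf{Completeness.} Unpacking the standard analysis of the encrypted CNOT, the residual qubit $Q$ is, up to a Pauli correction determined by $(c,y)$ and known to the verifier through the trapdoor, the post-measurement half of the EPR pair used in the honest quantum CHSH strategy. The honest quantum prover therefore wins with probability $\cos^{2}(\pi/8) > 3/4$, up to a negligible completeness error inherited from the NTCF.

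\textbf{Soundness (main obstacle).} The hard part is to rule out PPT attacks without invoking adaptive hardcore. The key observation is that a classical prover's first-round state is just a transcript and can be freely \emph{cloned}; one therefore simulates the second round twice on the same transcript, once with each challenge $t \in \{0,1\}$. If some PPT prover wins with probability $3/4 + \varepsilon$, then with non-negligible probability these two simulations jointly yield a standard-basis outcome and a Hadamard-basis outcome that are simultaneously consistent with the encrypted CNOT structure on $y$; a short linear-algebra step shows that this pins down two preimages $x_0, x_1$ of $y$ under $f_0, f_1$ respectively, producing a claw and contradicting claw-freeness of the NTCF under LWE. The main technical obstacle is to execute this last step cleanly enough that one really only needs claw-freeness (not adaptive hardcore), and to verify that the Pauli correction used in the completeness analysis does not secretly inject extra information that the classical prover could exploit.
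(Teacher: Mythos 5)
Your overall architecture matches the paper's (one encrypted-CNOT-style NTCF round that leaves a BB84-type qubit, then a CHSH-like second round, with classical soundness argued by rewinding the prover on both second-round challenges), but the proposal has concrete errors that break it. First, your round-1 prover only ``measures the image to obtain $y$'' and your Hadamard layer is spent on preparing $\ket{+}$. With only $y$ measured, the residual state is the full claw superposition $\alpha\ket{0}\ket{x_0}+\beta\ket{1}\ket{x_1}$, not a single qubit; to collapse it you must apply the Hadamard layer to the preimage register (all but its first bit), measure it, and send the outcomes $(\mu,d)$ along with $y$. Moreover the Pauli correction on the residual qubit is \emph{not} determined by $(c,y)$ alone: in the ``no CNOT'' branch it is $d\cdot(v_0\oplus v_1)$, so without $d$ the verifier cannot compute the reference bit $a$ and neither your completeness claim nor your acceptance predicate is even well defined. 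Second, your round-2 measurements ($Z$ if $t=0$, Hadamard-then-$Z$ if $t=1$) are not Bob's optimal CHSH measurements. On a uniformly random BB84 state with a CHSH-type predicate, measuring in the $Z/X$ bases wins with probability exactly $3/4$, which equals your soundness bound, so the protocol as written has no completeness--soundness gap. You need the $\pm\pi/8$ rotated measurements to get $\cos^2(\pi/8)$; correspondingly, the ``one layer of Hadamards'' in the theorem refers to the preimage-register Hadamards in round 1, not to Bob's measurement.

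Third, the soundness reduction targets the wrong property and the extraction step would fail. The two rewound answers $b_0,b_1$ are single bits; no ``short linear-algebra step'' can recover two full preimages $x_0,x_1$ of $y$ from them, so you cannot produce a claw and invoke claw-freeness directly. The argument that works (and is the paper's) is decisional: set up the NTCF so that a secret bit $s$ is hidden in the xor of the first bits of every claw. Validity of $b_0$ (challenge $0$) forces $a\oplus b_0=0$ and validity of $b_1$ (challenge $1$) forces $a\oplus b_1=s$, hence $b_0\oplus b_1=s$ whenever both answers are valid; combining this with $\Pr[\text{both valid}]\ge p_0+p_1-1$ turns a winning probability $3/4+\delta$ into a $1/2+2\delta$ advantage in guessing $s$, contradicting the ``hiding a bit in the xor'' property (which follows from LWE), rather than claw-freeness. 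Your worry about the Pauli correction leaking information is handled automatically in this formulation, since the verifier reveals nothing trapdoor-dependent to the prover in round 2.
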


Our proof of quantumness can be seen as combining ideas from \cite{KLVY22} and \cite{MCVY22, alnawakhtha2022lattice}. It is simpler than existing proofs of quantumness in the following ways:
\begin{itemize}
  \item \emph{Single encrypted CNOT operation.} The 2-round proof of quantumness of KLVY is based on a QFHE scheme. Concretely, an implementation of their proof of quantumness based on Mahadev's QFHE scheme requires performing a homomorphic controlled-Hadamard gate, which requires three sequential applications of the ``encrypted CNOT operation''.
        Crucially, these three operations require computing the NTCF three times in superposition while maintaining coherence all along. Our 2-round proof of quantumness only requires a single application of the ``encrypted CNOT'' operation. Moreover, in KLVY, the encrypted subsystem, on which Alice's operations are applied homomorphically needs to remain entangled with a second subsystem, which is used to perform Bob's operations in the clear. In contrast, in our proof of quantumness, the encrypted CNOT operation and the subsequent operations in the clear happen on a {single} system (of the same size as Alice's in KLVY).
  \item \emph{Single qubit coherent across rounds and 2 rounds of interaction.}
  Compared to the 3-round proof of quantumness of \cite{MCVY22}, ours requires one less round of interaction. However, more importantly than the number of rounds, the protocol from \cite{MCVY22} requires the quantum device to keep a superposition over preimages of the NTCF %
  coherent in-between rounds, %
  while waiting for the next message. In contrast, both our proof of quantumness and that of KLVY only require the quantum device to keep a \emph{single} qubit coherent in-between rounds.
  \item \emph{Simple quantum operations.}
  The 2-round proof of quantumness of \cite{alnawakhtha2022lattice} matches ours in that it requires a single coherent application of an NTCF based on LWE, and the quantum device only needs to keep a single qubit coherent in between rounds. However, their protocol requires the prover to perform single-qubit measurements in ``rotated'' bases coming from a set of a size that scales linearly with the LWE modulus (for which typical parameters are $\approx 10^2$ to $10^3$).
  Their completeness-soundness gap also suffers a loss compared to ours that comes from the use of the ``rotated'' basis measurements.
\end{itemize}
To the best of our knowledge, the only aspect in which our proof of quantumness compares unfavourably with existing ones, e.g.\ \cite{MCVY22}, is that, based on current knowledge of constructions of NTCFs, our proof of quantumness requires a construction from LWE (in order to implement the ``encrypted CNOT'' procedure), whereas \cite{MCVY22} has the flexibility that it can be instantiated using any TCF, e.g.\ based on Diffie-Hellman or Rabin's function. While the latter are more efficient to implement, they also generally require a larger security parameter (inverting Rabin's function is as hard as factoring, whereas breaking LWE is as hard as worst-case lattice problems). Hence, it is currently still unclear which route is closer to a realisation at scale. In particular, we note that instantiating the construction of \cite{MCVY22} with the $x^2 \bmod N$ function by relying on a novel multiplication algorithm leads to a very efficient quantum circuit for the prover as shown in \cite{KY24}. It is plausible that future improvements to our proof of quantumness might yield constructions from a broader class of hardness assumptions than LWE (e.g.\ Ring-LWE), which would likely yield a further improvement in concrete efficiency\footnote{An upcoming work by one of the authors in fact improves and instantiates our template using Ring-LWE.}. %

Some existing proofs of quantumness are non-interactive (i.e.\ 1-round), with security in the random oracle model~\cite{BKVV20, alagic2020non, YZ22}. 
Likewise, our proof of quantumness can also be made non-interactive by using the Fiat-Shamir transformation~\cite{FS87} (where the computation of the hash function is classical, and does not increase the complexity of the actual quantum computation). The proof of quantumness in \cite{YZ22} has the additional desirable property of being publicly-verifiable, although it currently seems to be more demanding than the others in terms of quantum resources.

\subsection{Future Directions} 

\begin{itemize}
    \item \emph{Better contextuality compilers.} Focusing on compilers for contextuality, the following important aspect remains to be strengthened. The current compiler does not in general achieve \emph{quantum soundness} (in the sense that there are contextuality games in which a QPT prover can do much better than the ``compiled'' quantum value from \Thmref{infMain1}). Interestingly, recent works show that KLVY does satisfy quantum soundness {for certain families of non-local games}~\cite{NZ23,BGK+23,CMM+24}. 
    \item \emph{Oblivious Pauli pad.} We think that the new functionality of the oblivious Pauli pad (or some variant of it) has the potential to be useful elsewhere, and we leave this exploration to future work. A related question is to realise the oblivious Pauli pad in the plain model. We note that the \emph{oblivious pad} can be constructed from the hardness of factoring if one does not require the \newblue{``classical range sampling'' property} (see \Defref{Oblivious-U-pad}), %
    but we do not know whether one can have a plain model construction that satisfies \Defref{Oblivious-U-pad} %
    in full (we discuss this in more detail in \Secref{opad}). %
    \item \emph{More efficient ``encrypted CNOT''.} Turning to proofs of quantumness, the broad goal is of course to simplify these even further towards experimental implementations. One concrete direction in which our proof of quantumness could be simplified is the following. Currently, we use an NTCF that supports the ``encrypted CNOT'' operation from \cite{mahadev2020classical}. However, our only requirement is that the NTCF satisfies the potentially weaker property that it hides a bit in the xor of the first bit of a pair of preimages. This is because we only need to perform the ``encrypted CNOT'' operation once, and not repeatedly as part of a full-fledged homomorphic computation. It would be interesting to see if the weaker requirement could be achieved by simpler claw-free functions (from LWE or other assumptions, like DDH or factoring).
\item \emph{Testing other sources of quantumness.} Many other sources of quantumness have been identified in the literature, such as generalised contextuality (which allows arbitrary experimental procedures, not just projective measurements, and a broad class of ontological models, not just deterministic ones)~\cite{spekkens2005contextuality} and the Leggett-Garg experiment (which is the time analogue of Bell's experiment)~\cite{leggett1985quantum,budroni2013bounding}. These all suffer from the same limitation as contextuality, and our results raise the following question: can one construct analogous operational single-device tests for these sources of quantumness as well? 
    \item \emph{Testing indefinite causal order.} More ambitiously, one could try to separate quantum mechanics from more general theories such as those with indefinite causal order~\cite{chiribella2013quantum,oreshkov2012quantum} (i.e.\ theories that obey causality only locally and that may not admit a definite causal order globally). %
    For instance, a recent result gives evidence (in the black-box model) that indefinite causal order does not yield any relevant advantage over quantum mechanics~\cite{abbott2023quantum}. Perhaps one can obtain a clear separation under cryptographic assumptions?
\end{itemize}

\subsection*{Acknowledgements}

We are thankful to an anonymous QCrypt reviewer for their observations about the relation between the oblivious pad and non-interactive proofs of quantumness, and for pointing out the limitations of not formalising what it means for a test to be an operational test of contextuality. %

We thank Ulysse Chabaud, Mauro Morales and Thomas Vidick for their feedback on early drafts of this work. We thank Yusuf Alnawakhtha, Alexandru Gheorghiu, Manasi Mangesh Shingane, and Urmila Mahadev for various helpful discussions. %
ASA acknowledges support from the U.S. Department of Defense through a QuICS Hartree Fellowship, IQIM, an NSF Physics Frontier Center (GBMF-1250002) and MURI grant FA9550-18-1-0161. Part of the work was carried out while ASA was visiting the Simons Institute for the Theory of Computing.
AC acknowledges support from the National Science Foundation grant CCF-1813814, from the AFOSR under Award Number FA9550-20-1-0108 and from the Engineering and Physical Sciences Research Council through the Hub in Quantum Computing and Simulation grant (EP/T001062/1) and the Quantum Advantage Pathfinder (QAP) research programme (EP/X026167/1). KB is supported by A*STAR C230917003.

\section{Technical Overview\label{sec:tech-overview}}

In this overview, we start by briefly recalling non-local games, and introducing the notion of contextuality with some examples. 
We then build up towards our compiler for contextuality games, by first introducing the ideas behind the KLVY compiler, and then describing why new ideas are needed to achieve a compiler in the contextuality setting. We then describe our main novel technical tool, the oblivious pad, and how we use it to realise a compiler for contextuality games. We also discuss the main ideas in the proof. Finally, we describe how some of the new ideas can be streamlined to obtain a potentially simpler proof of quantumness. The latter can be understood without any reference to contextuality, and the interested reader may wish to skip directly to it (Section \ref{sec:tech-overview-poq}).

\subsection{Non-local Games\label{subsec:Non-local-Game}}

Let $A, B, X, Y$ be finite sets. A $2$-player non-local game is specified by a predicate $\pred:A\times B\times X\times Y\to\{0,1\}$ %
which indicates whether the players win or not, and a probability
distribution ${\cal D}_{{\rm questions}}$ over the
questions, which specifies $\Pr(x,y)$ for $(x,y)\in X\times Y$.
The game consists of a referee and two players, Alice and Bob, who
can agree on a strategy before the game starts, but cannote communicate once the game starts. The game proceeds as follows: the referee samples questions
$(x,y) \leftarrow {\cal D}_{{\rm questions}}$, sends $x$ to Alice and $y$ to Bob, and receives
their answers $a\in A$ and $b \in B$ respectively. Their success probability
can be written as
\begin{equation}
  \Pr({\rm win})=\sum_{x,y,a,b}\pred(a,b,x,y)\Pr(a,b|  x,y)\Pr(x,y)\label{eq:PrWin_nonlocal} \,,
\end{equation}
where the strategy used by Alice and Bob specifies $\Pr(a,b|x,y)$.

\paragraph*{Local Hidden Variable Strategy} 

The ``local hidden variable'' model allows Alice and
Bob to share a classical (random) variable $r$, and then have their answers be arbitrary, but fixed, functions of their respective questions,
i.e.
\[
  \Pr(a,b|x,y)=\sum_{r} P_{A}(a|x, r)P_{B}(b|y, r)P_{R}(r)
\]
where  $P_{A}$, $P_{B}$ and $P_{R}$ are probability
distributions that specify Alice and Bob's strategy. We refer to the optimal winning probability achievable by local hidden variable strategies as the \emph{classical value} of the game.

\paragraph*{Quantum Strategy}
A quantum strategy allows Alice and Bob to share a state $\left|\psi\right\rangle _{AB}$,
and use local measurements $M^{A}_x = \{M_{a|x}^{A}\}_{a}$ and $M_y^B = \{M_{b|y}^{B}\}_b$
to produce their answers (where the measurements can be taken to be projective without loss of generality), so that
\[
  \Pr(a,b|x,y)=\left\langle \psi\right|M_{a|x}^{A}\otimes M_{b|y}^{B}\left|\psi\right\rangle .
\]
We refer to the optimal winning probability achievable by quantum strategies as the \emph{quantum value} of the game.

It is well-known that there exist non-local games (like the CHSH game) where the quantum value exceeds the classical value. However, is it necessary to consider spatial separation (i.e.\ a tensor product structure) to observe such a ``quantum advantage''? A partial answer is no: one can consider contextuality.

\begin{figure}[H]
  \begin{centering}
    \includegraphics[scale=1.2]{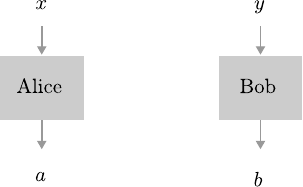}
    \par\end{centering}
  \caption{A two-player non-local game. Alice and Bob get $(x,y)$  from the referee and respond with $(a,b)$. They cannot communicate once the game starts.}

\end{figure}

\subsection{Contextuality}
\label{subsec:tech-overview-contextuality}

We start with a slightly informal definition of a contextuality game (see \Secref{Contextuality-Games} for a formal treatment). 
Let $Q$ and $A$ be finite sets. Let $\Call$ be a set of subsets of $Q$. We refer to the elements of $\Call$ as \emph{contexts}. Suppose for simplicity that all subsets $C \in \Call$ have the same size $k$. A \emph{contextuality game} is specified by a predicate $\pred:A^k \times \Call \to \{0,1\}$, and a probability distribution ${\cal D}_{{\rm contexts}}$ over contexts, which specifies $\Pr(C)$ for $C \in \Call$. The game involves a referee and a \emph{single} player, and proceeds as follows: 
\begin{itemize}
\item The referee samples $C = \{q_1, \ldots, q_k\} \in \Call$ according to ${\cal D}_{{\rm contexts}}$, and sends $C$ to the player.
\item The player responds with answers $\{a_1,\ldots, a_k\} \in A^k$. %
\end{itemize}
The success probability is
\begin{equation}
  \Pr({\rm win})= \sum_{a_1,\ldots,a_k,C} \pred(a_1,\ldots,a_k,C) \Pr(a_1,\ldots,a_k | C) \Pr(C) \label{eq:PrWin_contextual} \,,
\end{equation}
where the strategy used by the player specifies $\Pr(a_1,\ldots,a_k|C)$.

\paragraph*{Non-contextual strategy} This is the analogue of a ``local hidden variable'' strategy. A \emph{deterministic assignment} %
maps each question $q \in Q$ to a \emph{fixed} answer $a_q$. This represents the following strategy: upon receiving the context $C = \{q_1, \ldots, q_k\} \in \Call$, return $(a_{q_1}, \ldots, a_{q_k})$ as the answer. A strategy that can be expressed as a convex combination of deterministic assignments is referred to as \emph{non-contextual}, i.e.\ the answer to a question $q$ is independent of the context in which $q$ is being asked. %

\paragraph*{Quantum strategy} A quantum strategy is specified by a quantum state $\ket{\psi}$, and a collection of observables $\mathbf{O}=\{O_q\}_{q\in Q}$, such that, for any context $C =  \{q_1, \ldots, q_k\}\in \Call$, the observables $O_{q_1}, \ldots, O_{q_k}$ are compatible (i.e\ commuting). The strategy is the following: upon receiving context $C =\{q_1, \ldots, q_k\} \in \Call$, measure observables $O_{q_1}, \ldots, O_{q_k}$ on state $\ket{\psi}$, and return the respective outcomes $a_{1}, \ldots, a_{k}$.

\vspace{2mm}
Quantum mechanics is \emph{contextual} in the sense there are examples of games for which a quantum strategy can achieve a higher winning probability than the best non-contextual strategy. However, crucially, unlike a non-local game, a contextuality game does not directly yield an ``operational test'' of contextuality. The issue is that there is no clear way for the referee to enforce that the player's answer to question $q$ is consistent across the different contexts in which $q$ appears!

We informally describe three simple examples: the magic square game (Peres-Mermin)~\cite{peres1990incompatible,mermin1990simple,cabello2001bell, aravind2004quantum}, %
non-local games, and the KCBS experiment (the contextuality analogue of the Bell/CHSH experiment)~\cite{klyachko2008simple}. {We do this by directly specifying a quantum strategy first and then ``deriving'' the corresponding game (where the observables are just labels for the questions). In doing so, we abuse the notation slightly and identify questions with observables.}

\begin{example}[Peres-Mermin (Magic Square) \cite{peres1990incompatible,mermin1990simple}]
  \label{exa:MagicSquare} Consider the following set of observables
  \[
    \begin{array}{c}
      \\
      \mathbf{O}:= \\
      \\
      \\
      \\
    \end{array}\begin{array}{ccccc}
      \{X\otimes \mathbb{I},\  & \mathbb{I}\otimes Z, & X\otimes Z,    &  & \mathbb{I} \\
      \mathbb{I}\otimes X,     & Z\otimes \mathbb{I}, & Z\otimes X,    &  & \mathbb{I} \\
      X\otimes X,     & Z\otimes Z, & \ Y\otimes Y\} &  & \mathbb{I} \\
      \\
      \mathbb{I}                       & \mathbb{I}                   & -\mathbb{I}
    \end{array}
  \]
  (where $X,Y,Z$ are Pauli matrices). They satisfy the following properties: (a) they take $\pm1$ values (i.e.\ they have $\pm1$ eigenvalues), (b) operators along any row or column commute, and (c) the product of observables along any row or column equals $\mathbb{I}$, except along
  ${\rm col}_{3}$, where it equals $-\mathbb{I}$. 
  If we define the set of contexts by $C^{{\rm all}}:=\{{\rm row}_{1},{\rm row}_{2},{\rm row}_{3},{\rm col}_{1},{\rm col}_{2},{\rm col}_{3}\}$, it is not difficult to see that no deterministic assignment can be such that the condition on the products is satisfied
  along each row and column. For instance, the assignment
  \[
    \begin{array}{ccc}
      1 & -1 & -1 \\
      1 & -1 & -1 \\
      1 & 1  & ?
    \end{array}
  \]
  satisfies all constraints except one: if the question mark is $1$, then the condition
  along ${\rm col}_{3}$ fails, and if it is $-1$ then the condition
  along ${\rm row}_{3}$ fails. To satisfy both, somehow the value assigned
  to the last ``observable'' has to depend on the context, ${\rm row}_{3}$
  or ${\rm col}_{3}$, in which it appears---the assignment has to be ``contextual''. In quantum mechanics, all of the constraints can be satisfied using the observables described. One can in fact measure these observables on an arbitrary state $\left|\psi\right\rangle $ to win with probability $1$.
  One thus concludes that quantum mechanics is ``contextual'' in this sense.
\end{example}

2-player non-local games can be viewed as a special case of contextuality games as follows (and similarly for non-local games with more players).
\begin{example}[2-player non-local games]
Given a quantum strategy for a non-local game (as in \Subsecref{Non-local-Game}), we identify the measurements $M^A_x$ and $M^B_y$ with corresponding observables. Then, define $\mathbf{O}:=\{M_{x}^{A}\otimes I\}_{x}\cup\{I \otimes M_{y}^{B}\}_{y}$
and $\Call:=\{\{M_{x}^{A}\otimes I, I \otimes M_{y}^{B}\}\}_{x,y}$. %
It is not hard to see that the set of non-contextual strategies is the same as the set of local hidden variable strategies. %
\end{example}

Some contextuality games can yield a separation between non-contextual and quantum strategies with even smaller quantum systems than what is possible for non-local games. The following example yields a separation using just a qutrit---a single $3$-dimensional system (in contrast, non-local games require at least two qubits, i.e.\ dimension $4$, as in the CHSH game). For contextuality $3$ dimensions are necessary and sufficient~\cite{Koc}.\footnote{There are generalisations of contextuality~\cite{spekkens2005contextuality} that can give a separation with dimension $2$, but we do not consider these here.}
  \begin{example}[{KCBS \cite{klyachko2008simple}}]\label{exa:KCBSintro}
  Consider a $3$-dimensional vector space spanned by $\{\ket{0},\ket{1},\ket{2}\}$. Let $\ket{\psi}=\ket{0}$ and define five vectors $$\ket{v_q}:=\cos \theta \ket{0} + \sin \theta \sin \phi_{q} \ket{1} + \sin \theta \cos \phi_q \ket{2},$$ indexed by $q\in\{1, \dots, 5\}$ where $\phi_q=4\pi q/5$ and $\cos^2 \theta = \cos(\pi/5) / (1 + \cos (\pi/5))$. The heads of these vectors form a pentagon with $\ket{\psi}$ at the centre, and vectors indexed consecutively are orthogonal, i.e.\ $\left\langle v_{q}|v_{q+1}\right\rangle =0$ (where we take the indices to be periodic) as illustrated in \Figref{KCBS}.
  Define $\mathbf{O}:=\{\Pi_{q}\}_{q\in1\dots5}$ where $\Pi_{q}:=\left|v_{q}\right\rangle \left\langle v_{q}\right|$
  and $\Call:=\left\{ \{\Pi_{1},\Pi_{2}\},\{\Pi_{2},\Pi_{3}\}\dots\{\Pi_{5},\Pi_{1}\}\right\} $.
  The referee asks a context $C\leftarrow\Call$ uniformly at random
  from the set of all contexts and the player wins if the answer
  is either $(0,1)$ or $(1,0)$ (i.e.\ neighbouring assignments should be distinct).
  It is not hard to check that a non-contextual strategy
 wins at most with probability $4/5=0.8$, while the quantum strategy described above wins with probability $\frac{2}{\sqrt{5}}\approx 0.8944$.
\end{example}
\begin{figure}
  \begin{centering}
    \includegraphics[scale=1.25]{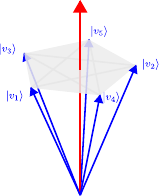}
    \par\end{centering}
  \caption{\label{fig:KCBS} An illustration of the optimal quantum strategy corresponding to the KCBS Game defined in \Exaref{KCBSintro}. Here the red vector denotes the quantum state $\ket{\psi}$ and the blue ones $\ket{v_q}$ correspond to projective measurements, $\Pi_q = \ket{v_q}\bra{v_q}$. Consecutively indexed blue vectors, i.e.\ $\ket{v_q},\ket{v_{q+1}}$, are orthogonal (indexing is periodic). 
  }

\end{figure}

One route towards obtaining an operational test of contextuality is to find a way to enforce that measurements on a single system happen ``sequentially'', i.e.\ they are separated in
``time'' (as opposed to being separated in ``space'', which is the case for non-local games). We describe one folklore attempt at constructing an operational test, which assumes that the device is ``memoryless''. This example is not essential to understanding our results, and may be skipped at first read.

\begin{center}
    \begin{figure}
      \hfill{}\subfloat[\label{fig:memoryless}]{\centering{}\includegraphics[scale=1.0]{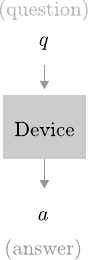}}\hfill{}\subfloat[\label{fig:MagicSquareMemoryless}]{\centering{}\includegraphics[scale=0.8]{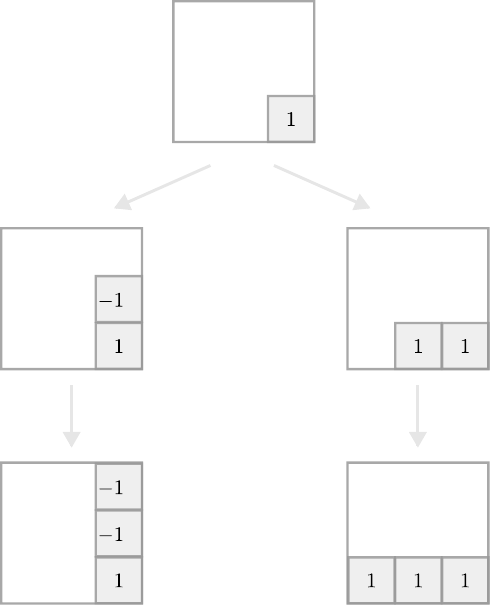}}\hfill{}
      \caption{The folklore memoryless interpretation of contextuality: a memoryless device (left) and a memoryless operational test, illustrated using the Peres-Mermin magic square (right).}
    \end{figure}
  \end{center}

\paragraph*{The ``memoryless'' attempt at an operational test}
As denoted in \Figref{memoryless}, consider a device that takes as
input a question $q$, produces an answer $a$, and then forgets the
question. The referee in this case proceeds as follows:
\begin{enumerate}
  \item Samples a context $C$ from $\Call$ with probability $\Pr(C)$. %
  \item Sequentially asks all the questions $q_1\dots q_k$ in the context $C$.
  \item Accepts the answers $a_1\dots a_k$ if the constraint corresponding to $C$ holds, i.e.\ if $\pred(a_1\dots a_k,C)$ is true.
\end{enumerate}
Note that the most general deterministic model for the device is one
that encodes a ``truth table'' $\tau:Q\to A$ that maps questions
to answers. Since there is no memory, no previous question can affect
the way the device answers the current question. The most general
quantum device, on the other hand, starts with an initial $\left|\psi\right\rangle $
and to each question, assigns an observable $O_{q}$, which is measured
to obtain an answer $a$ to the question $q$.  %

Let us work out an example to illustrate
the key point. Consider again the Magic Square game from \Exaref{MagicSquare}.
Suppose the first question the referee asks is the value of the bottom
right box of the magic square (see \Figref{MagicSquareMemoryless}).
It can then either ask questions completing the corresponding row
or column. For a deterministic \emph{memoryless} device, since a single
truth table $\tau$ is being used, it is impossible to satisfy all
the constraints of the magic square. However, a deterministic
device \emph{with memory} can satisfy all the constraints. This is
because before answering the last question, it can learn whether the
third column is being asked or the third row and thus, it can answer the
last question to satisfy the constraint. Thus ``classical memory'',
allows for classical simulation of contextuality in this test, without
using any quantum effects. In fact, \cite{kleinmann2011memory,cabello2018optimal} even quantifies the amount
of classical memory needed to simulate contextuality. %

Evidently, the glaring limitation of this attempt is that there is no operational
way of ensuring that a device is ``memoryless''. %
Thus, this has remained a barrier despite the numerous attempts
\cite{lapkiewicz2011experimental,um2013experimental,jerger2016contextuality,zhan2017experimental,malinowski2018probing,leupold2018sustained,zhang2019experimental,um2020randomness,wang2022significant,hu2023self,liu2023experimental,bharti2019robust,bharti2019local,xu2024certifying,saha2020sum,budroni2022kochen} %
 and the fact that contextuality has, in general, been a bustling area of investigation \cite{budroni2022kochen,wang2022significant,hu2023self,liu2023experimental,liu2023exploring,xu2024certifying}.

Our construction follows the same general approach of enforcing separation in ``time''
instead of in ``space'', but is a radical departure from the attempt
described above. \newblue{In the subsequent sections, we only assume that the device is \emph{computationally bounded}, and use cryptographic techniques to construct an ``operational test of contextuality''.}
\begin{rem}[Criteria for being an Operational Test of Contextuality]\label{rem:criteria}
We have not yet formally defined what we mean by an ``operational test of contextuality''. %
Consider any test that serves as a proof of quantumness. %
Intuitively, we require that this test, additionally, is \emph{faithtful to some contextuality game $\G$}, i.e.\ it satisfies the following two properties: 
\begin{itemize}
  \item The test involves asking the prover questions corresponding to $\G$, possibly under some ``encoding''. The test is, however, allowed to involve other messages unrelated to $\G$. 
  \item Whether the test passes or fails is determined solely by evaluating the predicate corresponding to $\G$, using the ``decoded'' questions and answers. %
\end{itemize}
Formalising these requirements is a bit more involved and is deferred to \Subsecref{criteria}. However, this intuitive notion suffices for now, as will be clear from our construction in the subsequent sections. %

\end{rem}
\global\long\def\map{\mathsf{map}}%
\global\long\def\aux{\mathsf{aux}}%

\global\long\def\NC{\mathsf{NC}}%
\global\long\def\qu{\mathsf{qu}}%
\global\long\def\complete{\mathsf{complete}}%
\global\long\def\transcript{\mathsf{transcript}}%
\global\long\def\chall{\mathsf{chall}}%
\global\long\def\poq{\mathsf{\pi}}%
\global\long\def\PoQ{\mathsf{PoQ}}%
\global\long\def\valid{\mathsf{valid}}%
\global\long\def\cert{\mathsf{Cert}}%
\global\long\def\challenge{\mathsf{chall}}%

\subsection{Quantum Fully Homomorphic Encryption (QFHE)}\label{subsec:techoverviewQFHE}

We informally introduce fully homomorphic encryption. We start with the classical notion.

\paragraph*{Fully Homomorphic Encryption (FHE)} A homomorphic encryption scheme is specified
by four algorithms, $(\gen,\enc,\dec,\Eval)$, as follows:
\begin{itemize}
\item $\gen$ takes as input a security parameter $1^{\lambda}$, and outputs a secret key $\sk$.
  \item $\enc$ takes as input a secret key $\sk$ and a message $s$, and outputs a ciphertext $c$. We use the notation $\enc_{\sk} = \enc(\sk, \cdot)$.
  \item $\dec$ takes as input a secret key $sk$, a ciphertext $c$ and outputs the corresponding plaintext $s$. We use the notation $\dec_{\sk} = \dec(\sk, \cdot)$.
\end{itemize}
The ``homomorphic'' property says that a circuit $\circuit$ can be applied on an encrypted input to obtain an encrypted output, i.e.\
\begin{itemize}
  \item $\Eval$ takes as input a circuit $\circuit$, a ciphertext $c$, and an auxiliary input $\mathsf{aux}$, and outputs a ciphertext $c'$. The following is satisfied. Let $c \leftarrow \mathsf{Enc}_{\sk}(s)$ and $c' \leftarrow\Eval(\circuit, c, \mathsf{aux})$, then $\dec_{\sk}(c')=\circuit(s, \mathsf{aux})$.
\end{itemize}
Crucially, note that the secret key $\sk$ is not needed to apply
the $\Eval$ algorithm. The security condition is the usual one: that
an encryption of $s$ should be indistinguishable from an encryption of $s'\neq s$. If $\Eval$ supports evaluation for the class of all polynomial-size circuits (in the security parameter), then the scheme is said to be \emph{fully} homomorphic.\footnote{While homomorphic schemes for restricted families of circuits have been known for some time, a
``fully homomorphic scheme'' was only discovered somewhat recently in \cite{Gentry09}.} 

\paragraph*{Quantum Fully Homomorphic Encryption (QFHE)}

For this overview, it suffices to take a QFHE scheme to be the same as an FHE scheme, except that it allows messages and auxiliary inputs to be quantum states, and circuits to be quantum circuits. The QFHE schemes that are relevant to our work \cite{mahadev2020classical, brakerski18} satisfy the following
additional properties.
\begin{enumerate}
  \item \emph{Classical encryption/decryption}. \\
        $\gen$ is a classical algorithm, while $\Enc,\Dec$ become classical algorithms when their inputs are classical. %
        In particular, this means
        that classical inputs are encrypted into classical ciphertexts. This
        property is essential when the scheme is deployed in protocols involving
        classical parties, as will be the case here.\footnote{The first schemes to satisfy this property appeared in
        the breakthrough works \cite{mahadev2020classical,brakerski18}.}
  \item \emph{Locality is preserved}. \\
        {Consider an arbitrary bipartite state $\ket{\psi}_{AB}$ and let $M^A$ and $M^B$ be circuits acting on registers $A$ and $B$ respectively with a measurement at the end. The property requires that correlations between the measurement outcomes from $M^A$ and $M^B$ should be the same in the following two cases:\\
        (i) Register $A$ is encrypted, $M_A$ is applied using $\Eval$, and the result is decrypted. Let $a$ be the decrypted outcome. $M^B$ is applied to register $B$. Let $b$ be the outcome.\\
        (ii) Apply $M^A$ on register $A$ and $M^B$ on register $B$. Let $a$ and $b$ respectively be the outcomes.}
  \item\label{item:3QFHEproperty} \emph{Form of Encryption}.\\
        Encryption of a state $\left|\psi\right\rangle $
        takes the form $(X^{x}Z^{z}\left|\psi\right\rangle ,\widehat{xz})\leftarrow\Enc_{\sk}(\left|\psi\right\rangle )$,
        where $X^x$ applies a Pauli $X$ to the $i$-th qubit based on the value of $x_i$, and $Z^z$ is defined similarly. $\widehat{xz}$ is a classical $\FHE$ encryption of the pad $xz$.
\end{enumerate}
All known constructions of QFHE schemes satisfying property 1 also
satisfy properties 2 and 3. The KLVY compiler, as one might guess, relies
on property 2. Our compiler relies on property 3.\footnote{
Strictly speaking, we require the $\FHE$ scheme in property 3 to be public-key. This is used to establish the faithfulness condition (detailed in \Defref{OperationalTestOfContextuality} and \Subsecref{FHE-Scheme}). We neglect this here for readability.
}

\subsection{The KLVY Compiler}

Consider a two-player non-local game specified by question and answer sets $X,Y,A,B$, a predicate
$\pred:A\times B\times X\times Y\to\{0,1\}$ and a distribution over the
questions ${\cal D}_{{\rm questions}}$ that specifies $\Pr(x,y)$.
The KLVY compiler takes this non-local game as input and produces the following single-prover game, where the verifier proceeds as follows:
\begin{itemize}
    \item {\textbf{Round 1.}} Sample $(x,y) \leftarrow {\cal D}_{{\rm questions}}$, and a secret key $\sk$ for a $\QFHE$ scheme. Send an encryption $c_x$ of ``Alice's question'', and get an encrypted answer $c_a$.
    \item {\textbf{Round 2.}} Send ``Bob's question'' $y$, and get his answer $b$ in the clear. Decrypt Alice's response using $\sk$, and accept if $\pred(a,b,x,y) = 1$.
\end{itemize}
The honest prover prepares the entangled state $\ket{\psi}_{AB}$ corresponding to the optimal quantum strategy for the non-local game. It uses $\QFHE$'s $\Eval$ algorithm on subsystem $A$ to answer question $x$ according to Alice's optimal strategy, and answers question $y$ in the clear using Bob's optimal strategy on subsystem $B$. More formally, they proceed as in \Figref{KLVY}. %
\begin{figure*}

\begin{center}
  \begin{tabular}{|>{\raggedright}p{4cm}|>{\centering}p{3cm}|>{\raggedright}m{5cm}|}
    \multicolumn{1}{>{\raggedright}p{4cm}}{(Honest Quantum) Prover} & \multicolumn{1}{>{\centering}p{3cm}}{} & \multicolumn{1}{>{\raggedright}m{5cm}}{Verifier}\tabularnewline
    \cline{1-1} \cline{3-3}
    Prepare $\left|\psi\right\rangle _{AB}$ corresponding to the non-local
    game's optimal quantum strategy.                                        &                                        & ~\\$\sk\leftarrow\gen(1^{\lambda})$\\
    $(x,y)\leftarrow{\cal D}_{{\rm questions}}$
    $c_{x}\leftarrow\enc_{\sk}(x)$                                  \tabularnewline
                                                                &                                        & \tabularnewline
                                                       & {\Large $\overset{c_{x}}{\longleftarrow}$}     & \tabularnewline
 Homomorphically compute the answer to $x$ using ``Alice's strategy'' on system $A$. Let $c_{a}$ be the encrypted answer. &  & \tabularnewline
    & {\Large $\overset{c_{a}}{\longrightarrow}$}                                  & \tabularnewline
                                                                    &                                        & \tabularnewline
                                  & {\Large $\overset{y}{\longleftarrow}$} & \tabularnewline

Compute the answer to $y$ using ``Bob's strategy'' (in the clear) on system $B$. Let the answer be $b$. &        &  \tabularnewline
   & {\Large $\overset{b}{\longrightarrow}$}                                      & \tabularnewline
                                                                    &                                        & $a=\dec_{\sk}(c_{a})$

    Accept if $\pred(a,b,x,y)=1$.\\~\tabularnewline
    \cline{1-1} \cline{3-3}
  \end{tabular}
  \par
\end{center}
\caption{The KLVY compiler~\cite{KLVY22} is illustrated above: it takes any two-player non-local game and converts it into a proof of quantumness. \label{fig:KLVY}}
\end{figure*}

\begin{thm*}[\cite{KLVY22}, informal]
Consider a two-player non-local game with classical and quantum values $\omega_c$ and $\omega_q$ respectively. The corresponding KLVY-compiled single-prover game satisfies the following:
\begin{itemize}
  \item (Completeness) The QPT prover described above makes the verifier accept with probability $\omega_q-\negl(\lambda)$.
  
  \item (Soundness) Every PPT prover makes the verifier above accept with probability
  at most $\omega_c+\negl(\lambda)$.
\end{itemize}
Here $\negl$ are (possibly different) negligible functions.
\end{thm*}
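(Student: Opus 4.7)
The honest prover in Figure~\ref{fig:KLVY} prepares $\ket{\psi}_{AB}$, the state corresponding to the optimal quantum strategy of the non-local game, applies Alice's measurement on register $A$ homomorphically via $\Eval$, and later applies Bob's measurement on register $B$ in the clear. By Property~2 of $\QFHE$ (``locality preservation''), the joint distribution of $(a,b)$---where $a$ is the verifier's decryption of $c_a$ and $b$ is the prover's clear-text Round~2 output---coincides with the joint outcome distribution of the original quantum strategy on $\ket{\psi}_{AB}$, up to the negligible correctness error of $\QFHE$. Hence the winning probability is at least $\omega_q - \negl(\lambda)$.

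\textbf{Soundness (plan).} Given any PPT prover $P=(P_1,P_2)$ with compiled-game winning probability $p$, I will construct a local hidden variable (LHV) strategy for the underlying non-local game with winning probability at least $p-\negl(\lambda)$. Since every LHV strategy wins with probability at most $\omega_c$, this will yield $p \leq \omega_c + \negl(\lambda)$. The LHV strategy uses as shared randomness the tuple $r=(\sk,\rho,\rho',\text{encryption randomness})$. Alice, on input $x$, computes $c_x = \Enc_{\sk}(x)$, runs $(c_a,\sigma)\leftarrow P_1(c_x;\rho)$, and outputs $a = \Dec_{\sk}(c_a)$. Bob, on input $y$, computes the dummy ciphertext $c_0 = \Enc_{\sk}(0)$, runs $(\cdot,\sigma_0)\leftarrow P_1(c_0;\rho)$, and outputs $b = P_2(y,\sigma_0;\rho')$. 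Since Alice's answer depends only on $(x,r)$ and Bob's only on $(y,r)$, this is indeed a (shared-randomness) LHV strategy, and its winning probability is bounded by $\omega_c$.

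The core technical step is to show $|p - p_{LHV}| \leq \negl(\lambda)$. The two distributions over $(a,b)$ given $(x,y)$ differ only in whether Bob's post-Round-1 state comes from running $P_1$ on the real ciphertext $c_x$ or on the dummy $c_0$. Since $P_1$ is PPT and has no access to $\sk$, semantic security of $\QFHE$ says that its output distribution on $\Enc_{\sk}(x)$ is computationally indistinguishable from that on $\Enc_{\sk}(0)$; composing with $P_2$ and then with the predicate $\pred$ then gives the desired bound via a single hybrid.

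\textbf{Main obstacle.} The delicate point in the above reduction is that the naive semantic-security adversary would need $\sk$ in order to compute Alice's contribution $a = \Dec_{\sk}(c_a)$, which is incompatible with the standard CPA game. The remedy is a two-key hybrid: the reduction samples its \emph{own} key pair $(\pk',\sk')$ to simulate Alice's side (encrypting $x$ itself and self-decrypting the resulting $c_a$), while the challenger's key is used only for Bob's Round~1 input. One then argues that this two-key setup preserves the compiled game's winning probability up to $\negl(\lambda)$---crucially exploiting that $P_1$ is key-agnostic (treating ciphertexts as opaque) and that fresh encryptions under independent keys are identically distributed---after which semantic security applies with respect to the challenger's unknown secret key. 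Carefully book-keeping the correlations between Alice's and Bob's simulations when the same prover randomness $\rho$ is re-used across both (so that the resulting object is a bona fide LHV strategy rather than a ``signalling'' one) is the most delicate piece of the proof.
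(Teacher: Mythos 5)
Your completeness argument is fine and matches the paper's (it is exactly the ``locality is preserved'' property of $\QFHE$ plus correctness). The soundness part, however, has a genuine gap, and it is precisely the point you flag as the ``main obstacle'': your reduction to semantic security cannot be completed as proposed. In the compiled game there is a \emph{single} ciphertext $c_x$ and a \emph{single} run of $P_1$, which simultaneously produces the encrypted answer $c_a$ (Alice's side) and the residual state used by $P_2$ (Bob's side). To compare the real game with your dummy-ciphertext LHV strategy, the distinguisher must evaluate $\pred(a,b,x,y)$ with $a=\Dec_{\sk}(c_a)$, i.e.\ it needs $\sk$ for the very same key under which the challenge ciphertext is encrypted. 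Your proposed remedy --- a two-key hybrid in which the reduction encrypts $x$ under its own key for ``Alice's side'' while the challenger's key is used for ``Bob's Round 1 input'' --- does not correspond to any experiment reachable from the real game: there is no separate Round-1 input for Bob, and the prover cannot be split across two keys without changing its input. The assumptions invoked to bridge this (``$P_1$ is key-agnostic, treating ciphertexts as opaque'' and ``fresh encryptions under independent keys are identically distributed'') are not available for an arbitrary PPT prover: its behaviour may depend arbitrarily on the ciphertext and key, and encryptions of different messages under independent keys are only computationally indistinguishable --- which is the very thing you are trying to use, so the argument is circular. There is also a secondary mismatch: in your LHV strategy $a$ and $b$ are produced by two \emph{different} first-round runs (on $c_x$ and on $c_0$, with shared coins), whereas in the real game they come from the same run, so even the statement of the hybrid step quantifies over a joint distribution whose first coordinate requires $\sk$ to compute.

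The way KLVY (and the paper's own sketch) avoids this is to never decrypt $c_a$ at all. Fix the first-round transcript and \emph{rewind} the PPT prover's second round on every possible $y$ to extract Bob's entire assignment $\tau:y\mapsto b_y$, which is computable without $\sk$. Then upper bound the prover's winning probability by the value obtained when Alice's answer is replaced by the \emph{best response} $a^*(x,\tau)=\arg\max_a \mathbb{E}_y\,\pred(a,\tau(y),x,y)$; this only increases the value and eliminates any use of $\Dec_{\sk}$. For a fixed $\tau$, the pair (best-response Alice, $\tau$-playing Bob) is a local hidden variable strategy, so if the distribution of $\tau$ were independent of $x$ the value would be at most $\omega_c$. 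Hence winning with probability $\omega_c+\epsilon$ forces the distribution of $\tau$ to depend noticeably on the encrypted question $x$, and since $\tau$ and the score $\max_a\mathbb{E}_y\,\pred(a,\tau(y),x,y)$ are efficiently computable from the prover alone, this yields a PPT distinguisher against the $\QFHE$ encryption that never touches $\sk$. Your proposal is missing this best-response/rewinding step, and without it (or some substitute that removes the need for $\sk$ inside the reduction) the soundness argument does not go through.
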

For completeness, we briefly outline the general compiler for non-local games with $k>2$ players. To
compile games with $k$ players, KLVY generalises the procedure above
as follows: the compiled game consists of $k$ rounds--each
round consisting of a question and an answer (so $2k$ messages in total). The first $k-1$ questions
are encrypted (using $k-1$ independent random $\QFHE$ keys) and the last
question is asked in the clear.

Let us take a moment and build some intuition about the KLVY compiler.
{At first, one might think that it would be even more secure to also encrypt $y$. However, as we remarked in the introduction, this is not the case: since the prover can compute any desired answer $b$ using $x,a$ and $y$ under the hood of the QFHE, it can ensure that $b$ is such that $\pred(a,b,x,y)=1$
(for any non-trivial choice of $\pred$).} Instead, the key observation of KLVY
is that the verifier is testing the correlation between encrypted
answers and answers in the clear, and it turns out that a quantum prover can produce stronger correlations
than any PPT prover can. How is this intuition formalised? The key idea is that, since a PPT prover is classical, one can rewind the PPT prover to obtain answers to \emph{all} possible second round questions. This is equivalent to obtaining Bob's entire assignment of answers to questions (corresponding to a fixed encrypted question and answer for Alice from the first round). Suppose for a contradiction that such a PPT prover wins with probability non-negligibly greater than the classical value. Then, it must be that Bob's entire assignment is non-trivially correlated to ``Alice's encrypted question''. This information can thus be used to obtain a non-negligible advantage in guessing the encrypted question, breaking security of the QFHE scheme. %

Now that we understand the KLVY construction and the key insight behind
their proof, we will discuss  barriers to extending these ideas to contextuality, and our approach to circumvent these.

\subsection{Contribution 1 | A Computational Test of Contextuality\label{subsec:Contributions}} %

For simplicity, let us first restrict to contextuality games with contexts of size $2$.

\paragraph*{Attempts at extending KLVY to contextuality.}

Let us consider compilers for contextuality games where the verifier proceeds in an analogous way as the verifier in the KLVY compiler:
\begin{itemize}
    \item {\textbf{Round 1.}} Sample a context $C=(q_1,q_2) \leftarrow {\cal D}_{{\rm contexts}}$. Send a $\QFHE$ encryption $c_{q_1}$ of $q_1$ and receive as a response an encryption $c_{a_1}$ of $a_1$.
    \item {\textbf{Round 2.}} \emph{There is no clear way to proceed. Three natural approaches are listed below. }
\end{itemize}
The honest prover's state after Round 1 is encrypted and has the form $(X^x Z^z \ket{\psi_{(q_1,a_1)}}, \widehat{xz})$ where $\widehat{xz}$ denotes a classical encryption of the strings $xz$ (using Property 3 of the $\QFHE$ scheme), as detailed below \atul{in Figure~\ref{figure:klvy_context}.}
\begin{figure*}

  \begin{center}
  \begin{tabular}{|>{\raggedright}p{4cm}|>{\centering}p{3cm}|>{\raggedright}m{5cm}|}
    \multicolumn{1}{>{\raggedright}p{4cm}}{(Honest Quantum) Prover}                                                          & \multicolumn{1}{>{\centering}p{3cm}}{} & \multicolumn{1}{>{\raggedright}m{5cm}}{Verifier}\tabularnewline
    \cline{1-1} \cline{3-3}
    & & \tabularnewline
    Prepare $\left|\psi\right\rangle $ corresponding to the quantum strategy.                                                &                                        & $\sk\leftarrow\gen(1^{\lambda})$

    $C\leftarrow {\cal D}_{{\rm contexts}}$, where $C=(q_{1},q_{2})$

    $c_{q_{1}}\leftarrow\enc_{\sk}(q_{1})$\tabularnewline
                                                                                                                             &                                        & \tabularnewline
     & {\Large $\overset{c_{q_{1}}}{\longleftarrow}$} & \tabularnewline
     Evaluate the answer to $q_{1}$ under QFHE to get an encrypted answer
    $c_{a_{1}}$. The post-measurement state is now $\left(X^{x}Z^{z}\left|\psi_{q_{1}a_{1}}\right\rangle ,\widehat{xz}\right).$ &   & \tabularnewline

    & {\Large $\overset{c_{a_{1}}}{\longrightarrow}$}                                                                                       & \tabularnewline
                                                                                                                             &                                        & \tabularnewline
                                                                                                                             &  {\Large $\overset{?}{\longleftarrow}$}

    {\Large $\longrightarrow$}
    & \tabularnewline
    \cline{1-1} \cline{3-3}
  \end{tabular}
  \par\end{center}
\caption{The figure illustrates that, a priori, it is unclear how to generalise KLVY to compile contextuality games.}
\label{figure:klvy_context}
\end{figure*}

Here are three natural approaches for how to proceed, and how they fail.
\begin{enumerate}
  \item \emph{Proceed just as KLVY: Ask question $q_{2}$ in the clear.}
        This does not work because, unlike in the original KLVY setup,
        there is no analogue of system $B$ which is left in the clear. Here the prover holds an encrypted state so it is unclear how $q_{2}$ can be answered with any non-trivial dependence on the state under the encryption. %
  \item \emph{Ask the second question also under encryption.}
        If the same key is used for encryption, then, as we argued for KLVY, the predicate of the game can be satisfied by computing everything homomorphically. If the keys are independent, then we essentially return to the problem in item 1.
  \item \emph{Reveal the value of the (classically) encrypted pad $\widehat{xz}$ and ask question
        $q_{2}$ in the clear.}
        This has a serious issue: the prover
        can simply ask for the encrypted pad corresponding to $c_{q_{1}}$
        and thereby learn $q_{1}$ (or at least some bits of $q_{1}$). Once $q_{1}$ is learned, again, the predicate of the game can be trivially satisfied.\label{item3}
\end{enumerate}

\paragraph*{The Oblivious Pauli Pad}

As mentioned in Section \ref{subsec:intro}, our compiler relies on
a new cryptographic primitive, that we introduce to circumvent the barriers described above. Here, it
helps to be a bit more formal. Let $\mathbf{U}:=\{U_{k}\}_{k\in K}$ be a group of unitaries acting
on the Hilbert space ${\cal H}$. We take this group to
be the set of Paulis $\{X^{x}Z^{z}\}_{xz}$, as this makes the primitive compatible with the the form of the QFHE scheme that we will employ later in our compiler. Nonetheless, we use the general notation
$\{U_{k}\}_{k\in K}$, as it simplifies the presentation. We define the \emph{oblivious} $\mathbf{U}$ \emph{pad} as follows.

The \emph{oblivious} $\mathbf{U}$ \emph{pad} is a tuple of algorithms $(\gen,\Enc,\Dec)$ where $\gen$ and $\dec$ are PPT.\footnote{The formal definition of the oblivious pad involves a fourth PPT algorithm $\samp$ which for clarity is deferred to \Secref{opad}.}
Let $(\pk,\sk)\leftarrow\gen(1^{\lambda})$ be the public and the secret keys
generated by $\gen$. Encryption takes the form $(U_{k}\left|\psi\right\rangle ,s)\leftarrow\Enc_{\pk}(\left|\psi\right\rangle )$, where $k=\dec_{\sk}(s)$. 
Notice the similarity with the post-measurement state in the discussion
above (we will return to this in a moment). The security requirement is
that no PPT algorithm can win the following security game with probability
non-negligibly greater than $1/2$. We depict the oblivious pad primitive in Figure~\ref{fig:opad}.

\begin{figure*}
\begin{center}
  \begin{tabular}{|>{\raggedright}p{4cm}|>{\centering}p{3cm}|>{\raggedright}m{5cm}|}
    \multicolumn{1}{>{\raggedright}p{4cm}}{Prover} & \multicolumn{1}{>{\centering}p{3cm}}{} & \multicolumn{1}{>{\raggedright}m{5cm}}{Verifier}\tabularnewline
    \cline{1-1} \cline{3-3}
                                                   &                                        & ~\\$(\pk,\sk)\leftarrow\gen(1^{\lambda})$\tabularnewline
                                                   & {\Large $\overset{\pk}{\longleftarrow}$}

    {\Large $\overset{s}{\longrightarrow}$}                     & \tabularnewline
                                                   &                                        & \tabularnewline
                                                   &         & $b\leftarrow\{0,1\}$

    $k_{0}\leftarrow K$

    $k_{1}=\Dec_{\sk}(s)$\tabularnewline
    & {\Large $\overset{k_{b}}{\longleftarrow}$}  & \tabularnewline
                                                   & {\Large $\overset{b'}{\longrightarrow}$}            & \tabularnewline
    &  & Accept if $b=b'$ \\~ \tabularnewline
    \cline{1-1} \cline{3-3}
  \end{tabular}
  \par
\end{center}
\caption{The security game for the oblivious pad.}
\label{fig:opad}
\end{figure*}

The security game formalises the intuition that no PPT prover can
distinguish between the correct ``key'' $k_{1}$ and a uniformly random
``key'' $k_{0}$. We emphasize, in words, the two distinctive features of this primitive:
\begin{itemize}
    \item By running $\enc$, a QPT prover can obtain, given a state $\ket{\psi}$, an encryption of the form $(U_{k}\left|\psi\right\rangle ,s)$, where $k = \dec_{\sk}(s)$.
    \item There is no way for a PPT prover, given $\pk$, to produce an ``encryption'' $s$, for which it has non-negligible advantage at guessing $\dec_{\sk}(s)$.
\end{itemize}

We describe informally how to instantiate the primitive in the random oracle model
assuming noisy trapdoor claw-free functions (the detailed description is in \Algref{ObliviousPauliPad}). %
The construction builds on ideas from \cite{BKVV20}.

The key idea is the following. Let $f_0,f_1$ be a Trapdoor Claw-Free function pair. We take $\pk = (f_0, f_1)$, and $\sk$ to be the corresponding trapdoor. Then, $\enc_{\pk}$ is as follows:
\begin{itemize}
\item[(i)] On input a qubit state $\ket{\psi} = \alpha \ket{0} + \beta \ket{1}$, evaluate $f_0$ and $f_1$ in superposition, controlled on the first qubit, and measure the output register. This results in some outcome $y$, and the leftover state $(\alpha \ket{0} \ket{x_0} + \beta \ket{1}\ket{x_1})$, where $f(x_0) = f(x_1) = y$.
\item[(ii)] Compute the random oracle ``in the phase'', to obtain $((-1)^{H(x_0)}\alpha \ket{0} \ket{x_0} + (-1)^{H(x_1)} \beta \ket{1}\ket{x_1})$. Measure the second register in the Hadamard basis. This results in a string $d$, and the leftover qubit state
$$ \ket{\psi_Z} = Z^{d\cdot(x_0 \oplus x_1) + H(x_0) + H(x_1)} \ket{\psi} \,.$$
\item[(iii)] Repeat steps (i) and (ii) on $\ket{\psi_{Z}}$, but \emph{in the Hadamard basis}! This results in strings $y'$ and $d'$, as well as a leftover qubit state $\ket{\psi_{XZ}} = $
$$X^{d'\cdot(x_0' \oplus x_1') + H(x_0') + H(x_1')}Z^{d\cdot(x_0 \oplus x_1) + H(x_0) + H(x_1)} \ket{\psi}\,,$$
where $x_0'$ and $x_1'$ are the pre-images of $y'$.
\end{itemize}
Notice that the leftover qubit state $\ket{\psi_{XZ}}$ is of the form $X^x Z^z \ket{\psi}$ where $x,z$ have the following two properties: (a) a verifier in possession of the TCF trapdoor can learn $z$ and $x$ given respectively $y, d$ and $y',d'$, and (b) no PPT prover can produce strings $y,d$ as well as predict the corresponding bit $z$ with non-negligible advantage (and similarly for $x$). Intuitively, this holds because a PPT prover that can predict $z$ with non-negligible advantage must be querying the random oracle at \emph{both} $x_0$ and $x_1$ with non-negligible probability. By simulating the random oracle (by lazy sampling, for instance), one can thus extract a claw $x_0, x_1$ with non-negligible probability, breaking the claw-free property.

\paragraph*{Our Compiler}
We finally describe our contextuality game compiler. As mentioned in the introduction, our strategy is still to ask the first question under $\QFHE$ encryption and the second question in the clear, with the following crucial difference: the prover is first asked to ``re-encrypt'' the post-measurement state using the \emph{oblivious pad} functionality (from here one referred to as $\opad$), and only \emph{after that} the verifier reveals to the prover how to ``decrypt'' the state, in order to proceed to round 2. %

The key idea is easy to state, once the notation is clear. To this end, recall that Property~\ref{item:3QFHEproperty} of a $\QFHE$ scheme ensures that the encryption of a quantum state $\ket{\psi''}$ has the form 
\begin{equation}
    (U_{k''} \ket{\psi''}, \hat{k}'') \label{eq:Property3Form}
\end{equation} 
where $\hat{k}''$ denotes a classical encryption of $k''$ (the reason why we use double primes will become clear shortly), and $U_{k''}$ is an element of the Pauli group. Note that using the secret key of the $\QFHE$ scheme, one can recover $k''$ from $\hat{k}''$. Further, let the optimal quantum strategy for the underlying contextuality game consist of state $\ket{\psi}$ and observables $\{O_q\}$. Finally, denote by $\ket{\psi_{a,q}}$ the post-measurement state arising from measuring $\ket{\psi}$ using $O_q$ and obtaining outcome $a$.

We are now ready to describe our compiler. We first explain it in words and subsequently give a more formal description for clarity. In both cases, we highlight the conceptually new parts in {\color{blue} blue}.%
\begin{itemize}
    \item \textbf{Round 1} \emph{Ask the encrypted question, and have the prover re-encrypt its post-measurement state using $\opad$.} \\
    The verifier samples keys for the $\QFHE$ scheme {\color{blue}and for the $\opad$.} It samples a context $C  \leftarrow \cal{D}_{\rm contexts} $ and then uniformly samples questions $q_1,q_2$ from the context $C$ (note that $q_1=q_2$ with probability $1/2$ since, for simplicity, we are considering contexts of size $2$.)
        \begin{itemize}
            \item \textbf{Message 1} The verifier sends the $\QFHE$ encryption $c_{q_1}$ of the first question $q_1$, together with {\color{blue} the public key of the $\opad$.}
        \end{itemize}
    The honest quantum prover obtains the encrypted answer $c_{a_1}$ by measuring, under the $\QFHE$ encryption, the state $\ket{\psi}$ using the observable $O_{q_1}$. %
    It now holds a state of the form in \Eqref{Property3Form}, with $\ket{\psi''}=\ket{\psi_{a_1,q_1}}$. {\color{blue}Using the public key of the $\opad$, the prover applies $\opad.\Enc$ to the encrypted post-measurement state, $U_{k''}\ket{\psi_{a_1,q_1}}$, to obtain the ``re-encrypted'' quantum state, $U_{k'}U_{k''} \ket{\psi''}$, where $U_{k'}$ was applied by the $\opad$, together with a classical string $s'$ that encodes $k'$.} This step is critical to the security of the protocol and is discussed in more detail shortly. Crucially, note that both $U_{k'}$ (coming from the $\opad$) and $U_{k''}$ (coming from the $\QFHE$) are Paulis.
        \begin{itemize}
            \item \textbf{Message 2} The prover sends the $\QFHE$ encrypted answer $c_{a_1}$, together with the two strings $\hat k''$ and $s'$.
        \end{itemize}
    \item \textbf{Round 2} \emph{Remove the overall encryption, and proceed in the clear.} {\color{blue} The verifier recovers $k'$ from $s'$ (using the secret key of the $\opad$) and $k''$ from $\hat{k}''$ (using the secret key of the $\QFHE$ scheme). It then computes $k$ satisfying $U_k=U_{k'}U_{k''}$.}
        \begin{itemize}
            \item \textbf{Message 3} The verifier sends the second question $q_2$ {\color{blue} together with $k$ as computed above.}
        \end{itemize}
    The prover measures its {\color{blue} quantum state $U_{k'}U_{k''}\ket{\psi''} = U_{k}\ket{\psi_{a_1,q_1}}$, using observable $U_k O_{q_2} U_k^\dagger$} to obtain an outcome $a_2$.
        \begin{itemize}
            \item \textbf{Message 4} The prover sends $a_2$. 
        \end{itemize}
    The verifier decrypts $c_{a_1}$ to recover $a_1$ using the secret key of the $\QFHE$ scheme. {\color{blue} If $q_1=q_2$, it accepts if the answers match, i.e.\ $a_1=a_2$.} If $q_1\neq q_2$, it accepts if the predicate is true, i.e.\ $\pred(a_1,a_2,q_1,q_2)=1$.
\end{itemize}

We summarise our compiler in Figure~\ref{fig:context_compiler}. Since we now have ${\gen,\enc,\dec}$ algorithms for both $\opad$ and $\QFHE$, we use prefixes such as $\opad.\enc$ to refer to $\enc$ associated with $\opad$ to avoid confusion.

\begin{figure*}
\begin{center}
  \begin{tabular}{|>{\raggedright}p{6cm}|>{\centering}p{3cm}|>{\raggedright}m{6cm}|}
    \multicolumn{1}{>{\raggedright}p{6cm}}{(Honest Quantum) Prover}                           & \multicolumn{1}{>{\centering}p{3cm}}{}                              & \multicolumn{1}{>{\raggedright}m{6cm}}{Verifier}\tabularnewline
    \cline{1-1} \cline{3-3}
    Prepare $\left|\psi\right\rangle $ corresponding to the quantum strategy.                 &                                                                     & ~\\ $\sk\leftarrow\QFHE.\gen(1^{\lambda})$

    $C\leftarrow \cal{D}_{\rm contexts}$%

    \textcolor{blue}{$q_{1}\leftarrow C$}

    \textcolor{blue}{$q_{2}\leftarrow C$}

    $c_{q_{1}}\leftarrow\QFHE.\enc_{\sk}(q_{1})$

    \textcolor{blue}{$(\opad.\sk,\opad.\pk)\leftarrow\opad.\gen(1^{\lambda})$}\tabularnewline
                                                                                              &                                                                     & \tabularnewline
     & {\Large \textcolor{blue}{$\xleftarrow{(c_{q_{1}}, \opad.\pk)}$}}     & \tabularnewline
     Evaluate the answer to $q_{1}$ under QFHE to get an encrypted answer
    $c_{a_{1}}$. The post-measurement state is now $\left(U_{k''}\left|\psi_{q_{1}a_{1}}\right\rangle ,\widehat{k}''\right)$
    \textcolor{blue}{and now apply the $\opad$ to the post-measurement state,
    i.e.\ $(U_{k'}U_{k''}\left|\psi_{q_{1}a_{1}}\right\rangle ,s')\leftarrow$}

    \textcolor{blue}{$\opad.\enc_{\opad.\pk}(U_{k''}\left|\psi_{q_{1}a_{1}}\right\rangle )$.} & & \tabularnewline
                                                                                              & {\Large \textcolor{blue}{$\xrightarrow{(c_{a_{1}}, \hat{k}'', s')}$}} & \tabularnewline
            &   & \textcolor{blue}{Using the secret keys, find $k$ such that $U_{k}=U_{k'}U_{k''}$}   \tabularnewline
                                                                                              & {\Large \textcolor{blue}{$\xleftarrow{(q_{2},k)}$}}                 & \tabularnewline
                                    \textcolor{blue}{Measure $U_{k}O_{q_{2}}U_{k}^{\dagger}$} & & \tabularnewline
                                       & {\Large \textcolor{blue}{$\overset{a_{2}}{\longrightarrow}$}}                    & \tabularnewline
                                                                                              &                                                                     & {Decrypt $c_{a_{1}}$ to learn $a_{1}$.}

    \textcolor{blue}{If $q_{1}=q_{2}$, accept if $a_{1}=a_{2}$.}

    {If $q_{1}\neq q_{2}$, accept if $\pred(a_{1},a_{2},q_{1},q_{2})=1$.}\tabularnewline
    & & \tabularnewline
    \cline{1-1} \cline{3-3}
  \end{tabular}
  \par\end{center}
  \caption{Our contextuality game compiler. It takes a contextuality game given by $(Q,A,\Call,\pred,\calD_{\rm contexts})$, a $\QFHE$ scheme and an $\OPad$ and returns an operational test of contextuality.} %
  \label{fig:context_compiler}
\end{figure*}

Our compiler satisfies the following, assuming the underlying $\QFHE$ and oblivious pad are secure. We first state a special case of our general result (which is stated later in Theorem \ref{thm:general}).
\begin{thm*}[restatement of \Thmref{infMain1}]
Consider a contextuality game \atul{$\G$} with contexts of size $2$. Let $\valNC$ and $\valQu$ be its non-contextual and quantum values respectively. The compiled game (as described above) is faithful to $\G$. In particular, it satisfies the following: 
   \begin{itemize}
    \item (Completeness) The QPT prover described above wins with probability $\frac{1}{2}(1 + \valQu) -\negl(\lambda) \,.$
    \item (Soundness) Any PPT prover wins with probability at most $\frac{1}{2}(1 + \valNC) +\negl(\lambda) \,.$
  \end{itemize}
Here $\negl$ denote (possibly different) negligible functions. %
\end{thm*}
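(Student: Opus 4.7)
The plan is to establish completeness, then soundness via two cryptographic hybrids followed by a short combinatorial bound, and finally check the syntactic faithfulness condition. For completeness, I trace the honest quantum prover through the protocol. Homomorphically measuring $O_{q_1}$ on $\ket{\psi}$ leaves the post-measurement state as an eigenstate $\ket{\psi_{q_1,a_1}}$ of $O_{q_1}$ wrapped by the Pauli $U_{k''}$ from the QFHE encryption; $\opad.\Enc$ then produces $U_{k'}U_{k''}\ket{\psi_{q_1,a_1}}$. Once the verifier sends back $k$ with $U_k=U_{k'}U_{k''}$, measuring $U_k O_{q_2} U_k^\dagger$ on this state yields the same outcome distribution as measuring $O_{q_2}$ directly on $\ket{\psi_{q_1,a_1}}$. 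On $q_1=q_2$ (probability $1/2$) the eigenstate structure forces $a_2=a_1$ deterministically; on $q_1\neq q_2$ (probability $1/2$) the procedure is exactly a sequential realisation of the optimal quantum strategy on $\G$ and wins with probability $\valQu$. Averaging gives $\tfrac{1}{2}(1+\valQu)$, up to a $\negl$ loss from QFHE correctness.

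For soundness, let $P$ be any PPT prover with winning probability $W_0$. I define two hybrids whose goal is to decouple the prover's view from $q_1$. In H1 the round-$3$ key $k$ is replaced by a uniformly random Pauli $k^*$: the reduction plays the oblivious-pad security game, samples $\sk_{\QFHE}$ itself, forwards the prover's $s'$ to the oblivious-pad challenger, and uses the returned key combined with $k''$ (decrypted from $\hat{k}''$) as the round-$3$ message; guessing ``real'' iff the verifier accepts gives distinguishing advantage $\tfrac12|W_0-W_1|$, forcing $|W_0-W_1|\le\negl$. In H2, $c_{q_1}$ is replaced by $\QFHE.\enc_{\sk}(\perp)$: the reduction samples $\opad$ keys and $k^*$ itself (crucially, no $\sk_{\QFHE}$ is needed for round $3$ thanks to H1), dispatching the final acceptance check on $c_{a_1}$ either via a CPA-with-bounded-decryption variant (as used in \cite{KLVY22}) or by leveraging the Pauli-pad structure of the QFHE ciphertext. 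In H2 the prover's entire view is independent of $q_1$.

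By convexity it then suffices to bound the H2 value for a deterministic strategy $(a_1,\tau)$ with $\tau:Q\to A$. For $|C|=2$ the marginal of $q_1$ under both $\{q_1=q_2\}$ and $\{q_1\neq q_2\}$ equals $\mu_Q$, the marginal of a single question sampled uniformly from $C\sim\calD_{\text{contexts}}$; set $p := \Pr_{q\sim\mu_Q}[a_1=\tau(q)]$. The consistency check is passed with probability exactly $p$, and since $\pred(a_1,\tau(q_2),q_1,q_2)=\pred(\tau(q_1),\tau(q_2),q_1,q_2)$ whenever $a_1=\tau(q_1)$, a triangle inequality gives
\[
\Pr[\pred(a_1,\tau(q_2),q_1,q_2)=1]\le V(\tau)+(1-p)\le\valNC+(1-p),
\]
where $V(\tau)\le\valNC$ is the non-contextual value of $\tau$ on $\G$. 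Combining, $\tfrac12 p+\tfrac12(\valNC+1-p)=\tfrac12(1+\valNC)$, and adding the two negligible losses from H1 and H2 gives the claimed soundness bound. Faithfulness is a syntactic check against the criteria of the relevant subsection: the compiled test's questions decode to $(q_1,q_2)$ sampled as prescribed by $\calD_{\text{contexts}}$, and the accept condition on $q_1\neq q_2$ is precisely $\pred$ applied to the decoded answers.

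I expect the most delicate step to be the QFHE hybrid H2: standard IND-CPA does not permit the reduction to decrypt $c_{a_1}$ in order to perform the acceptance check. Sequencing H1 before H2 is essential, since it removes any need for $\sk_{\QFHE}$ when producing the round-$3$ message; what remains is a single one-shot decryption of $c_{a_1}$, which I would handle either by the mild strengthening of CPA already used in \cite{KLVY22} or by exploiting the explicit Pauli-pad form of the ciphertext together with knowledge of $k^*$. The combinatorial inequality itself is then a short triangle-inequality argument, and no further obstacles arise.
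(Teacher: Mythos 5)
Your completeness argument and your hybrid H1 (replacing the round-3 key $k$ by a uniformly random one via the oblivious-pad security game) match the paper, and your final combinatorial bound $\tfrac12 p+\tfrac12(\valNC+1-p)=\tfrac12(1+\valNC)$ is a clean, correct version of what the paper proves as its ``consistency only helps'' lemma. The gap is hybrid H2. To argue $|W_1-W_2|\le\negl$ when $c_{q_1}$ is replaced by $\QFHE.\Enc_{\sk}(\perp)$, your reduction must output a guess correlated with whether the prover \emph{wins}, and the verifier's accept bit is a function of $a_1=\QFHE.\Dec_{\sk}(c_{a_1})$ and of $q_1$ (consistency check vs.\ predicate check). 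A reduction playing the standard IND game has no secret key and cannot compute this bit, and neither of your proposed patches closes this: KLVY do \emph{not} rely on any ``CPA with bounded decryption'' strengthening --- their soundness proof, like this paper's, is extraction-based and never evaluates the win predicate inside the reduction --- and knowing the sampled Pauli key $k^*$ or the Pauli-pad form of quantum ciphertexts gives no handle on the classical FHE ciphertext $c_{a_1}$. So as written, the step on which your whole soundness bound rests is unjustified under the assumptions actually available (Equation~\eqref{eq:distinguisher-1}).

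The repair is essentially to import the paper's route: first replace the prover by a modified (possibly QPT, used only for an upper bound) prover that rewinds the original PPT prover to extract a truth table $\tau$ and returns $\QFHE.\Eval_{\tau}(c_{q_1})$ as its encrypted answer. For this consistent prover the accept bit is a function of $(q_1,q_2,\tau)$ alone, which a reduction can evaluate \emph{without} the secret key, and only then can one run an indistinguishability argument over the encrypted question. The paper in fact avoids plaintext-switching hybrids altogether: it shows (Lemma~\ref{lem:2EstimateAssumingConsistent}) that the winning probability is bounded by an expression in the extracted distributions $p_{q'\tau}$, and that if these distributions depended non-negligibly on $q'$ one would break $2$-IND directly (Theorem~\ref{thm:MainSoundness}), with a separate lemma handling the precision of estimating $p_{q'\tau}$; your H1 step corresponds to its Lemma~\ref{lem:1guessK=00003DgoodK}. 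With the ``consistent prover'' modification inserted before your H2, the rest of your outline (and your tidy union-bound computation) does go through, but without it the proposal does not constitute a proof.
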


\emph{Proof sketch:} \newblue{The faithfulness condition as discussed in \Remref{criteria} is evidently satisfied by the compiled game.\footnote{The formal notion, as stated later in \Subsecref{criteria}, is also satisfied but the details are deferred.} } %
Suppose ${\cal A}$ is a PPT algorithm
that wins with probability non-negligibly greater than $\frac{1}{2}(1+\valNC)$.
Observe that one can associate a ``deterministic assignment'' corresponding
to ${\cal A}$, conditioned on some fixed first round messages, as follows:
simply rewind ${\cal A}$ to learn answers to all possible second round questions, obtaining
an assignment $\tau:Q\to A$, mapping questions to answers. Let us write
$\tau_{q_{1}}$ to make the dependence of the assignment on the first question more
explicit (note that the assignment depends on the encrypted question $c_{q_1}$ as well as the encrypted answer $c_{a_1}$). For the purpose of this overview, suppose also that ${\cal A}$
is consistent, i.e.\ if $q_{1}=q_{2}$, then $a_{1}=a_{2}$ (note that
this in particular ensures that, when $q_{1}=q_{2}$, ${\cal A}$ wins
with probability 1. One can show that an adversary that is not consistent can be turned into an adversary that is consistent and wins with at least the same probability). Now, to win with probability more than $\frac{1}{2}(1+\valNC)$, it must be that the $\tau_{q_1}$'s are different for different $q_{1}$'s. Otherwise,
${\cal A}$'s strategy is just a convex combination of deterministic assignments
and this by definition cannot do better than $\valNC$ when $q_{1}\neq q_{2}$.
But if the distribution over $\tau_{q_{1}}$s and $\tau_{q_{1}'}$s
is different for at least some $q_{1}\neq q_{1}'$, then one is able
to distinguish $\QFHE$ encryptions of $q_{1}$ from those of $q_{1}'$.
Thus, as long as the $\QFHE$ scheme is secure, no PPT algorithm can win
with probability non-negligibly greater than $\frac{1}{2}(1+\valNC)$.

In the above sketch, we glossed over the very important subtlety that, in order to obtain the truth table $\tau$, the reduction
needs to provide as input to $\cal A$ the ``correct'' decryption key $k$ (as the verifier does in the third message of our compiled game, where $k$ is such that $U_k = U_{k'}U_{k''}$). However, the reduction only sees \emph{encryptions} of $k'$ and $k''$. So, how does it compute $k$ without the secret keys?  Crucially, this is where the $\opad$ comes into play---it allows the reduction to instead use an independent uniformly random
$k$ (not necessarily the ``correct'' one) when constructing the reduction that breaks the security of the
$\QFHE$ scheme. The fact that such a $k$ is computationally indistinguishable from the correct one (from the point of view of the prover $\mathcal{A}$) follows precisely from the security of the $\opad$. %

\paragraph*{General Compilers}
The compiler we described earlier handles contextuality games with contexts of size $2$. How does one generalise it to contexts of arbitrary size? Unlike for KLVY, it is not entirely clear what the ``correct'' way is here.

We design two compilers (which seem incomparable). The first compiler applies universally to all contextuality games. The second applies primarily to contextuality games where the quantum value is $1$ (for instance, it works for the magic
square but not for KCBS). Notably, both compilers are 4-message protocols. 
\begin{itemize}
  \item $(|C|-1,1)$ compiler:
        \begin{itemize}
          \item Round 1: Ask $|C|-1$ questions under $\QFHE$.
          \item Round 2: Ask $1$ uniformly random question in the clear. If the question was already asked in Round $1$, check consistency. Otherwise, check the predicate.
        \end{itemize}
  \item $(|C|,1)$ compiler
        \begin{itemize}
          \item Round 1: Ask all $|C|$ questions under $\QFHE$.
          \item Round 2: Ask $1$ uniformly random question in the clear, and check consistency with the questions asked in Round $1$.
        \end{itemize}
\end{itemize}
By now, one would be wary of guessing that asking more questions under
$\QFHE$ is going to improve the security of the protocol. Indeed, the $(|C|-1,1)$
compiler (which reduces to the one we discussed above for $|C|=2$)
is the universal one. We show the following.
\begin{thm}[informal] \label{thm:general}
  Consider an arbitrary contextuality game $\G$, and let $\valNC$ and $\valQu$ be its non-contextual and quantum values respectively. The compiled game, obtained via the $(|C|-1,1)$ compiler, \newblue{is faithful to $\G$} and satisfies the following: 
  \begin{itemize}
  \item (Completeness) There is a QPT prover that wins with probability
    $1-\frac{1}{|C|}+\frac{\valQu}{|C|}-\negl$.
  \item (Soundness) PPT provers win with probability at most
    $1-\frac{1}{|C|}+\frac{\valNC}{|C|}+\negl$.
  \end{itemize}
  The compiled game, obtained via the $(|C|,1)$ compiler, \newblue{is also faithful to $\G$, and} satisfies the following:
  \begin{itemize}
  \item (Completeness) There is a QPT prover that wins with probability $\valQu$.
  \item (Soundness) PPT provers win with probability at most $1-\const_{1}+\negl$, where $\const_{1}=\min_{C\in\Call}\frac{\Pr(C)}{|C|}$ (this is constant in the sense that it is independent of the security parameter), and $\Pr(C)$ denotes the probability of sampling the context $C$.
  \end{itemize}
Here, $\negl$ are (possibly different) negligible functions.
\end{thm}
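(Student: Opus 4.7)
The plan is to mirror the argument used for the $|C|=2$ case in both compilers: a rewinding strategy extracts from any PPT prover a deterministic truth table $\tau: Q \to A$ of Round~2 answers (conditioned on a fixed first-round transcript), $\opad$-security lets the reduction replace the verifier-broadcast Pauli key $k$ by an independent uniformly random one without $\calA$ noticing, and $\QFHE$-security then forces the joint distribution of $\tau$ and the first-round encrypted answer vector $\vec{a}^{(1)}$ to be nearly independent of the (encrypted) first-round questions. Once this hybrid machinery is in place, the two soundness bounds fall out of the combinatorics of non-contextual strategies; completeness is direct from the commutativity of the observables in the quantum strategy.

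For completeness, the honest quantum prover measures the first-round observables under $\QFHE$ in any order (they commute), applies $\opad.\Enc$ to the post-measurement state, and upon receiving $k$ with $U_k = U_{k'}U_{k''}$ measures $U_k O_q U_k^\dagger$ on the cleared state. In the $(|C|-1,1)$ compiler, with probability $(|C|-1)/|C|$ the Round~2 question $q$ repeats an earlier one and commutativity makes consistency deterministic, while with probability $1/|C|$ the joint outcome distribution matches the quantum strategy on $C$ and the predicate holds with probability $\valQu$, giving $1-1/|C|+\valQu/|C|-\negl$. In the $(|C|,1)$ compiler all $|C|$ commuting observables are measured under $\QFHE$ in Round~1 (predicate satisfied with probability $\valQu$) and by commutativity the same observable in the clear reproduces the matching answer, so acceptance is exactly $\valQu$.

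For soundness of the $(|C|-1,1)$ compiler, after the $\opad$- and $\QFHE$-hybrids, on the $(|C|-1)/|C|$ fraction of Round~2 questions that repeat an earlier question the consistency check essentially forces $\vec{a}^{(1)}$ to agree with $\tau|_C$ up to $\negl$ slack, so on the remaining $1/|C|$ fraction (new question tested against the predicate) $\calA$ behaves as a non-contextual assignment on $C$ and wins with probability at most $\valNC$, yielding $1-1/|C|+\valNC/|C|+\negl$. For soundness of the $(|C|,1)$ compiler, define $S_\tau:=\{C: \tau|_C \text{ satisfies the predicate}\}$; because $\valQu=1$ but $\valNC<1$, we have $S_\tau\neq\Call$ for every $\tau$. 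For $C\in S_\tau$ the adversary can set $\vec{a}^{(1)}=\tau|_C$ and pass both checks; for $C\notin S_\tau$ the adversary must deviate from $\tau|_C$ on at least one coordinate, and the uniformly sampled Round~2 question catches a disagreement with probability at least $1/|C|$. Summing gives
\[
    \Pr[\text{win}] \;\leq\; 1 - \sum_{C\notin S_\tau}\Pr(C)/|C| \;\leq\; 1 - \min_{C\in\Call}\Pr(C)/|C| \;=\; 1-\const_1,
\]
up to $\negl$ contributed by the hybrid steps. Faithfulness in both compilers follows by inspection: the Round~1 and Round~2 messages carry encodings of a valid context of $\G$, and the predicate of $\G$ is evaluated on decoded answers (with the additional consistency check being a consequence of having a single device).

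The main technical obstacle is the rewinding/extraction step when $|C|-1$ (or $|C|$) questions are simultaneously encrypted: the reduction to $\QFHE$ security needs a careful hybrid sequence (either switching one encryption at a time or using multi-message indistinguishability) while the $\opad$ simulator must produce a Pauli $k$ that is statistically independent of the underlying $\QFHE$ pad $k''$ so that rewinding yields a well-defined truth table $\tau$ with only $\negl$ bias; the rest of the argument is essentially combinatorial.
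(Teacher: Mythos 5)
Your overall strategy (rewind to extract a truth table $\tau$, use $\opad$ security to feed the prover a uniformly random $k$, use $\QFHE$ 2-IND to argue the distribution of $\tau$ is nearly question-independent, then finish combinatorially) is the paper's strategy, and your $(|C|,1)$ analysis is essentially right (note only $\valNC<1$ is needed for $S_\tau\neq\Call$; $\valQu=1$ plays no role in soundness). However, two steps as you state them are not correct. First, $\QFHE$ security does \emph{not} force the joint distribution of $(\tau,\vec a^{(1)})$ to be independent of the encrypted questions: the decrypted first-round answers can depend arbitrarily on the question, since the prover may compute them homomorphically (this is exactly what the honest prover and a $\cEval$-based classical cheater do), and a PPT reduction cannot even observe them. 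Only the data visible in the clear to the reduction --- in particular $\tau$ --- can be argued question-independent, and the paper's bounds are deliberately arranged so that this marginal independence, plus a per-context argument about the encrypted answers, suffices.

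Second, and more substantively, your $(|C|-1,1)$ soundness hinges on the claim that the consistency check ``essentially forces $\vec a^{(1)}$ to agree with $\tau|_C$ up to $\negl$ slack.'' Nothing forces this: a PPT prover may be inconsistent with constant probability and simply absorb the loss, hoping to gain in the predicate branch. What is actually needed is an exact accounting showing that inconsistency cannot pay: from any $\mathcal{A}$ one builds $\mathcal{A}'$ that homomorphically evaluates $\tau$ on the encrypted questions (hence is consistent by construction) and answers $\tau(q)$ in the clear, and one checks that $\mathcal{A}'$ wins at least as often --- an inconsistent $\mathcal{A}$ is caught with probability at least $1/|C|$, while $\mathcal{A}'$ risks losing only the predicate test, asked with probability exactly $1/|C|$. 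This ``consistency only helps'' lemma (and its dual, ``feasibility only helps,'' for the $(|C|,1)$ compiler, which is the rigorous version of your ``must deviate on at least one coordinate'' step) is the heart of the proof and of why the $(|C|-1,1)$ compiler is universal; replacing it with a negligible-slack claim is a genuine gap, even though your final bounds are the correct ones. Separately, faithfulness is not quite ``by inspection'': the classical-completeness condition of \Defref{OperationalTestOfContextuality} requires exhibiting a PPT prover matching the transcript of the honest quantum prover when all observables commute, which uses the $\cEval$ property of the $\QFHE$ scheme and the classical range sampling ($\samp$) property of the $\opad$.
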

We make some brief remarks about the two compilers and defer the details to the main
text.
\begin{itemize}
  \item The $(|C|,1)$ compiler is not universal because, for instance, when
        applied to KCBS, there is no gap between the PPT and the QPT prover's
        winning probabilities. In fact, there is a PPT algorithm\footnote{Assuming that $\Eval$ is PPT if the circuit and input are classical. } that does better than
        the honest quantum strategy. Yet, the compiler does apply to the magic square game,
        for instance, because $\const_{1}<1$ and $\valQu=1$. In fact, for the magic square game, this compiler gives \emph{a better completeness-soundness gap} than the $(|C|-1,1)$ compiler.
  \item The $(|C|-1,1)$ compiler is universal in the sense that, when applied to any contextuality game with a gap between non-contextual and quantum value, the compiled game will have a constant gap between completeness and soundness. However, the resulting gap is sometimes smaller compared to the previous compiler. %
        It is unclear if one can do better than this, with or without
        increasing the number of rounds, while preserving universality.
\end{itemize}
%

%
%

%

%

%
%

%
%

%
%
%
%
%
%
    
%
%
%
%
%
%
%
%
%
%
%
%
%
%
%
%

%
%
%
%
%
%
%
%
%
%

%
%

%
%
%
%
%
%
%
%
%
%
%
%
%
%
%
%
%
%
%
%
%
%

%

%
%
%
%

%
%
%
%
%
%
%
%
%
%
%

%
%
%
%
%
%

%
%
%
%
%

%
%
%

%
%
%

%
%
%
%

%

\subsection{Contribution 2 | An Even Simpler Proof of Quantumness} 
\label{sec:tech-overview-poq}

Our proof of quantumness, like many of the existing ones in the literature, is based on the use of Trapdoor Claw-Free Functions (TCF). In our protocol, these are used to realize an ``encrypted CNOT'' functionality, which is the central building block of Mahadev's QFHE scheme \cite{mahadev2020classical}. The ``encrypted CNOT'' functionality allows a prover to homomorphically apply the gate $\mathsf{CNOT}^a$, while holding a (classical) encryption of the bit $a$. Formally, our protocol uses Noisy Trapdoor Claw-Free Functions (NTCF, defined formally in Definition \ref{def:NTCF}), but here we describe our scheme using regular TCFs for simplicity. 

\paragraph*{The proof of quantumness}
Our 2-round proof of quantumness is conceptually very simple. It can be viewed as %
combining and distilling ideas from the proofs of quantumness in \cite{KLVY22}, \cite{MCVY22} and our contextuality compiler. %
We provide an informal description here, and we defer a formal description to \Partref{poq}. At a high level, it can be understood as follows:
\begin{itemize}
    \item \textbf{Round 1}: \emph{Delegate the preparation of a uniformly random state in $\{\ket{0}, \ket{1}, \ket{+}, \ket{-}\}$, unknown to the prover.}
    
    \vspace{0.5mm}
    \noindent The verifier samples a bit $a$ uniformly at random. 
    \begin{itemize}
    \item \textbf{Message 1}: The verifier sends an appropriate encryption of $a$ to the prover (and holds on to the corresponding secret key).
    \end{itemize}
    The honest prover prepares the two-qubit state $\ket{+}\ket{0}$, along with auxiliary registers required to perform an ``encrypted CNOT'' operation. It then performs an ``encrypted CNOT'' operation (from the first qubit to the second), i.e.\ homomorphically applies $\mathsf{CNOT}^a$, followed by a measurement of the second (logical) qubit.
    \begin{itemize}
    \item \textbf{Message 2}: The prover sends all measurement outcomes to the verifier. 
    \end{itemize}
   Since a CNOT gate can be thought of as a deferred measurement in the standard basis, we can equivalently think of the prover's operations as performing an ``encrypted measurement'' of the first qubit, where the first qubit is being measured or not based on the value of $a$. Note that, after having performed these operations, the prover holds a \emph{single} qubit. Thanks to the specific structure of the ``encrypted CNOT'' operation from \cite{mahadev2020classical}, the resulting ``post-measurement'' qubit state is encrypted with a Quantum One-Time Pad, and is either:
    \begin{itemize}
        \item $\ket{+}$ or $\ket{-}$, if $a = 0$ (i.e.\ no logical CNOT was performed)
        \item $\ket{0}$ or $\ket{1}$, if $a = 1$ (i.e.\ a logical CNOT was performed)
    \end{itemize}
    All in all, at the end of round 1, the honest prover holds a uniformly random state in $\{\ket{0}, \ket{1}, \ket{+}, \ket{-}\}$, i.e.\ a BB84 state. This state is known to the verifier, who possesses $a$ and the secret key. From here on, the protocol no longer uses any encryption, and everything happens ``in the clear''.
    \item \textbf{Round 2}: \emph{Ask the prover to perform ``Bob's CHSH measurement''}. 

    \vspace{0.5mm} The astute reader may notice that the qubit held by the prover after Round 1 is distributed identically to ``Bob's qubit'' in a CHSH game where Alice and Bob perform the optimal CHSH strategy. More precisely, if one imagines that Alice has received her question and performed her corresponding optimal CHSH measurement (which is either in the standard or Hadamard basis), the leftover state of Bob's qubit is a uniformly random state in $\{\ket{0}, \ket{1}, \ket{+}, \ket{-}\}$, where the randomness comes both from the verifier's question (which in our protocol corresponds to the bit $a$), and Alice's measurement outcome.
    \begin{itemize}
        \item \textbf{Message 3}: The verifier sends a uniformly random bit $b$ to the prover (corresponding to Bob's question in a CHSH game).
    \end{itemize}
    The prover performs Bob's optimal CHSH measurement corresponding to question $b$.
    \begin{itemize}
        \item \textbf{Message 4}: The prover returns the measurement outcome to the verifier.
    \end{itemize}
    The verifier checks that the corresponding CHSH game is won.
\end{itemize}

\vspace{1mm}
\noindent In \Figref{prover_circuit}, we show the circuit for the honest quantum prover in our proof of quantumness.

   \captionsetup
  [figure]%
  {%
    name      = Fig.,
    labelfont = normal,
    labelsep  = colon
  }
\begin{figure*}%
        \begin{equation*}
    \Qcircuit{
    & \lstick{\ket{+}} & \multigate{1}{U_{f_0, f_1}} & \ustick{\ket{\psi} \in \text{ BB84}} \qw &  \gate{R_{b}}  & \meter \\
    & \lstick{\sum_{x \in \{0,1\}^n} \ket{x}} &  \ghost{U_{f_0, f_1}} & \gate{I \otimes H^{\otimes n-1}} & \meter 
    }
    \end{equation*}
   \caption{The prover's circuit. Here, $(f_0, f_1)$ is a pair of trapdoor claw-free functions (with inputs of size $n$). $U_{f_0, f_1}$ denotes the $(n+1)$-qubit unitary that coherently computes $f_0$ in the last $n$ qubits if the first qubit is $\ket{0}$, and computes $f_1$ otherwise. The circuit starts by preparing $\ket{+}$ in the first qubit, and a uniform superposition over all inputs in the next $n$ qubits. The circuit then applies $U_{f_0, f_1}$ (note that we are omitting auxiliary work registers that are required to compute $U_{f_0, f_1}$), followed by a layer of Hadamard gates on the last $n-1$ qubits. Then, the last $n$ qubits are measured. As a result, the leftover qubit $\ket{\psi}$ in the first register is now a BB84 state (which one it is depends on $f_0$, $f_1$, and the measurement outcome). As its second message, the verifier sends a bit $b$, and the prover applies the rotation $R_b$ %
  defined as follows: 
   $ \ket{0} \overset{R_b}{\mapsto}  \cos((-1)^b\pi/8)\ket{0} + \sin((-1)^b\pi/8)\ket{1}$ 
    and $\ket{1} \overset{R_b}{\mapsto} -\sin((-1)^b\pi/8)\ket{0} + \cos((-1)^b\pi/8)\ket{1}$. Finally, the prover measures the qubit in the standard basis.\protect\footnotemark
   }\label{fig:prover_circuit}
\end{figure*}

\footnotetext{The simplified description of the proof of quantumness before this figure is slightly inaccurate: it states that the prover prepares starts by preparing the two-qubit state $\ket{+}\ket{0}$. Technically, the prover only needs to prepare $\ket{+}$, and the role of the second qubit is performed by the first qubit of the pre-image register (which is initialised as a uniform superposition).} %

\paragraph*{Soundness}
An efficient quantum prover can efficiently pass this test with probability $\cos^2(\frac{\pi}{8}) \approx 0.85$, while an efficient classical prover can pass this test with probability at most $3/4$. The proof of classical soundness is fairly straightforward. %
In essence, %
a classical prover can be rewound to obtain answers to \emph{both} of the verifier's possible questions in Message 3. If the classical prover passes the test with probability $3/4+\delta$ (which is on average over the two possible questions), then the answers to both questions together must reveal some information about the encrypted bit $a$ (this is a simple consequence of how the CHSH winning conditions is defined). In particular, such a classical prover can be used to guess $a$ with probability $\frac12 + 2\delta$. This breaks the security of the encryption, as long as $\delta$ is non-negligible. We defer the reader to \Secref{poq-soundness} for more details. %

\paragraph*{Putting the ideas in perspective}%
We have already discussed in Section \ref{sec:results} how our proof of quantumness compares to existing ones in terms of efficiency. Here, we focus on how our proof of quantumness compares conceptually to \cite{KLVY22} and \cite{MCVY22}:
\begin{itemize}
\item In \cite{KLVY22}, the prover is asked to create an entangled EPR pair, of which the first half is encrypted, and the second half is in the clear. Then, the prover is asked to perform Alice's ideal CHSH measurement homomorphically on the first half, and Bob's CHSH measurement in the clear on the second half. Our proof of quantumness departs from this thanks to two observations:
\begin{itemize}
\item By leveraging the structure of the ``encrypted CNOT operation'' from \cite{mahadev2020classical}, the post-measurement state from Alice's homomorphic measurement can be re-used \emph{in the clear} (precisely because the verifier knows what the state is, but the prover does not). So the initial entanglement is not needed. This idea is also the starting point for our contextuality compiler from Section \ref{subsec:Contributions}, although for the latter we take this idea much further: we find a way to give the prover the ability to decrypt the leftover state without giving up on soundness. Our proof of quantumness is a baby version of this idea: it leverages the fact that the leftover encrypted state has a special form, namely it is a BB84 state.
\item In order to setup a ``CHSH-like correlation'' between the verifier and the leftover qubit used by the prover in Round 2, one does not need to compile the CHSH game in its entirety. This compilation, even for the simple CHSH game requires the prover to perform an ``encrypted controlled-Hadamard'' operation (because Alice's ideal CHSH measurements are in the standard and Hadamard bases). The latter requires three sequential ``encrypted CNOT'' operations. Instead, our observation is that one can setup this CHSH-like correlation more directly, as we do in Round 1 of our proof of quantumness.
\end{itemize}
\item From a different point of view, our proof of quantumness can also be viewed as a simplified version of \cite{MCVY22}. Indeed, observation (ii) is inspired by the proof of quantumness in \cite{MCVY22}, which introduces the idea of a ``computational'' CHSH test. One can interpret \cite{MCVY22} as setting up an ``encrypted classical operation'', akin to an ``encrypted CNOT'', that either entangles two registers or does not, ultimately having the effect of performing an ``encrypted measurement''. This is achieved via an additional round of interaction. Our proof of quantumness can be thought of as zooming in on this interpretation, and finding a direct way to achieve this without the additional interaction.
\end{itemize}

\newpage

\section{Preliminaries}
{
\Subsecref{notation} sets up some notation, \Subsecref{FHE-Scheme} formally introduces QFHE, and \Subsecref{NTCF} formally introduces trapdoor-claw free functions.
}

\subsection{Notation}\label{subsec:notation}
\begin{itemize}
  \item For mixed state $\rho$ and $\sigma$, we write $\rho\approx_{\epsilon}\sigma$ to mean that $\rho$ and $\sigma$ are
        at most $\epsilon$-far in trace distance, i.e. $\frac{1}{2} {\rm tr}(|\rho - \sigma|) \le \epsilon $.
  \item Let $X$ be the Pauli $X$ matrix. For a string $x \in \{0,1\}^n$, we write $X^{x}$ to mean $\otimes_{i=1}^n X^{x_{i}}$. We use a similar notation for Pauli $Z$.
  \item We denote the spectrum of an observable $O$ by $\spectr(O)$  and the support of a function $f$ by $\supp(f)$.
  \item We use the abbreviations PPT and QPT for probabilistic polynomial time and quantum polynomial time algorithms respectively.
 \item \emph{Vector/list indexed by a set}.\\
    Let $S$ and $V$ be finite sets. Let $\mathbf{v}$ be a vector/list with entries in $V$, indexed by $S$, i.e.\ $\mathbf{v}$
    contains a value in $V$ for each $s\in S$. We denote by $\mathbf{v}[s]\in V$ the value corresponding to $s$.\\
    For a subset $C\subseteq S$, we write $\mathbf{v}[C]$ for the vector
    obtained by restricting the indices of $\mathbf{v}$ to the set $C$. For example, if
    $\mathbf{v'}$ is a vector indexed by $C$, then, by $\mathbf{v}[C]=\mathbf{v'}$
    we mean that $\mathbf{v}[s]=\mathbf{v'}[s]$ for all $s\in C$. %
\end{itemize}

\begin{itemize}
    \item \emph{Asymptotic notation.} %
    \begin{itemize}
        \item \emph{Big-O.} Let $f,g:\mathbb{N}\to\mathbb{R}$. We write $f\le O(g)$ if $\exists c,n_{0}$ such that for all $n\ge n_{0}$, $f(n)\le cg(n)$. 
        \item \emph{Big-O on $\Lambda$.} Take $\Lambda\subseteq\mathbb{N}$ to be an infinite subset of $\mathbb{N}$. Then, by $f\le O(g)$ on $\Lambda$, we mean that there are $c,n_{0}$ such that for all $n\ge n_{0}$ in $\Lambda$, $f(n)\le cg(n)$.
    \end{itemize}
    We define Big-$\Omega$ on $\Lambda$ as a similar generalisation of the Big-$\Omega$ notation. 
\end{itemize}

\subsection{QFHE Scheme\label{subsec:FHE-Scheme}}

We formally define the notion of Quantum Homomorphic Encryption that we will employ in the rest of the paper. Note that, as in \cite{KLVY22} we only require security to hold against PPT provers.

\begin{defn}[Quantum Homomorphic Encryption]
  \label{def:QFHEscheme} A Quantum Homomorphic Encryption scheme for a class of circuits $\mathcal{C}$ is a tuple of algorithms $(\gen,\Enc,\Dec,\Eval)$ with the following syntax:
  \begin{itemize}
    \item $\gen$ is PPT. It takes a unary input $1^{n}$, and outputs a secret key $\sk$.
    \item $\Enc$ is QPT. It takes as input a secret key $\sk$ and a quantum state $\left|\psi\right\rangle$, and outputs a ciphertext $\ket{c}$. We require that if $\ket{\psi}$ is classical, i.e.\ a standard basis state, then $\Enc$ becomes a PPT algorithm. In particular, $\left|c\right\rangle$ is also classical.
    \item $\Eval$ is QPT. It takes as input a tuple $(C, \ket{\xi}, \ket{c})$, where 
    \begin{enumerate}
    \item $C: \mathcal{H} \times(\mathbb{C}_{2})^{\otimes n} \to (\mathbb{C}_{2})^{\otimes m}$ is a quantum circuit in $\mathcal{C}$, 
    \item $\ket{\xi} \in \mathcal{H}$ is a quantum state,
    \item $\ket{c}$ is a ciphertext encrypting an $n$-qubit state.
    \end{enumerate}
    $\Eval$ computes a quantum circuit $\Eval_C(\ket{\xi}, \ket{c})$, which outputs a ciphertext $\ket{c_{\mathsf{out}}}$. We require that, if $C$ has classical output, then $\Eval_C$ also has classical output. 
    \item $\Dec$ is QPT. It takes as input  a secret key $\sk$ and a ciphertext $\ket{c}$, and outputs a state $\left|\phi\right\rangle$. If $\ket{c}$ is classical, then $\ket{\phi}$ is also classical.
  \end{itemize}
  The correctness and security conditions are as follows:
  \begin{enumerate}
    \item \textbf{Correctness:} For any quantum circuit $C: \mathcal{H} \times(\mathbb{C}_{2})^{\otimes n} \to (\mathbb{C}_{2})^{\otimes m}$ in $\mathcal{C}$, there is a negligible function $\negl$, such that the following holds. For any quantum state $\ket{\xi} \in \mathcal{H}$, any $n$-qubit state $\ket{\psi}$, any $\lambda \in \mathbb{N}$, any $\sk \leftarrow \gen(1^{\lambda})$, and any $\ket{c} \leftarrow \enc_{\sk}(\ket{\psi})$, the following two states are $\negl(\lambda)$-close in trace distance.
          \begin{itemize}
            \item $C(\left|\xi\right\rangle \otimes\left|\psi\right\rangle)$,
            \item $\Dec_{\sk}(\Eval_{C}(\left|\xi\right\rangle ,\ket{c}))$.
          \end{itemize}
    \item \textbf{Security:} For all PPT distinguishers $D$, for all polynomial functions $\mathsf{poly}$, there exists a negligible function $\negl$ such that, for any two strings $x_{0},x_{1}\in\{0,1\}^{\mathsf{poly}(\lambda)}$, for all $\lambda$, 
          \begin{equation} \left|\Pr\left[D(\ket{c_{0}})=1:\begin{array}{c}
                \sk\leftarrow\gen(1^{\lambda}) \\
            \ket{c_0} \leftarrow\Enc_{\sk}(x_{0})
              \end{array}\right]-\Pr\left[D(\ket{c_{1}})=1:\begin{array}{c}
                \sk\leftarrow\gen(1^{\lambda}) \\
                \ket{c_1} \leftarrow\Enc_{\sk}(x_{1})
              \end{array}\right]\right|\le\negl(\lambda) \,.\label{eq:distinguisher-1}
          \end{equation}
  \end{enumerate}
\end{defn}

\addvspace{\baselineskip}  

A Quantum \emph{Fully} Homomorphic Encryption scheme ($\QFHE$) is a Quantum Homomorphic Encryption scheme for the class of all poly-size quantum circuits.

\paragraph{Form of encryption.}
For the rest of this work, we restrict to \atul{(public key)} QFHE schemes where encryption takes the following form. For an $n$-qubit state $\ket{\psi}$,  %
\begin{equation}
  (U_{k}\ket{\psi},\hat{k}) \leftarrow \QFHE.\Enc_{\sk}(\psi) \,,\label{eq:MBform}
\end{equation}
where $\{ U_{k} \}_{k\in K}$ forms a group (potentially up to global phases) and $K(n)$ is a finite set. Furthermore, $\hat k$ is a classical fully homomorphic  encryption of $k$, which can be decrypted using a PPT algorithm (specified by the $\QFHE$ scheme) given $\sk$. %

\addvspace{\baselineskip}

We show that such $\QFHE$ schemes also satisfy the following property.

\paragraph{Classical evaluation of classical circuits.} There is a PPT procedure $\cEval$ that is the classical version of $\Eval$. More precisely, we require that the following properties hold:
\begin{itemize}
    \item For any classical ciphertext $c$ produced by running $\Enc$ on a classical message $m$, and any classical circuit $C$, it holds that $c'\leftarrow \cEval(C,c)$ is such that $C(m)=\Dec(\sk,c)$.
    \item $\cEval$ and $\Eval$ produce identical distributions when run on identical classical inputs (i.e.\ ciphertexts and circuits). %
\end{itemize}

\begin{claim}[Classical Eval] \label{claim:classicalEval}
Consider any (public-key) $\QFHE$ scheme whose ciphertext is of the form in Equation~\ref{eq:MBform}. Then, there exists a PPT procedure $\cEval$, satisfying the classical evaluation of classical circuits property stated above.
\end{claim}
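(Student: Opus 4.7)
The plan is to construct $\cEval$ by exploiting the classical FHE structure that is explicit in the form of the QFHE ciphertext stated above. For a classical input $m$, the ciphertext takes the form $c=(y,\hat{k})$, where $y = m\oplus x$ is a classical string (since $\{U_k\}$ is a Pauli group, $U_k$ acts on a computational basis state as a bit-flip up to a global phase) and $\hat{k}$ is a classical public-key fully homomorphic encryption of the pad $k=(x,z)$. Let $\cEval_{\mathsf{cl}}$ denote the PPT evaluation procedure of this underlying classical FHE (its existence is where the public-key assumption is silently used: otherwise one could not even encrypt fresh values under the $\cEval$ caller's perspective).

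First, I would define $\cEval(C,(y,\hat{k}))$ as follows. Using $\cEval_{\mathsf{cl}}$ on $\hat{k}$, homomorphically evaluate the classical circuit $(x,z)\mapsto C(y\oplus x)$, with $y$ hardcoded, to obtain a classical FHE ciphertext $\widehat{C(m)}$. Then sample uniformly random strings $x'$ and $z'$ of appropriate output length, and, again using $\cEval_{\mathsf{cl}}$, homomorphically transform $\widehat{C(m)}$ into a classical FHE ciphertext $\hat{k}'$ that encodes the new pad $k' = (x'\oplus C(m),\,z')$. Output $(x',\hat{k}')$. Correctness is immediate: decrypting $(x',\hat{k}')$ with the $\QFHE$ key first recovers $k'$ from $\hat{k}'$ via the PPT decryption of the classical FHE, and then computes $x'\oplus(x'\oplus C(m)) = C(m)$. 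Efficiency is also immediate, since every step is a PPT use of $\cEval_{\mathsf{cl}}$ together with classical sampling and hardcoding.

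The main subtlety, and the hardest part to pin down, is verifying that $\cEval$ and $\Eval$ produce the ``same distribution'' on identical classical inputs. I would argue this in two steps. First, both procedures output a classical ciphertext $(y'',\hat{k}'')$ decrypting to $C(m)$ in which the ``pad-part'' $x''$ of $k''$ is uniformly distributed; hence $y''=C(m)\oplus x''$ is uniformly distributed as well, matching the marginal on $y''$. Second, conditioned on the value of $y''$, the ciphertext $\hat{k}''$ is a classical FHE encryption of the fixed classical string $(y''\oplus C(m),z'')$ for freshly sampled $z''$, and any two classical FHE encryptions of the same plaintext can be aligned via the standard rerandomisation of the underlying classical FHE. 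The main obstacle is that rerandomisation in general yields only computational indistinguishability rather than strict distributional equality; this is however precisely the granularity at which the claim is used in the subsequent security reductions (the reductions go through as long as the gap is negligible), so I would state the claim with this understanding and absorb the gap into the $\negl$ terms in the downstream theorems.
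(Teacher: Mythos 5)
Your construction and the correctness/efficiency argument are fine, but the second requirement of the property --- that $\cEval$ and $\Eval$ produce \emph{identical} distributions on identical classical inputs --- is exactly where your proposal has a genuine gap, and the fix you suggest (downgrading to computational indistinguishability and absorbing the loss into $\negl$ terms) is not available in this paper. The claim is invoked to establish classical completeness in the faithfulness definition (\Defref{OperationalTestOfContextuality}, Condition 4b), which demands that the PPT prover's transcript be distributed \emph{identically} to that of $\qProver(\qstrat)$; the paper explicitly remarks at the end of \Subsecref{criteria} that computational indistinguishability of transcripts does not suffice there, since otherwise a generic re-encryption of the messages would render the condition vacuous. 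So a $\cEval$ that only matches $\Eval$ up to a computational gap does not prove the claim as stated, nor does it support its downstream use.

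The structural reason your route runs into this is that you evaluate the whole circuit ``under the hood'' of the classical FHE (homomorphically computing $(x,z)\mapsto C(y\oplus x)$ with $y$ hardcoded) and then try to re-shape the result into the $\QFHE$ ciphertext form by rerandomisation; FHE ciphertexts arising from different homomorphic evaluation paths are in general not identically (or even statistically closely) distributed, and generic rerandomisation only buys computational closeness. The paper avoids this entirely by defining $\cEval$ to \emph{mirror} $\Eval$ gate by gate: it keeps the one-time-padded plaintext in the clear slot throughout, updates the pad homomorphically for XOR, and for AND re-encrypts the needed intermediate values under the public key, finally masking the output pad with a fresh uniform $r$ so that the resulting key is uniform. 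Because $\cEval$ performs the same homomorphic operations that $\Eval$ performs on classical data, equality of the output distributions holds by construction rather than by an appeal to rerandomisation. If you want to salvage your approach, you would need an additional assumption on the classical FHE (e.g.\ perfect/statistical circuit privacy or perfect rerandomisability), which the paper does not assume.
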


All known $\QFHE$ schemes satisfying Definition \ref{def:QFHEscheme}~\cite{mahadev2020classical, brakerski18} are public key schemes that satisfy both properties listed above. In particular, the first property above (form of encryption) is satisfied with $U_k$ being a Pauli pad and $k=(x,z)\in\{0,1\}^{2n}$ encoding which Pauli operator is applied, i.e.\ $U_{k}=X^{x}Z^{z}$. We end by outlining how the second property (classical evaluation of classical circuits) is proved generically as stated in \Claimref{classicalEval}.

\begin{proof}[Proof sketch for \Claimref{classicalEval}]

We will show how the classical evaluation works by looking at homomorphic evaluation of the "xor" (addition) and "and" (multiplication) operations, that are universal for classical circuits.
Consider two ciphertexts $\tilde{c}_0$ and $\tilde{c}_1$ corresponding to two messages $m_0$ and $m_1$, i.e. $\tilde{c}_0$ and $\tilde{c}_1$ are of the form $\tilde{c}_0 = (m_0 \oplus k_0, c_{k_0})$ and $\tilde{c}_1 = (m_1 \oplus k_1, c_{k_1})$ 
where $c_{k_0}\leftarrow \FHE.\Enc_{\pk}(k_0)$ and $c_{k_1}\leftarrow \FHE.\Enc_{\pk}(k_1)$. Then to homomorphically evaluate the addition, $\cEval$ will just add $\tilde{c}_0$ and $\tilde{c}_1$, namely compute $\tilde{c}_{m_0 \oplus m_1} = (m_0 \oplus k_0 \oplus m_1 \oplus k_1, c_{k_0} + c_{k_1}) = ((m_0 \oplus m_1) \oplus (k_0 \oplus k_1), c_{k_0 \oplus k_1})$, where $ c_{k_0 \oplus k_1}$ is the $\FHE$ encryption of $k_0 \oplus k_1$.

For the homomorphic multiplication operation, we will first compute: $(m_0 \oplus k_0) \cdot (m_1 \oplus k_1) = (m_0 \cdot m_1) \oplus (k_0 \cdot m_1 \oplus k_1 \cdot m_0 \oplus k_0 \cdot k_1)$. As a result, to complete the homomorphic evaluation, it suffices to show how to compute a $\FHE$ encryption of the pad $k_0 \cdot m_1 \oplus k_1 \cdot m_0 \oplus k_0 \cdot k_1$. This can be done as follows. First, compute the $\FHE$ encryptions $c_{m_0 \oplus k_0} = \FHE.\Enc_{\pk}(m_0 \oplus k_0)$ and $c_{m_1 \oplus k_1} = \FHE.\Enc_{\pk}(m_1 \oplus k_1)$.
Then given $c_{k_0}$ and $c_{m_0 \oplus k_0}$, the algorithm will homomorphically compute the xor of the two messages (under encryption), resulting in obtaining $c_{m_0} = \FHE.\Enc_{\pk}(m_0)$ (and analogously $c_{m_1} = \FHE.\Enc_{\pk}(m_1)$). Next, given $c_{k_1}$ and $c_{m_0}$ one can homomorphically compute 
$c_{m_0 \oplus k_1} = \FHE.\Enc_{\pk}(m_0 \oplus k_1)$ and similarly, $c_{m_1 \oplus k_0} = \FHE.\Enc_{\pk}(m_1 \oplus k_0)$. \\
The procedure will then sample a uniformly random $r$ and compute $c_r = \FHE.\Enc_{\pk}(r)$.

Finally, by homomorphically adding all the pieces together we get the final ciphertext of the multiplication:
$\tilde{c}_{m_0 \cdot m_1} := ((m_0 \cdot m_1) \oplus (k_0 \cdot m_1 \oplus k_1 \cdot m_0 \oplus k_0 \cdot k_1 \oplus r), \FHE.\Enc_{\pk}(k_0 \cdot m_1 \oplus k_1 \cdot m_0 \oplus k_0 \cdot k_1 \oplus r))$. It is clear to see that the ciphertext is in the right $\QFHE$ form, and moreover, due to the uniform pad $r$ we also have that the corresponding key is uniform, hence, ensuring that the output of $\cEval$ is identically distributed to the output of $\Eval$.

\end{proof}

\subsection{Noisy Trapdoor Claw-Free Functions}\label{subsec:NTCF}

Our work requires trapdoor claw-free functions for two purposes:
\begin{itemize}
\item They are used to instantiate Mahadev's QFHE scheme \cite{mahadev2020classical}.
\item They are used to construct the \emph{oblivious pad}.
\end{itemize}

However, the only known constructions of TCFs that satisfy the properties needed to instantiate QFHE (e.g.\ the property of Definition \ref{def:ntcf-2}) are ``noisy'' versions (NTCFs), and they are based on the hardness of LWE. We define NTCFs next. Before doing so, we point out that, while an NTCF is needed to instantiate QFHE based on current knowledge, any TCF (even based on quantum insecure assumptions, like Diffie-Hellman) suffices to build the \emph{oblivious pad} (however, for simplicity, we still use NTCFs in Section \ref{sec:opad}).

For readers familiar with the area, note that the ``adaptive hardcore bit'' property is not required for any of the constructions in this work.

\begin{defn}[NTCF family; paraphrased from \cite{BCMVV21}.]
  \label{def:NTCF}
  Let $\lambda$ be a security parameter. Let $\mathcal{X}$ and $\mathcal{Y}$ be finite sets
  and $\mathcal{K}$ be a finite set of keys (these sets implicitly depend on $\lambda$). A family of functions
  \[
    \mathcal{F} = \{f_{\pk,b}:\mathcal{X}\to\mathcal{D}_{\mathcal{Y}}\}_{\pk\in\mathcal{K},b\in\{0,1\}}
  \]
  is called a noisy trapdoor claw free function (NTCF) family
  if the following conditions hold (for each $\lambda$):
  \begin{enumerate}
    \item \textbf{Efficient Function Generation:} There exists an efficient
          probabilistic algorithm $\gen$ that generates a (public)
          key in $\mathcal{K}$ together with a trapdoor (secret key) $\sk$:
          \[
            (\pk,\sk)\leftarrow\gen(1^{\lambda}).
          \]
    \item \textbf{Trapdoor Injective Pair:}
          \begin{enumerate}
            \item \emph{Trapdoor:} There exists an efficient deterministic algorithm
                  $\inv$ such that with overwhelming probability over the
                  choice of $(\pk,\sk)\leftarrow\gen(1^{\lambda})$, the following holds:
                  \[
                    \inv(\sk,b,y)=x
                  \]
                  for all $b\in\{0,1\},x\in\mathcal{X}$ and $y\in\supp(f_{\pk,b}(x))$. %
            \item \emph{Injective pair:} For all keys $\pk\in\mathcal{K}$, there exists
                  a perfect matching $\mathcal{R}_{\pk}\subseteq\mathcal{X}\times\mathcal{X}$
                  such that $f_{\pk,0}(x_{0})=f_{\pk,1}(x_{1})$ if and only if $(x_{0},x_{1})\in\mathcal{R}_{\pk}$. Such a pair $(x_0, x_1)$ is referred to as a ``claw''.
          \end{enumerate}
    \item \textbf{Efficient Range Superposition.} For all keys $\pk\in\mathcal{K}$
          and $b\in\{0,1\}$ there exists a function $f'_{\pk,b}:\mathcal{X}\to\mathcal{D}_{\mathcal{Y}}$
          such that the following holds:
          \begin{enumerate}
            \item $\inv(\sk,b,y)=x_{b}$ and $\inv(\sk,b\oplus1,y)=x_{b\oplus1}$ for
                  all $(x_{0},x_{1})\in\mathcal{R}_{\pk}$ and $y\in\supp(f'_{\pk,b}(x_{b}))$. 
            \item There is an efficient deterministic procedure $\chk$ s.t. for all
                  $b\in\{0,1\}$, $\pk\in\mathcal{K}$, $x\in\mathcal{X}$ and $y\in\mathcal{Y}$,
                  \[
                    \chk(\pk,b,x,y)=\begin{cases}
                      1 & \text{if } y\in\supp(f'_{\pk,b}(x)) \\
                      0 & \text{else}.
                    \end{cases}
                  \]
                  Observe that $\chk$ is not provided the secret trapdoor $\sk$.
            \item For each $\pk\in\mathcal{K}$ and $b\in\{0,1\}$, it holds that
                  \begin{equation}
                  \label{eq:heilinger}
                    \mathbb{E}_{x\leftarrow\mathcal{X}}[H^{2}(f_{\pk,b}(x),f'_{\pk,b}(x))]\le\negl(\lambda)
                  \end{equation}
                  for some negligible function $\negl$, where %
                  $H^{2}$denotes the Hellinger distance.\\
                  Further, there is an efficient procedure $\mathsf{Samp}$ that on
                  input $\pk$ and $b\in\{0,1\}$ prepares the state
                  \[
                    \frac{1}{\sqrt{|\mathcal{X}|}}\sum_{x\in\mathcal{X},y\in\mathcal{Y}}\sqrt{(f'_{\pk,b}(x))(y)}\left|x\right\rangle \left|y\right\rangle .
                  \]
          \end{enumerate}
    \item \textbf{Claw-Free Property.} For any PPT adversary $\mathcal{A}$,
          \[
            \Pr\left[(x_{0},x_{1})\in\mathcal{R}_{\pk}:\begin{array}{c}
              (\pk,\sk)\leftarrow\gen(1^{\lambda}) \\
              (x_{0},x_{1})\leftarrow\mathcal{A}(\pk)
            \end{array}\right]\le\ngl{\lambda}.
          \]
  \end{enumerate}
  When ${\cal F}$ is used with other primitives, we refer to the algorithms
  $(\gen,\inv,\samp,\chk)$ and the various sets $({\cal X},{\cal D}_{{\cal Y}},{\cal Y})$
  with ${\cal F}$ prefixed, e.g. $\gen(1^{\lambda})$ is referred to
  as ${\cal F}.\gen(1^{\lambda})$.
\end{defn}

\newblue{We also implicitly require that the classical version of \emph{efficient range superposition} holds: uniform elements from the set $\calX$ (and therefore $\calD_\calY$ and $\calY$) can be efficiently sampled classically. This is the case for most NTCFs.} 

We define an additional property of NTCFs, which we will invoke directly in the soundness analysis of our proof of quantumness in Section \ref{sec:poq-soundness}.
\begin{defn}[``Hiding a bit in the xor'']
\label{def:ntcf-2}
Let $\mathcal{F} = \{f_{\pk,b}:\mathcal{X}\to\mathcal{D}_{\mathcal{Y}}\}_{\pk\in\mathcal{K},b\in\{0,1\}}$ be an NTCF family (as in Definition \ref{def:NTCF}). We say that $\mathcal{F}$ ``hides a bit in the xor'' if $\gen$ takes an additional bit of input $s$:
\[
            (\pk,\sk)\leftarrow\gen(1^{\lambda}, s) \,,
          \]
such that the following holds (in addition to the properties already satisfied by $\pk$ and $\sk$ in Definition \ref{def:NTCF}). For all $s \in \{0,1\}$, if $(\pk,\sk)$ is in the support of $\gen(1^{\lambda}, s)$, then $s = x_0[1] \oplus x_1[1]$ for all $(x_0, x_1) \in \mathcal{R}_{\pk}$, where $x_0[1]$ and $x_1[1]$ denote the first bits of $x_0$ and $x_1$ respectively, and $\mathcal{R}_{\pk}$ is as in Definition \ref{def:NTCF}. Moreover, for all QPT algorithms $\mathcal{A}$, there exists a negligible function $\negl$ such that, for all $\lambda$,
\begin{equation}
\label{eq:guess-bit}
    \Pr[s \leftarrow \mathcal{A}(\pk): (\pk, \sk) \leftarrow {\cal F}.\gen(1^{\lambda}, s) \,\,,\,\, s \leftarrow \{0,1\}] \leq 1/2 + \negl(\lambda) \,.
\end{equation}
\end{defn}

The following theorem is implicit in \cite{mahadev2020classical}.
\begin{thm}[\cite{mahadev2020classical}]
There exists an NTCF family with the property of Definition \ref{def:ntcf-2}, assuming the quantum hardness of the Learning With Errors (LWE) problem. 
\end{thm}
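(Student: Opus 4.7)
The plan is to extract the desired NTCF family from Mahadev's LWE-based ``encrypted CNOT'' construction, where NTCFs are already naturally parameterised by a bit to be encrypted. In her scheme the public key takes the form $(A, A\mathbf{s} + \mathbf{e})$ with a small error $\mathbf{e}$, and the functions $f_{\pk,0}, f_{\pk,1}$ are set up so that claws $(x_0, x_1) \in \mathcal{R}_{\pk}$ satisfy $x_0 - x_1 = \mathbf{s}$ (modulo rounding of the error). To realise Definition \ref{def:ntcf-2} I would let $\gen(1^{\lambda}, s)$ run Mahadev's generator verbatim, except that the first coordinate of $\mathbf{s}$ is fixed to the input bit $s \in \{0,1\}$ while the remaining coordinates are sampled from the standard LWE secret distribution. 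This immediately forces $x_0[1] \oplus x_1[1] = s$ for every claw $(x_0, x_1) \in \mathcal{R}_{\pk}$, which is the algebraic condition required by the definition.

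Next, I would verify that items 1--4 of Definition \ref{def:NTCF} still hold. These all carry over directly from Mahadev's construction, because fixing a single coordinate of $\mathbf{s}$ does not alter the trapdoor inversion procedure, nor the efficient range-superposition sampling, nor the claw-free property. The last of these reduces to search-LWE, and restricting a single coordinate of the secret to $\{0,1\}$ is known to preserve hardness up to minor parameter changes via the standard binary-secret LWE transformation.

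The main technical step is the hiding inequality \eqref{eq:guess-bit}. Here the plan is a two-hybrid reduction to decisional LWE: a $\QPT$ adversary predicting $s$ with advantage $1/\mathsf{poly}(\lambda)$ from $\pk$ yields a distinguisher between the two distributions of public keys obtained with $s = 0$ and $s = 1$; but by the pseudorandomness of LWE samples, each of these distributions is indistinguishable from the one obtained by replacing $A\mathbf{s} + \mathbf{e}$ with a uniformly random vector (which is independent of $s$), and hence the two are indistinguishable from one another. The main obstacle I anticipate is that naively flipping the first coordinate of $\mathbf{s}$ produces a visibly different LWE sample (the difference is precisely the public first column $\mathbf{a}_1$ of $A$), so the reduction cannot swap the two hybrids directly and must route through the ``replace by uniform'' intermediate step. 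Once this routing is set up carefully, both hybrids reduce cleanly to standard decisional LWE, completing the argument.
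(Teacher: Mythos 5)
Your route differs from the paper's. The paper does not reprove anything: it observes that Mahadev's ``encrypted CNOT'' already supplies exactly such a family, namely a claw-free pair derived from an encryption of the bit $s$ whose domain is explicitly parsed as $\{0,1\}\times\mathcal{V}$, so that the first coordinate $\mu$ of a preimage is a \emph{dedicated control bit} and colliding preimages satisfy $\mu_0\oplus\mu_1=s$ exactly, for every claw, by construction; Equation \eqref{eq:guess-bit} is then just semantic security of the underlying LWE-based encryption. You instead take the BCMVV-style NTCF $f_{\pk,b}(x)\approx Ax+b\,(A\mathbf{s}+\mathbf{e})$ and plant the bit in the first coordinate of the LWE secret. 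Your hiding argument (two hybrids through the ``replace $A\mathbf{s}+\mathbf{e}$ by uniform'' step) and your claw-freeness argument are fine and standard.

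The gap is in the step ``this immediately forces $x_0[1]\oplus x_1[1]=s$ for every claw.'' In your construction the claw relation is $x_0-x_1=\mathbf{s}$ \emph{over} $\mathbb{Z}_q^n$, whereas Definition \ref{def:ntcf-2} is a statement about the first bit of the \emph{bit-string encoding} of the preimages, quantified over \emph{all} claws. Adding $s\in\{0,1\}$ to the first $\mathbb{Z}_q$-coordinate flips a designated bit of its binary encoding only under specific choices: if ``first bit'' means the least significant bit of the first coordinate and $q$ is even, the identity holds for all claws; if $q$ is odd (the common prime-modulus instantiation), the wraparound claws with first coordinate $q-1$ violate it, so the ``for all $(x_0,x_1)\in\mathcal{R}_{\pk}$'' clause fails as stated (even if only on a negligible fraction); and if the encoding's first bit is anything other than the LSB of that coordinate, the claim is simply false. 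So your construction is salvageable -- fix the encoding to be LSB-first on the first coordinate and take $q$ a power of two (MP12 trapdoors and LWE hardness are unaffected, and you do not need the adaptive hardcore bit) -- but as written the central algebraic claim is asserted rather than established, and it is precisely the point where Mahadev's construction, with its separate control bit $\mu$, avoids any such encoding issue.
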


\pagebreak{}

\part{Even Simpler Proofs of Quantumness}
\label{part:poq}

\section{The Proof of Quantumness}
\label{sec:poq}
For an informal description of our proof of quantumness, we refer the reader to Section \ref{sec:tech-overview-poq}. Here we provide a formal description.
Our proof of quantumness makes use of a NTCF family $\mathcal{F} = \{f_{\pk,b}:\mathcal{X}\to\mathcal{D}_{\mathcal{Y}}\}_{\pk\in\mathcal{K},b\in\{0,1\}}$ (as in Definition~\ref{def:NTCF}) that additionally satisfies the property of Definition~\ref{def:ntcf-2}, i.e.\ a claw-free function pair hides a bit in the xor of the first bit of any claw. 

We parse the domain $\mathcal{X}$ as $\mathcal{X} = \{0,1\} \times \mathcal{V}$. Then, a bit more precisely, the property of Definition \ref{def:ntcf-2} says the following. Let $s \in \{0,1\}$, $(\pk, \sk) \leftarrow \mathcal{F}.\gen(1^{\lambda}, s)$. Let $(\mu_0, v_0)$ and $(\mu_1, v_1)$ be such that $f_{\pk, 0}(\mu_0, v_0) = f_{\pk, 1}(\mu_1, v_1)$. Then $\mu_0 \oplus \mu_1 = s$.

Our proof of quantumness invokes the procedure $\mathcal{F}.\samp$ from the ``Efficient Range Superposition'' property in Definition \ref{def:NTCF}. For simplicity, we describe our proof of quantumness assuming that the ``Efficient Range Superposition'' property holds exactly, i.e.\ the RHS of Equation \eqref{eq:heilinger} is zero. The actual procedure $\mathcal{F}.\samp$ is indistinguishable from the exact one, up to a negligible distinguishing advantage in the security parameter.
\begin{construction}[H]
~\\ Let $\lambda \in \mathbb{N}$ be a security parameter.
    \begin{itemize}
        \item \textbf{Message 1}: The verifier samples $s \leftarrow \{0,1\}$. Then, she samples $(\pk, \sk) \leftarrow \mathcal{F}.\mathsf{Gen}(1^{\lambda}, s)$. The verifier sends $\pk$ to the prover.
        \item \textbf{Message 2:} The prover uses $\mathcal{F}.\samp$ to create the state
        \begin{equation}
        \label{eq:superposition}
        \frac{1}{\sqrt{2 | \mathcal{X}|}}\sum_{b \in \{0,1\}, x \in \mathcal{X}, y \in \mathcal{Y}}\sqrt{f_{\pk}(x)(y)} \ket{b} \ket{x} \ket{y}
        \end{equation}
        This state can be created by preparing a qubit in the state $\ket{+}$ (and a sufficiently large auxiliary register), and then running the $\mathcal{F}.\samp$ controlled on the first qubit.
        
        For $b \in \{0,1\}$, let $(\mu^y_b, v_b^y)$ be such that $y \in \mathsf{Supp}(\mu^y_b, v_b^y)$ (this is a unique element by the properties of the NTCF). Then, we can rewrite the state as:
              \begin{align*}
                    & \frac{1}{\sqrt{2 | \mathcal{X}|}}\sum_{b \in \{0,1\} } \sqrt{f_{\pk}(\mu^y_0, v_0^y)(y)}\ket{b} \ket{\mu^y_b, v_b^y} \ket{y}                 \\
                  = & \frac{1}{\sqrt{2 | \mathcal{X}|}}\sum_{b \in \{0,1\} } \sqrt{f_{\pk}(\mu^y_0, v_0^y)(y)}\ket{b} \ket{\mu^y_0 \oplus b \cdot s, v_b^y} \ket{y}  
              \end{align*}
              The prover applies Hadamard gates to every qubit in the ``$v_b^y$'' register, and then measures all qubits except the first. Let the output be $(\mu, d, y)$. Then, the resulting post-measurement state of the first qubit is (up to global phases):
              \begin{equation} \label{eq:bb84_m2}
                  \begin{cases}
                      \ket{0} + (-1)^{d\cdot (v^y_0 + v^y_1)} \ket{1} \quad \textnormal{ if }  s=0 \\
                      \ket{\mu \oplus \mu_0^y} \,\,\,\,\, \quad \quad \quad \quad \quad \quad \textnormal{ if }  s=1
                  \end{cases}
              \end{equation}
              The prover returns $(\mu,d,y)$ to the verifier.
        \item \textbf{Message 3:} The verifier samples $c \leftarrow \{0,1\}$, and sends $c$ to the prover.
        \item \textbf{Message 4:}
              \begin{itemize}
                  \item If $c=0$, the prover measures the qubit in the basis $$\{\cos(\frac{\pi}{8})\ket{0} + \sin(\frac{\pi}{8}) \ket{1}, -\sin(\frac{\pi}{8}) \ket{0} + \cos(\frac{\pi}{8}) \ket{1}\} \,.$$
                  \item If $c=1$, the prover measures in the basis $$\{\cos(-\frac{\pi}{8})\ket{0} + \sin(-\frac{\pi}{8}) \ket{1}, -\sin(-\frac{\pi}{8}) \ket{0} + \cos(-\frac{\pi}{8}) \ket{1}\} \,.$$
              \end{itemize}
              The prover returns the outcome $b$ to the verifier.
        \item \textbf{Verifier's final computation:}
              \begin{itemize}
                  \item If $s=0$, the verifier runs $(\mu_0^y, v_0^y) \leftarrow \mathcal{F}.\mathsf{Inv}(\sk, 0, y)$ and $(\mu_1^y, v_1^y) \leftarrow \mathcal{F}.\mathsf{Inv}(\sk, 1, y)$. She sets $a = d \cdot (v_0^y \oplus v_1^y)$. Finally, she outputs $\textsf{accept}$ if $a \oplus b = c$, and $\mathsf{reject}$ otherwise.
                  \item If $s=1$, the verifier runs $(\mu_0^y, v_0^y) \leftarrow \mathcal{F}.\mathsf{Inv}(\sk, 0, y) $. She sets $a = \mu \oplus \mu_0^y$. Finally, she outputs $\textsf{accept}$ if $a \oplus b = 0$, and $\mathsf{reject}$ otherwise.
              \end{itemize}
    \end{itemize}
    \caption{(Proof of Quantumness)}\label{alg:simplerPoQ}
\end{construction}

\section{Analysis}
\subsection{Correctness}
Correctness as stated below is straightforward to verify.

\begin{thm}[Correctness] There exists a QPT algorithm $\mathcal{A}$ and a negligible function $\negl$ such that, for all $\lambda$,
    $$\Pr[\mathcal{A} \textnormal{ wins in \Algref{simplerPoQ}}] \geq \cos^2(\pi/8) - \negl(\lambda).$$
\end{thm}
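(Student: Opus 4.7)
The plan is to show that the honest prover in \Algref{simplerPoQ} essentially implements the optimal quantum strategy for the CHSH game, with $s$ playing the role of (the complement of) Alice's question and $c$ playing the role of Bob's question.

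First, I would verify by direct calculation that, at the end of Message~2, the prover's first-qubit state is distributed exactly as Bob's qubit after Alice performs her ideal CHSH measurement on one half of a Bell pair. Starting from the superposition in Equation~\eqref{eq:superposition}, one uses the fact that the only preimages of $y$ under $f_{\pk,0}$ and $f_{\pk,1}$ are $(\mu^y_0,v^y_0)$ and $(\mu^y_0\oplus s,\,v^y_1)$ respectively to rewrite the state as a two-branch superposition labeled by $b\in\{0,1\}$. Applying the Hadamard transform on the $v^y_b$ register and measuring as prescribed then yields the post-measurement qubit displayed in Equation~\eqref{eq:bb84_m2}. Inspecting the four cases, one sees that for $(s,a)\in\{(0,0),(0,1),(1,0),(1,1)\}$ the prover's qubit is $\ket{+},\ket{-},\ket{0},\ket{1}$ respectively, which is exactly the conditional state of Bob's half of a Bell pair after Alice measures in the Hadamard basis (when $s=0$) or in the standard basis (when $s=1$) and obtains outcome $a$.

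Next, I would observe that the verifier's acceptance predicate matches the CHSH winning condition under the identification $x_{\mathrm{CHSH}}=1-s$ and $y_{\mathrm{CHSH}}=c$. Indeed, $a\oplus b = x_{\mathrm{CHSH}}\cdot y_{\mathrm{CHSH}}$ specialises to $a\oplus b = c$ when $s=0$ and to $a\oplus b=0$ when $s=1$, which are precisely the two checks in \Algref{simplerPoQ}. Moreover, the rotated bases used in Message~4 are exactly the eigenbases of Bob's optimal CHSH observables $(X+Z)/\sqrt{2}$ (for $c=0$) and $(Z-X)/\sqrt{2}$ (for $c=1$). Therefore, in the idealised setting where $\mathcal{F}.\samp$ prepares the exact state of Equation~\eqref{eq:superposition}, the prover wins with probability exactly $\cos^2(\pi/8)$.

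Finally, I would account for the slack between the idealised sampler and the actual $\mathcal{F}.\samp$. The ``Efficient Range Superposition'' property (Equation~\eqref{eq:heilinger}), combined with the standard bound $\TD\leq \sqrt{2}\cdot H$ between trace distance and Hellinger distance, implies that the state actually prepared by $\mathcal{F}.\samp$ is negligibly close in trace distance to the idealised one. Since subsequent quantum operations and measurements are contractive in trace distance and the acceptance probability is a linear functional of the final state, this contributes at most an additive $\negl(\lambda)$ loss, yielding the stated bound. I do not expect any step to be a genuine obstacle: the main calculation is the identification in Equation~\eqref{eq:bb84_m2}, and the error propagation is a routine application of standard inequalities.
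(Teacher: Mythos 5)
Your proposal is correct and follows essentially the same route as the paper's own (much terser) argument: identify the leftover qubit of Equation~\eqref{eq:bb84_m2} as the BB84 state corresponding to ``Alice's'' measurement determined by $s$, note that the Message~4 bases are Bob's optimal CHSH measurements and the verifier's checks are the CHSH winning conditions, and absorb the $\mathcal{F}.\samp$ imperfection from Equation~\eqref{eq:heilinger} into the negligible loss. The only difference is that you spell out the case analysis and the Hellinger-to-trace-distance bound explicitly, which the paper leaves as ``a straightforward calculation.''
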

 \begin{proof}
     The QPT algorithm $\mathcal{A}$ follows the steps of the prover in \Algref{simplerPoQ}. Then, correctness follows from the fact that the state in Equation \eqref{eq:bb84_m2} is one of the four BB84 states, and a straightforward calculation. One can also realize that the prover's measurement is Bob's ideal CHSH measurement corresponding to question $c$, and that the verifier is precisely checking the appropriate CHSH winning condition based on $s$. The negligible loss in the correctness probability comes from the fact that, as we mentioned earlier, the procedure $\mathcal{F}.\samp$ actually generates a state that is only negligibly close to that of Equation \eqref{eq:superposition}.
 \end{proof}

\subsection{Soundness}
\label{sec:poq-soundness}

\begin{thm}
    For any PPT prover $\mathcal{A}$, there exists a negligible function $\negl$ such that, for all $\lambda$,
    $$\Pr[\mathcal{A} \textnormal{ wins in \Algref{simplerPoQ}}] \leq \frac34 + \negl(\lambda) \,.$$
\end{thm}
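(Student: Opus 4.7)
The plan is to combine a rewinding argument with the ``hiding a bit in the xor'' security of the NTCF family (\Defref{ntcf-2}). Given a PPT prover $\mathcal{A}$, I would fix its random tape so that its first-round output $(\mu,d,y)$ and its fourth-round message $b$ become deterministic functions of $\pk$ and (respectively) the challenge $c \in \{0,1\}$. Rewinding after the third message then yields, in PPT from $\pk$ alone, both candidate answers $b_0$ (for $c=0$) and $b_1$ (for $c=1$) coming from the same first-round transcript.

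Next, I would perform a case analysis of the verifier's acceptance predicate. Let $a_0 := d \cdot (v_0^y \oplus v_1^y)$ and $a_1 := \mu \oplus \mu_0^y$ denote the two ``correct answers'' the verifier computes via $\mathcal{F}.\mathsf{Inv}$ depending on $s$. Direct inspection of the winning conditions in \Algref{simplerPoQ} gives, on any fixed first-round transcript,
\[
  \Pr_{c}[\text{win} \mid s=0] = \tfrac{1}{2}\bigl(\mathbf{1}[b_0 = a_0] + \mathbf{1}[b_1 \neq a_0]\bigr), \qquad \Pr_{c}[\text{win} \mid s=1] = \tfrac{1}{2}\bigl(\mathbf{1}[b_0 = a_1] + \mathbf{1}[b_1 = a_1]\bigr).
\]
Since all values are binary, when $b_0 = b_1$ the first expression collapses to $\tfrac{1}{2}$ and the second to $\mathbf{1}[b_0 = a_1]$, whereas when $b_0 \neq b_1$ the first becomes $\mathbf{1}[b_0 = a_0]$ and the second becomes $\tfrac{1}{2}$. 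Writing $p_s := \Pr[b_0 \neq b_1 \mid s]$ and letting $q_0, q_1 \in [0,1]$ denote the conditional ``match'' probabilities, this yields $W_0 = p_0 q_0 + \tfrac{1}{2}(1 - p_0)$ and $W_1 = (1 - p_1) q_1 + \tfrac{1}{2} p_1$. Averaging and using $q_i \leq 1$ delivers the key inequality
\[
  W \;=\; \tfrac{1}{2}(W_0 + W_1) \;\leq\; \tfrac{3}{4} \;+\; \tfrac{1}{4}\,|p_0 - p_1|.
\]

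Finally, I would bound $|p_0 - p_1|$ by a reduction to the security property of \Defref{ntcf-2}. The event $\{b_0 \neq b_1\}$ is efficiently computable from $\pk$ alone via the PPT rewinding procedure described above, so if $|p_0 - p_1|$ were non-negligible the induced distinguisher would guess $s$ from $\pk$ with advantage $\tfrac{1}{2}|p_0 - p_1|$, contradicting \Defref{ntcf-2}. Hence $|p_0 - p_1| \leq \negl(\lambda)$, and substituting into the inequality above yields $W \leq \tfrac{3}{4} + \negl(\lambda)$, as desired.

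The main obstacle I anticipate is the careful bookkeeping of the case analysis: one has to verify that all the freedom available to the prover beyond ``predicting $s$'' truly compresses into the single summary statistic $b_0 \oplus b_1$, and that the rewinding based on a fixed random tape faithfully reproduces the distribution of the real interaction when marginalised over randomness. Once these points are dispatched, the reduction to NTCF security is essentially immediate, and the remaining content is elementary arithmetic on the winning probabilities.
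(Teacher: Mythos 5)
Your proposal is correct and takes essentially the same route as the paper: rewind the PPT prover after the first round to obtain its answers $b_0,b_1$ to both challenges, observe that the statistic $b_0\oplus b_1$ is exactly what correlates with the hidden bit, and conclude via the ``hiding a bit in the xor'' property of \Defref{ntcf-2}. The only difference is presentational --- the paper lower-bounds $\Pr[b_0 \textnormal{ and } b_1 \textnormal{ valid}] \geq 2\Pr[\mathcal{A}\textnormal{ wins}]-1$ and notes that joint validity forces $b_0\oplus b_1$ to reveal $s$, whereas you condition on $\{b_0=b_1\}$ to get $\Pr[\mathcal{A}\textnormal{ wins}]\le \tfrac34+\tfrac14|p_0-p_1|$; the underlying reduction and the resulting constants are identical.
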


\begin{proof}
    We show this by giving a reduction from a prover $\mathcal{A}$ for \Algref{simplerPoQ} to an adversary $\mathcal{A}'$ breaking the property of Definition \ref{def:ntcf-2} satisfied by $\mathcal{F}$, i.e.\ predicting the bit $s$ ``hidden in the claw-free pair''. Specifically, we show that if $\mathcal{A}$ wins with probability $\frac34 + \delta$ for some $\delta \geq 0$, then $\mathcal{A}'$ can guess the the bit $s$ from Equation \eqref{eq:guess-bit} with probability at least $\frac12 + 2\delta$. $\mathcal{A}'$ proceeds as follows, for security parameter $\lambda$:
              \begin{itemize}
                  \item[(i)] $\mathcal{A}'$ receives $\pk$ (where $\pk$ is sampled according to $(\pk, \sk) \leftarrow \mathcal{F}.\gen(1^{\lambda}, s)$ for some uniformly random $s$).
                  \item[(ii)] $\mathcal{A}'$ runs $\mathcal{A}$ on ``first message'' $\pk$. Let $z$ be the output.
                  \item[(iii)] Continue running $\mathcal{A}$ on ``second message'' $c = 0$. Let $b_0$ be the received output.
                  \item[(iv)] Rewind $\mathcal{A}$ to just before step (iii). Run $\mathcal{A}$ on ``second message'' $c=1$. Let $b_1$ be the received output.
                  \item[(v)] Output the guess $s' = b_0 \oplus b_1$.
              \end{itemize}

%

    We show that $\Pr[\mathcal{A}' \textnormal{ wins}] \geq \frac12 + 2\delta$. Using the notation introduced above, first notice that when $b_0$ is a response that a Verifier from \Algref{simplerPoQ} would accept, we have, by construction, that
    \begin{equation}
        \label{eq:1}
        a \oplus b_0 = 0 \,,
    \end{equation}
    where $a$ is defined as in \Algref{simplerPoQ} (given $\mathcal{A}$'s first response $z$, and $\sk$). We refer to such a $b_0$ as ``valid''. Similarly, if $b_1$ is valid, then, by construction,
    \begin{equation}
        \label{eq:2}
        a \oplus b_1 = s \,.
    \end{equation}
    Summing Equations \eqref{eq:1} and \eqref{eq:2} together gives $b_0 \oplus b_1 = s$ (where $s$ is the bit that was actually sampled in (i)). Since $\mathcal{A'}$ outputs precisely $s' = b_0 \oplus b_1$, we have that
    \begin{equation}
        \Pr[\mathcal{A}' \textnormal{ wins}] = \Pr[s' = s] \geq \Pr[b_0 \textnormal{ and } b_1 \textnormal{ are valid}] \,.
    \end{equation}

    Now, for fixed $\pk$ and $z$, define
    $$p^{\pk, z}_{win, 0} := \Pr[b_0 \textnormal{ is valid} \,|\, \pk, z] \,,$$
    and define $p^{\pk, z}_{win, 1}$ analogously using $b_1$. Then, we have
    \begin{align}
        \Pr[\mathcal{A}' \textnormal{ wins}] & \geq  \Pr[b_0 \textnormal{ and } b_1 \textnormal{ are valid}] \nonumber                  \\
                                             & =\mathbb{E}_{\substack{s \leftarrow \{0,1\} \\(\pk, \sk) \leftarrow \mathcal{F}.\gen(1^{\lambda}, s)           \\ z \leftarrow \mathcal{A}(\pk)}}  \,\, \left[p^{\pk, z}_{win, 0} \cdot  p^{\pk, z}_{win, 1} \right] \nonumber\\
                                             & \geq \mathbb{E}_{\substack{s \leftarrow \{0,1\} \\(\pk, \sk) \leftarrow \mathcal{F}.\gen(1^{\lambda}, s)        \\ z \leftarrow \mathcal{A}(\pk)}} \left[ p^{\pk, z}_{win, 0} + p^{\pk, z}_{win, 1} - 1 \right] \label{eq:4}\\
                                             & =  2 \cdot \mathbb{E}_{\substack{s \leftarrow \{0,1\} \\(\pk, \sk) \leftarrow \mathcal{F}.\gen(1^{\lambda}, s) \\ z \leftarrow \mathcal{A}(\pk)}}  \left[ \frac12 \left( p^{\pk, z}_{win, 0} + p^{\pk, z}_{win, 1} \right) \right]- 1 \label{eq:3} \,,
    \end{align}
    where \eqref{eq:4} follows from the inequality $x \cdot y \geq x+y-1$ for all $x,y \in [0,1]$ (subtracting $x$ from both sides makes this inequality apparent).

    Now, notice that
    \begin{equation}
        \mathbb{E}_{\substack{s \leftarrow \{0,1\} \\(\pk,\sk) \leftarrow \mathcal{F}.\gen(1^{\lambda}, s)  \\ z \leftarrow \mathcal{A}(\pk)}}  \left[ \frac12 \left( p^{\pk, z}_{win, 0} + p^{\pk, z}_{win, 1} \right) \right] = \Pr[\mathcal{A} \textnormal{ wins}] \,.
    \end{equation}
    Thus, plugging this into \eqref{eq:3}, gives
    \begin{equation}
        \Pr[\mathcal{A}' \textnormal{ wins}] \geq 2 \cdot  \Pr[\mathcal{A} \textnormal{ wins}] -1 = 2 \cdot \left(\frac34 + \delta \right) - 1  = \frac12 + 2\delta\,,
    \end{equation}
    as desired.

\end{proof}

\pagebreak{}

\part{A Computational Test of Contextuality---for size 2 contexts}

\section{Contextuality Games\label{sec:Contextuality-Games}}

We start by defining contextuality games. This section uses the list/vector notation from \Subsecref{notation}.
\begin{defn}[Contextuality game, strategy, value]
  \label{def:ContextualityGame}A contextuality game $\G$ is specified by a tuple $(Q,A,\Call,{\rm pred},{\cal D})$
  where
  \begin{itemize}
    \item $Q$ is a set of questions.
    \item $A$ is a set of answers. %
    \item $\Call$ is a set of subsets of $Q$. We refer to an element of $\Call$ as a \emph{context}.
    \item ${\cal D}$ is a distribution over contexts in $\Call$, and
    \item ${\rm pred}$ is a binary-valued function. It takes as input pairs of the form $(\mathbf{a}, C)$, where $C\in\Call,$ and $\mathbf{a} \in A^{|C|}$. We think of $\mathbf{a}$ as being indexed by the questions in $C$, and we write $\mathbf{a}[q]\in A$ to represent the answer to question $q\in C$.
  \end{itemize}
  $\G$ can be thought of as a 2-message game between a referee and a player that proceeds as follows:
  \begin{enumerate}[(i)]
    \item The referee samples a context $C\leftarrow\calD$ and sends $C$ to the prover.
    \item The player responds with $\mathbf{a} \in A^{|C|}$.
    \item The referee accepts if ${\rm pred}(\mathbf{a},C)=1$.
  \end{enumerate}
  A \emph{strategy} for the game $\G$ is specified by a family $P$ of probability distributions (one for each $C \in \Call$), where $P[\mathbf{a}|C]$ can be thought of as the probability of answering $\mathbf a$ on questions in $C$.

  We define the \emph{value of $\G$} with respect to a strategy $P$ to be
  \[
    {\rm val}(P):=\sum_{C\in\Call, \mathbf{a} \in A^{|C|}}{\rm pred}(\mathbf{a},C)\cdot P(\mathbf{a}|C) \cdot \Pr_{\cal D}[C].
  \]
where $\Pr_{\cal D}[C]$ denotes the probability assigned to $C$ by $\cal D$. \end{defn}

\begin{rem}\label{rem:samesizecontexts} One can assume that all contexts $C\in \Call$ have the same size without loss of generality by adding additional observables, and having the predicate $\pred$ remain unchanged (i.e.\ the predicate does not depend on the values taken by the additional observables).
\end{rem}

 Let us consider the magic square as a contextuality game to clarify the notation. 
  \begin{example}[Magic Square Contextuality Game]
    \label{exa:masq} The set of questions is %
    $$Q:=\{(1,1), (1,2), (1,3), (2,1), (2,2), (2,3), (3,1), (3,2), (3,3)\} \,,$$ 
    $A:=\{+1,-1\}$
    and $\Call$ consists of all rows and columns of this matrix, i.e.
    subsets of the form {$\{(r,1),(r,2),(r,3)\}$ for $r\in\{1,2,3\}$ and $\{(1,c),(2,c),(3,c)\}$
    for $c\in\{1,2,3\}$}. The distribution ${\cal D}$ uniformly selects
    an element $C\leftarrow\Call$, i.e. it samples either the row or
    column uniformly. The predicate ${\rm pred}(\mathbf{a},C)$ is $0$
    unless the following holds:
    \[
      \prod_{q\in C}\mathbf{a}[q]=\begin{cases}
        -1 & {\text{if } C=\{(1,3),(2,3),(3,3)\}} \\
        1  & \text{else}.
      \end{cases}
    \]
    Call this game, $\G_{{\rm masq}}:=(Q,A,\Call,{\rm pred},{\cal D})$,
    the magic square game.
  \end{example}

  The game only becomes meaningful when the strategies that the players
  follow are somehow restricted. Thus, to consider non-contextual and
  quantum values associated with a contextuality game $\G$, we must
  first define the corresponding strategies.

For the following two definitions, let $\G = (Q,A,\Call)$ be a contextuality
game.
\begin{defn}[Classical/Non-Contextual Strategy for $\G$ | \textbf{${\calS}_{NC}(\G)$}%
]
  We write a joint probability distribution over answers to \emph{all} questions in $Q$ as
  \[
    P_{{\rm joint}}[\mathbf{a_{joint}}]\in [0,1]
  \]
  where $\mathbf{a_{joint}}$ is a vector of length $|Q|$, indexed
  by $Q$. A strategy $P$ is a \emph{non-contextual strategy} if $P[\mathbf{a}|C]$ is derived
  from some joint probability $P_{{\rm joint}}[\mathbf{a_{joint}}]$,
  i.e.
  \[
    P(\mathbf{a}|C)=\sum_{\mathbf{a_{joint}}:\mathbf{a_{joint}}[C]=\mathbf{a}}P_{{\rm joint}}(\mathbf{a_{joint}})
  \]
  where by  $\mathbf{a_{joint}}[C]=\mathbf{a}$ {we mean that $\mathbf{a_{joint}}[\tilde{q}]=\mathbf{a}[\tilde{q}]$
    for all $\tilde{q}\in C$.}\footnote{Note that $\mathbf{a_{joint}}$ is a vector
    of length $|Q|$ while $\mathbf{a}$ is a vector of length $|C|$.} The set of all non-contextual strategies is denoted by ${\calS}_{NC}(\G)$.
\end{defn}

{We now define a quantum strategy for $\G$. It would be helpful to
first define ``qstrat''---a state and collection of observables---consistent with a contextuality game $\G$. }
  
The quantum strategy for a contextuality game $\G$ is defined as follows.

\begin{defn}[Quantum Strategy for $\G$ | \textbf{${\calS}_{Qu}(\G)$}] \label{def:qstrat}
  Given a contextuality game $\G$, consider a triple 
  $$\qstrat:=({\cal H},\left|\psi\right\rangle ,\mathbf{O}),$$
  where ${\cal H}$ is a Hilbert space, $\left|\psi\right\rangle \in{\cal H}$
  is a quantum state and $\mathbf{O}$ is a list of Hermitian operators
  (observables), indexed by the questions $Q$ (i.e. $\mathbf{O}[q]$
  is an observable for each $q\in Q$) acting on ${\cal H}$. 
  Furthermore, suppose that \textbf{$\mathbf{O}$} satisfies the following:

  (i) $\spectr[\mathbf{O}[q]]\subseteq A$ for all $q\in Q$, i.e. the
  values returned by the observables correspond to potential answers
  in the game.

  (ii) $[\mathbf{O}[q],\mathbf{O}[q']]=0$ for all $q,q'\in C$, for
  each $C\in\Call$, i.e. operators corresponding to the same context
  are compatible. 

  A strategy is a \emph{quantum strategy}, i.e. $P\in{\calS}_{Qu}(G)$,
  iff there is such a triple $\qstrat:=({\cal H},\left|\psi\right\rangle ,\mathbf{O})$
  satisfying
  \begin{equation}
    P[\mathbf{a}|C]=\Pr\left[\text{Start with \ensuremath{\left|\psi\right\rangle } measure \ensuremath{\mathbf{O}[C]} \text{and obtain \ensuremath{\mathbf{a}}}}\right]\quad\forall\ \mathbf{a},C.\label{eq:quantumProb}
  \end{equation}

\end{defn}

{
  To be concrete, consider the Magic square as an example. %
}

\begin{example}[$\qstrat$ for the magic square] %
  \label{exa:con-instance-The-Magic-Square}
  For the magic square as in \Exaref{MagicSquare}, the optimal quantum strategy corresponds to using $\qstrat:=(\cal{H},\ket{\psi}, \mathbf{O})$ where ${\cal{H}}:=\mathbb{C}^2\times \mathbb{C}^2$, $\ket{\psi}\in \cal{H}$ is any fixed state, say $\ket{\psi}=\ket{00}$, and $\mathbf{O}$ is specified by the following (indexed by the questions $Q$ in some canonical way):
  \begin{equation}
    \begin{array}{ccc}
      \mathbb{I}\otimes X & Z\otimes\mathbb{I}  & Z\otimes X   \\
      X\otimes\mathbb{I}  & \mathbb{I}\otimes Z & X\otimes Z   \\
      X\otimes X          & Z\otimes Z          & XZ\otimes XZ.
    \end{array}\label{eq:operators}
  \end{equation}
\end{example}

{
  Note that this is not the most general strategy, i.e. one could have different operators for each context and the operators in a given context may not even commute. Why then, do we restrict to the strategies defined above? This is because, as explained briefly in the introduction, the appeal of a contextuality test is that even by only measuring commuting observables at any given time, one can find scenarios where the idea of pre-determined values of observables becomes untenable. 
}

{

While conceptually appealing, the main obstacle to testing contextuality
using the contextuality game is that it is unclear how the provers'
strategies can be restricted to the ones above. %
As noted earlier, when one considers a Bell game as
a contextuality game, spatial separation automatically restricts the
provers' strategies in this way. %

Assuming %
that the provers follow only these
restricted strategies, one can nevertheless define the classical/non-contextual
and quantum value of the contextuality game.}

\begin{defn}[Non-Contextual and Quantum Value of a Contextuality Game $\G$]
  Let $\G=(Q,A,\Call,{\rm pred},{\cal D})$ be a contextuality game
  (see \Defref{ContextualityGame} and recall the definition of ${\rm val}$).
  The non-contextual value of $\G$ is given by
  \[
    {\valNC}:=\max_{P\in{\calS}_{NC}(\G)}{\rm val}(P)\quad\text{and similarly}\quad{\valQu}:=\max_{P\in{\calS_{Qu}(\G)}}{\rm val}(P).
  \]
  is the quantum value of $\G$.
\end{defn}

{Before moving further, we quickly note what these values are for the magic
  square.
  \begin{example}[Magic Square Contextuality Game (cont.)]
    Let $\G_{{\rm masq}}$ be as in \Exaref{masq} and note that
    \[
      \valNC=5/6,\quad\text{while}\quad {\valQu}=1
    \]
    using the Magic-Square con-instance as in \Exaref{con-instance-The-Magic-Square}. %
  \end{example}

}

{The KCBS example (\Exaref{KCBSintro}) mentioned in the introduction, in this notation, can be expressed as follows.}

\begin{thm}[KCBS]
  There exists a contextuality game $\G$ with ${\valNC}=0.8 < {\valQu}= \frac{2}{\sqrt{5}} \approx 0.8944$
  where the quantum strategy can be realised using qutrits and five
  binary valued observables (i.e. $\qstrat:=({\cal H},\left|\psi\right\rangle ,\mathbf{O})$
  is such that $\dim{\cal H}=3$, $|Q|=|\mathbf{O}|=5$ and $A=\{\pm1\}$).
\end{thm}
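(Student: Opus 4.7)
The plan is to concretely instantiate the KCBS construction from \Exaref{KCBSintro} as a contextuality game $\G = (Q, A, \Call, \pred, \calD)$ and then verify the two claimed values. I take $Q = \{1, \dots, 5\}$, $A = \{0,1\}$, $\Call = \{\{q, q{+}1\} : q \in Q\}$ (indices modulo $5$), $\calD$ uniform on $\Call$, and $\pred(\mathbf{a}, \{q, q'\}) = 1$ iff $\mathbf{a}[q] \neq \mathbf{a}[q']$. The corresponding $\qstrat$ is the one from \Exaref{KCBSintro}, with the qutrit state $\ket{\psi} = \ket{0}$ and the five rank-one observables $\Pi_q = \ket{v_q}\bra{v_q}$ in dimension three.

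\emph{Non-contextual value.} First I would reduce to deterministic assignments by convexity, since $\mathrm{val}$ is linear in the strategy. A deterministic strategy is an assignment $\tau : Q \to \{0,1\}$, and the predicate is satisfied on context $\{q, q{+}1\}$ precisely when $\tau(q) \neq \tau(q{+}1)$. This is the condition for $\tau$ to be a proper $2$-colouring of the pentagon graph $C_5$, which is impossible because $C_5$ is an odd cycle, so every deterministic strategy violates at least one context. Hence ${\valNC} \leq 4/5$; the assignment $\tau = (0,1,0,1,0)$ violates only $\{5,1\}$ and saturates the bound.

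\emph{Quantum lower bound.} I would evaluate the winning probability of the explicit $\qstrat$ directly. The key observation is that consecutive vectors are orthogonal, so $\Pi_q \Pi_{q{+}1} = 0$ and the joint outcome $(1,1)$ on context $\{q, q{+}1\}$ is forbidden. Thus the per-context winning probability collapses to $\langle\psi|\Pi_q|\psi\rangle + \langle\psi|\Pi_{q{+}1}|\psi\rangle$. Averaging uniformly over the five contexts and using $|\langle 0 | v_q\rangle|^2 = \cos^2\theta$ for each $q$, one obtains $\mathrm{val}(P) = 2 \cos^2\theta$. Substituting $\cos^2\theta = \cos(\pi/5)/(1 + \cos(\pi/5))$ and using $\cos(\pi/5) = (1+\sqrt{5})/4$ then gives $\mathrm{val}(P) = 2/\sqrt{5}$, so ${\valQu} \geq 2/\sqrt{5}$.

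\emph{Quantum upper bound (the main obstacle).} The matching upper bound is the hard step, as it must rule out \emph{every} quantum strategy, including higher-dimensional ones and non-rank-one observables. The plan here is to invoke the KCBS inequality in its quantum form: for any $\qstrat$ with projective observables $\Pi_1, \dots, \Pi_5$ satisfying the orthogonality relations of $C_5$, one has $\sum_q \langle\psi|\Pi_q|\psi\rangle \leq \vartheta(C_5) = \sqrt{5}$, where $\vartheta$ denotes the Lovász theta number. By the same averaging as in the previous step, this yields $\mathrm{val}(P) \leq 2/\sqrt{5}$. I would cite~\cite{klyachko2008simple} for both the reduction to the Lovász theta number of the orthogonality graph and the standard evaluation $\vartheta(C_5) = \sqrt{5}$.
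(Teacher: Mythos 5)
Your game definition, the reduction of $\valNC$ to deterministic assignments, the odd-cycle two-colouring argument giving $\valNC=4/5$, and the computation showing the explicit qutrit strategy attains $2/\sqrt{5}$ are all correct, and this is essentially the route the paper intends (the paper itself only verifies the lower bound in its KCBS example and otherwise defers to~\cite{klyachko2008simple}; the discrepancy between your $A=\{0,1\}$ and the theorem's $A=\{\pm1\}$ is a harmless relabelling).

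The gap is in your quantum upper bound. The paper's definition of a quantum strategy only requires that each observable has spectrum in $A$ and that observables within a context commute; it does \emph{not} impose the KCBS exclusivity relations $\Pi_q\Pi_{q+1}=0$. Your argument invokes the bound $\sum_q\langle\psi|\Pi_q|\psi\rangle\le\vartheta(C_5)=\sqrt{5}$ only for strategies ``satisfying the orthogonality relations of $C_5$'', and your averaging step (per-context winning probability $=\langle\Pi_q\rangle+\langle\Pi_{q+1}\rangle$) also uses $\Pi_q\Pi_{q+1}=0$. For a general strategy the per-context winning probability is $\langle\Pi_q\rangle+\langle\Pi_{q+1}\rangle-2\langle\Pi_q\Pi_{q+1}\rangle$, and the prover may use commuting but non-exclusive projectors of arbitrary rank in arbitrary dimension, so the inequality you cite does not bound $\valQu$ as defined. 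The equality $\valQu=2/\sqrt{5}$ is nevertheless true (it is the known tight quantum bound for the $5$-cycle compatibility scenario), and a small patch repairs your argument: the winning-event projectors $E_q=\Pi_q(\mathbb{I}-\Pi_{q+1})$ and $F_q=(\mathbb{I}-\Pi_q)\Pi_{q+1}$ satisfy $E_qE_{q+1}=0$ and $F_qF_{q+1}=0$ automatically, using only $(\mathbb{I}-\Pi_{q+1})\Pi_{q+1}=0$, so each family is exclusive along a $5$-cycle and the general-rank Lov\'asz theta bound applied to each family separately yields a value of at most $\frac{1}{5}(\sqrt{5}+\sqrt{5})=2/\sqrt{5}$ for \emph{every} quantum strategy in the paper's sense. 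Alternatively, note that the compiler results only use the strict gap $\valNC<\valQu$, which already follows from the parts of your proof that are correct.
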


\section{Criteria for being an operational test of contextuality \label{subsec:criteria}}

\atul{We briefly mentioned in the technical overview that an ``operational test of contextuality'' has a close correspondance with a contextuality game. In this section, we formalise what we mean by this correspondence.} %

{As alluded to in the discussion above and in \Subsecref{tech-overview-contextuality}, it is reasonable to assert that the most general non-contextual physically relevant model of computation
is simply a probabilistic poly time Turing machine. Thus, any proof
of quantumness, i.e. any test that distinguishes a PPT machine from
a quantum machine, may be taken to be a proof of contextuality. In
other words, quantumness and contextuality become equivalent notions,
in accordance with this definition.

This, however, is unsatisfactory because quantumness could be arising
from some other non-classical feature of quantum mechanics which may
have no a priori connection to contextuality. For instance, (assuming
factoring is hard for a PPT machine) equivalence of quantumness and
contextuality would mean that factoring serves as a proof of contextuality.
Yet, it is unclear how Shor's algorithm demonstrates contextuality
in any direct way.

One proposal could be that a satisfactory test must be ``universal''
in the sense that corresponding to each contextuality game, there
should be a systematic way to construct a corresponding test such
that the completeness and soundness values of the test correspond
to the quantum and noncontextual value of the underlying contextuality
game. A little thought shows that this is not enough, as illustrated
by the example below.
\begin{example}
  \label{exa:simpleProofOfQuantumness}The example uses a general non-interactive
  proof of quantumness as an ingredient: Let $\PoQ=(\gen,\verify,\cert)$
  denote a proof of quantumness protocol where $\gen,\verify$ are PPT
  algorithms and $\cert$ is a QPT algorithm. Here, $\gen$ generates
  a challenge, $\cert$ generates a response to the challenge and $\verify$
  tests whether the response is valid. For simplicity, we take them to satisfy the property that
  no PPT machine can make $\verify$ accept while $\cert$ always produces
  a valid certificate, i.e.\ $\PoQ$ has perfect completeness and soundness. %
 Given a contextuality game $\G$, the protocol is the following: the verifier runs a proof of quantumness protocol $\PoQ$
  and if the prover passes, the verifier accepts with probability $\G.\valQu$
  and if the prover fails, accepts with probability $\G.\valNC$.
  \begin{center}
  \begin{tabular}{|V{\linewidth}|c|c|}
  \multicolumn{1}{l}{Verifier} & \multicolumn{1}{c}{} & \multicolumn{1}{c}{Prover}\tabularnewline
  \cline{1-1}\cline{3-3}
  $\chall\leftarrow\PoQ.\gen$ &  & \tabularnewline
   & $\xrightarrow{\chall}$ & \tabularnewline
   & $\xleftarrow{\poq}$ & \tabularnewline
  \begin{cellvarwidth}[t]
  if $\poq$ is a valid response for $\chall$ 

  $\quad$(i.e. $\PoQ.\verify(\chall,\poq)=1$)\\

  $\qquad$accept with probability $\G.\valNC$.\\

  Otherwise\\

  $\qquad$accept with probability $\G.\valQu$.
  \end{cellvarwidth} &  & \tabularnewline
  \cline{1-1}\cline{3-3}
  \end{tabular}
  \par\end{center}
  Evidently, this protocol has completeness $c=\G.\valQu$ and soundness
  $s=\G.\valNC$.

\end{example}

{\Exaref{simpleProofOfQuantumness} is unsatisfactory as a test of contextuality because, even though its soundness and completeness values correspond to the classical and quantum values of the game $\G$, this correspondence is artifical: the prover is not asked a single question that is related to $\G$.} 
We now formalise a stronger and more natural notion of correspondence. To this end, we introduce some notation.

\subsection{Notation}

\atul{As a starting point towards a candidate ``operational test'' $\cal{P}$ for $\G$, consider a proof of quantumness protocol $\cal P$ involving a PPT verifier $V$ and a prover $P$, with soundness $s$ and completeness $c$. } 
To be concrete, suppose protocol ${\cal P}$
involves four messages. %
Further
suppose that the messages are parsed as follows:

\begin{center}
  \begin{tabular}{|c|c|c|}
    \multicolumn{1}{c}{Verifier} & \multicolumn{1}{c}{}        & \multicolumn{1}{c}{Prover}\tabularnewline
    \cline{1-1}\cline{3-3}
                                 & $\xrightarrow{t_{1},t_{2}}$ & \tabularnewline
                                 & $\xleftarrow{t_{3},t_{4}}$  & \tabularnewline
                                 & $\xrightarrow{t_{5},t_{6}}$ & \tabularnewline
                                 & $\xleftarrow{t_{7},t_{8}}$  & \tabularnewline
    \cline{1-1}\cline{3-3}
  \end{tabular}
  \par
\end{center}

For instance $t_{1}$ could be a public key and $t_{2}$ could be an encrypted
message.

\paragraph*{Mapping messages in ${\cal P}$ to questions/answers in $\G$.}
Let $\G=(Q,A,\Call,{\rm pred},{\cal D})$. We require that ${\cal P}$ classifies each message into one of three
categories: (1) a question in $\G$, (2) an answer in $\G$ or (3)
other. More precisely, we require that ${\cal P}$ specifies a map $\mathsf{map}_{i}$
for each message $t_{i}$. \\
For indices $i$ corresponding to messages
\emph{received by the prover} (in our example, $t_{1},t_{2},t_{5},t_{6}$),
$\map_{i}$ is either
\begin{itemize}
  \item the constant $\perp$ output map, $\map^{\perp}$ (i.e.\ $\map^{\perp}$
        outputs $\perp$ on all inputs) or
  \item a map $r_{i}:C_{i}\to Q$ from the set of possible messages $C_{i}$
        to a question in the game $\G$. 
\end{itemize}
{One can think of $r_i$ as a ``decoding map'' for the underlying question.}\\
For indices $i$ corresponding to messages\emph{ sent by the prover
}(in our example $t_{3},t_{4},t_{7},t_{8}$), $\map_{i}$ is either
\begin{itemize}
  \item the constant $\perp$ output map $\map^{\perp}$ or
  \item a map $s_{i}:A\to C_{i}$ from the set of answers $A$ in the game
        to possible messages $C_{i}$. %
\end{itemize}
{Similarly, one can think of $s_i$ as an ``encoding map'' for the underlying answer.}
Continuing \atul{in the same vein as the} example above, suppose $t_{1}=\pk$ is a public key
for some encryption scheme and $t_{2}\leftarrow\Enc_{\pk}(q)$ is
an encryption of a question $q\in Q$. Then, a natural choice for
the corresponding maps is $\map_{1}=\map^{\perp}$ and $\map_{2}=\Dec_{\sk}$
where $\sk$ is the secret key corresponding to $\pk$.
Note that we gave one choice for these maps but ${\cal P}$ can specify
these maps arbitrarily. The non-triviality arises from the requirements
we place on ${\cal P}$, using these maps.

\paragraph*{Faithfulness to $\G$.}
In order to capture the fact that protocol ${\cal P}$ is faithfully executing an instance
of the game $\G$, and not rewarding some other capability of the prover, we consider two families of ``simulators''. These are computationally unbounded machines and are meant to simulate
either a classical or a quantum strategy. The idea is that using $\{\map_{i}\}_{i}$
one can isolate the messages that correspond to questions and answers
in the game $\G$. One can then define machines that answer these
questions non-contextually, or using a quantum strategy, \atul{and can behave arbitrarily otherwise.} 
More precisely, we have the following:
\begin{itemize}
  \item A classical simulator $S_{\tau,\aux}$ is an unbounded machine that
        is designed to interact with $V$ and responds to ``encoded'' questions (as specified by $\{\map_{i}\}_{i}$) using some non-contextual assignment
        $\tau:Q\to A$ \atul{i.e.\ the simulator decodes the message $t_i$ using $r_i$ to recover a question $q\in Q$, computes $a =\tau(q)$ and responds with the encoding $s_j(a)$, for each pair of message indices $(i,j)$ corresponding to a question and its answer.}
        \atul{It responds to the remaining messages using an arbitrary}
        strategy specified by\footnote{Here, $\aux$ may be thought of as
        describing the ``program of a Turing Machine'' and may involve potentially
        unbounded computation.} $\aux$.
        We denote by $\mathbf{S}_{{\rm NC}} =\{S_{\tau,\aux}\}_{\tau,\aux}$
        the set of all classical simulators.
  \item A quantum simulator $S_{\qstrat,\aux}$ is an unbounded
        machine that is also designed to interact with $V$ and responds to
        the ``encoded'' questions
        using %
        $\qstrat=(\left|\psi\right\rangle,\mathbf{O} )$ of $\G$ (see \Defref{qstrat}) 
        but responds to
        the remaining messages using an arbitrary strategy specified by $\aux$.
        We denote by $\mathbf{S}_{\qu}=\{S_{\qstrat,\aux}\}_{\qstrat,\aux}$
        the set of all quantum simulators.
\end{itemize}

In addition to the simulators, we will also require ${\cal P}$ to
specify an efficient classical procedure $\mathsf{qProver}$ that
maps any (classical description of a) quantum strategy 
\[
\qstrat=(\left|\psi\right\rangle,\underbrace{\{O_{1},O_{2}\dots\}}_{=\mathbf{O}} )
\]
 for $\G$ to a QPT prover 
\[
\mathsf{qProver}(\qstrat)=P_{\qstrat}
\]
 that is designed to interact with the verifier $V$ in protocol ${\cal P}$.
$\mathsf{qProver}$ must satisfy the following {\emph{``marginal''
requirement}}: the behaviour of $P_{\qstrat}$ when asked questions
within a context $\{q_{1}\dots q_{k}\}=C\in\Call$ under the encoding
specified by $\{\map_{i}\}$, depends only on the state $\left|\psi\right\rangle $
and observables $\mathbf{O}[C]=\{O_{q_{1}},\dots O_{q_{k}}\}$ but
not on the remaining observables $\mathbf{O}[Q\backslash C]=\mathbf{O}\backslash\mathbf{O}[C]=\mathbf{O}\backslash\{O_{q_{1}},\dots O_{q_{k}}\}$,
specified by $\qstrat$. Just as the case when $\qstrat$ is
played in $\G$ and questions are asked in $C$, the remaining observables
don't make any difference.

\subsection{Criteria}
\atul{We can now state our definition. We ignore negligible additive factors for clarity.}

\begin{defn}[Operational Test of Contextuality]\label{def:OperationalTestOfContextuality}
  We say ${\cal P}$ is an \emph{operational test of contextuality} if there exist $s$ and $c$ (where $s<c$) such that, in
  addition to being a proof of quantumness with soundness $s$ and completeness
  $c$, ${\cal P}$ is \emph{faithful to} some \emph{contextuality game} $\G$. We
  say ${\cal P}$ is \emph{faithful to} $\G=(Q,A,\Call,\pred,{\cal D})$ (with parameters $s$ and $c$)
  if the following hold:
  \begin{enumerate}
    \item \label{enu:wellformed}Well-formedness: ${\cal P}$ must specify maps
          $\{\map_{i}\}_{i}$ and the procedure $\qProver$ as described above. Moreover, for any possible set $Q'$ of questions that the verifier $V$ asks in a single execution, i.e.\
          \begin{equation}
            Q':=\{\map_{i}(t_{i})\}_{i:t_{i}=r_i}\,,\label{eq:Qprime} %
          \end{equation}
          we require that the questions belong to some context, i.e.\ $Q'\subseteq C$ for some context $C\in\Call$. 
    \item \label{enu:gsoundness}$\G$-soundness: For all classical simulators,
          i.e. $S\in\mathbf{S}_{{\rm NC}}$, the probability that the verifier $V$ accepts when interacting with $S$ should be at most the classical
          value $s$, i.e.
          \[
            \Pr\left(1\leftarrow\left\langle V,S\right\rangle \right)\le s\quad\forall\quad S\in\mathbf{S}_{\NC}.
          \]
    \item \label{enu:decision-faithfulness} Decision Faithfulness: For all $S_{\tau,\aux}\in\mathbf{S}_{\NC}$,
          consider the questions $Q'$ asked by $V$ (see \Eqref{Qprime}) in
          an execution of $\left\langle V,S_{\tau,\aux}\right\rangle $.
          \begin{enumerate}
            \item If $Q'$ asks all questions in some context, i.e. $Q'=C$ for some
                  $C\in\Call$, then the verifier outputs $\pred(\tau[C],C)$.
            \item Otherwise, the verifier outputs $1$.
          \end{enumerate}
    \item \label{enu:gcompleteness}$\G$-completeness: Given $\mathsf{qProver}$, the following must hold: 
          \begin{enumerate}
          \item \emph{Quantum completeness.} \\
          First, there should exist $\qstrat=(\ket{\psi}, \{O_1,O_2 \dots \})$ (as in \Defref{qstrat}) such that $\Pr\left[1\leftarrow\left\langle V,P_{\qstrat}\right\rangle \right]=c$
          where $\mathsf{qProver}(\qstrat)=:P_{\qstrat}$. \\
          Second, for every $P_{\qstrat}$, there is a quantum simulator, i.e.
          $S_{\qstrat,\aux}\in\mathbf{S}_{\qu}$, satisfying the following:
          the transcript $\transcript(V,S_{\qstrat,\aux})$ produced by the
          interaction of $S_{\qstrat,\aux}$ with $V$ is distributed identically
          to the transcript $\transcript(V,P_{\qstrat})$ produced by the interaction
          of $P_{\qstrat}$ with $V$. In particular, this implies 
          \[
          \exists\:S_{\qstrat,\aux}\in\mathbf{S}_{\qu},\ \text{s.t.}\ \ \Pr[1\leftarrow\left\langle V,S_{\qstrat,\aux}\right\rangle ]=\Pr\left[1\leftarrow\left\langle V,P_{\qstrat}\right\rangle \right].
          \]
          \item \emph{Classical completeness.} \\
          For every $\qstrat$ consisting of
          commuting observables, i.e. $[O_{i},O_{j}]=0$ for all $O_{i},O_{j}\in\mathbf{O}$,
          there is a PPT prover $P$ such that $\transcript(V,P)$ is distributed
          identically to $\transcript(V,P_{\qstrat})$.
          \end{enumerate}
  \end{enumerate}
\end{defn}

\subsection{Justification}
\atul{According to \Defref{OperationalTestOfContextuality}, if a proof of quantumness protocol $\cal{P}$ is faithful to a contextuality game $\G$, it must specify a mapping $\{\map_i\}_i$ relating it to $\G$. Now if a prover wins with probability
greater than $s$, from Condition \ref{enu:gsoundness} we know that
there is no way of interpreting the behaviour of this prover as being
consistent with a non-contextual assignment in $\G$ (via $\{\map_i\}_i$). }
Condition \ref{enu:decision-faithfulness},
ensures that the criterion for rewarding/penalising a prover is determined
solely by the predicate of $\G$ (via $\{\map_i\}_i$).  %
Finally, Condition \ref{enu:gcompleteness}
requires that there is a systematic procedure $\mathsf{qProver}$
for converting any quantum strategy in $\G$ to a prover in ${\cal P}$.
The marginal condition on $\mathsf{qProver}$ basically ensures that
the procedure does not artificially produce a different prover when
all observables commute vs when they do not. Condition (a) says that
$\mathsf{qProver}$ produces provers whose behaviour is consistent
with a quantum simulator implementing the same quantum strategy and
that the best strategy achieves the completeness value $c$. Condition
\ref{enu:gcompleteness} (b) ensures that when observables do commute
(i.e. we are in the non-contextual setting), the behaviour of the
prover produced by $\mathsf{qProver}$ can be understood in terms
of a non-contextual model, i.e. a PPT machine.

We now introduce these requirements sequentially, illustrating
why each one of them plays a crucial role in ruling out unsatisfactory
notions of operational tests of contextuality.
\begin{itemize}
  \item Condition \ref{enu:wellformed} is a very basic requirement. For instance,
        in \Exaref{simpleProofOfQuantumness}, one could assign map $\map^{\perp}$
        to every message in the protocol and this would satisfy \ref{enu:wellformed}.
    \item Condition \ref{enu:gsoundness} ensures the following: Suppose that the messages
        that map to $\perp$ are answered by using unbounded computational
        power. Even in this case, as long as the messages that correspond to questions/answers
        in $\G$ (as specified by the protocol ${\cal P}$) are answered non-contextually, the condition requires that one cannot do better than a PPT machine. \\For instance,
        \Exaref{simpleProofOfQuantumness} fails to satisfy this requirement:
        an unbounded classical simulator $S\in\mathbf{S}_{\NC}$ can easily
        produce a valid proof of quantumness and make the verifier accept
        with probability $\G.\valQu$ which is greater than the soundness
        value $s=\G.\valNC$. Clearly, \Exaref{simpleProofOfQuantumness}
        was a very simple construction. Consider the less trivial construction
        in \Exaref{q1q2poq} below where the verifier asks questions in the game
        $\G$, in addition to requesting a proof of quantumness,
        but only considers the prover's answers when the PoQ $\pi$ is
        valid. Using natural maps $\{\map_{i}\}$ (i.e.\ $\map_{1}=\map_{2}=\map_{4}=\map_{5}=\mathbb{I},\map_{3}=\map_{6}=\map^{\perp}$)
        it is immediate that Condition \ref{enu:gsoundness} is satisfied
        by this construction: even if a simulator $S\in\mathbf{S}_{\NC}$
        provides a valid proof of quantumness, the verifier does
        not accept with probability more than $\G.\valNC$. Yet, this construction
        is intuitively unsatisfactory as a test of contextuality because it
        is completely neglecting the answers $a_{1},a_{2}$ given by a classical
        prover.
\end{itemize}

\begin{example}
  \label{exa:q1q2poq}Given a contextuality game (with size-two contexts)
  $\G$, the verifier proceeds as described (where $\chall$ and $\poq$
  are as in \Exaref{simpleProofOfQuantumness}). This protocol satisfies Conditions \ref{enu:wellformed} and \ref{enu:gsoundness}, but fails to satisfy \ref{enu:decision-faithfulness} (a) (see \Defref{OperationalTestOfContextuality}).

  \begin{center}
    \begin{tabular}{|V{\linewidth}|c|c|}
      \multicolumn{1}{l}{Verifier}                                 & \multicolumn{1}{c}{}                                     & \multicolumn{1}{c}{Prover}\tabularnewline
      \cline{1-1}\cline{3-3}
      $(q_{1},q_{2})=C\leftarrow\Call$,$\chall\leftarrow\PoQ.\gen$ &                                                          & \tabularnewline
                                                                   & $\xrightarrow{(t_{1},t_{2},t_{3})=(q_{1},q_{2},\chall)}$ & \tabularnewline
      If $\poq$ is invalid, i.e. $\PoQ.\verify(\chall,\poq)=0$,

      $\qquad$accept with probability $\G.\valNC$

      else, i.e. $\PoQ.\verify(\chall,\poq)=1$

      $\qquad$output $\G.\pred(a_{1},a_{2},q_{1},q_{2})$           & $\xleftarrow{(t_{4},t_{5},t_{6})=(a_{1},a_{2},\poq)}$    & \tabularnewline
      \cline{1-1}\cline{3-3}
    \end{tabular}
    \par\end{center}
\end{example}

\begin{itemize}
  \item Condition \ref{enu:gcompleteness} is also satisfied by \Exaref{q1q2poq}
        and this illustrates why the last condition (below) is so crucial.
    \item Condition \ref{enu:decision-faithfulness} is where \Exaref{q1q2poq}
          finally fails: Consider two simulators $S_{\tau,\aux},S_{\tau,\aux'}\in\mathbf{S}_{\NC}$
          where the assignment $\tau$ corresponds to a game value $v<\G.\valNC$
          and $\aux$ corresponds to producing the correct proof of quantumness
          while $\aux'$ corresponds to producing an invalid proof
          of quantumness. Condition \ref{enu:decision-faithfulness}
          (a) requires that $\Pr[1\leftarrow\left\langle V,S_{\tau,\aux}\right\rangle ]=\Pr[1\leftarrow\left\langle V,S_{\tau,\aux'}\right\rangle ]$
          but, $\Pr[1\leftarrow\left\langle V,S_{\tau,\aux}\right\rangle ]=v$
          and $\Pr[1\leftarrow\left\langle V,S_{\tau,\aux'}\right\rangle ]=\G.\valNC$.
  \item The relevance of Condition \ref{enu:decision-faithfulness} (b) is
        evident from \Exaref{firstChallengeThenAsk} below, where the verifier
        starts by asking for a PoQ certificate and only if this is valid does
        it ask the questions for the contextuality game $\G$; otherwise it
        simply rejects. Intuitively, this is unsatisfactory because a classical
        prover is not even able to see the questions in $\G$, let alone answer them (unlike \Exaref{q1q2poq}).
        Yet, none of the previous conditions are violated.
        Such cases are excluded by Condition \ref{enu:decision-faithfulness}
        (b) because it requires that no prover can be penalised unless all
        questions in some context are asked.
\end{itemize}

\begin{example}
  \label{exa:firstChallengeThenAsk}Given a contextuality game (with
  size-two contexts) $\G$, the verifier proceeds as described below
  (where $\chall,\poq$ are as in \Exaref{simpleProofOfQuantumness}). It is easy to see that that this protocol satisfies Conditions~\ref{enu:wellformed} and \ref{enu:gsoundness} but not \ref{enu:decision-faithfulness} (b), with $s=\G.\valNC$ and $q=\G.\valQu$.
  \begin{center}
    \begin{tabular}{|V{\linewidth}|c|c|}
      \multicolumn{1}{l}{Verifier} & \multicolumn{1}{c}{}                                                                          & \multicolumn{1}{c}{Prover}\tabularnewline
      \cline{1-1}\cline{3-3}
      $\chall\leftarrow\PoQ.\gen$  &                                                                                               & \tabularnewline
                                   & \begin{cellvarwidth}[t]
                                       \centering
                                       $\xrightarrow{t_{1}=\chall}$

                                       $\xleftarrow{t_{2}=\poq}$
                                     \end{cellvarwidth}                                                                      & \tabularnewline
      $\ $                         &                                                                                               & $\ $\tabularnewline
      If $\poq$ is valid (i.e. $\PoQ.\verify(\chall,\poq)=1$)

      $\qquad$ask $(q_{1},q_{2})=C\leftarrow\Call$,

      $\qquad$receive $(a_{1},a_{2})$, and

      $\qquad$output $\pred(a_{1},a_{2},q_{1},q_{2})$.

      Else

      $\qquad$reject.              & \begin{cellvarwidth}[t]
                                       \centering
                                       If $\poq$ valid:

                                       $\xrightarrow{(t_{3},t_{4})=(q_{1},q_{2})}$

                                       $\xleftarrow{(t_{5},t_{6})=(a_{1},a_{2})}$

                                       $\xrightarrow{\mathsf{out}=\mathsf{pred}(a_{1},a_{2},q_{1},q_{2})}$

                                       otherwise:

                                       $\xrightarrow{\mathsf{out}=\mathsf{reject}}$
                                     \end{cellvarwidth} & \tabularnewline
      \cline{1-1}\cline{3-3}
    \end{tabular}
    \par\end{center}
  Here, we take the soundness value\footnote{Even though it is clear that every PPT prover succeeds with at most negligible probability; If one, in fact, takes $s=0$, then Condition \ref{enu:decision-faithfulness} (a) is already violated.} $s=\G.\valNC$ 
  while the quantum value is $q=\G.\valQu$.
\end{example}

\begin{itemize}
  \item All examples we have considered so far, have failed at least one of
  the criteria discussed above. However, it is easy to verify that \Exaref{q1q2poq}
  and \Exaref{firstChallengeThenAsk} above both satisfy Condition \ref{enu:gcompleteness}
  (a), using the natural choice for $\mathsf{qProver}$. We now consider
  an example, \Exaref{finalBoss} (below), that satisfies all conditions
  above Conditions \ref{enu:wellformed}, \ref{enu:gsoundness}, \ref{enu:decision-faithfulness}
  and Condition \ref{enu:gcompleteness} (a) and yet, it is unsatisfactory. 
  \begin{itemize}
  \item The idea in \Exaref{finalBoss} is to ``help'' the prover recover
  the post-measurement state, only if it produces a proof of quantumness
  certificate. More precisely, the verifier asks a QFHE encrypted question
  $\hat{q}_{1}$ together with a proof of quantumness challenge $\chall$.
  The prover is expected to respond with an encrypted answer $\hat{a}_{1}$
  together with a certificate for the proof of quantumness, $\poq$.
  The verifier then sends the second question $q_{2}$ together with
  a string $s$. If the prover provided a valid certificate $\poq$,
  the string $s$ is a key that allows the prover to decrypt any homomorphic
  computation performed using $\hat{q}_{1}$ (in particular, the prover
  can learn the post-measurement state; and also $q_{1}$ itself). Otherwise
  $s$ is some irrelevant random string. The verifier expects an answer
  $a_{2}$ and accepts if $\pred(a_{1},a_{2},q_{1},q_{2})=1$. 
  \item This seems unsatisfactory because the construction hinges entirely
  on the proof of quantumness: a prover who is given the ability to
  pass the proof of quantumness (as a black-box), but who is otherwise
  classical, is able to win the game with probability 1 (assuming $\QFHE.\Eval$
  is classical for classical circuits).
  \item Formally, it is easy to check that all conditions are satisfied (using
  natural maps) except Condition \ref{enu:gcompleteness} (b). However,
  there does not seem to be any procedure $\mathsf{qProver}$ that satisfies
  both quantum completeness and classical completeness simultaneously.
  For instance, the natural choice for $\mathsf{qProver}(\qstrat)=P_{\qstrat}$
  (recall $\qstrat=(\{O_{1},O_{2}\dots\},\left|\psi\right\rangle )$
  is simply to answer $\hat{q}_{1}$ by measuring $O_{q_{1}}$ on $\left|\psi\right\rangle $
  homomorphically, produce $\poq$ corresponding to $\chall$, then
  use the key sent by the verifier to recover the post-measurement state
  and measure $O_{q_{2}}$ on this state to obtain the answer $a_{2}$.
  Now, it is immediate that even when all observables commute, there
  is no PPT prover that can produce the same transcript as $P_{\qstrat}$
  (when interacting with the verifier)---a PPT prover cannot produce
  a valid proof of quantumness certificate. One might try to modify
  the $\mathsf{qProver}$ procedure so that it does not produce a proof
  of quantumness certificate $\poq$ when the observables commute, but
  the marginal requirement rules out any such attempt. Finally, if $\mathsf{qProver}$
  never produces a valid proof of quantumness, it cannot produce a $P_{\qstrat}$
  that makes the verifier accept with probability $c>s$. 
  \end{itemize}
  \end{itemize}
  \begin{example}
  \label{exa:finalBoss}Given a contextuality game (with size-two contexts)
  $\G$, the verifier proceeds as described below (where $\chall,\poq$
  are as in \Exaref{simpleProofOfQuantumness} while the QFHE scheme
  is as in \Defref{QFHEscheme}). 
  \begin{center}
  \begin{tabular}{|V{\linewidth}|c|c|}
  \multicolumn{1}{l}{Verifier} & \multicolumn{1}{c}{} & \multicolumn{1}{c}{Prover}\tabularnewline
  \cline{1-1}\cline{3-3}
  $(q_{1},q_{2})=C\leftarrow\Call$
  
  $\sk_{0},\sk_{1}\leftarrow\QFHE.\gen$
  
  $\hat{q}_{1}\leftarrow\QFHE.\Enc_{\sk_{1}}(q_{1})$
  
  $\chall\leftarrow\PoQ.\gen$ &  & \tabularnewline
   & \begin{cellvarwidth}[t]
  \centering
  $\xrightarrow{(t_{1},t_{2})=(\hat{q}_{1},\chall)}$
  
  $\xleftarrow{(t_{3},t_{4})=\left(\hat{a}_{1},\poq\right)}$
  \end{cellvarwidth} & \tabularnewline
  $b=\PoQ.\verify(\chall,\poq)$
  
  $a_{1}=\QFHE.\Dec_{\sk_{1}}(\hat{a}_{1})$ &  & \tabularnewline
   & \begin{cellvarwidth}[t]
  \centering
  $\xrightarrow{(t_{5},t_{6})=(q_{2},\sk_{b})}$
  
  $\xleftarrow{t_{7}=a_{2}}$
  \end{cellvarwidth} & \tabularnewline
  output $\pred(a_{1},a_{2},q_{1},q_{2})$ &  & \tabularnewline
  \cline{1-1}\cline{3-3}
  \end{tabular}
  \par\end{center}
  
  \end{example}

\paragraph{}

  We conclude this discussion by noting that in Condition~\ref{enu:gcompleteness}b, having computational indistinguishability of the transcripts does not suffice. To see this, observe that in \Exaref{finalBoss}, if messages $t_2,t_4,t_5,t_6, t_7$ are exchanged under homomorphic encryption (with an independent key), then it is straightforward to construct a PPT machine that produces a transcript that is computationally indistinguishable from that produced by $P_{\qstrat}$. In fact, one can apply such a transformation quite generically, rendering Condition~\ref{enu:gcompleteness}b irrelevant, if only computational indistinguishability is demanded. 

}

\section{OPad | Oblivious $\protect{\mathbf{U}}$-Pad}
\label{sec:opad}
{
  Before describing our compiler, we introduce the oblivious pad primitive, and show how to construct it.
}

\subsection{Definition\label{subsec:Definition_OPad}}

Let $\mathbf{U}:=\{U_{k}\}_{k\in K}$ be a set of unitaries acting on a %
Hilbert space $\cal{H}$, where $K$ is a finite
set.
\begin{defn}
  \label{def:Oblivious-U-pad}An Oblivious $\mathbf{U}$-Pad (or an
  OPad) is a tuple of algorithms $(\gen,\enc,\dec)$ as follows:
  \begin{itemize}
    \item $\gen$ is a PPT algorithm with the following syntax:
          \begin{itemize}
            \item Input: $1^{\lambda}$ (a security parameter in unary).
            \item Output: $(\pk,\sk)$.
          \end{itemize}
    \item $\Enc$ is A QPT algorithm with the following syntax:
          \begin{itemize}
            \item Input: $\pk,$ a state $\rho$ on $\cal H$.
            \item Output: a state $\sigma$ (also on $\cal H$) and a string $s$.
          \end{itemize}
    \item $\dec$ is a classical polynomial-time deterministic algorithm with the following syntax:
          \begin{itemize}
            \item Input: $\sk,s$.
            \item Output: $k\in K$.
          \end{itemize}
    \item $\samp$ is a PPT algorithm with the following syntax:
          \begin{itemize}
            \item Input: $\pk$
            \item Output: a string $s$
          \end{itemize}
  \end{itemize}
  We require the following.
  \begin{itemize}
    \item Correctness: There exists a negligible function $\negl$ such that, for all $\rho$ on $\mathcal{H}$, $\lambda \in \mathbb{N}$, the following holds with probability at least $1-\negl(\lambda)$ over sampling $(\pk,\sk)\leftarrow\gen(1^{\lambda})$, and $(\sigma, s) \leftarrow \enc(\pk, \rho)$:
         $$\sigma \approx_{\negl(\lambda)}U_{k}\rho U_{k}^{\dagger} \,,$$
        where $k = \dec(\sk, s)$.
    \item Soundness: For any PPT prover $P$, there exists a negligible function $\negl$ (not necessarily equal to the previous one) such that, for all $\lambda \in \mathbb{N}$, $P$ wins in the following game with probability at most  $1/2+\negl(\lambda)$.
  \begin{center}
    \begin{tabular}{|>{\centering}p{4cm}|>{\centering}p{3cm}|>{\raggedright}m{5cm}|}
      \multicolumn{1}{>{\centering}p{4cm}}{Prover} & \multicolumn{1}{>{\centering}p{3cm}}{} & \multicolumn{1}{>{\raggedright}m{5cm}}{Challenger}\tabularnewline
      \cline{1-1} \cline{3-3}
                                                   &                                        & \tabularnewline
                                                   &                                        & $(\pk,\sk)\leftarrow\gen(1^{\lambda})$\tabularnewline
                                                   & {\Large $\overset{\pk}{\longleftarrow}$}            & \tabularnewline

      (The PPT prover can proceed arbitrarily)      &                                        & \tabularnewline
                                                   & {\Large $\overset{s}{\longrightarrow}$}             & \tabularnewline
                                                   &                                        & $b\leftarrow\{0,1\}$

      $k_{0}\leftarrow K$

      $k_{1}=\dec(\sk,s)$                           \tabularnewline
                                                   & {\Large $\overset{k_{b}}{\longleftarrow}$}          & \tabularnewline
                       &                                        & \tabularnewline
                                                   & {\Large $\overset{b'}{\longrightarrow}$}            & \tabularnewline
                                                   &                                        & \text{Accept if } $b'=b$\tabularnewline
                                                   &                                        &               \tabularnewline
      \cline{1-1} \cline{3-3}
    \end{tabular}
    \par\end{center}
    \item Classical range sampling: $\samp$ can sample a string $s$ from the same distribution as $\enc$, i.e. for all $\rho \in \cal{H}$, and for all unbounded distinguishers D, it holds that 
    $$\left| \Pr[1 \leftarrow D(s): s \leftarrow \samp(\pk) ] - \Pr[1 \leftarrow D(s'): {(s',\sigma) \leftarrow \enc(\pk,\rho)}] \right| \le \negl(\lambda)$$ 
    where $\pk \leftarrow \gen(1^\lambda)$. 
  \end{itemize}
\end{defn}

\subsection{Relation between the OPad and proofs of quantumness}\label{subsec:OPadRelation}

\newblue
On a first read, one may skip to \Subsecref{instantiateOPad} where we instantiate the $\OPad$ in the random oracle model. Here, we briefly discuss the relation between the $\OPad$ and proofs of quantumness. We describe how the $\OPad$ implies a proof of quantumness and discuss how proofs of quantumness can be used to realise some properties of the $\OPad$.\footnote{The latter was pointed out by an anonymous QCrypt reviewer.}

\paragraph{OPad implies proof of quantumness.}

A 4-message proof of quantumness is immediate. Consider the case where
$\mathbf{U}=(\mathbb{I},\sigma_{x},\sigma_{y},\sigma_{z})$ indexed
by $K=(0,x,y,z)$ of size 4. Observe that the security game for the
OPad immediately yields a proof of quantumness. From the security
guarantee, there is no PPT algorithm that wins with probability non-negligibly
greater than $1/2$. Yet, there is a quantum prover that can distinguish
$k_{0}$ from $k_{1}$ with probability at least $3/4$. 
\begin{center}
\begin{tabular}{|V{\linewidth}|c|V{\linewidth}|}
\multicolumn{1}{l}{Prover} & \multicolumn{1}{c}{} & \multicolumn{1}{l}{Challenger}\tabularnewline
\cline{1-1}\cline{3-3}
 & $\xleftarrow{\pk}$ & $(\pk,\sk)\leftarrow\gen(1^{\lambda})$\tabularnewline
Start with $\left|1\right\rangle $, and

apply $\Enc$ on $\ket{1}$ to obtain $(U_{k}\left|1\right\rangle ,s)\leftarrow\Enc(\pk,\left|1\right\rangle )$ &  & \tabularnewline
 & $\overset{s}{\rightarrow}$ & \tabularnewline
 &  & $b\leftarrow\{0,1\}$

$k_{0}\leftarrow K$

$k_{1}=\dec(\pk,\sk,s)$\tabularnewline
 & $\overset{k_{b}}{\leftarrow}$ & \tabularnewline
Measure $U_{k_{b}}^{\dagger}U_{k}\left|1\right\rangle $ in the standard
basis.

$\qquad$

If the outcome is $\left|1\right\rangle $,

$\qquad$set $b'=1$.

Else

$\qquad$set $b'=0$.

 &  & \tabularnewline
 & $\overset{b'}{\rightarrow}$ & \tabularnewline
 &  & accept if $b'=b$\tabularnewline
\cline{1-1}\cline{3-3}
\end{tabular}
\par\end{center}

It is elementary to check that the prover succeeds with probability
$3/4$: when $b=1$, the prover always reports $b'=1$ but when $b=0$
the prover reports $b'=0$ with probability at least half (because
for any $B\in\text{U}$ there are two distinct $A,A'\in\mathbf{U}$
such that $\left\langle 0\right|AB\left|1\right\rangle =\left\langle 0\right|A'B\left|1\right\rangle =1$).

One can also construct a 2-message proof of quantumness protocol where
the challenger samples $(\pk,\sk)\leftarrow\gen(1^{\lambda})$, sends
$\pk$ and accepts if the prover returns $k'=\Dec(\sk,s)$.

\global\long\def\salt{\mathsf{pad}}%
\global\long\def\calS{\mathcal{S}}%
\global\long\def\pad{\mathsf{pad}}%
\global\long\def\QDec{\mathsf{QBreak}}%

\paragraph{Non-interactive proofs of quantumness imply a weakened variant of
OPad.}

Any non-interactive proof of quantumness protocol allows one
to construct a weakened variant of $\OPad$.\footnote{This observation is due
to an anonymous QCrypt referee.} The idea is best illustrated by considering
factoring. Let $\OPad=:(\gen,\Enc,\Dec)$ and proceed as follows:
\begin{itemize}
\item $\gen$: Generates a random large composite number $N$ and returns
$(\pk,\sk):=(N,N)$.
\item $\Enc(\pk,\left|\psi\right\rangle )$:
\begin{itemize}
\item Samples $k\leftarrow K$,
\item Obtains a factor (using Shor's algorithm) $F$ of $N$ and defines
$s:=(k,F)$
\item Returns $(U_{k}\left|\psi\right\rangle ,s)$
\end{itemize}
\item $\Dec(\sk,s )$: 
\begin{itemize}
\item Parses $s=(k,F)$
\item If $F$ is a factor of $N$ then

$\qquad$outputs $k$.

Otherwise

$\qquad$outputs $k'\leftarrow K$.
\end{itemize}
\end{itemize}
While correctness is immediate, this construction may seem unsound
since $\Enc$ is revealing $k$ in the clear. However, the point
is that no PPT machine can produce a ``valid encryption'' (i.e.
one that encodes a factor), and the decrypt procedure outputs a random
value from $K$ upon being given an ``invalid encryption''. Effectively,
this means that a PPT machine can only have the decrypt algorithm
output a random value from $K$ which is exactly what the soundness
condition demands. 

There are two issues with this construction. First, the $\Dec$ procedure
is randomised while the definition requires it to be deterministic.
But more importantly, second, there is no $\samp$ procedure that
can sample from the range of $\Enc$---no PPT procedure can produce
correct factors of $N$.

The first issue is not too hard to resolve (described below) but to
resolve both simultaneously, we use the random oracle model. 

\paragraph{Hardness of integer factoring implies OPad without efficient range
sampling.}

We outline the idea here. Let $\calS=(\calS.\gen,\calS.\enc,{\cal S}.\dec)$
denote a classically CCA-secure public key encryption such that (a)
${\cal S}.\Dec$ is a deterministic algorithm and (b) there is an
efficient quantum algorithm $\QDec$ that correctly decrypts any ciphertext
without needing the secret key. Such a scheme is known, based on integer
factoring \cite{HKS13}.
Suppose $(\pk',\sk')\leftarrow\calS.\gen(1^{\lambda})$. Then the
OPad is constructed as follows:
\begin{itemize}
\item $\OPad.\gen(1^{\lambda})$ 
\begin{itemize}
\item Runs $(\pk',\sk')\leftarrow{\cal S}.\gen(1^{\lambda})$
\item Samples $\pad\leftarrow K$
\item Computes $\pk:=(\pk',{\cal S}.\Enc_{\pk'}(\salt))$ and defines $\sk:=(\sk',\pad)$
\item Returns $(\pk,\sk)$
\end{itemize}
\item $\opad.\enc(\pk,\left|\psi\right\rangle )$ 
\begin{itemize}
\item Samples $\tilde{k}\leftarrow K$
\item Recovers $\pad$ from $\pk$ using $\QDec$
\item Defines $k=\salt\oplus\tilde{k}$
\item Applies $U_{k}$ to the input state $\left|\psi\right\rangle $
\item Computes $s\leftarrow\calS.\Enc_{\pk'}(k)$
\item Returns $(U_{k}\left|\psi\right\rangle ,s)$
\end{itemize}
\item $\opad.\Dec(\sk,s)$ returns $k:=\calS.\Dec_{\sk'}(s)\oplus\pad$.
\end{itemize}
Correctness is again immediate. Soundness also holds intuitively:
no PPT prover should be able to win the security game for the $\OPad$
as long as ${\cal S}$ is CCA secure. However, again, we do not have
efficient range sampling in this case.

\subsection{Instantiating the OPad | Oblivious Pauli Pad}\label{subsec:instantiateOPad}

{

\Algref{ObliviousPauliPad} below shows how to instantiate an oblivious $\mathbf{U}$-pad in the random oracle model, where $\mathbf{U}=\{X^{x}Z^{z}\}_{x,z}$ where $x,z$ are strings. We refer to this as an \emph{oblivious Pauli pad}. 
The instantiation leverages ideas from the \emph{proof of quantumness} protocol in \cite{BKVV20}. For convenience, we restate the informal description from \Subsecref{Contributions}. %
We describe the encryption procedure for a single qubit, and using a TCF pair $f_0,f_1$ (rather than an NTCF as in \Algref{ObliviousPauliPad}). To encrypt a multi-qubit state, one simply applies the following encryption procedure to each qubit.

We take $\pk = (f_0, f_1)$, and $\sk$ to be the corresponding trapdoor. Then, $\enc_{\pk}$ is as follows:
\begin{itemize}
\item[(i)] On input a qubit state $\ket{\psi} = \alpha \ket{0} + \beta \ket{1}$, evaluate $f_0$ and $f_1$ in superposition, controlled on the first qubit, and measure the output register. This results in some outcome $y$, and the leftover state $\alpha \ket{0} \ket{x_0} + \beta \ket{1}\ket{x_1}$, where $f(x_0) = f(x_1) = y$.
\item[(ii)] Compute the random oracle ``in the phase'', to obtain $(-1)^{H(x_0)}\alpha \ket{0} \ket{x_0} + (-1)^{H(x_1)} \beta \ket{1}\ket{x_1}$. Measure the second register in the Hadamard basis. This results in a string $d$, and the leftover qubit state
$$ \ket{\psi_Z} = Z^{d\cdot(x_0 \oplus x_1) + H(x_0) + H(x_1)} \ket{\psi} \,.$$
\item[(iii)] Repeat steps (i) and (ii) on $\ket{\psi_{Z}}$, but \emph{in the Hadamard basis}! This results in strings $y'$ and $d'$, as well as a leftover qubit state 
$$\ket{\psi_{XZ}} = X^{d'\cdot(x_0' \oplus x_1') + H(x_0') + H(x_1')}Z^{d\cdot(x_0 \oplus x_1) + H(x_0) + H(x_1)} \ket{\psi}\,,$$
where $x_0'$ and $x_1'$ are the pre-images of $y'$.
\end{itemize}
Notice that the leftover qubit state $\ket{\psi_{XZ}}$ is of the form $X^x Z^z \ket{\psi}$ where $x,z$ have the following two properties: (a) a verifier in possession of the TCF trapdoor can learn $z$ and $x$ given respectively $y, d$ and $y',d'$, and (b) no PPT prover can produce strings $y,d$ as well as predict the corresponding bit $z$ with non-negligible advantage (and similarly for $x$). Intuitively, this holds because a PPT prover that can predict $z$ with non-negligible advantage must be querying the random oracle at \emph{both} $x_0$ and $x_1$ with non-negligible probability. By simulating the random oracle (by lazy sampling, for instance), one can thus extract a claw $x_0, x_1$ with non-negligible probability, breaking the claw-free property.

%
%
%
%
%
%
%
%
%
%
%
%

%
%
%
%
%
%
%
%
%
%
%
%
%
%
%
%
%
%
%
}

\begin{algorithm}[H]
  Let
  \begin{itemize}
    \item ${\cal F}$ be an NTCF family (see \Defref{NTCF}). %
    \item $\left|\psi\right\rangle \in {\cal H}$ be a state acting on $J$ qubits.
    \item $H:\{0,1\}^{*}\to\{0,1\}$ denote the random oracle.
  \end{itemize}
  Define:
  \begin{itemize}
    \item $\gen(1^{\lambda})$:
          \begin{itemize}
            \item Execute $(\pk,\sk)\leftarrow\calF.\gen(1^{\lambda})$ and output $(\pk,\sk)$.
          \end{itemize}
    \item $\enc(\pk,\text{\ensuremath{\rho}})$:
          \begin{itemize}
            \item For each $j\in\{1\dots J\}$,
                \begin{itemize}
                    \item[ ] execute $\qubitEnc$ (as described in \Algref{Oblivious-Pauli-Pad-procedure}) on qubit $j$ of $\rho$ using the public key $\pk$, and
                    \item[ ] use $(k_j,s_j)$ to denote $(k',s')$ as in \Eqref{kPrime}.%
                \end{itemize}
            \item Let $k=(k_{1}\dots k_{J})$, and $s=(s_{1}\dots s_{J})$. Denote the resulting state by $\sigma_{k}$.
                \item Return $\sigma_{k}$ and $s$.
          \end{itemize}
    \item $\dec(\sk,s)$:
          \begin{itemize}
            \item For each $j\in \{1\dots J\}$ 
            \begin{itemize}
              \item[ ] denote by $k_j$ the output of $\qubitDec(\sk, s_j)$ (as described in \Algref{qubitDec}).\footnotemark
            \end{itemize}
            \item Return $k=(k_{1}\dots k_{j})$.
          \end{itemize}
    \item $\samp(\pk)$:
          \begin{itemize}
            \item For each $j\in \{1\dots J\}$, 
            \begin{itemize}
                \item[ ] sample $s_j \leftarrow \sSamp(\pk) $ (as described in \Algref{sSamp}).
            \end{itemize}
            \item Return $s=(s_1\dots s_J)$. 
          \end{itemize}
  \end{itemize}
  \caption{Oblivious Pauli Pad \label{alg:ObliviousPauliPad}}
\end{algorithm}

\footnotetext{Note that $\qubitDec$ does not output a qubit, but just a classical string! Nonetheless, we choose to call it $\qubitDec$, since the output are the quantum one-time pad keys associated to the $\qubitEnc$ procedure.}

\begin{algorithm}[H]
~\\  $\qubitEnc(\pk,\left|\psi\right\rangle )$ where $\left|\psi\right\rangle $
  is a single qubit state (extended to mixed states by linearity)%
  \begin{itemize}
    \item Without loss of generality, suppose the state of the qubit is given
          by $\left|\psi\right\rangle =\alpha\left|0\right\rangle +\beta\left|1\right\rangle $.
    \item Proceed as follows
          \begin{align}
             & \ket{\psi} = \alpha\left|0\right\rangle +\beta\left|1\right\rangle \nonumber                                                                                                                                                                                                                                                                               \\
             & \mapsto\sum_{x\in{\cal X},y\in\calY}\alpha\sqrt{(\calF.f'_{\pk,0}(x))(y)}\left|0\right\rangle \left|x\right\rangle \left|y\right\rangle +\beta\sqrt{({\cal F}.f'_{\pk,1}(x))(y)}\left|1\right\rangle \left|x\right\rangle \left|y\right\rangle               & \text{{Using \ensuremath{\calF.\samp}}}\nonumber                                 \\
             & \approx_{\epsilon}\sum_{x\in{\cal X},y\in\calY}\alpha\sqrt{({\cal F}.f{}_{\pk,0}(x))(y)}\left|0\right\rangle \left|x\right\rangle \left|y\right\rangle +\beta\sqrt{({\cal F}.f_{\pk,1}(x))(y)}\left|1\right\rangle \left|x\right\rangle \left|y\right\rangle & \nonumber \text{where }\epsilon\text{ is negligible}\\ %
             & \mapsto\left(\alpha\left|0\right\rangle \left|x_{0}\right\rangle +\beta\left|1\right\rangle \left|x_{1}\right\rangle \right)\ \left|y\right\rangle                                                                                                           & \text{{ Measuring the last register to get some $y$}}, \nonumber \\ 
             & & \text{where \ensuremath{y} is s.t.}\nonumber                                   \\
             &                                                                                                                                                                                                                                                              & \text{ \ensuremath{\calF.f_{\pk,0}(x_{0})=\calF.f_{\pk,1}(x_{1})=y}.}\nonumber \\
             & \mapsto\left((-1)^{H(x_{0})}\alpha\left|0\right\rangle \left|x_{0}\right\rangle +(-1)^{H(x_{1})}\beta\left|1\right\rangle \left|x_{1}\right\rangle \right)\ \left|y\right\rangle                                                                             & \text{Applying the random oracle}\nonumber                                     \\
             &                                                                                                                                                                                                                                                              & \text{ \ensuremath{H} in the phase.}\nonumber                                  \\
             & \mapsto\left((-1)^{d\cdot x_{0}+H(x_{0})}\alpha\left|0\right\rangle +(-1)^{d\cdot x_{1}+H(x_{1})}\beta\left|1\right\rangle \right)\ \left|d\right\rangle \left|y\right\rangle                                                                                & \text{Measuring the second register}\nonumber                                  \\
             &                                                                                                                                                                                                                                                              & \text{ in the Hadamard basis to get some $d$.}\nonumber                                        \\
             & =\underbrace{Z^{d\cdot(x_{0}\oplus x_{1})+H(x_{0})+H(x_{1})}\left|\psi\right\rangle }_{=:\left|\phi\right\rangle }\left|d\right\rangle \left|y\right\rangle                                                                                                  & \text{Up to a global phase}.\label{eq:postMeasuredState}
          \end{align}
    \item Relabel $(d,y,x_{0},x_{1})$ to $(d_{Z},y_{Z},x_{Z,0},x_{Z,1})$.
    \item Denote by $\left|\phi\right\rangle =\alpha'\left|+\right\rangle +\beta'\left|-\right\rangle $
          the state of the first qubit in \Eqref{postMeasuredState}. Continue as follows
          \begin{align}
             & \alpha'\left|+\right\rangle +\beta'\left|-\right\rangle \label{eq:startHere}                                                                                                                                                                                                                                                \\
             & \mapsto\sum_{x\in{\cal X},y\in\calY}\alpha\sqrt{(\calF.f'_{\pk,0}(x))(y)}\left|+\right\rangle \left|x\right\rangle \left|y\right\rangle +\beta\sqrt{({\cal F}.f'_{\pk,1}(x))(y)}\left|-\right\rangle \left|x\right\rangle \left|y\right\rangle & \text{By applying \ensuremath{\samp}. }\nonumber                 \\
             & \vdots\nonumber                                                                                                                                                                                                                                                                                                  \\
             & \text{proceed as above to obtain}\nonumber                                                                                                                                                                                                                                                                       \\
             & \vdots\nonumber                                                                                                                                                                                                                                                                                                  \\
             & =X^{d\cdot(x_{0}\oplus x_{1})+H(x_{0})+H(x_{1})}\left|\phi\right\rangle \left|d\right\rangle \left|y\right\rangle                                                                                                                              & \text{Up to a global phase.}\label{eq:obliviousPauliPad_second}
          \end{align}
    \item Relabel $(d,y,x_{0},x_{1})$ to $(d_{X},y_{X},x_{X,0},x_{X,1})$.
    \item Now note that the final state in \Eqref{obliviousPauliPad_second} can be written as  
          \begin{equation}
          X^{\phase(d_{X},x_{X,0},x_{X,1})}Z^{\phase(d_{Z},x_{Z,0},x_{Z,1})} \ket{\psi}  \label{eq:correctnessOpad}
          \end{equation}
    where ${\rm \phase}(d,x_{0},x_{1}):=d\cdot(x_{0}\oplus x_{1})+H(x_{0})+H(x_{1})$.
    \item Define 
        \begin{align}
            k'&:=({\phase(d_{X},x_{X,0},x_{X,1})},{\phase(d_{Y},x_{Y,0},x_{Y,1})}),\, \text{and}\, s':=(d_{X},y_{X},\ d_{Z},y_{Z}).  \label{eq:kPrime} %
        \end{align} 
    \item Return the state in \Eqref{correctnessOpad}, and $s'$.
  \end{itemize}
  \caption{$\mathsf{qubitEnc}$ \label{alg:Oblivious-Pauli-Pad-procedure}} %
\end{algorithm}

\begin{algorithm}[H]
  $\qubitDec(\sk, \underbrace{(d_X,y_X,\, d_Z,y_Z)}_{s'})$
  \begin{itemize}
    \item From $y_{X}$, compute $x_{X,0}=\calF.\inv(\sk,0,y_{X})$ and $x_{X,1}=\calF.\inv(\sk,1,y_{X})$.
    \item Similarly, from $y_{Z}$, compute $x_{Z,0}=\calF.\inv(\sk,0,y_{Z}),x_{Z,1}=\calF.\inv(\sk,1,y_{Z})$.
    \item Compute $k'$ as in \Eqref{kPrime}, i.e.
      \begin{equation}
        k'=({\phase(d_{X},x_{X,0},x_{X,1})},{\phase(d_{Y},x_{Y,0},x_{Y,1})}) \label{eq:qubitDecOpad}
      \end{equation}
    where 
    ${\rm \phase}(d,x_{0},x_{1}):=d\cdot(x_{0}\oplus x_{1})+H(x_{0})+H(x_{1})$
    \item Return $k'$.
  \end{itemize}
  \caption{$\qubitDec$ \label{alg:qubitDec}}
\end{algorithm}

\begin{algorithm}[H]
  $\sSamp(\pk)$
  \begin{itemize}
    \item Sample $x_X\leftarrow \calX$ and $x_Y \leftarrow \calX$ (where $\calX$ is specified by the NTCF $\calF$ and the security parameter). 
    \item Evaluate $y_X = \calF.f_{\pk,0}(x_X)$ and $y_Y = \calF.f_{\pk,0}(x_Y)$.
    \item Sample $d_X\leftarrow \calX \setminus \{0\}$ and $d_Y \leftarrow \calX \setminus \{0\}$.
    \item Return $(d_X,y_X,d_Z,y_Z)$.
  \end{itemize}
  \caption{$\sSamp$ \label{alg:sSamp}}
\end{algorithm}

\begin{lem}[Correctness and Efficient Range Sampling]
  Suppose ${\cal F}$ is an NTCF as in \Defref{NTCF}. Then, \Algref{ObliviousPauliPad} satisfies both the correctness and the efficienc range sampling requirements
  in \Defref{Oblivious-U-pad}.
 \end{lem}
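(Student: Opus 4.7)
The plan is to prove the two properties separately, both reducing to the NTCF properties of Definition~\ref{def:NTCF}, and to handle the $J$-qubit case by a straightforward hybrid/union argument given a single-qubit analysis of $\qubitEnc$. Throughout, I will use the bound on Hellinger distance between $f'_{\pk,b}$ and $f_{\pk,b}$ from Equation~\eqref{eq:heilinger} to replace the ``noisy'' distribution $f'$ by the ``ideal'' distribution $f$ at a negligible cost in trace distance; this replacement is applied once per use of $\calF.\samp$, incurring at most $\poly(\lambda)\cdot\negl(\lambda)=\negl(\lambda)$ overall.

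For correctness, after invoking the above replacement, the chain of transformations in $\qubitEnc$ is exact as written in Equations \eqref{eq:postMeasuredState}--\eqref{eq:obliviousPauliPad_second}: after measuring the range register one obtains a unique pair of preimages $(x_{\cdot,0}, x_{\cdot,1}) \in \calR_{\pk}$ by the injective pair property, and Hadamard-basis measurement of the preimage register then deterministically produces the Pauli pad with exponents given by $\phase(d,x_0,x_1)$. The resulting qubit is thus exactly $X^{\phase(d_X,x_{X,0},x_{X,1})} Z^{\phase(d_Z,x_{Z,0},x_{Z,1})}\ket{\psi}$, up to global phase. The key point is that $\qubitDec$ applies the \emph{same} formula \eqref{eq:qubitDecOpad} to the preimages it recovers via $\calF.\inv$; by the trapdoor property, $\calF.\inv(\sk,b,y)$ returns the correct preimage with overwhelming probability over $(\pk,\sk)\leftarrow\calF.\gen(1^{\lambda})$, so the $k$ computed by the decryptor agrees with the pad actually applied by the encryptor. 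Summing the negligible errors over the $J$ qubits yields the required $\negl(\lambda)$-closeness.

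For classical range sampling, I need to show that for every input $\rho$ the distribution of $s=(d_X,y_X,d_Z,y_Z)$ produced by $\enc(\pk,\rho)$ is statistically close to that produced by $\sSamp(\pk)$. After the standard Hellinger replacement, the marginal distribution of $y$ in $\qubitEnc$ is exactly what one gets by sampling $x\leftarrow\calX$ and outputting a sample of $f_{\pk,0}(x)$: this is because, by the injective pair property, each $y$ in the image has exactly one preimage under each of $f_{\pk,0},f_{\pk,1}$, so summing $|\alpha|^2$ and $|\beta|^2$ contributions gives a state-independent distribution over $y$ that coincides with $\sSamp$'s sampling rule. Conditioned on $y$, the Hadamard-basis measurement of the preimage register yields $d$ with probability proportional to $|\alpha|^2 + |\beta|^2 = 1$ (the cross terms vanish because the control qubit is in orthogonal computational-basis branches for the two preimages), making $d$ uniform over $\calX$ independently of $\rho$. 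The same analysis applied to $\ket{\phi}$ in place of $\ket{\psi}$ handles the second phase, and independence of the two phases conditioned on the intermediate record ensures the joint distribution factorises in the same way as in $\sSamp$.

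The main obstacle I anticipate is the slight mismatch in the domain of $d$ between $\sSamp$ (which draws from $\calX\setminus\{0\}$) and $\enc$ (where $d=0$ occurs with probability $1/|\calX|$); this is resolved by observing that the resulting statistical distance is $O(1/|\calX|)$, which is negligible since $|\calX|$ is exponential in $\lambda$. A secondary subtlety is ensuring that the second-phase distribution of $(y_X,d_X)$, produced from the state $\ket{\phi}$ that itself depends on the first-phase record, is still statistically independent of that record; this follows because the analysis above shows the $(y,d)$-marginal of $\qubitEnc$ is state-independent, so conditioning on the first phase leaves the second-phase marginals intact.
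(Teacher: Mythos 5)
Your proposal is correct and follows the only natural route here: the paper itself dismisses this lemma with a one-line ``straightforward to verify given the properties of the NTCF,'' and your argument is exactly that verification spelled out — direct correctness from the injective-pair and trapdoor properties, and state-independence of the $(y,d)$ marginals (using $f_{\pk,0}(x_0)=f_{\pk,1}(x_1)$ on claws and the vanishing cross terms) to match $\sSamp$, with the noisy-versus-ideal sampler and the $d=0$ versus $\calX\setminus\{0\}$ discrepancies absorbed into negligible statistical error. No gaps; your treatment is more detailed than the paper's.
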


\begin{proof}
This is straightforward to verify given the properties of the NTCF.
\end{proof}

\begin{lem}[Soundness]
  Suppose ${\cal F}$ is an NTCF as in \Defref{NTCF}. Then, \Algref{ObliviousPauliPad}
  satisfies the soundness requirement in \Defref{Oblivious-U-pad}.
\end{lem}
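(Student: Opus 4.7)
The plan is to reduce from a PPT adversary $\calA$ with non-negligible advantage $\delta$ in the OPad soundness game to a PPT algorithm that breaks the claw-free property of $\calF$. Let $G_0$ denote the real OPad game and $G_1$ the modified game in which $k_1$ is also replaced by a uniformly random value in $K$, so that $k^{*}$ becomes independent of $b$ and the winning probability in $G_1$ is exactly $1/2$. Parse $\calA$'s first message (for a single qubit) as $s=(d_X,y_X,d_Z,y_Z)$, and for each of $y_X,y_Z$ let $(x_{\star,0},x_{\star,1})$ be the unique claw associated to it via the injective-pair property of $\calF$. Let $E$ be the event that, over the entire execution (including queries $\calA$ may make after receiving $k^{*}$), $\calA$ classically queries $H$ at \emph{both} preimages of $y_X$ or at both preimages of $y_Z$.

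The key observation is that conditioned on $\neg E$ the transcript of $\calA$ is identically distributed in $G_0$ and $G_1$. Indeed, the real key bit $k_Z=d_Z\cdot(x_{Z,0}\oplus x_{Z,1})+H(x_{Z,0})+H(x_{Z,1})$ depends on $H$ only through the parity $H(x_{Z,0})\oplus H(x_{Z,1})$, which is a uniform random bit independent of $\calA$'s view whenever at least one of $x_{Z,0},x_{Z,1}$ is never queried (and analogously for $k_X$). Hence, conditioned on $\neg E$, the real $k^{*}$ is distributed exactly like a uniform one. A standard game-hopping argument then yields both $\Pr[E\mid G_0]=\Pr[E\mid G_1]$ and $|\Pr[\calA\text{ wins}\mid G_0]-\Pr[\calA\text{ wins}\mid G_1]|\le\Pr[E\mid G_0]$, giving $\Pr[E\mid G_1]\ge\delta$.

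The reduction $\mathcal{B}$ now receives $\pk$ from the claw-free challenger, forwards it to $\calA$, simulates the random oracle $H$ by lazy sampling, and after $\calA$ outputs $s$ replies with a uniformly random $k^{*}\in K$; this exactly emulates $G_1$ and crucially requires no knowledge of $\sk$. Once $\calA$ halts, $\mathcal{B}$ scans all pairs $(x,x')$ of queries it recorded and uses $\calF.\chk$ to test whether $\chk(\pk,0,x,y_X)=\chk(\pk,1,x',y_X)=1$, and analogously for $y_Z$; by the injective-pair property any matching pair lies in $\mathcal{R}_{\pk}$, and $\mathcal{B}$ returns it. Since $\calA$ makes at most polynomially many queries the search is efficient, and whenever $E$ occurs such a pair is present, so $\mathcal{B}$ succeeds with probability at least $\Pr[E\mid G_1]\ge\delta$, contradicting the claw-free property for non-negligible $\delta$. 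An elementary hybrid over the $J$ qubits lifts this single-qubit analysis to the general case.

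The main technical obstacle will be rigorously justifying the equivalence ``conditioned on $\neg E$, the transcripts of $G_0$ and $G_1$ are identical'', because $E$ is itself a function of the full transcript and $\calA$ may keep querying $H$ after receiving $k^{*}$. The cleanest route is to condition on all values of $H$ outside the set $\{x_{X,0},x_{X,1},x_{Z,0},x_{Z,1}\}$, on $\calA$'s internal randomness, and on the transcript prefix up to $s$, and then verify that on $\neg E$ at least one summand contributing to each of $k_X,k_Z$ is never queried, making that summand a uniform bit independent of $\calA$'s view and thus rendering the real $k^{*}$ distributionally identical to the uniform one. A secondary subtlety is the noisy nature of the NTCF: $\qubitEnc$ uses $f'_{\pk,b}$ rather than $f_{\pk,b}$, introducing negligible statistical slack through the Hellinger bound in \Defref{NTCF}, but every step above goes through up to an additional negligible additive loss.
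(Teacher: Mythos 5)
Your proposal is correct and follows essentially the same route as the paper: both arguments rest on the observation that each key bit depends on $H$ only through the parity $H(x_0)\oplus H(x_1)$, define the bad event that the prover queries both preimages of some returned image, argue that otherwise the decoded key is information-theoretically indistinguishable from a uniform one (so the real game and the all-uniform game coincide until the bad event), and then extract a claw from the lazily-simulated oracle transcript via $\calF.\chk$, contradicting the claw-free property. The only adjustment needed is your closing sentence: a literal per-qubit hybrid does not go through, because the intermediate hybrids would require the reduction to supply the \emph{correct} key components for the remaining qubits, which it cannot compute without $\sk$; instead, as the paper does, define the bad event over all $2J$ images simultaneously---your single reduction (send a uniform $k^{*}$, lazy-sample $H$, and search all recorded query pairs for a claw matching any of the returned $y$'s) already covers this case with no change to the analysis.
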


{\begin{proof}
    We first analyse the simpler game ${\cal G}$ (described below) and
    then observe that the reasoning carries over to the soundness of \Algref{ObliviousPauliPad}.
    \begin{center}
      \begin{tabular}{|>{\centering}p{4cm}|>{\centering}p{3cm}|>{\raggedright}m{5cm}|}
        \multicolumn{1}{>{\centering}p{4cm}}{Prover $(\calP)$} & \multicolumn{1}{>{\centering}p{3cm}}{} & \multicolumn{1}{>{\raggedright}m{5cm}}{Challenger $(\calC)$}\tabularnewline
        \cline{1-1} \cline{3-3}
                                                               &                                        & \tabularnewline
                                                               &                                        & $(\pk,\sk)\leftarrow\gen(1^{\lambda})$\tabularnewline
                                                               & {\Large $\overset{\pk}{\longleftarrow}$}            & \tabularnewline
                                                               &                                        & \tabularnewline
                                                               & {\Large $\overset{y}{\longrightarrow}$}             & \tabularnewline
                                                               &                                        & $b\leftarrow\{0,1\}$                                                        \\
        $h_{0}\leftarrow\{0,1\}$                                                                                                                                                      \\
        $h_{1}=H(x_{0})\oplus H(x_{1})$ where $x_{0}=\calF.\inv(\sk,0,y)$
        and $x_{1}=\calF.\inv(\sk,1,y)$.\tabularnewline
                                                               & {\Large $\overset{h_{b}}{\longleftarrow}$}          & \tabularnewline
                                                               &                                        & \tabularnewline
                                                               & {\Large $\overset{b'}{\longrightarrow}$}            & \tabularnewline
                                                               &                                        & Accept if $b'=b$\tabularnewline
                                                               &                                                    & \tabularnewline
        \cline{1-1} \cline{3-3}
      \end{tabular}
      \par\end{center}

    Intuitively, it is clear that no PPT prover ${\cal P}$ wins with
    probability more than $1/2+{\rm negl}$ because if a PPT prover can
    distinguish $h_{0}$ from $h_{1}$, it must know $H$ at both $x_{0}$
    and $x_{1}$ with non-negligible probability. By simulating the random
    oracle, one can then construct a PPT algorithm to extract preimages
    $x_{0},x_{1}$ of $y$. This violates the claw-free property of $\calF$.

    Formally, suppose that ${\cal P}$ succeeds with probability at most
    $1/2+\eta$ for some non-negligible function $\eta$. We show below
    that the following straightforward reduction extracts preimages $x_{0},x_{1}$
    of $y$ with non-negligible probability:
    \begin{center}
      \begin{tabular}{|>{\centering}p{1in}|>{\centering}p{1cm}|>{\raggedright}m{5cm}|>{\raggedright}m{1.5cm}|>{\raggedright}m{4cm}|}
        \multicolumn{1}{>{\centering}p{1in}}{Prover $(\calP)$, queries to $H$ are simulated by ${\cal R}$} & \multicolumn{1}{>{\centering}p{1cm}}{\centering{}} & \multicolumn{1}{>{\raggedright}m{5cm}}{Reduction ${\cal R}$ that uses ${\cal P}$} & \multicolumn{1}{>{\raggedright}m{1.5cm}}{\centering{}} & \multicolumn{1}{>{\raggedright}m{4cm}}{Challenger $\calF.\calC$, for the claw-free property of $\calF$}\tabularnewline
        \cline{1-1} \cline{3-3} \cline{5-5}
                                                                                                            &   &   &  & \tabularnewline
                                                                                                           & \centering{}                                       & Answer queries to $H$ made by ${\cal P}$ by maintaining a database
        $D$. If the query to $H$ is at $x$, check if $D$ contains $x$,
        otherwise sample ${\rm val}\leftarrow\{0,1\}$, respond with ${\rm val}$
        and append $(x,{\rm val})$ to $D$.                                                                 & \centering{}                                       & \centering{}{$(\pk,\sk)\leftarrow\gen(1^{\lambda})$}\tabularnewline
                                                                                                           & \centering{}{\Large $\overset{\pk}{\longleftarrow}$}            &                                                                                   & \centering{}{\Large $\overset{\pk}{\longleftarrow}$}                & \tabularnewline
                                                                                                           & \centering{}                                       &                                                                                   & \centering{}                                           & \tabularnewline
                                                                                                           & \centering{}{\Large $\overset{y}{\longrightarrow}$}             &                                                                                   & \centering{}                                           & \tabularnewline
                                                                                                           & \centering{}                                       & $h_{0}\leftarrow\{0,1\}$                                                          & \centering{}                                           & \tabularnewline
                                                                                                           & \centering{}{\Large $\overset{h_{0}}{\longleftarrow}$}          &                                                                                   & \centering{}                                           & \tabularnewline
                                                                                                           & \centering{}                                       &                                                                                   & \centering{}                                           & \tabularnewline
                                                                                                           & \centering{}{\Large $\overset{b'}{\longrightarrow}$}  & & & \tabularnewline
                                                                                                           & & Ignore $b'$.                                                                      & \centering{}                                           & \tabularnewline
                                                                                                           & \centering{}                                       & Check if $D$ has two inputs $(x_{0},x_{1})$ with the same output
        $y$ under ${\cal F}$ (i.e. ${\cal F}.\chk(\pk,0,x_{0},y)=$

        $\calF.\chk(\pk,1,x_{1},y)=1$).

        Otherwise, set $x_{0}=x_{1}=0$.                                                                    & \centering{}                                       & \tabularnewline
                                                                                                           & \centering{}                                       &                                                                                   & \centering{}{\Large $\xrightarrow{(x_{0},x_{1},y)}$}   & \tabularnewline
        \cline{1-1} \cline{3-3} \cline{5-5}
      \end{tabular}
      \par\end{center}

    We lower bound the success probability of ${\cal R}$. Assume, without
    loss of generality, that $\calP$ does not repeat queries. Let $E$
    be the following event, during the execution of $\calP$, interacting
    with $\calC$: $\calP$ makes a query at $x\in\{x_{0},x_{1}\}$ such
    that after this query, both $x_{0}$ and $x_{1}$ have been queried.

    Note that $\neg E$ means that, by an information theoretic argument,
    $\calP$ cannot distinguish the random variable $H(x_{0})\oplus H(x_{1})$
    from a uniformly random bit.

    Suppose $\calP$ interacts with ${\cal C}$. Then,
    \[
      \Pr[\calP\text{ wins}]=\Pr[\neg E]\Pr[\calP\text{ wins}|\neg E]+\Pr[E]\Pr[\calP\text{ wins}|E].
    \]
    NB: $\Pr[b=0|\neg E]=\Pr[b=1|\neg E]=\frac{1}{2}$ because $\Pr[b=0|\neg E]=\Pr[\neg E|b=0]\Pr[b=0]/\Pr[\neg E]$
    and $\Pr[\neg E|b=0]=\Pr[\neg E|b=1]$ because until $E$ happens,
    $b=0$ and $b=1$ cannot make any difference in the execution of the
    protocol.

    Continuing,
    \begin{align*}
      \Pr[\calP\text{ wins}] & =\Pr[\neg E]\left(\Pr[b'=0|b=0\land\neg E]\cancelto{1/2}{\Pr[b=0|\neg E]}+\underbrace{\Pr[b'=1|b=1\land\neg E]}_{=\Pr[b'=1|b=0\land\neg E]}\cancelto{1/2}{\Pr[b=1|\neg E]}\right)+ \\
                             & \ \ \Pr[E]\Pr[\calP\text{ wins}|E]                                                                                                                                                 \\
                             & =\Pr[\neg E]\cdot\frac{1}{2}+\Pr[E]\cdot\Pr[\calP\text{ wins}|E]                                                                                                                   \\
                             & =\frac{1}{2}+\Pr[E]\cdot\left(\Pr[\calP\text{ wins}|E]-\frac{1}{2}\right)
    \end{align*}
    and since $\Pr[\calP\text{ wins}]\ge\frac{1}{2}+\eta$, it implies
    that $\Pr[E]\ge\eta'$ for some other non-negligible function $\eta'$.
    Since $\Pr[{\cal R}\text{ wins}]=\Pr[E]$, we conclude ${\cal R}$,
    a PPT algorithm, wins against $\calF.\calC$ with probability $\eta$
    which contradicts the claw-free property of ${\cal F}$.

    To use this result, we first need to generalise to the case of multiple
    $y$s. This is also quite straightforward. Consider the following
    game ${\cal G}'$: \\

    \begin{tabular}{|>{\centering}p{4cm}|>{\centering}p{3cm}|>{\raggedright}m{5cm}|}
      \multicolumn{1}{>{\centering}p{4cm}}{Prover $(\calP)$} & \multicolumn{1}{>{\centering}p{3cm}}{}     & \multicolumn{1}{>{\raggedright}m{5cm}}{Challenger $(\calC)$}\tabularnewline
      \cline{1-1} \cline{3-3}
                                                            &   &   \tabularnewline
                                                             &                                            & $(\pk,\sk)\leftarrow\gen(1^{\lambda})$\tabularnewline
                                                             & {\Large $\overset{\pk}{\longleftarrow}$}                & \tabularnewline
                                                             &                                            & \tabularnewline
                                                             & {\Large $\xrightarrow{y_{1}\dots y_{2J}}$} & \tabularnewline
                                                             &                                            & $b\leftarrow\{0,1\}$                                                        \\
      $h_{0}\leftarrow\{0,1\}^{2J}$

      $h_{1}=h_{11},h_{21}\dots h_{2J,1}$ where $h_{j1}=H(x_{j0})\oplus H(x_{j1})$
      where in turn

      $x_{j,0}=\calF.\inv(\sk,0,y_{j})$ and $x_{j,1}=\calF.\inv(\sk,1,y_{j})$.\tabularnewline
                                                             & {\Large $\overset{h_{b}}{\longleftarrow}$}              & \tabularnewline
                                                             &                                            & \tabularnewline
                                                             & {\Large $\overset{b'}{\longrightarrow}$}                & \tabularnewline
                                                             &                                            & Accept if $b'=b$\tabularnewline
                                                             &   &   \tabularnewline
      \cline{1-1} \cline{3-3}
    \end{tabular}
    \newline
    
    Also consider the modified reduction ${\cal R}'$ which is the same
    as the reduction ${\cal R}$, except that it receives $y_{1}\dots y_{2J}$
    and at the last step, where it checks $D$ for any two inputs corresponding
    to any one of the $y_{1}\dots y_{2J}$. The success probability of
    ${\cal R}'$ can be lower bounded as before, except that the event
    $E$ now becomes the following: ${\cal P}$ makes a query $x\in\{x_{j,0},x_{j,1}\}_{j\in[2J]}$
    such that all for at least some $j$, both $x_{j,0}$ and $x_{j,1}$
    have been queried.

    The last step is to relate the game ${\cal G}'$ with the soundness
    game in \Defref{Oblivious-U-pad}. The main difference between the
    two is that instead of $h_{0}$ and $h_{1}$, the soundness game uses
    $k_{0}$ and $k_{1}$ where, for $\dec$ given by the oblivious pauli
    pad (i.e. \Algref{ObliviousPauliPad}), $k_{1}$ has the form (for
    $j$ odd)
    \[
      k_{j1}=(d_{j}\cdot(x_{j0}\oplus x_{j1})+H(x_{j,0})+H(x_{j,1}),\ d_{j+1}\cdot(x_{j+1,0}\oplus x_{j+1,1})+H(x_{j+1,0})+H(x_{j+1,1})).
    \]
    The reasoning in computing the probability of $E$ goes through as
    above. The reduction ${\cal R}'$ remains unchanged (it anyway
    was not computing $k_{1}$). Assuming that the claw-free property
    of ${\cal F}$ holds, we conclude that \Algref{ObliviousPauliPad}
    satisfies the soundness condition. %
  \end{proof}
}

In \Secref{construction11}, we show how to use the oblivious pad, along with a $\QFHE$ scheme, to obtain a contextuality compiler. In order for this to be possible, the oblivious pad needs to be ``compatible'' with the $\QFHE$ scheme in the following sense.

\begin{defn}[$\opad$ compatible with $\QFHE$] \label{def:QFHEcomptableopad}
  Suppose a $\QFHE$ scheme satisfies \Defref{QFHEscheme} with the form of encryption of $n$-qubit states specified by $\{ U_k \}_{k\in K}$ as in \Eqref{MBform}. Then, an oblivious $\mathbf{U}$-Pad (or an $\opad$) as in \Defref{Oblivious-U-pad} is compatible with the $\QFHE$ scheme if $\mathbf{U} = \{ U_k \}_{k\in K}$.
\end{defn}

While for our compiler, taking $\mathbf U$ to be the Pauli group suffices, we give a more general construction below.

\subsection{Instantiating the OPad | General Oblivious $\mathbf{U}$-Pad}

{We informally describe how one can instantiate the oblivious ${\mathbf U}$-Pad
more generically for essentially any group of unitaries ${\mathbf U}$. 

To describe the idea, for simplicity, let ${\mathbf V}=\{\mathbb{I},X\}$
and suppose the input state is a single qubit state $\left|\psi\right\rangle $.
In the Pauli Pad construction, we used TCFs to apply the unitary
$V_{k}:=X^{d\cdot(x_{0}\oplus x_{1})+H(x_{0})+H(x_{1})}\in{\mathbf V}$
to $\left|\psi\right\rangle $ where $k=d\cdot(x_{0}\oplus x_{1})+H(x_{0})+H(x_{1})$.
We had defined $s:=(d,y)$ where $y$ is the image of $x_{0}$ and
$x_{1}$. The $\Enc$ procedure returned $(V_{k}\left|\psi\right\rangle ,s)$.
It was, however, not immediate how one would extend this construction
to unitaries beyond the Pauli group.

Now, in order to apply a general unitary $U_{k}\in{\mathbf U}$ (which
does not necessarily have to be a Pauli matrix), we use TCFs to generate
a pair $(k,s)$ as above but on auxiliary qubits, and then apply the
unitary $U_{k}$ on the input state $\left|\psi\right\rangle $. Intuitively,
since a PPT algorithm producing $(k,s)$ entails that it must know
claws for $y$ (as it must have evaluated $H(x_{0})$ and $H(x_{1})$),
we conclude that this construction satisfies the soundness condition
for an OPad. 

}

More precisely, suppose the unitary group ${\mathbf U}=\{U_{k}\}_{k}$
is indexed by $J$-bit strings $k$ and each unitary acts on a Hilbert
space $\mathcal{H}$. The corresponding OPad may be instantiated as in \Algref{ObliviousPadGeneralU}.

\global\long\def\Qsample{\mathsf{qSamp}}%
\global\long\def\Qsamp{\mathsf{qSamp}}%

\global\long\def\Csample{\mathsf{cSamp}}%

\begin{algorithm}[H]
Let
\begin{itemize}
\item ${\cal F}$ be an NTCF family (see \Defref{NTCF}),
\item $\epsilon>0$ be a fixed small constant, and
\item $\left|\psi\right\rangle \in \mathcal{H}$
\item $H:\{0,1\}^{*}\to\{0,1\}$ denote the random oracle.
\end{itemize}
Define:
\begin{itemize}
\item $\gen(1^{\lambda})$:
\begin{itemize}
\item Execute $(\pk,\sk)\leftarrow\calF.\gen(1^{\lambda})$ and output $(\pk,\sk)$.
\end{itemize}
\item $\enc(\pk,\text{\ensuremath{\rho}})$:
\begin{itemize}
\item For each $j\in\{1\dots J\}$, execute $\Qsamp_{j}(\pk)$ (as described in
\Algref{enc_j_general}) to obtain $k=(k_{1}\dots k_{J})$ and $s=(s_{1}\dots s_{J})$. 
\item Return $\sigma_{k}:=U_{k}\rho U_{k}^{\dagger}$ along with $s$.
\end{itemize}
\item $\dec(\sk,s)$:
\begin{itemize}
\item For each $j\in\{1\dots J\}$, 
\begin{itemize}
\item Parse $s_{j}=:(d,y)$.
\item From $y$, compute $x_{0}=\inv(\sk,0,y)$ and $x_{1}=\inv(\sk,1,y)$.
\item Compute 
\[
k_{j}(s_{j})=\phase(d,x_{0},x_{1})
\]
where ${\rm \phase}(d,x_{0},x_{1}):=d\cdot(x_{0}\oplus x_{1})+H(x_{0})+H(x_{1})$
as before.
\end{itemize}
\item Return $k=(k_{1}\dots k_{j})$.
\end{itemize}
\item $\samp(\pk)$:
\begin{itemize}
  \item For each $j\in \{1\dots J\}$, 
  \begin{itemize}
      \item[ ] sample $s_j \leftarrow \sSamp(\pk) $ (as described in \Algref{sSamp_}).
  \end{itemize}
  \item Return $s=(s_1\dots s_J)$. 
\end{itemize}\end{itemize}
\caption{Oblivious ${\mathbf U}$-Pad for a general ${\mathbf U}$.\protect\label{alg:ObliviousPadGeneralU}}

\end{algorithm}

\begin{algorithm}[H]
\begin{raggedright}
$\Qsamp_{j}(\pk)$ 
\par\end{raggedright}
\begin{raggedright}
Apply the steps listed in \Algref{Oblivious-Pauli-Pad-procedure}
starting with \Eqref{startHere} and ending at \Eqref{obliviousPauliPad_second}
to the state $\ket{\phi} = \left|0\right\rangle $ to obtain the state 
\[
X^{d\cdot(x_{0}\oplus x_{1})+H(x_{0})+H(x_{1})}\left|0\right\rangle \left|d\right\rangle \left|y\right\rangle =\underbrace{\left|d\cdot(x_{0}\oplus x_{1})+H(x_{0})+H(x_{1})\right\rangle }_{k_{j}}\left|d\right\rangle \left|y\right\rangle .
\]
Return $(k_{j},s_{j})$ where $s_{j}:=(d,y)$.
\par\end{raggedright}
\caption{\protect\label{alg:enc_j_general}}
\end{algorithm}

\begin{algorithm}[H]
  $\sSamp(\pk)$
  \begin{itemize}
    \item Sample $x\leftarrow \calX$ (where $\calX$ is specified by the NTCF $\calF$ and the security parameter). 
    \item Evaluate $y = \calF.f_{\pk,0}(x)$.
    \item Sample $d\leftarrow \calX \setminus \{0\}$.
    \item Return $(d,y)$.
  \end{itemize}
  \caption{$\sSamp$ \label{alg:sSamp_}}
\end{algorithm}

In light of this, one can also consider the following primitive which
implies an oblivious ${\mathbf U}$-pad using the ideas above. 
\begin{defn}[PoQ with efficient range sampling]
 A Proof of Quantumness with efficient range sampling is given by
four algorithms $(\gen,\Qsample,\samp,\Dec)$ where $\Qsample$
is QPT while $\gen,\samp,\Dec$ are PPT. Let $(\pk,\sk)\leftarrow \gen(1^\lambda)$ and require that the following conditions hold.
\begin{itemize}
\item Correctness. $k=\Dec(\sk,s)$ for all $(k,s)\leftarrow\Enc(\pk)$.
\end{itemize}
\begin{center}
\begin{tabular}{|>{\centering}p{4cm}|>{\centering}p{3cm}|>{\raggedright}m{5cm}|}
\multicolumn{1}{>{\centering}p{4cm}}{Prover} & \multicolumn{1}{>{\centering}p{3cm}}{} & \multicolumn{1}{>{\raggedright}m{5cm}}{Challenger}\tabularnewline
\cline{1-1}\cline{3-3}
& & \tabularnewline
 &  & $(\pk,\sk)\leftarrow\gen(1^{\lambda})$\tabularnewline
 & $\overset{\pk}{\leftarrow}$ & \tabularnewline
$(k,s)\leftarrow\Qsample(\pk)$ &  & \tabularnewline
 & $\overset{s}{\rightarrow}$ & \tabularnewline
 &  & $b\leftarrow\{0,1\}$

$k_{0}\leftarrow K$

$k_{1}=\dec(\sk,s)$\tabularnewline
 & $\overset{k_{b}}{\leftarrow}$ & \tabularnewline
$k_{b}=k$, set $b'=1$,

$k_{b}\neq k$, set $b'=0$ &  & \tabularnewline
 & $\overset{b'}{\rightarrow}$ & \tabularnewline
 &  & accept if $b'=b$\tabularnewline
 & & \tabularnewline
\cline{1-1}\cline{3-3}
\end{tabular}
\par\end{center}
\begin{itemize}
\item Soundness. No PPT algorithm wins the security game above with probability
more than $1/2+\negl(1/2)$.
\item Efficient range sampling condition. The distributions of $s$ produced
by $(k,s)\leftarrow\Qsample(\pk)$ and $s\leftarrow\samp(\pk)$
are identical.
\end{itemize}
\end{defn}

\pagebreak{}

\section{Construction of the $(1,1)$ compiler}\label{sec:construction11}
{
  With the Oblivious $\mathbf{U}$-Pad in place, we are now ready to describe our compiler. We start with the compiler for contextuality games where each context has size exactly 2, i.e. $|C|=2$ for all $C\in \Call$. This subsumes $2$-player non-local games as a special case. We describe the general compilers in \Partref{GeneralCompilers}, but most of the main ideas already appear in the $|C|=2$ case. In this section, we formally describe our compiler, and state its guarantees. In Section~\ref{sec:soundness-analysis}, we provide proofs.
}

Our compiler takes as input the following:
\begin{itemize}
  \item A contextuality game $\G=:(Q,A,\Call,{\rm pred},{\cal D})$ with contexts
        of size $2$, i.e. satisfying $|C|=2$ for all $C\in\Call$.
  \item A strategy specified in terms of $\qstrat=:({\cal H},\left|\psi\right\rangle ,\mathbf{O})$ 
        for $\G$ (one that achieves the quantum value of $\G$; used to describe
        the honest prover)
  \item A $\QFHE$ scheme as in \Defref{QFHEscheme} and a security parameter
        $\lambda$.
        \begin{itemize}
          \item Let $\mathbf{U}=\{U_k\}_{k\in K}$ be the group (up to global phases) of %
          unitaries %
          acting on ${\cal H}$, as in \Eqref{MBform}.
          \item Recall we use $\hat{k}$ to denote a classical encryption of $k$
                under the secret key of $\QFHE$ as in \Subsecref{FHE-Scheme}.
        \end{itemize}
  \item An Oblivious $\mathbf{U}$-Pad scheme, $\opad$, as in \Defref{Oblivious-U-pad}.
\end{itemize}
The compiler produces the following compiled game $\G'$ between a verifier and prover (summarised
in \Algref{The-1-1-compiler}). We describe $\G'$, along with the actions of an honest prover that achieves the completeness guarantee.
\begin{enumerate}
  \item The verifier proceeds as follows:
        \begin{enumerate}
          \item Sample a secret key $\sk\leftarrow\QFHE.\gen(1^{\lambda}),$ and
                a context $C\leftarrow\calD$ according to the distribution specified
                by the game ${\cal D}$, picks a question $q'$ from this context
                at random $q'\leftarrow C$. Evaluate $c_{q'}\leftarrow\Enc_{\sk}(q')$.
          \item Sample a public key/secret key pair $(\opad.\pk,\opad.\sk)\leftarrow\opad(1^{\lambda})$.
        \end{enumerate}
        Send $(c_{q'},\opad.\pk)$ to the prover.
  \item The honest prover prepares, under the $\QFHE$ encryption,
        the state $\left|\psi\right\rangle $ and subsequently measure $O_{q'}$
        to obtain an answer $a'$. It then applies an oblivious $\mathbf{U}$-pad
        and returns the classical responses.
        \begin{enumerate}
          \item Since the operations happen under the $\QFHE$ encryption, the prover
                ends up with a $\QFHE$ encryption of $a'$ which we denote by
                \[
                  c_{a'}.
                \]
                Further, it holds a $\QFHE$ encryption of the post-measurement state
                which (by the assumption on the form of the $\QFHE$ encryption) looks
                like
                \begin{equation}(U_{k''}\left|\psi_{q',a'}\right\rangle ,\hat{k}'').\label{eq:stateAfterQprime}
                \end{equation}
                Here $\left|\psi_{q',a'}\right\rangle $ is the post-measurement state
                when $\left|\psi\right\rangle $ is measured using $O_{q'}$ and the
                outcome $a'$ is obtained.
          \item The honest prover applies an oblivious $\mathbf{U}$-pad
                (see \Algref{ObliviousPauliPad}) to obtain
                \[
                  (U_{k'}U_{k''}\left|\psi_{q',a'}\right\rangle ,s')\leftarrow\opad.\enc(\opad.\pk,U_{k''}\left|\psi_{q',a'}\right\rangle ).
                \]

          \item The prover returns $(c_{a'},\hat{k}'',s')$.
        \end{enumerate}
  \item The verifier proceeds as follows:
        \begin{enumerate}
          \item Compute $a':=\QFHE.\Dec_{\sk}(c_{a'})$, $k''$ from $\hat{k}''$
                (the latter is by the form of QFHE) and $k'=\opad.\dec(\opad.\sk,s')$.
          \item Find $k$ such that\footnote{up to a global phase} $U_{k}=U_{k''}U_{k'}$ (such a $k$ always exists
                because $\mathbf{U}$ form a group).
          \item Samples a new question $q\leftarrow C$.
        \end{enumerate}
        Send
        $(q,k)$ in the clear to the prover.
  \item The honest prover measures observable $O_{q}$ conjugated by $U_{k}$,
        i.e. $U_{k}O_{q}U_{k}^{\dagger}$ and returns the corresponding answer
        $a$.
  \item There are two cases, either the questions are the same or they are
        different (since $|C|=2$). The verifier proceeds as follows:
        \begin{enumerate}
          \item If $q=q'$, accept if $a=a'$;
          \item If $q\neq q'$, accept if
                ${\rm pred}((a,a'),C)=1$.
        \end{enumerate}
\end{enumerate}
We say that a prover wins $\G'$ if the verifier accepts.

\begin{algorithm}
  \centering{}%
  \begin{tabular}{|>{\raggedright}p{5cm}|>{\centering}p{3cm}|>{\raggedright}m{5.5cm}|}
    \multicolumn{1}{>{\raggedright}p{5cm}}{Honest Prover (${\cal A}$)} & \multicolumn{1}{>{\centering}p{3cm}}{}         & \multicolumn{1}{>{\raggedright}m{5.5cm}}{Challenger ($\calC$)}\tabularnewline
    \cline{1-1} \cline{3-3}
                                                                    &   &   \tabularnewline
                                                                       &                                                & $\sk\leftarrow\QFHE.\gen(1^{\lambda})$

    $C\leftarrow\calD$

    $q'\leftarrow C$

    $c_{q'}\leftarrow\QFHE.\enc_{\sk}(q')$

    \,

    $(\opad.\pk,\opad.\sk)\leftarrow\opad(1^{\lambda})$\tabularnewline
                                                                       & {\Large $\xleftarrow{(c_{q'},\opad.\pk)}$}     & \tabularnewline
    Under the QFHE encryption, measures $O_{q'}$ and obtains an encrypted
    answer $c_{a'}$ and post-measurement state $(U_{k''}\left|\psi_{q'a'}\right\rangle ,\hat{k}'')$.

    Applies an oblivious $\mathbf{U}$-pad to this state to obtain \,$\left(U_{k'}U_{k''}\left|\psi_{q'a'}\right\rangle ,s'\right)$\,$\leftarrow$\,

    $\opad.\Enc(\opad.\pk,U_{k''}\left|\psi_{q'a'}\right\rangle ).$    &                                                & \tabularnewline
                                                                       & {\Large $\xrightarrow{(c_{a'},\hat{k}'',s')}$} & \tabularnewline
                                                                       &                                                & Using the secret keys $\sk,\opad.\sk$, finds the $k$ such that $U_{k}=U_{k''}U_{k'}$,
    samples $q\leftarrow C$\tabularnewline
                                                                       & {\Large $\xleftarrow{(q,k)}$}                  & \tabularnewline
    Measures $U_{k}O_{q}U_{k}^{\dagger}$ and obtains $a$               &                                                & \tabularnewline
                                                                       & {\Large $\overset{a}{\longrightarrow}$}                     & \tabularnewline
                                                                       &                                                & Computes $a'=\Dec_{\sk}(c_{a'})$.

    If $q=q'$, accept if $a=a'$

    If $q\neq q'$, accept if ${\rm pred}((a,a'),C)=1$.\tabularnewline
                                                                    &   &   \tabularnewline
    \cline{1-1} \cline{3-3}
  \end{tabular}\caption{Game $\protect\G'$ produced by the $(1,1)$-compiler for any contextuality
    game $\protect\G$ with contexts of size two.\label{alg:The-1-1-compiler}}
\end{algorithm}

\subsection{Compiler Guarantees}\label{subsec:compilerguarantees}

The compiler satisfies the following.
\begin{thm}[Guarantees of the $(1,1)$ compiled contextuality game $\G'$]
  \label{thm:CompiledGameIsSecure} Suppose $\QFHE$ and $\opad$ are secure (as in \Defref{QFHEscheme,Oblivious-U-pad}), and compatible (as in \Defref{QFHEcomptableopad}). Let $\G$ be a contextuality game
  with $\valNC<1$ and $|C|=2$ for all contexts $C\in \Call$. Let $\G'_{\lambda}$ be the compiled game produced by \Algref{The-1-1-compiler} on input
  $\G$ and a security parameter $\lambda$. Then, the following holds. 
  \begin{itemize}
  \item (Completeness) There is a negligible function $\negl$, such that, for all $\lambda \in \mathbb{N}$, the honest QPT prover from \Algref{The-1-1-compiler} wins $\G'_{\lambda}$ with probability at least
  \[
    c(\lambda):=\frac{1}{2}\left(1+ {\valQu}\right)-\negl(\lambda).
  \]
  \item (Soundness) For every PPT adversary $\cal A$,
  there is a negligible function $\negl'$ such that, for all $\lambda \in \mathbb{N}$, the probability
  that $\cal A$ wins $\G'_{\lambda}$
  is at most
  \[
    s(\lambda):=\frac{1}{2}\left(1+ \valNC\right)+\negl'(\lambda).
  \]
\end{itemize}
  Furthermore, $G'_{\lambda}$ is faithful to $\G$ (as in \Defref{OperationalTestOfContextuality}) with parameters $s(\lambda)$ and $c(\lambda)$.%
\end{thm}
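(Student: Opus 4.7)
The plan is to track what the honest QPT prover of Algorithm~\ref{alg:The-1-1-compiler} actually measures. Up to negligible QFHE-correctness loss, just before Message~4 the prover holds $U_k|\psi_{q',a'}\rangle$ with $U_k = U_{k''}U_{k'}$, where $|\psi_{q',a'}\rangle$ is the post-measurement state obtained from measuring $O_{q'}$ on $|\psi\rangle$ with outcome $a'$; measuring $U_k O_q U_k^{\dagger}$ on this state is operationally equivalent to measuring $O_q$ on $|\psi_{q',a'}\rangle$. Because $|C|=2$ with $q$ and $q'$ both uniform in $C$, with probability $1/2$ we have $q=q'$ (an eigenmeasurement, so $a=a'$ surely), and with probability $1/2$ we have $q\neq q'$, in which case $(a',a)$ is the outcome of the sequential measurement of the compatible observables $O_{q'},O_q$ on $|\psi\rangle$ and therefore satisfies $\pred$ with probability $\valQu$ when averaged over $C\sim\calD$. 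Adding the two cases yields $\tfrac12(1+\valQu)$ up to negligible error.

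\textbf{Soundness.} Fix a PPT adversary $\mathcal{A}$ winning the compiled game with probability $\omega(\lambda)=\tfrac12(1+\valNC)+\epsilon$. I plan a two-step reduction. In \emph{Hybrid~1}, the verifier replaces $k=k''+k'$ in Message~3 by a uniformly random $k\in K$; indistinguishability follows from $\opad$-soundness via a reduction that generates $\sk$ internally, implants the $\opad$-challenger's $\opad.\pk$ in Message~1, forwards $s'$ to the challenger, and combines the returned value with $k''=\QFHE.\dec_{\sk}(\hat k'')$ to form $k$. Hence $\mathcal{A}$ still wins Hybrid~1 with probability $\omega(\lambda)-\negl(\lambda)$. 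Crucially, in Hybrid~1 the verifier no longer needs $\sk$ in order to produce $k$, so I can build a QFHE distinguisher $\mathcal{B}$ for two questions $q_0\neq q_1$ lying in a common context $C$: $\mathcal{B}$ generates $\opad.\pk$ itself, embeds its QFHE challenge $c^\ast\in\{\enc(q_0),\enc(q_1)\}$ as $c_{q'}$, samples a single uniform $k$, and \emph{rewinds} $\mathcal{A}$ over both choices $q\in C$ to extract the polynomial-size assignment $\tau\colon C\to A$ defined by $\tau(q)=a_q$.

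The heart of the proof is the assignment argument. By a contrapositive KLVY-style analysis, if the joint distribution of $(a',\tau)$ produced by $\mathcal{A}$ were identical under $q'=q_0$ and $q'=q_1$ for every context $C=\{q_0,q_1\}$, then averaging over $q'$ would realise a non-contextual strategy for $\G$ whose value cannot exceed $\valNC$, contradicting the fact that $\mathcal{A}$'s cross-case winning probability (when $q\neq q'$) exceeds $\valNC+2\epsilon$. Hence some context $C$ and some efficiently computable predicate on the transcript separates these distributions by a non-negligible margin, giving $\mathcal{B}$ a non-negligible QFHE distinguishing advantage and contradicting QFHE security. \emph{Faithfulness} follows by defining $\map_i$ to decrypt every QFHE-encrypted message to its plaintext (with $\sk$ hard-coded in the map specification) and to $\perp$-map every opad/key-related message, and defining $\qProver(\qstrat)$ to be the honest prover; the four conditions of Definition~\ref{def:OperationalTestOfContextuality} then reduce to routine checks: well-formedness because $q,q'\in C$; $\G$-soundness from the bound above; decision faithfulness because the verifier's acceptance rule is $a=a'$ when $q=q'$ (automatically satisfied by a non-contextual simulator) and $\pred((a,a'),C)$ when $q\neq q'$ (covering the entire context); and $\G$-completeness because an unbounded quantum simulator can invert the encryptions and execute $\qstrat$ directly, while in the fully commuting case the same outcomes are producible by a PPT procedure that jointly measures the commuting observables on a classical description of $|\psi\rangle$.

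\textbf{Main obstacle.} The main obstacle will be cleanly handling an \emph{inconsistent} adversary in the soundness argument: a general PPT $\mathcal{A}$ may return $a\neq a'$ on $q=q'$, so the extracted $\tau$ need not satisfy $\tau(q')=\QFHE.\dec_{\sk}(c_{a'})$, and the naive identification of $\tau$ with a deterministic assignment to $Q$ breaks down. I expect that either a preliminary ``consistency patch'' on $\mathcal{A}$ (which requires committing to $a'$ in the first round) or a direct bipartite analysis of the distribution of $(a',\tau)$ will be needed; either route requires careful bookkeeping to both preserve the tight $\tfrac12(1+\valNC)$ threshold and keep all reductions in PPT despite the rewinding.
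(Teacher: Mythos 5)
Your completeness argument and your first hybrid (replacing $k$ by a uniform key via $\opad$ security, exactly as in \Lemref{1guessK=00003DgoodK}) match the paper, but the core ``assignment argument'' in your soundness step has a genuine gap. You extract only the context-restricted assignment $\tau: C\to A$ and compare the adversary's behaviour under $q'=q_0$ versus $q'=q_1$ only for pairs lying in a \emph{common} context. Agreement of these context-restricted distributions within every context does \emph{not} force a non-contextual bound: since the second-round behaviour depends only on $q'$, each question $q'$ induces its own global distribution $D_{q'}$ over assignments to $Q$, and the per-context distributions $\mu_C=D_{q_0}|_C=D_{q_1}|_C$ may be glued from \emph{different} $D_{q'}$'s. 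Concretely, for KCBS let $D_q$ be the uniform mixture of the two alternating assignments on $\{q-1,q,q+1\}$ (all other coordinates fixed to $0$): every edge marginal is uniform on $\{(0,1),(1,0)\}$, so your hypothesis holds for every context, yet every context is won with probability $1>\valNC=4/5$. Hence the implication ``identical within each context $\Rightarrow$ value $\le\valNC$'' is false as stated, and your reduction does not reach the $\frac12(1+\valNC)$ threshold. The variant where you include $a'$ in the compared joint distribution does kill this example, but your distinguisher cannot use it: in the $2$-IND game it holds only the ciphertext $c_{a'}$ and no secret key, so $a'$ is not observable to it.

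The paper's proof avoids both issues: it rewinds over \emph{all} of $Q$ to extract a full truth table $\tau:Q\to A$, compares the distributions $p_{q'\tau}$ for \emph{arbitrary} pairs $q_{*0}\neq q_{*1}$ (not necessarily sharing a context), and, if all such pairs are negligibly close, substitutes a single distribution into the bound, which is then a convex combination over \emph{global} assignments and hence at most $\frac12(1+\valNC)$. Your flagged ``main obstacle'' (inconsistent adversaries) is resolved there not by committing to $a'$ but by passing to a modified adversary that recomputes the first-round answer homomorphically as $c_{\tau(q')}$ via $\QFHE.\Eval$, so consistency holds by construction and the winning probability does not decrease (\Lemref{2EstimateAssumingConsistent}); a separate lemma (\Lemref{3PrecisionNonIssue}) handles the inverse-polynomial precision of estimating $p_{q'\tau}$, which your sketch also leaves open. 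Two smaller points on faithfulness: the maps attached to prover-sent messages must be \emph{encoding} maps $A\to C_i$ (the paper uses $\QFHE.\Enc$), not decryptions; and classical completeness requires a PPT prover whose transcript, \emph{including} $\hat k''$ and $s'$, is \emph{identically} distributed to the honest one, which is exactly where the $\opad$'s classical range sampling and the $\cEval$ property are needed---``jointly measuring commuting observables'' does not account for those messages.
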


Completeness is straightforward to verify. We assume soundness for now and defer the proof to Section \ref{sec:soundness-analysis} (below). We outline how faithfulness in \Defref{OperationalTestOfContextuality} is satisfied. 

\begin{proof}[Proof sketch]

Parse the messages as follows,  %
\begin{align*}
  (t_1,t_2) &= (c_{q^\prime},\OPad.\pk), \\
  (t_3,t_4,t_5) &= (c_{a^\prime},\hat{k}'',s'), \\
  (t_6,t_7)&=(q,k), \text{ and} \\
  t_8 &= a.
\end{align*}
The four criteria for faithfulness in \Defref{OperationalTestOfContextuality} are satisfied as follows.
\begin{enumerate}
    \item \emph{Well-formedness.} To satisfy the well-formedness property, we define the following.
        \begin{itemize}
            \item $\{\map_i\}_i$ are naturally specified: $(\map_2,\map_4,\map_5,\map_7)$ output $\perp$ at all inputs, $(\map_6,\map_8)$ are identity maps (output whatever they are given as input) while $\map_1 = \QFHE.\Dec_{\sk}$ and $\map_3:= \QFHE.\Enc_{\pk}$. Under these maps, evidently, the verifier only asks questions in a single context, during any given execution.
            \item $\qProver$ on input a description of $\qstrat':=(\cal H,\ket{\psi'},\mathbf{O'})$, behaves as the honest prover in \Algref{The-1-1-compiler} except that it uses $\qstrat'$ instead of the optimal state and measurement in $\qstrat$. It is straightforward to note that $\qProver$ satisfies the ``marginal'' requirement.
        \end{itemize}
    \item \emph{$\G$-soundness.} This is straightforward to verify from the soundness value $s$ and the construction of the compiler.
    \item \emph{Decision Faithfulness.} This is also straightforward to verify from the construction of the compiler.
    \item \emph{$\G$-completeness.} 
        \begin{enumerate}
            \item \emph{Quantum completeness.} \\
            The first requirement (about existence of $\qstrat$ achieving $c$) follows directly from the fact that the completeness value is $c$ and it is achieved by the honest prover in \Algref{The-1-1-compiler}. \\
            The second requirement (about the distribution of transcripts) is straightforward to verify by constructing a simulator $S_{\qstrat,\aux}$ that behaves exactly like $\qProver(\qstrat)$: it answers the questions corresponding $\G$ using $\qstrat$ and ${\map_i}_i$ and $\aux$ specifies that the responses $(t_4,t_5)$ are distributed exactly as $(\hat k '', s')$ produced by $\qProver(\qstrat)$.
            \item \emph{Classical completeness.} \\
            Any $\qstrat'$ where all observables commute, may be seen as a distribution over truth tables. We show that $P$ is a PPT machine that produces an identical transcript as $\qProver(\qstrat)$. $P$ proceeds as follows: 
                \begin{itemize}
                    \item Sample a $\tau$ truth table according to the distribution above.
                    \item Receive $(c_{q'},\OPad.\pk)$.
                    \item Use $\QFHE.\cEval$ to obtain $c_{\tau(q')}$ (see algorithm in the proof of \Claimref{classicalEval}).\\ 
                    Sample a random $r_{\tilde k} \leftarrow \QFHE.\Enc(\tilde k)$ for a uniform $\tilde k\leftarrow K$ and $r_s \leftarrow \OPad.\samp(\pk)$ (see \Algref{sSamp}).\\ 
                    Send $(c_{\tau(q')},r_{\tilde k},r_s)$.
                    \item Receive $(q,k)$
                    \item Sends $\tau(q)$.
                \end{itemize} 
            It is straightforward to verify that the transcript $(t_1,\dots , t_8)$ produced by $P$ is indeed identically distributed to that produced by $\qProver(\qstrat)$ by using the classical range sampling property of $\OPad$ (see \Defref{Oblivious-U-pad}) and the $\cEval$ property of $\QFHE$ (see \Claimref{classicalEval}).
        \end{enumerate}
\end{enumerate}
\end{proof}

\section{Soundness Analysis}
\label{sec:soundness-analysis}
{
  The security notion we considered in the definition of our $\QFHE$ scheme (see \Defref{QFHEscheme}) says that encryptions of any two distinct messages is indistinguishable from one another. The corresponding security game is often referred to as $2$-IND. What if one considers $\ell$-IND, i.e. $\ell$-many different encryptions? 
  We prove that $2$-IND implies a version of $\ell$-IND which we call $\cal D$-IND'. $\cal D$-IND' is more general since it allows the $\ell$ messages to be sampled from an arbitrary distribution $\cal D$ (instead of being limited to uniform). $\cal D$-IND' is also more restricted in that (1) the $\ell$ messages are fixed and known to the challenger in advance and (2) we only consider classical algorithms.\footnote{It appears that proving the general version in the quantum case may not be completely straightforward because of the inability to apply rewinding directly.} 

  But why are we suddenly talking about $\ell$-IND and $\cal D$-IND'? It turns out that the proof technique used to show $2$-IND implies $\cal D$-IND' can be applied, albeit it takes some care, to prove the soundness of the compiler. 

  Thus, as a warmup, \Subsecref{warmup} shows that $2$-IND implies $\cal D$-IND'. %
  Then \Subsecref{The-Reduction} shows how to apply these ideas to show that the $2$-IND of $\QFHE$ encryptions implies soundness of the compiled game $\G'$ (using \Algref{The-1-1-compiler}) as stated in \Thmref{CompiledGameIsSecure} (assuming that the $\opad$ is secure). We emphasise that we do not use the result in \Subsecref{warmup} directly---we only use the proof technique.

}

\subsection{Warm up | $2$-IND implies ${\cal D}$-IND'}\label{subsec:warmup}

{
  This subsection first recalls the $2$-IND game (which is essentially a restatement of \Eqref{distinguisher-1}). Then it defines the ${\cal D}$-IND' game (the prime emphasises that the number of messages is constant and known apriori\footnote{Proving the implication for the general case may be non-trivial in the quantum setting.}). It ends by showing that $2$-IND implies ${\cal D}$-IND'.
}

\begin{claim}[$2$-IND Game for QFHE]
  \label{claim:2-IND_QFHE}Let $(\gen,\enc,\dec)$ correspond to a
  $\QFHE$ encryption scheme. The scheme satisfies \Eqref{distinguisher-1}
  if and only if the following holds: \\
  In the game below, for every PPT distinguisher, there is a negligible
  function ${\rm negl}$ such that the challenger accepts with probability
  at most $\frac{1}{2}+\ngl{\lambda}.$
  \begin{center}
    \begin{tabular}{|>{\centering}p{4cm}|>{\centering}p{3cm}|>{\raggedright}m{5cm}|}
      \multicolumn{1}{>{\centering}p{4cm}}{Distinguisher} & \multicolumn{1}{>{\centering}p{3cm}}{} & \multicolumn{1}{>{\raggedright}m{5cm}}{Challenger}\tabularnewline
      \cline{1-1} \cline{3-3}
                                                          & {\Large $\xrightarrow{(x_{0},x_{1})}$} & \tabularnewline
                                                          &                                        & $\sk\leftarrow\gen(1^{\lambda})$

      \textbf{$b\leftarrow\{0,1\}$}

      $c_{b}\leftarrow\enc_{\sk}(x_{b})$\tabularnewline
                                                          & {\Large $\overset{c_{b}}{\longleftarrow}$}          & \tabularnewline
      Objective: guess $b$                                &                                        & \tabularnewline
                                                          & {\Large $\overset{b'}{\longrightarrow}$}            & \tabularnewline
                                                          &                                        & Accept if $b'=b$.\tabularnewline
                                                          &     &   \tabularnewline
      \cline{1-1} \cline{3-3}
    \end{tabular}
    \par\end{center}

\end{claim}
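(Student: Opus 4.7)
The plan is to prove both directions of this standard equivalence between the distinguishing-advantage formulation of \Eqref{distinguisher-1} and the guess-the-bit game. Both reductions are tight and essentially syntactic; the only subtlety is handling the fact that in the game the messages $(x_0, x_1)$ are chosen by the distinguisher (and so can depend on its random coins), whereas in \Eqref{distinguisher-1} the two strings $x_0, x_1$ are quantified outside the probability. I would frame the PPT distinguishers on both sides as taking and producing classical strings, so that standard averaging arguments apply cleanly.

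For the ``game $\Rightarrow$ \Eqref{distinguisher-1}'' direction, I would start from a PPT $D$ and fixed strings $x_0, x_1$ witnessing a non-negligible gap in \Eqref{distinguisher-1}, and build a PPT game adversary $A$ that hard-codes $(x_0, x_1)$ (sends them as its first message), receives $c_b$, runs $D(c_b)$, and outputs the result directly as $b'$. A short computation gives
\[
\Pr[b'=b] \;=\; \tfrac{1}{2}\bigl(1 - \Pr[D(c_0)=1]\bigr) + \tfrac{1}{2}\Pr[D(c_1)=1]
\;=\; \tfrac{1}{2} + \tfrac{1}{2}\bigl(\Pr[D(c_1)=1] - \Pr[D(c_0)=1]\bigr).
\]
If the quantity inside the absolute value in \Eqref{distinguisher-1} is non-negligible with a fixed sign, then $A$ wins the game with probability $\tfrac{1}{2} + \mathrm{nonnegl}(\lambda)$; if the sign is unknown (since the absolute value hides it), then $A$ flips a fair coin ahead of time to decide whether to output $D$'s output or its complement, which costs only a factor of $2$ and preserves non-negligibility.

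For the ``\Eqref{distinguisher-1} $\Rightarrow$ game'' direction, I would start from a PPT game adversary $A$ winning with probability $\tfrac{1}{2} + \varepsilon(\lambda)$ for some non-negligible $\varepsilon$, and extract strings $x_0, x_1$ violating \Eqref{distinguisher-1}. The key averaging step is to view $A$'s first message as a (randomized) sampler of a pair $(x_0, x_1)$ from some distribution depending only on $A$'s coins, and then define $D(c) := A_2(c; r)$, where $A_2$ is the second stage of $A$ and $r$ is its random tape. By an averaging argument, there exist fixed coins $r^*$ (hence a fixed pair $(x_0^*, x_1^*)$) such that the resulting deterministic $D^* := A_2(\cdot; r^*)$ still distinguishes encryptions of $x_0^*$ and $x_1^*$ with non-negligible advantage, contradicting \Eqref{distinguisher-1}.

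The main obstacle, such as it is, is being careful about the quantifier structure: \Eqref{distinguisher-1} states ``for any two strings $x_0, x_1$ there exists a negligible function,'' while the game is ``for every PPT distinguisher there exists a negligible function.'' The averaging-over-coins step in the reverse direction is what bridges these quantifiers, and I would write it out explicitly to avoid the (easy to miss) pitfall of a non-uniform reduction producing different $x_0, x_1$ for different security parameters --- here this is harmless because $(x_0^*, x_1^*)$ can depend on $\lambda$, and \Eqref{distinguisher-1} is universally quantified over all polynomial-length strings including such families.
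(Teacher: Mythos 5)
The paper itself offers no explicit proof of this claim (it is presented as a restatement of \Eqref{distinguisher-1}), and your two-direction reduction is the standard way to make the equivalence precise; in particular the reverse direction, with the averaging over the adversary's coins to fix a pair $(x_0^*,x_1^*)$ and a (non-uniform) distinguisher $D^*$, is fine. There is, however, one step in your forward direction that would fail as written: handling the unknown sign of $\delta := \Pr[D(c_1)=1]-\Pr[D(c_0)=1]$ by having $A$ flip a fair coin to decide whether to output $D$'s bit or its complement does not cost ``only a factor of $2$'' --- it erases the advantage entirely. The coin-flipped adversary wins with probability exactly $\tfrac12\bigl(\tfrac12+\tfrac{\delta}{2}\bigr)+\tfrac12\bigl(\tfrac12-\tfrac{\delta}{2}\bigr)=\tfrac12$, independently of $|\delta|$, so it cannot be used to contradict the game condition.

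The repair is easy but has to be said: since you are already hard-coding a per-$\lambda$ pair $(x_0,x_1)$ as (non-uniform) advice, hard-code the sign as well. Concretely, consider the two adversaries $A^{+}$ (outputs $D$'s bit) and $A^{-}$ (outputs its complement); whenever $|\delta(\lambda)|$ exceeds an inverse polynomial, one of the two signs achieves winning probability at least $\tfrac12+\tfrac12|\delta(\lambda)|$, and since $\max_{x_0,x_1}|\delta(\lambda)|$ is non-negligible, at least one of $A^{+},A^{-}$ wins with probability $\ge \tfrac12+\nonnegl$ on an infinite set of security parameters, which contradicts the game condition for that adversary. (A coin flip cannot substitute for this; the sign must come from advice, or from an estimation procedure, which is not available here since the adversary cannot generate encryptions under the secret key on its own.) With that one correction, your argument matches the standard proof of the claim.
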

 
\begin{defn}[${\cal D}$-IND' Game (where ${\cal D}$ is a distribution over a
    fixed message set) for QFHE]
  \label{def:D-INDprime_QFHE}Let
  \begin{itemize}
    \item $\lambda$ be a security parameter, $\ell=O(1)$ be a fixed constant
          (relative to $\lambda$),
    \item $M=\{m_{1}\dots m_{\ell}\}$ be a set of distinct messages where $m_{1},\dots m_{\ell}\in\{0,1\}^{O(1)}$
          are of constant size,
    \item ${\cal D}$ be a probability distribution over $M$ and let $q_{{\rm guess}}=\max_{i}q_{i}$
          where $q_{i}$ is the probability assigned to $m_{i}$ by by ${\cal D}$.
  \end{itemize}
  The ${\cal D}$-IND' Security Game is a two-party game, for $(\gen,\enc,\dec)$
  as specified by $\QFHE$, is as follows:
\end{defn}
  \begin{center}
    \begin{tabular}{|>{\centering}p{4cm}|>{\centering}p{3cm}|>{\raggedright}m{5cm}|}
      \multicolumn{1}{>{\centering}p{4cm}}{Distinguisher} & \multicolumn{1}{>{\centering}p{3cm}}{} & \multicolumn{1}{>{\raggedright}m{5cm}}{Challenger}\tabularnewline
      \cline{1-1} \cline{3-3}
                                                            &       &       \tabularnewline
                                                          &                                        & $\sk\leftarrow\gen(1^{\lambda})$

      $m\leftarrow\calD$

      $c_{m}\leftarrow\enc_{\sk}(m)$\tabularnewline
                                                          & {\Large $\overset{c_{m}}{\longleftarrow}$}          & \tabularnewline
      Objective: guess $m$                                &                                        & \tabularnewline
                                                          & {\Large $\overset{m'}{\longrightarrow}$}            & \tabularnewline
                                                          &                                        & Accept if $m'=m$.\tabularnewline
                                                          &     &       \tabularnewline
      \cline{1-1} \cline{3-3}
    \end{tabular}
    \par\end{center}

\begin{prop}
  \label{prop:DINDprime}A $\QFHE$ scheme that satisfies \Eqref{distinguisher-1},
  implies that every PPT distinguisher for the ${\cal D}$-IND' game
  above, wins with probability at most $q_{{\rm guess}}+\ngl{\lambda}$.
\end{prop}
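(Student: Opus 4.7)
The plan is to prove this via a standard hybrid argument, using the fact that $\ell = O(1)$ is constant. The key insight is that if encryptions of any two messages from $M$ are indistinguishable (which is what $2$-IND gives us), then $\calA$'s output distribution must be essentially independent of which message was actually encrypted, and the winning probability collapses to the trivial bound $q_{\rm guess}$ coming from simply guessing the most likely message.

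More concretely, let $\calA$ be a PPT distinguisher for $\calD$-IND'. Define, for each pair $(i,j) \in [\ell]^2$,
\[
    \mu_i(m_j) := \Pr\left[\calA(c) = m_j : \sk \leftarrow \gen(1^\lambda),\ c \leftarrow \enc_{\sk}(m_i)\right] \,.
\]
The first step is to argue that for every fixed $j$, the value $\mu_i(m_j)$ is essentially independent of $i$, up to a negligible function. Suppose not; then there exist $i, i', j$ and a non-negligible function $\eta$ such that $|\mu_i(m_j) - \mu_{i'}(m_j)| \geq \eta(\lambda)$ for infinitely many $\lambda$. Build a $2$-IND distinguisher $\calD^\star$ as follows: $\calD^\star$ submits $(m_i, m_{i'})$ to the $2$-IND challenger, receives a ciphertext $c_b$, runs $\calA(c_b)$, and outputs $0$ if the output is $m_j$ and $1$ otherwise. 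Its distinguishing advantage is exactly $|\mu_i(m_j) - \mu_{i'}(m_j)|$, contradicting Claim~\ref{claim:2-IND_QFHE}. Taking a union bound over the $\ell^2 = O(1)$ many pairs $(i,j)$, we get a single negligible function $\negl$ such that $|\mu_i(m_j) - \mu_{i'}(m_j)| \leq \negl(\lambda)$ for all $i, i', j$ and all $\lambda$.

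The second step is to conclude. Fix any $i^\star \in [\ell]$ and set $\mu(m_j) := \mu_{i^\star}(m_j)$. Then for all $i$, $\mu_i(m_i) \leq \mu(m_i) + \negl(\lambda)$. The challenger accepts precisely when $\calA$ outputs the message that was encrypted, so
\[
    \Pr[\calA \text{ wins}] = \sum_{i=1}^{\ell} q_i \cdot \mu_i(m_i) \leq \sum_{i=1}^{\ell} q_i \cdot \mu(m_i) + \ell \cdot \negl(\lambda) \,.
\]
Since each $q_i \leq q_{\rm guess}$ and $\sum_i \mu(m_i) \leq 1$ (as $\mu$ arises from a probability distribution over $\calA$'s outputs), we get
\[
    \sum_{i=1}^{\ell} q_i \cdot \mu(m_i) \leq q_{\rm guess} \sum_{i=1}^{\ell} \mu(m_i) \leq q_{\rm guess} \,.
\]
Combining, $\Pr[\calA \text{ wins}] \leq q_{\rm guess} + \ell \cdot \negl(\lambda)$, and since $\ell = O(1)$ the right-hand side is $q_{\rm guess} + \negl'(\lambda)$ for some negligible $\negl'$, as desired.

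The only subtle point is that the proof crucially uses $\ell = O(1)$ in two places: (i) when taking the union bound over the $\ell^2$ pairs, and (ii) when absorbing the factor $\ell$ into a negligible function at the end. If $\ell$ were allowed to grow with $\lambda$, both of these steps would break, which aligns with the remark in the excerpt that the quantum/non-classical generalisation may require additional care. I do not anticipate serious obstacles; the argument is essentially textbook once one writes down $\mu_i(m_j)$ and applies $2$-IND to the natural distinguisher associated to each output $m_j$.
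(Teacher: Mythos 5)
Your proof is correct, but it takes a genuinely different route from the paper's. You apply the $2$-IND guarantee of \Eqref{distinguisher-1} \emph{per output symbol}: for each fixed triple (encrypted message $m_i$, alternative $m_{i'}$, candidate output $m_j$) the event ``$\calA$ outputs $m_j$'' is a PPT-computable predicate, so the output distribution of $\calA$ must be independent of the encrypted message up to a negligible error, and a two-line counting argument then collapses the winning probability to $q_{\rm guess}$. (Minor nitpick: the union bound is really over the $\ell^3$ triples $(i,i',j)$, or over $\ell^2$ pairs after fixing $i'=i^{\star}$; since everything is $O(1)$ this is cosmetic.) The paper instead proves the contrapositive by building one explicit uniform reduction $\calA_2$: it estimates the matrix $p_{ij}$ to precision $O(\epsilon^3)$, locates a pair $i^{*}\neq k^{*}$ with $\lVert p_{i^{*}}-p_{k^{*}}\rVert_1 \ge \Omega(\epsilon^2)$, partitions the outputs into sets $J_0,J_1$, and converts the $\ell_1$-gap into a $2$-IND advantage of $\frac{1}{2}+\frac{1}{4}\lVert p_{i^{*}}-p_{k^{*}}\rVert_1$, which forces it to handle the sampling-precision issue explicitly. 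Your argument is more elementary and sidesteps the estimation and precision analysis entirely, at the cost of being existential (you never exhibit a single reduction, only constantly many fixed distinguishers, one of which must succeed). What the paper's heavier approach buys is exactly the template that is reused later: in the soundness proofs of the compilers (\Thmref{MainSoundness} and its variants) the reduction must actually \emph{find} the distinguishing questions and the truth-table bipartition by rewinding and sampling the adversary, and that is the estimation machinery rehearsed in \Propref{DINDprime}; your per-output argument does not directly provide that constructive template, which is why the paper presents the proposition the way it does.
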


{\begin{proof}
    We can assume, without loss of generality, that the messages are given
    by $m_{i}=i$. This is because they can be uniquely indexed, and our
    arguments go through unchanged (as one can check).

    Goal: we want to show that if there is a PPT distinguisher\footnote{Note that ${\cal A}_{\ell}$ and $C_{\ell}$ can depend on ${\cal D}$;
      the use of the subscript $\ell$ is just notational convenience and
      not meant to convey all the dependencies. } ${\cal A}_{\ell}$ that wins against $C_{\ell}$ in the ${\cal D}$-IND'
    game above with probability $q_{{\rm guess}}+\eta'$ for some non-negligible
    function $\eta'$, then one can construct a PPT distinguisher ${\cal A}_{2}$
    that wins against $C_{2}$ in the $2$-IND game (see \Claimref{2-IND_QFHE})
    with probability at least $\frac{1}{2}+\eta''$ for some other non-negligible
    function $\eta''$.

    To this end, denote by $p_{ij}$ the probability that ${\cal A}_{\ell}$
    outputs $j$ when given the encryption of $i$ as input, i.e.
    \[
      p_{ij}:=\Pr\left[j\leftarrow{\cal A}_{\ell}(c_{i}):\begin{array}{c}
          \sk\leftarrow\gen(1^{\lambda}) \\
          c_{i}\leftarrow\enc_{\sk}(i)
        \end{array}\right].
    \]
    Since the message space is of constant size, ${\cal A}_{2}$ can compute
    $p_{ij}$ to inverse polynomial errors. For now, we assume ${\cal A}_{2}$
    knows $p_{ij}$ exactly and handle the precision issue at the end.
    To proceed, consider the following observations:
    \begin{enumerate}
      \item $\Pr[\text{accept}\leftarrow\left\langle {\cal A}_{\ell},C_{\ell}\right\rangle ]=\sum_{i\in\{1\dots\ell\}}q_{i}p_{ii}$
            where recall that $q_{i}$ is the probability assigned to $i$ by
            the distribution ${\cal D}$.
      \item There exist $k^{*}\neq i^{*}$ such that $\sum_{j}\left|p_{i^{*}j}-p_{k^{*}j}\right|\ge\eta$
            for some non-negligible function $\eta$.
    \end{enumerate}
    The first observation follows directly from the definition of ${\cal D}$-IND'
    and $p_{ij}$. The second follows from the assumption that ${\cal A}_{\ell}$
    wins with probability at least $q_{{\rm guess}}+\eta'$ for some non-negligible
    function $\eta'$. To see this, proceed by contradiction: Suppose for
    all $i^{*}\neq k^{*}$, it is the case that $\sum_{j}\left|p_{i^{*}j}-p_{k^{*}j}\right|\le{\rm negl}$
    for some negligible function, then one could write
    \begin{align*}
      \Pr[\text{accept}\leftarrow\left\langle {\cal A}_{\ell},C_{\ell}\right\rangle ] & =\sum_{i\in\{1\dots\ell\}}q_{i}p_{ii}                                                                                           \\
                                                                                      & =\sum_{i\in\{1\dots\ell\}}q_{i}(p_{i^{*}i}+{\rm negl})                          & \text{by the assumption above}      \\
                                                                                      & \le q_{{\rm guess}}\cancelto{1}{\sum_{i\in\{1\dots\ell\}}p_{i^{*}i}}+{\rm negl} & \text{for another }{\rm negl}\text{ function} \\
                                                                                      & \le q_{{\rm guess}}+{\rm negl}.
    \end{align*}
    But this cannot happen because we assumed that ${\cal A}_{\ell}$
    wins with probability non-negligibly greater than $q_{{\rm guess}}$.
    We therefore conclude that observation 2 must hold. Since the message
    space is constant, a PPT ${\cal A}_{2}$ can determine the following:
    \begin{itemize}
      \item The indices $i^{*}\neq k^{*}$
      \item The disjoint sets $J_{0},J_{1}\subseteq\{1\dots\ell\}$ such that
            \begin{itemize}
              \item for $j\in J_{0}\subseteq\{1\dots\ell\}$, $p_{i^{*}j}\ge p_{k^{*}j}$,
                    and
              \item for $j\in J_{1}\subseteq\{1\dots\ell\}$, $p_{i^{*}j}<p_{k^{*}j}$.
            \end{itemize}
    \end{itemize}
    Once these two indices, and the sets $J_{0},J_{1}$ are known, ${\cal A}_{2}$'s
    remaining actions are as follows:
    \begin{center}
      \begin{tabular}{|>{\centering}p{2cm}|>{\centering}p{1cm}|>{\centering}p{5cm}|>{\centering}p{2.5cm}|>{\raggedright}m{3cm}|}
        \multicolumn{1}{>{\centering}p{2cm}}{${\cal A}_{3}$ (adversary for ${\cal D}$-IND')

        \ } & \multicolumn{1}{>{\centering}p{1cm}}{}   & \multicolumn{1}{>{\centering}p{5cm}}{${\cal A}_{2}$ (the reduction)} & \multicolumn{1}{>{\centering}p{2.5cm}}{}              & \multicolumn{1}{>{\raggedright}m{3cm}}{${\cal C}_{2}$ (challenger for $2$-IND)}\tabularnewline
        \cline{1-1} \cline{3-3} \cline{5-5}
            &       &       &       &       \tabularnewline
            &                                          & Let $k^{*}\neq i^{*}\in\{1\dots\ell\}$ be as above.                  &                                                       & \tabularnewline
            &                                          &                                                                      & {\Large $\xrightarrow{(k^{*},i^{*})=:(x_{0},x_{1})}$} & \tabularnewline
            &                                          &                                                                      &                                                       & $\sk\leftarrow\gen(1^{\lambda})$

        \textbf{$b\leftarrow\{0,1\}$}

        $c_{b}\leftarrow\enc_{\sk}(x_{b})$\tabularnewline
            & {\Large $\overset{c_{b}}{\longleftarrow}$}            &                                                                      & {\Large $\overset{c_{b}}{\longleftarrow}$}                         & \tabularnewline
            & {\Large $\xrightarrow{j_{{\rm guess}}}$} &                                                                      &                                                       & \tabularnewline
            &                                          & $b'=\begin{cases}
                                                                 0 & \text{if }j_{{\rm guess}}=J_{0} \\
                                                                 1 & \text{if }j_{{\rm guess}}=J_{1}
                                                               \end{cases}$                               & {\Large $\overset{b'}{\longrightarrow}$}                           & \tabularnewline
            &                                          &                                                                      &                                                       & Accept if $b'=b$.\tabularnewline
            &       &       &       &       \tabularnewline
        \cline{1-1} \cline{3-3} \cline{5-5}
      \end{tabular}
      \par\end{center}

    Now,
    \begin{align}
      \Pr[\text{accept}\leftarrow\left\langle {\cal A}_{2},{\cal C}_{2}\right\rangle ]= & \frac{1}{2}\cdot\sum_{j\in J_{0}}\Pr[{\cal A}_{\ell}\text{ outputs }j|i^{*}\text{ was encrypted}]+\nonumber                                                                                                                   \\
                                                                                        & \frac{1}{2}\cdot\sum_{j\in J_{1}}\Pr[{\cal A}_{\ell}\text{ outputs }j|k^{*}\text{ was encrypted}]\nonumber                                                                                                                    \\
      =                                                                                 & \frac{1}{2}+\frac{1}{2}\text{\ensuremath{\sum_{j\in J_{0}}\left(p_{i^{*}j}-p_{k^{*}j}\right)}}              & \text{By def of }p_{ij}\text{ and normalisation}\nonumber                                                \\
      =                                                                                 & \frac{1}{2}+\frac{1}{4}\sum_{j}|p_{i^{*}j}-p_{k^{*}j}|                                                      & \text{\ensuremath{\left\Vert a-b\right\Vert _{1}=2\sum_{i:a_{i}>b_{i}}a_{i}-b_{i}}} \text{ for $a,b$ distrib.}\label{eq:p_istarj-p_kstarj} \\
      =                                                                                 & \frac{1}{2}+\frac{\eta}{4}                                                                                  & \text{from Observation 2}.\nonumber
    \end{align}
    Since the $\QFHE$ scheme satisfies \Eqref{distinguisher-1}, this
    is a contradiction (via \Claimref{2-IND_QFHE}) and thus the claim
    follows---up to the precision issue which we now address. %

    \paragraph{Handling the precision issue.}

    Denote by ${\cal A}_{2}$ the algorithm above where $p_{ij}$ are
    known exactly. Suppose that
    \begin{equation}
      \Pr[\text{accept}\leftarrow\langle{\cal A}_{\ell},C_{\ell}\rangle]\ge q_{{\rm guess}}+\epsilon\label{eq:prAccept}
    \end{equation}
    where $\epsilon$ is a non-negligible function. For concreteness,
    suppose\footnote{For every non-negligible function $\epsilon$, there is an infinite
      subset $\Lambda$ of its domain where $\epsilon(\lambda)\ge1/\lambda^{c}$.
      One can restrict the entire argument to this domain and still reach the same conclusion.} $\epsilon(\lambda)=1/\lambda^{c}$ for some constant $c>0$.
    \begin{prop}
      \label{prop:finitePrecisionD-IND_prime}Given that \Eqref{prAccept}
      holds, consider an algorithm $\hat{{\cal A}}_{2}$ that uses an estimate
      for $\hat{p}_{ij}$ up to precision $O(\epsilon^{3})$ (i.e. $|\hat{p}_{ij}-p_{ij}|\le O(\epsilon^{3})$).
      Then,
      \begin{align}
        \Pr[{\rm accept}\leftarrow\langle\hat{{\cal A}}_{2},C_{2}\rangle] & \ge\Pr[{\rm accept}\leftarrow\langle{\cal A}_{2},C_{2}\rangle]-O(\epsilon^{3}).\label{eq:epsiloncubeerrorinA_2hat}
      \end{align}
    \end{prop}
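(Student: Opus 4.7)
The strategy is a standard sensitivity analysis: both $\mathcal{A}_{2}$ and $\hat{\mathcal{A}}_{2}$ follow the same reduction template (choose a pair of messages $(x_{0},x_{1})=(k^{*},i^{*})$, receive $c_{b}$, run $\mathcal{A}_{\ell}$ on $c_{b}$ to obtain $j_{\mathrm{guess}}$, and output $b'$ according to a prefixed partition $(J_{0},J_{1})$ of the message space), the only difference being that $\hat{\mathcal{A}}_{2}$ makes its choice of pair and partition using the perturbed estimates $\hat{p}_{ij}$. The plan is therefore to express the reduction's accept probability as an explicit function of its choices and the true probabilities $\{p_{ij}\}$, and then to track how much each choice can degrade this quantity when based on $\hat{p}$ rather than $p$. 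Setting $\delta=O(\epsilon^{3})$ and recalling that $\ell=O(1)$, every source of degradation will contribute at most $O(\ell\delta)=O(\epsilon^{3})$, giving the claim.

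First I would fix a pair $(i,k)$ and a partition $(J_{0},J_{1})$ of $\{1,\dots,\ell\}$ and write
\[
\mathrm{Acc}(i,k,J_{0},J_{1};p)=\tfrac{1}{2}\sum_{j\in J_{0}}p_{kj}+\tfrac{1}{2}\sum_{j\in J_{1}}p_{ij},
\]
so that, as in \Eqref{p_istarj-p_kstarj}, the maximum over partitions is attained at the partition aligned with $\mathrm{sgn}(p_{ij}-p_{kj})$ and equals $\tfrac{1}{2}+\tfrac{1}{4}\sum_{j}|p_{ij}-p_{kj}|$. The key observation is that $\mathrm{Acc}(i,k,\cdot,\cdot;p)$ is $1$-Lipschitz in each $p_{ij}$ (with appropriate normalisation), so replacing $p$ by $\hat{p}$ changes its value by at most $\ell\delta$ for any fixed $(i,k,J_{0},J_{1})$.

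Next I would separately bound the two sources of discrepancy between $\hat{\mathcal{A}}_{2}$ and $\mathcal{A}_{2}$. For the choice of partition, if $\hat{J}_{0}$ is the partition chosen by $\hat{\mathcal{A}}_{2}$ (aligned with $\mathrm{sgn}(\hat{p}_{\hat{i}^{*}j}-\hat{p}_{\hat{k}^{*}j})$) and $J_{0}^{\mathrm{opt}}$ is the partition aligned with $\mathrm{sgn}(p_{\hat{i}^{*}j}-p_{\hat{k}^{*}j})$, then every $j$ in the symmetric difference satisfies $|p_{\hat{i}^{*}j}-p_{\hat{k}^{*}j}|\le 2\delta$ (otherwise the sign could not flip under a $\delta$-perturbation). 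Hence swapping partitions costs at most $\tfrac{1}{4}\cdot \ell\cdot 4\delta=O(\ell\delta)$. For the choice of pair, the fact that $\hat{\mathcal{A}}_{2}$ selects $(\hat{i}^{*},\hat{k}^{*})$ to maximise $\sum_{j}|\hat{p}_{ij}-\hat{p}_{kj}|$ (the message space being constant, this is efficient) together with $\|\hat{p}-p\|_{\infty}\le\delta$ gives
\[
\sum_{j}|p_{\hat{i}^{*}j}-p_{\hat{k}^{*}j}|\ \ge\ \sum_{j}|\hat{p}_{\hat{i}^{*}j}-\hat{p}_{\hat{k}^{*}j}|-2\ell\delta\ \ge\ \sum_{j}|\hat{p}_{i^{*}j}-\hat{p}_{k^{*}j}|-2\ell\delta\ \ge\ \sum_{j}|p_{i^{*}j}-p_{k^{*}j}|-4\ell\delta,
\]
where the middle inequality uses optimality of $(\hat{i}^{*},\hat{k}^{*})$ under $\hat{p}$.

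Combining the two steps yields
\[
\Pr[\mathrm{accept}\leftarrow\langle\hat{\mathcal{A}}_{2},C_{2}\rangle]\ \ge\ \tfrac{1}{2}+\tfrac{1}{4}\sum_{j}|p_{i^{*}j}-p_{k^{*}j}|-O(\ell\delta)\ =\ \Pr[\mathrm{accept}\leftarrow\langle\mathcal{A}_{2},C_{2}\rangle]-O(\epsilon^{3}),
\]
as required. The only potentially delicate point is the pair-choice step, where one must be careful that $\hat{\mathcal{A}}_{2}$ is selecting a pair that is \emph{optimal under $\hat{p}$} rather than merely one that exceeds a threshold; this is where the two uses of $\|\hat{p}-p\|_{\infty}\le\delta$ (on the chosen pair and on the true optimum) combine to give the clean transfer bound. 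Everything else is just bookkeeping with $\ell$ constant and $\delta=O(\epsilon^{3})$.
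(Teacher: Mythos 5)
Your proposal is correct, and its skeleton is the same sensitivity analysis the paper uses: write the reduction's acceptance probability in terms of the true $p_{ij}$ and the choices made from the estimates, and observe that a partition error can only occur at coordinates $j$ where $|p_{i^*j}-p_{k^*j}|\le O(\delta)$, so with $\ell=O(1)$ and $\delta=O(\epsilon^{3})$ the total loss is $O(\epsilon^{3})$. Where you genuinely diverge is the pair-selection step. The paper handles it by leaning on \Claimref{epsilonsquare}: since $\Vert p_{i^*}-p_{k^*}\Vert_1\ge\Omega(\epsilon^{2})$ dwarfs the $O(\epsilon^{3})$ estimation error, $\hat{\cal A}_{2}$ still identifies a pair (and at least one coordinate) with a large true gap, and only the near-tied coordinates can be misclassified. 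You instead stipulate that $\hat{\cal A}_{2}$ picks the pair maximising $\sum_j|\hat p_{ij}-\hat p_{kj}|$ and run a standard transfer inequality, $\sum_j|p_{\hat i^*j}-p_{\hat k^*j}|\ge\sum_j|p_{i^*j}-p_{k^*j}|-4\ell\delta$, which bounds the loss from a possibly different pair without any lower bound on the gap. This buys you two things: the proposition is established literally as an inequality between the two acceptance probabilities even when $\hat{\cal A}_{2}$'s pair differs from ${\cal A}_{2}$'s (a case the paper's wording glosses over), and the argument is independent of \Claimref{epsilonsquare}, which the paper only needs downstream to conclude that the resulting advantage is non-negligible. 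The cost is the (harmless, and arguably the intended reading) requirement that both reductions be specified as choosing an $\ell_1$-maximising pair rather than merely a pair exceeding a threshold; with that specification your bookkeeping is tight and the constants ($2\ell\delta$ per perturbation, $O(\ell\delta)$ per partition swap) are all accounted for.
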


    To prove \Propref{finitePrecisionD-IND_prime}, we first show (see
    \Claimref{epsilonsquare} below) that $\sum_{j}|p_{i^{*}j}-p_{k^{*}j}|$
    is at least $\Omega(\epsilon^{2})$ given that \Eqref{prAccept} holds.
    Using this and \Eqref{p_istarj-p_kstarj}, it follows that $\Pr[{\rm accept}\leftarrow\langle{\cal A}_{2},C_{2}\rangle]$
    is at least one half plus $\Omega(\epsilon^{2})$ which means $\hat{{\cal A}}_{2}$---the
    finite-precision variant of ${\cal A}_{2}$---succeeds with one half
    plus non-negligible probability in the $2$-IND game of the $\QFHE$
    scheme (see \Claimref{2-IND_QFHE}).

    \begin{claim*}
      \label{claim:epsilonsquare}If \Eqref{prAccept} holds then there
      exist $i^{*}\neq k^{*}$ such that $\left\Vert p_{i^{*}}-p_{j^{*}}\right\Vert _{1}:=\sum_{j}\left|p_{i^{*}j}-p_{k^{*}j}\right|\ge\Omega(\epsilon^{2})$.

      {\begin{proof}[Proof of \Claimref{epsilonsquare}]
            We start with an elementary fact:
              Let $f,g: \Lambda\to\mathbb{R}$, where $\Lambda \subset \mathbb{N}$ is an infinite subset of $\mathbb{N}$. Let $P$ be the proposition that $f\le O(g)$ on the set $\Lambda$.
              Then $\neg P$ implies that $f\ge\Omega(g)$ on an infinite set $\Lambda'\subseteq\Lambda$.
          Using this fact, deduce that the negation of
          $\sum_{j}|p_{i^{*}j}-p_{k^{*}j}|\ge\Omega(\epsilon^{2})$ implies
          there is some infinite set $\Lambda\subseteq\mathbb{N}$ over which
          $\sum_{j}|p_{i^{*}j}-p_{k^{*}j}|\le O(\epsilon^{2})$. We prove the
          claim by contradiction. Consider to the contrary that for all $i^{*}\neq k^{*}$,
          $\sum_{j}|p_{i^{*}j}-p_{k^{*}j}|<O(\epsilon^{2})$. %
          Then, it holds
          that
          \begin{align*}
            \Pr[\text{accept}\leftarrow\langle{\cal A}_{\ell},C_{\ell}\rangle] & =\sum_{i\in\{1\dots\ell\}}q_{i}p_{ii}                                                                    \\
                                                                               & \le\sum_{i\in\{1\dots\ell\}}q_{i}(p_{i^{*}i}+O(\epsilon^{2}))                                            \\
                                                                               & \le q_{{\rm guess}}\cancelto{1}{\sum_{i\in\{1\dots\ell\}}p_{i^{*}i}}+O(\epsilon^{2}) & \because\ell=O(1) \\
                                                                               & =q_{{\rm guess}}+O(\epsilon^{2})
          \end{align*}
          but this violates \Eqref{prAccept}.
        \end{proof}
      }
    \end{claim*}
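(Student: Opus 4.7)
The plan is to prove the contrapositive: if every pairwise $\ell_1$ distance $\sum_j |p_{i^* j} - p_{k^* j}|$ were asymptotically smaller than $\epsilon^2$, then the acceptance probability $\sum_i q_i p_{ii}$ could not exceed $q_{\mathrm{guess}} + \epsilon$, contradicting Equation~\eqref{eq:prAccept}. First I would observe the pointwise bound $|p_{ii} - p_{i^* i}| \le \sum_j |p_{ij} - p_{i^* j}|$, which says that the diagonal entry $p_{ii}$ can be replaced by the corresponding entry $p_{i^* i}$ of any ``reference'' row, up to an error controlled by the pairwise $\ell_1$ distance.

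Fixing an arbitrary reference index $i^*$ and applying this bound to each term of $\sum_i q_i p_{ii}$, I would then estimate
\[
\sum_{i} q_i p_{ii} \;\le\; \sum_i q_i \bigl( p_{i^* i} + \textstyle\sum_j |p_{ij}-p_{i^* j}| \bigr) \;\le\; q_{\mathrm{guess}} \sum_i p_{i^* i} \;+\; \ell \cdot \max_{i \ne i^*}\|p_i - p_{i^*}\|_1 \,.
\]
Since $p_{i^*}$ is a probability distribution the first sum is at most $1$, and since $\ell = O(1)$ is constant (independent of $\lambda$), the second term is $O(\epsilon^2)$ under the assumed upper bound. Hence $\sum_i q_i p_{ii} \le q_{\mathrm{guess}} + O(\epsilon^2)$. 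For all sufficiently small $\epsilon$, which certainly holds eventually as $\lambda \to \infty$ since $\epsilon(\lambda) \to 0$, one has $O(\epsilon^2) < \epsilon$, yielding the required contradiction.

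The main obstacle is a standard asymptotic subtlety in the Big-$O$/Big-$\Omega$ notation used by the paper. The negation of the statement ``$\|p_{i^*}-p_{k^*}\|_1 \ge \Omega(\epsilon^2)$'' does not directly assert ``$\|p_{i^*}-p_{k^*}\|_1 \le O(\epsilon^2)$ for all large $\lambda$'': it only guarantees the upper bound on some infinite subset $\Lambda \subseteq \mathbb{N}$ of security parameters (this is precisely the elementary fact about the interplay of $O$ and $\Omega$ that the authors invoke). The contradiction argument must then be carried out along $\Lambda$, which only strengthens the conclusion since Equation~\eqref{eq:prAccept} is assumed to hold for all $\lambda$. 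Once this asymptotic bookkeeping is handled, the core inequality above closes the proof; finiteness of $\ell$ is crucial so that the sum of pairwise errors remains $O(\epsilon^2)$ rather than blowing up.
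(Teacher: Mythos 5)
Your proposal is correct and follows essentially the same route as the paper's proof: you bound each diagonal entry $p_{ii}$ by a fixed reference row via $|p_{ii}-p_{i^*i}|\le\|p_i-p_{i^*}\|_1$, use $\sum_i p_{i^*i}\le 1$ and $\ell=O(1)$ to get $\sum_i q_i p_{ii}\le q_{\rm guess}+O(\epsilon^{2})$, and derive the contradiction with \Eqref{prAccept} along the infinite set $\Lambda$ where the negated $\Omega(\epsilon^{2})$ bound holds. Your explicit treatment of the Big-$O$/Big-$\Omega$ negation on an infinite subset matches the elementary fact the paper invokes, so there is no gap.
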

    We can now prove \Propref{finitePrecisionD-IND_prime}.
    \begin{proof}[Proof of \Propref{finitePrecisionD-IND_prime}]
      Since $\hat{{\cal A}}_{2}$ estimates $p_{ij}$ to precision $O(\epsilon^{3})$,
      it follows that it will find $i^{*}\neq k^{*}$ and a $j$ such that
      $|p_{i^{*}j}-p_{k^{*}j}|\ge\Omega(\epsilon^{2})$; recall that by
      the definition, $\ell=O(1)$ so $\left\Vert p_{i^{*}}-p_{j^{*}}\right\Vert _{1}\ge\Omega(\epsilon^{2})$
      implies there is some $j$ for which $|p_{i^{*}j}-p_{k^{*}j}|\ge\Omega(\epsilon^{2})$.

      Further, the only $j$s for which $\hat{{\cal A}}_{2}$ makes an error
      in deciding whether to have $j\in J_{0},$ or $j\in J_{1}$, are those
      for which $|p_{i^{*}j}-p_{k^{*}j}|\le O(\epsilon^{3})$. (Note that
      it cannot be that for all $j$s, $|p_{i^{*}j}-p_{k^{*}j}|\le O(\epsilon^{3})$
      because then $\left\Vert p_{i^{*}}-p_{k^{*}}\right\Vert _{1}$ cannot
      be $\ge\Omega(\epsilon^{2})$.) Thus, the error in computing $\Pr[{\rm accept}\leftarrow\langle\hat{{\cal A}}_{2},C_{\ell}\rangle]$
      using $\Pr[{\rm accept}\leftarrow\langle{\cal A}_{2},C_{\ell}\rangle]=\frac{1}{2}+\frac{1}{2}\sum_{j\in J_{0}}\left(p_{i^{*}j}-p_{k^{*}j}\right)$
      is at most $O(\epsilon^{3})$ (again, using $\ell=O(1)$). This yields
      \Eqref{epsiloncubeerrorinA_2hat} as asserted.
    \end{proof}
    This completes the proof.
  \end{proof}
}

\subsection{The Reduction\label{subsec:The-Reduction}}

We are going to essentially adapt the proof of \Propref{DINDprime}
to our setting. Start by considering the honest prover ${\cal A}$
in \Algref{The-1-1-compiler}. We reduce ${\cal A}$ to a distinguisher
${\cal A}_{2}$ for the $2$-IND security game of $\QFHE$.
\begin{itemize}
  \item Notation: Let $(Q,A)$ denote the set of questions
        and answers in the contextuality game that was compiled.
  \item Phase 1: Learning ``$p_{ij}$''.
        \begin{itemize}
          \item Consider the following procedure ${\cal A}_{2}.\truthtable$ that
                \\
                takes $q'\in Q$ as input and \\
                produces a truth table $\tau:Q\to A.$
                \begin{enumerate}
                  \item ${\cal A}_{2}$ simulates the challenger ${\cal C}$ in \Algref{The-1-1-compiler}
                        as needed (see \Algref{ATruthTable}), except that it takes $q'$
                        as input (instead of uniformly sampling it in the beginning) and uses
                        $k\leftarrow K$ (in the third step). It uses this simulation to generate
                        the first message $(c_{q'},\opad.\pk)$ and feeds it to ${\cal A}$.
                  \item ${\cal A}_{2}$ receives $(c_{a'},\hat{k}'',s')$ from ${\cal A}$.
                        For each $q\in Q$, ${\cal A}_{2}$ sends $(q,k)$ to ${\cal A}$
                        (where recall $k$ is sampled uniformly at random from $K$) and receives
                        an answer $a$. Denote by $\tau$ the truth table, i.e. the list of
                        answers indexed by the questions.
                \end{enumerate}
          \item $\calA_{2}$ repeats the procedure $\calA_{2}\cdot\truthtable(q')$
                above for each $q'\in Q$ to estimate the probability $p_{q'\tau}$
                of the procedure outputting $\tau$ on input $q'$ (the randomness
                is also over the encryption procedure etc). \\
                Here $p_{q'\tau}$ is analogous to $p_{ij}$.
          \item It finds questions $q_{*0}\neq q_{*1}$ such that
                \begin{equation}
                  \left\Vert p_{q_{*0}}-p_{q_{*1}}\right\Vert :=\sum_{\tau}|p_{q_{*0}\tau}-p_{q_{*1}\tau}|\ge\eta\label{eq:non-negl-ptau}
                \end{equation}
                for some non-negligible function $\eta$.\\
                (We defer the proof that questions (or indices) $q_{*0}\neq q_{*1}$
                exist if the $\OPad$ is secure and ${\cal A}$ wins the compiled
                contextuality game $\G'$, with probability non-negligibly greater
                than $\frac{1}{2}(1+\valNC)$. )
          \item It defines the disjoint sets $T_{0}$ and $T_{1}$ as follows: $\tau\in T_{0}$
                if $p_{q_{*0}\tau}\ge p_{q_{*1}\tau}$ and $\tau\in T_{1}$ if $p_{q_{*1}\tau}>p_{q_{*0}\tau}$.
        \end{itemize}
        \begin{algorithm}
          \begin{centering}
            \begin{tabular}{|>{\centering}p{2cm}|c|>{\centering}p{5cm}|}
              \multicolumn{1}{>{\centering}p{2cm}}{${\cal A}$} & \multicolumn{1}{c}{}                           & \multicolumn{1}{>{\centering}p{5cm}}{${\cal A}_{2}.\truthtable(q')$}\tabularnewline
              \cline{1-1} \cline{3-3}
                                                               &                                                & \tabularnewline
                                                               &                                                & $\sk\leftarrow\QFHE.\gen(1^{\lambda})$

              $c_{q'}\leftarrow\QFHE.\Enc(q')$

              $(\opad.\pk,\opad.\sk)\leftarrow\opad(1^{\lambda})$\tabularnewline
                                                               & {\Large $\xleftarrow{(c_{q'},\opad.\pk)}$}     & \tabularnewline
                                                               & {\Large $\xrightarrow{(c_{a'},\hat{k}'',s')}$} & \tabularnewline
                                                               &                                                & Samples $k\leftarrow K$\tabularnewline
                                                               & {\Large $\circlearrowleft$}                             & For each $q\in Q$, ask $(q,k)$, receive $a$ and rewind until $\tau$
              is fully specified.\tabularnewline
                                                               &                                                & \tabularnewline
                                                               &                                                & Outputs $\tau$\tabularnewline
                                                               &        &       \tabularnewline
              \cline{1-1} \cline{3-3}
            \end{tabular}
            \par\end{centering}
          \caption{\label{alg:ATruthTable}The procedure ${\cal A}_{2}.\protect\truthtable$
            takes as input a question $q'$ and produces a truth table $\tau$
            corresponding to it. Note that this is a randomised procedure (depends
            on the $\protect\QFHE$ encryption procedure) so for the same $q'$
            the procedure may output different $\tau$s. The goal is to learn
            the probabilities of different $\tau$s appearing for each question
            $q'$.}
        \end{algorithm}

  \item Phase 2: Interaction with $C_{2}$ of the $2$-IND game.
        \begin{itemize}
          \item ${\cal A}_{2}$ sends $q_{*0},q_{*1}$ to $C_{2}$ and $C_{2}$ returns
                $c_{b}$, the QFHE encryption of $q_{*b}$ (where $C_{2}$ picks $b\leftarrow\{0,1\}$).
          \item ${\cal A}_{2}$ simulates the $\opad$ itself and forwards $c_{b}$
                to ${\cal A}$, learns the truth table $\tau$ corresponding to $c_{b}$
                and outputs $b'\in\{0,1\}$ such that $\tau\in T_{b'}$.
        \end{itemize}
        \begin{algorithm}
          \begin{centering}
            \begin{tabular}{|>{\centering}p{1cm}|c|>{\centering}p{5cm}|c|>{\centering}p{4cm}|}
              \multicolumn{1}{>{\centering}p{1cm}}{}                         & \multicolumn{1}{c}{}                           & \multicolumn{1}{>{\centering}p{5cm}}{}                 & \multicolumn{1}{c}{}                     & \multicolumn{1}{>{\centering}p{4cm}}{}\tabularnewline
              \multicolumn{1}{>{\centering}p{1cm}}{${\cal A}$}               & \multicolumn{1}{c}{}                           & \multicolumn{1}{>{\centering}p{5cm}}{${\cal A}_{2}$}   & \multicolumn{1}{c}{}                     & \multicolumn{1}{>{\centering}p{4cm}}{}\tabularnewline
              \cline{1-1} \cline{3-3}
                                                                             &                                                &                                                        & \multicolumn{1}{c}{}                     & \multicolumn{1}{>{\centering}p{4cm}}{}\tabularnewline
                                                                             &                                                & \textbf{Phase 1}                                       & \multicolumn{1}{c}{}                     & \multicolumn{1}{>{\centering}p{4cm}}{}\tabularnewline
                                                                             & {\Large $\longleftrightarrow$}                              & Compute $p_{q'\tau}$ (as in the description)

              for each $q'\in Q$ by

              running ${\cal A}_{2}.\truthtable(q')$

              (see point 3, \Remref{RandomK_ConsistentA_PrecisionOfP_itau}). & \multicolumn{1}{c}{}                           & \multicolumn{1}{>{\centering}p{4cm}}{}\tabularnewline
              \cline{1-1}
              \multicolumn{1}{>{\centering}p{1cm}}{}                         &                                                &                                                        & \multicolumn{1}{c}{}                     & \multicolumn{1}{>{\centering}p{4cm}}{}\tabularnewline
              \multicolumn{1}{>{\centering}p{1cm}}{}                         &                                                & Use $p_{q'\tau}$ to learn

              the questions $q_{*0},q_{*1}$ and

              the disjoint sets $T_{0},T_{1}$ of truth tables.               & \multicolumn{1}{c}{}                           & \multicolumn{1}{>{\centering}p{4cm}}{}\tabularnewline
              \multicolumn{1}{>{\centering}p{1cm}}{}                         &                                                &                                                        & \multicolumn{1}{c}{}                     & \multicolumn{1}{>{\centering}p{4cm}}{}\tabularnewline
              \multicolumn{1}{>{\centering}p{1cm}}{${\cal A}$}               &                                                & \textbf{Phase 2}                                       & \multicolumn{1}{c}{}                     & \multicolumn{1}{>{\centering}p{4cm}}{$C_{2}$}\tabularnewline
              \cline{1-1} \cline{5-5}
                                                                             &                                                &                                                        & {\Large $\xrightarrow{(q_{*0},q_{*1})}$} & \tabularnewline
                                                                             &                                                & $(\opad.\pk,\opad.\sk)\leftarrow\opad(1^{\lambda})$    &                                          & $\sk\leftarrow\QFHE.\gen(1^{\lambda})$

              $b\leftarrow\{0,1\}$

              $c_{b}\leftarrow\QFHE.\Enc_{\sk}(q_{*b})$\tabularnewline
                                                                             & {\Large $\xleftarrow{(c_{b},\opad.\pk)}$}      &                                                        & {\Large $\overset{c_{b}}{\longleftarrow}$}            & \tabularnewline
                                                                             & {\Large $\xrightarrow{(c_{a'},\hat{k}'',s')}$} &                                                        &                                          & \tabularnewline
                                                                             &                                                & Samples $k\leftarrow K$                                &                                          & \tabularnewline
                                                                             & {\Large $\circlearrowleft$}                             & For each $q\in Q$, repeats $(q,k)$ to determine $\tau$ &                                          & \tabularnewline
                                                                             &                                                &                                                        &                                          & \tabularnewline
                                                                             &                                                & Set $b'=0$ if $\tau\in T_{0}$,

              and $b'=1$ if $\tau\in T_{1}$                                  & {\Large $\overset{b'}{\longrightarrow}$}                    & \tabularnewline
                                                                             &                                                &                                                        &                                          & Accept if $b'=b$\tabularnewline
                                                                             &      &       &       &       \tabularnewline
              \cline{1-1} \cline{3-3} \cline{5-5}
            \end{tabular}
            \par\end{centering}
          \caption{\label{alg:A_2_reduction}The algorithm ${\cal A}_{2}$ uses the adversary
            ${\cal A}$ for the compiled contextuality game $\protect\G'$ to
            break the $2$-IND security game for the $\protect\QFHE$ scheme.}

        \end{algorithm}

        The intuition is that given an encryption of $q_{*0}$, on an average,
        the above procedure would output $b'=0$ more often than $b'=1$,
        essentially by definition of $T_{0}$ and $T_{1}$, and \Eqref{non-negl-ptau}.

\end{itemize}
\begin{rem}
  \label{rem:RandomK_ConsistentA_PrecisionOfP_itau}There are three
  subtleties that are introduced, aside from the rewinding needed to
  learn $\tau$, in analysing $\G'$ as opposed to the ${\cal D}$-IND'
  game. We glossed over these above.
  \begin{enumerate}
    \item \emph{Random $k$}: When interacting with the prover ${\cal A}$,
          ${\cal A}_{2}$ above is feeding in a uniformly random $k$ instead of the correct
          $k$ which depends on $\hat{k}''$ and $s'$ (see \Algref{The-1-1-compiler}
          where ${\cal A}$ interacts with ${\cal C}$ and compare it to the
          interaction of ${\cal A}$ with ${\cal A}_{2}$).
          \begin{itemize}
            \item However, the guarantees about ${\cal A}$ are for the correct $k$.
            \item We need to formally show that the security of $\opad$ allows us to
                  work with random $k$s directly.
            \item This is straightforward enough but we need to use this a few times;
                  we'll see.
          \end{itemize}
    \item \emph{${\cal A}$ is consistent}: The other subtlety has to do with
          the proof that ${\cal A}$ winning with probability non-negligibly
          more than $\frac{1}{2}(1+\valNC)$ implies $\left\Vert p_{q_{*0}}-p_{q_{*1}}\right\Vert _{1}$
          is non-negligible. In the proof, we use the assumption that ${\cal A}$
          is \emph{consistent}, i.e. if it answers with an encryption of $a'$
          upon being asked an encryption of $q'$ in the first interaction,
          it will respond consistently if the same question is asked in the
          clear, i.e. it answers $a'$ upon being asked the same question $q'$
          in the next round. We will show that given an ${\cal A}$ that is
          non-consistent, one can still bound it's success probability by treating
          it as though it is consistent.
    \item \emph{Precision of $p_{q'\tau}$}: We also completely skipped the precision
          issue, i.e. we assumed ${\cal A}_{2}$ can learn $p_{q'\tau}$ exactly.
          In practice, this is not possible, of course. However, $p_{q'\tau}$
          can be learnt to enough precision to make the procedure work, just
          as we did in the proof of \Propref{DINDprime}.
  \end{enumerate}
\end{rem}

\subsection{Proof Strategy}

Let ${\cal A}$ be any PPT algorithm interacting with ${\cal C}$
in the compiled contextuality game $\G'$.

The following allows us to assume that we can give a uniformly random $k$ as input to ${\cal A}$ without changing the output distribution of the interaction in \Algref{random-k-or-not} (as alluded to in point 1 of \Remref{RandomK_ConsistentA_PrecisionOfP_itau}).
\begin{lem}[Uniformly random $k$ is equivalent to the correct $k$]
  \label{lem:1guessK=00003DgoodK}Let $B_{0}$ (resp. $B_{1}$) be
  a PPT algorithm that takes $q'\in Q$ as an input, interacts with
  ${\cal A}$ and outputs a bit, as described in \Algref{random-k-or-not}.
  Then there is a negligible function $\mathsf{negl}$ such that $\left|\Pr[0\leftarrow\left\langle B_{0},{\cal A}\right\rangle ]-\Pr[0\leftarrow\left\langle B_{1},{\cal A}\right\rangle ]\right|\le\mathsf{negl}$.
\end{lem}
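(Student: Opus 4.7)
The plan is a direct reduction to the security of the oblivious pad. Concretely, I will show that any PPT distinguisher between $\langle B_0,\mathcal{A}\rangle$ and $\langle B_1,\mathcal{A}\rangle$ can be turned into a PPT adversary winning the security game of \Defref{Oblivious-U-pad} with non-negligible advantage, contradicting the assumed security of the $\opad$.

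First I would spell out the reduction $\mathcal{R}$. The reduction receives $\opad.\pk$ from the $\opad$ challenger. On input $q'$, it samples $\sk \leftarrow \QFHE.\gen(1^{\lambda})$, computes $c_{q'} \leftarrow \QFHE.\Enc_{\sk}(q')$, and sends $(c_{q'},\opad.\pk)$ to $\mathcal{A}$, exactly as either $B_0$ or $B_1$ would. Upon receiving $(c_{a'},\hat{k}'',s')$ from $\mathcal{A}$, the reduction forwards $s'$ to the $\opad$ challenger. The challenger responds with $k^{*}_{b}$ where, by definition of the $\opad$ security game, $k^{*}_{1}=\opad.\Dec(\opad.\sk,s')$ is the ``correct'' pad key, and $k^{*}_{0}\leftarrow K$ is uniformly random. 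The reduction then recovers $k''$ from $\hat{k}''$ using $\sk$, computes $k$ with $U_{k}=U_{k^{*}_{b}}U_{k''}$ (which is well-defined since $\{U_k\}_{k\in K}$ is a group up to global phases), and continues the simulation with $\mathcal{A}$ using this $k$. Finally, $\mathcal{R}$ outputs whatever bit would be output by $B_b$ interacting with $\mathcal{A}$; since $B_0$ and $B_1$ are assumed to be PPT and to have identical structure except for the choice of $k$, this step is efficient.

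The key observation is that, because $\{U_k\}_{k\in K}$ forms a group, the map $k^{*} \mapsto k$ defined by $U_k = U_{k^{*}}U_{k''}$ is a bijection on $K$ (for fixed $k''$). Consequently:
\begin{itemize}
    \item When the challenger picks $b=1$, the value $k$ fed to $\mathcal{A}$ is exactly the ``correct'' decryption key, and the reduction's view of $\mathcal{A}$ is distributed identically to $\langle B_{1},\mathcal{A}\rangle$.
    \item When the challenger picks $b=0$, the value $k$ is the image under a bijection of a uniformly random element of $K$, hence itself uniformly random, and the reduction perfectly simulates $\langle B_{0},\mathcal{A}\rangle$.
\end{itemize}
Therefore, any non-negligible gap $|\Pr[0\leftarrow\langle B_{0},\mathcal{A}\rangle]-\Pr[0\leftarrow\langle B_{1},\mathcal{A}\rangle]|$ translates directly into a non-negligible distinguishing advantage for $\mathcal{R}$ between $k^{*}_{0}$ and $k^{*}_{1}$, violating \Defref{Oblivious-U-pad}.

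The main (and essentially only) subtlety is making sure that the reduction is truly PPT. The steps performed by the reduction that are not part of the $\opad$ game are: running $\QFHE.\gen$ and $\QFHE.\Enc$ (PPT by \Defref{QFHEscheme}, since the message $q'$ is classical), decrypting the classical ciphertext $\hat{k}''$ to obtain $k''$ (PPT by the ``form of encryption'' assumption on $\QFHE$), performing the group multiplication in $\{U_k\}_{k\in K}$ to find $k$ (which requires that $K$ admit an efficient group operation---this is the case for the Pauli group used in our instantiation), and then simulating the remaining interaction with $\mathcal{A}$, which is efficient by assumption on $B_{0}$, $B_{1}$ and $\mathcal{A}$. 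Since nothing in this chain is superpolynomial, the reduction is PPT, and the contradiction with $\opad$ security yields the claimed negligible bound.
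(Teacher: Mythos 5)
Your proposal is correct and follows essentially the same route as the paper's proof: a reduction that embeds the $\opad$ challenge key into the compiled game's key $k$ and uses the group (bijection) structure of $\{U_k\}_{k\in K}$ so that a uniformly random challenge key yields a uniformly random $k$, making the two cases simulate $B_0$ and $B_1$ exactly. Two cosmetic points: you swapped which of $B_0,B_1$ corresponds to the challenger's bit $b=1$ (in the paper $B_0$ is the one using the correct key, i.e.\ $k_1$), which is harmless since the statement bounds an absolute difference, and the paper additionally introduces the bit-flipped adversary $\mathcal{A}_{\opad}'$ to convert an advantage of either sign into a winning probability above $1/2$, a standard step you leave implicit.
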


\begin{algorithm}
  \begin{centering}
    \begin{tabular}{|>{\raggedright}p{5cm}|>{\centering}p{3cm}|>{\raggedright}m{5.5cm}|}
      \multicolumn{1}{>{\raggedright}p{5cm}}{${\cal A}$} & \multicolumn{1}{>{\centering}p{3cm}}{}         & \multicolumn{1}{>{\raggedright}m{5.5cm}}{$B_{0}(q')$ (resp. $B_{1}(q')$)}\tabularnewline
      \cline{1-1} \cline{3-3}
                                                        &           &       \tabularnewline
                                                         &                                                & $\sk\leftarrow\QFHE.\gen(1^{\lambda})$

      $c_{q'}\leftarrow\QFHE.\enc_{\sk}(q')$

      \,

      $(\opad.\pk,\opad.\sk)\leftarrow\opad(1^{\lambda})$\tabularnewline
                                                         & {\Large $\xleftarrow{(c_{q'},\opad.\pk)}$}     & \tabularnewline
                                                         &                                                & \tabularnewline
                                                         & {\Large $\xrightarrow{(c_{a'},\hat{k}'',s')}$} & \tabularnewline
                                                         &                                                & $B_{0}$ computes $k':=\opad.\Dec(\opad.\sk,s')$ and uses the secret
      key $\sk$ to compute $k''$ and then finds the $k$ satisfying $U_{k}=U_{k''}U_{k'}$.

      (resp. \textbf{$B_{1}$} samples a uniform $k\leftarrow K$).\tabularnewline
                                                         & {\Large $\overset{(q,k)}{\circlearrowleft}$}            & Potentially rewinds ${\cal A}$ to this step and queries with $(q,k)$
      for arbitrary $q\in Q$.\tabularnewline
                                                         &                                                & Runs an arbitrary procedure to compute a bit $b'$.\tabularnewline
                                                         &          &       \tabularnewline
      \cline{1-1} \cline{3-3}
    \end{tabular}
    \par\end{centering}
  \caption{\label{alg:random-k-or-not}Whether a PPT adversary ${\cal A}$ for
    the compiled contextuality game $\protect\G'$ is used with the correct
    $k$ or a uniformly random $k$ does not affect the outcome of this interaction more than negligibly.}

\end{algorithm}

The following allows us to treat ${\cal A}$ as though it is consistent
(as anticipated in point 2 of \Remref{RandomK_ConsistentA_PrecisionOfP_itau}).
\begin{lem}[Consistency only helps]
  \label{lem:2EstimateAssumingConsistent}Let
  \begin{itemize}
    \item ${\cal C}_{k\leftarrow K}$ be exactly the same as the challenger
          ${\cal C}$ for the compiled contextuality game $\G'$ except that
          it samples a uniformly random $k$ instead of computing it correctly,
          let
    \item ${\cal A}$ be any PPT algorithm that is designed to play the complied
          contextuality game $\G'$ and makes ${\cal C}_{k\leftarrow K}$ accept
          with probability $p$, i.e. $\Pr[\accept\leftarrow\left\langle {\cal A},{\cal C}_{k\leftarrow K}\right\rangle ]=p$,
          and denote by
    \item $p_{q'\tau}$ the probability that on being asked $q'$, the truth
          table ${\cal A}$ produces is $\tau$ (as defined in \Subsecref{The-Reduction}).
  \end{itemize}
  Then
  \[
    p\le\sum_{C}\Pr(C)\cdot\frac{1}{2}\left(1+\frac{1}{|C|}\sum_{q'\in C}\sum_{\tau}p_{q'\tau}\pred(\tau[C],C)\right)
  \]
  where $\Pr(C)$ denotes the probability with which ${\cal C}$ samples
  the context $C$.
\end{lem}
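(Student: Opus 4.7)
My plan is to decompose $p$ by conditioning on the verifier's random choices---the context $C$, the first-round question $q'$, the first-round answer $a'$, and the second-round question $q$---and to re-express the conditional winning probability as an expectation over the joint distribution of $(a',\tau)$ induced by $\mathcal{A}$, where $\tau$ is the truth table gathered by the rewinding procedure $\mathcal{A}_2.\truthtable$. Since $\mathcal{C}_{k\leftarrow K}$ samples a uniform $k$, exactly as the rewinding procedure does, the second-round answer of $\mathcal{A}$ to question $q$ is (marginally) distributed as $\tau[q]$. For the $(1,1)$ compiler with $|C|=2$, writing $C=\{q',\bar{q'}\}$, this yields
\[
\Pr[\text{win}\mid C, q'] \;=\; \mathbb{E}_{(a',\tau)\mid q'}\!\left[\frac{1}{2}\mathbb{1}[\tau[q']=a'] + \frac{1}{2}\pred(a',\tau[\bar{q'}], q',\bar{q'})\right],
\]
where the first term corresponds to the consistency check ($q=q'$, sampled with probability $\tfrac{1}{2}$) and the second to the predicate check ($q=\bar{q'}$, also probability $\tfrac{1}{2}$).

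The core step is a \emph{pointwise} inequality: for every realization of $(a',\tau)$ and every fixed $q'\in C$,
\[
\frac{1}{2}\mathbb{1}[\tau[q']=a'] + \frac{1}{2}\pred(a',\tau[\bar{q'}], q',\bar{q'}) \;\le\; \frac{1}{2}\bigl(1 + \pred(\tau[C], C)\bigr).
\]
I would prove this by case analysis on whether $\mathcal{A}$ ``happened to be consistent''. If $\tau[q']=a'$, both sides are identically $\tfrac12(1+\pred(\tau[C],C))$, since then $\pred(a',\tau[\bar{q'}],q',\bar{q'})=\pred(\tau[q'],\tau[\bar{q'}],q',\bar{q'})=\pred(\tau[C],C)$. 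If $\tau[q']\neq a'$, the left-hand side is at most $\tfrac{1}{2}$ (the indicator vanishes and $\pred$ is $\{0,1\}$-valued), while the right-hand side is at least $\tfrac{1}{2}$, so the inequality holds trivially. Taking expectations and averaging uniformly over $q'\in C$ and then over $C\sim\mathcal{D}$ yields the claimed bound, after noting that the marginal distribution of $\tau$ conditioned on $q'$ is exactly $p_{q'\tau}$.

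The main subtlety, which I expect to be the crux of the write-up, is justifying that $\Pr[\text{win}\mid C,q']$ may be written as an expectation against the \emph{joint} distribution of $(a',\tau)$. In a single execution of the compiled game the verifier samples only one $q$, so only a single entry of $\tau$ (and $a'$) is actually realized, whereas $\pred(\tau[C], C)$ on the right-hand side involves \emph{both} entries jointly. The resolution is that the integrand $\tfrac12\mathbb{1}[\tau[q']=a']+\tfrac12\pred(a',\tau[\bar{q'}],q',\bar{q'})$ is a sum of two terms, each involving at most one entry of $\tau$; by linearity its expectation depends only on the marginals of $(a',\tau[q'])$ and $(a',\tau[\bar{q'}])$. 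Hence we may integrate against any joint distribution matching these marginals---in particular, the joint defined by the rewinding procedure of Algorithm~\ref{alg:ATruthTable}---and the pointwise bound, which genuinely uses both entries jointly, applies. Finally, the preceding Lemma (on uniformly random $k$) guarantees that these marginals are the same whether $\mathcal{A}$ interacts with $\mathcal{C}$ or with $\mathcal{C}_{k\leftarrow K}$, so no distinction is lost by defining $p_{q'\tau}$ using random $k$.
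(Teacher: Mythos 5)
Your proposal is correct, and the heart of it—the two-case analysis (if the first-round answer agrees with the truth-table entry, the two sides coincide; if not, the left side is at most $\tfrac12$ while the right side is at least $\tfrac12$)—is exactly the comparison underlying the paper's proof. The packaging, however, differs. The paper proves the bound operationally: it constructs an explicit intermediate adversary $\mathcal{A}'$ (Algorithm~\ref{alg:red_for_lemma2}) that rewinds $\mathcal{A}$ to extract $\tau$, homomorphically re-encrypts $\tau(q')$ via $\QFHE.\Eval$ so that it is consistent by construction, and answers the second round with $\tau(q)$; it then argues $\Pr[\accept\leftarrow\langle\mathcal{A},\mathcal{C}'\rangle]\le\Pr[\accept\leftarrow\langle\mathcal{A}',\mathcal{C}'\rangle]$ by your same case analysis, and computes the latter exactly as the right-hand side. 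You instead argue analytically: you couple the real second-round answer with the truth table produced by $\mathcal{A}_2.\truthtable$, observe (correctly, and this is the one genuinely delicate point, which you handle) that the winning probability depends only on the marginals of $(a',\tau[q'])$ and $(a',\tau[\bar{q'}])$ so any coupling matching these marginals may be used, and then apply the pointwise inequality under that coupling. Your route buys a small simplification: you never build $\mathcal{A}'$, so you avoid the paper's explicit caveat that $\mathcal{A}'$ is possibly QPT (because it runs $\QFHE.\Eval$), which the paper must dismiss as harmless since $\mathcal{A}'$ is only used for an upper bound; the paper's route buys a reusable template, since the same "modified adversary" construction is adapted for the $(|C|,1)$ compiler's analogue (Lemma~\ref{lem:2_C1_feasibility_only_helps}, where the modification enforces feasibility rather than consistency). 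One cosmetic remark: your appeal to the preceding lemma on uniformly random $k$ is not needed inside this lemma—the statement is already about $\mathcal{C}_{k\leftarrow K}$, and $p_{q'\tau}$ is defined with a uniform $k$, so the marginals match by construction; that lemma is invoked only later, to pass from $\mathcal{C}$ to $\mathcal{C}_{k\leftarrow K}$ in the proof of Theorem~\ref{thm:MainSoundness}.
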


The two terms in the sum above, correspond to the consistency test
($q=q'$) and the predicate test ($q'\neq q$). Finally, the following
allows us to neglect the precision issue (as detailed in point 3 of
\Remref{RandomK_ConsistentA_PrecisionOfP_itau}).
\begin{lem}[Precision is not an issue]
  \label{lem:3PrecisionNonIssue}Suppose
  \begin{itemize}
    \item ${\cal A}$ wins with probability $\Pr[\accept\leftarrow\left\langle {\cal A},{\cal C}\right\rangle ]\ge\frac{1}{2}(1+\valNC)+\epsilon$
          for some non-negligible function $\epsilon$
    \item ${\cal A}_{2}$ is as in \Algref{A_2_reduction}, i.e. it is a PPT
          algorithm except for the time it spends in learning $p_{q'\tau}$
          exactly
    \item Denote by ${\cal A}_{2,\epsilon}$ a PPT algorithm that is the same
          as ${\cal A}_{2}$ except that it computes and uses an estimate $\hat{p}_{q'\tau}$
          satisfying $|\hat{p}_{q'\tau}-p_{q'\tau}|\le O(\epsilon^{3})$, in
          place of $p_{q'\tau}$.
  \end{itemize}
  Then, the PPT algorithm ${\cal A}_{2,\epsilon}$ wins with essentially
  the same probability as ${\cal A}_{2}$, i.e.
  \[
    \Pr[\accept\leftarrow\left\langle {\cal A}_{2,\epsilon},C_{2}\right\rangle ]\ge\Pr[\accept\leftarrow\left\langle {\cal A}_{2},C_{2}\right\rangle ]-O(\epsilon^{3}).
  \]

\end{lem}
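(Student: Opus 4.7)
The plan is to directly adapt the precision argument from Proposition \ref{prop:finitePrecisionD-IND_prime} and Claim \ref{claim:epsilonsquare} of the warm-up. The key conceptual step is a \emph{gap lemma} asserting that, under the hypothesis $p:=\Pr[\accept\leftarrow\langle\mathcal{A},\calC\rangle]\ge\tfrac{1}{2}(1+\valNC)+\epsilon$, there must exist questions $q_{*0}\neq q_{*1}\in Q$ with $\|p_{q_{*0}}-p_{q_{*1}}\|_1\geq\Omega(\epsilon^2)$. To prove this, I would suppose for contradiction that $\|p_{q_0}-p_{q_1}\|_1\leq O(\epsilon^2)$ for every pair. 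By Lemma \ref{lem:1guessK=00003DgoodK} we may assume the challenger uses a uniformly random $k$ at negligible loss, and then Lemma \ref{lem:2EstimateAssumingConsistent} yields
\[
p\leq\sum_{C}\Pr(C)\cdot\frac{1}{2}\left(1+\frac{1}{|C|}\sum_{q'\in C}\sum_{\tau}p_{q'\tau}\,\pred(\tau[C],C)\right).
\]
Fixing any reference $q_*\in Q$ and using $|\sum_{\tau}(p_{q'\tau}-p_{q_*\tau})\pred(\tau[C],C)|\leq\|p_{q'}-p_{q_*}\|_1\leq O(\epsilon^2)$, the bound simplifies to
\[
p\leq\frac{1}{2}+\frac{1}{2}\sum_{\tau}p_{q_*\tau}\sum_{C}\Pr(C)\,\pred(\tau[C],C)+O(\epsilon^2)\leq\frac{1}{2}(1+\valNC)+O(\epsilon^2),
\]
where the last inequality uses that $\{p_{q_*\tau}\}_\tau$, being a distribution over truth tables, defines a non-contextual strategy with value at most $\valNC$. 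For large enough $\lambda$ this contradicts the hypothesis.

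Given the gap lemma, the PPT algorithm $\mathcal{A}_{2,\epsilon}$ estimates each $p_{q'\tau}$ by running $\mathcal{A}_2.\truthtable(q')$ a $\mathsf{poly}(1/\epsilon)$ number of times; since $\epsilon$ is non-negligible and the contextuality game $\G$ has constant parameters $|Q|,|A|$ (independent of $\lambda$), this remains PPT, and a Chernoff bound ensures $|\hat p_{q'\tau}-p_{q'\tau}|\leq O(\epsilon^3)$ with overwhelming probability. With this precision and the $\Omega(\epsilon^2)$ gap from the gap lemma, $\mathcal{A}_{2,\epsilon}$ correctly identifies a pair $q_{*0}\neq q_{*1}$ with $\|\hat p_{q_{*0}}-\hat p_{q_{*1}}\|_1\geq\Omega(\epsilon^2)$. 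Let $T_0,T_1$ be the partition induced by exact $p$ and $\hat T_0,\hat T_1$ the one induced by $\hat p$. A truth table $\tau$ can lie in $T_b\triangle\hat T_b$ only if $|p_{q_{*0}\tau}-p_{q_{*1}\tau}|\leq O(\epsilon^3)$, since a larger gap is unambiguously resolved by the estimate. Writing both acceptance probabilities in the form $\tfrac{1}{2}+\tfrac{1}{2}\sum_{\tau\in T_0}(p_{q_{*0}\tau}-p_{q_{*1}\tau})$ as in the warm-up, the difference is at most $\tfrac{1}{2}\sum_{\tau\text{ misclassified}}|p_{q_{*0}\tau}-p_{q_{*1}\tau}|\leq \tfrac{|A|^{|Q|}}{2}\cdot O(\epsilon^3) = O(\epsilon^3)$, yielding the claim.

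The principal delicate point is the gap lemma. Converting the assumption ``all $p_{q'}$ are pairwise close'' into a genuine non-contextual strategy whose value is bounded by $\valNC$ hinges on Lemma \ref{lem:2EstimateAssumingConsistent}, which recasts $\mathcal{A}$'s success in a form where a single distribution $p_{q_*}$ over truth tables can simultaneously serve as a non-contextual strategy against every context. Without that preprocessing, one would be stuck with a per-question answer distribution $p_{q'}$ whose context-dependence is not obviously removable, and the collapse to $\valNC$ would not go through cleanly.
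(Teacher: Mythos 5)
Your proposal is correct and follows essentially the same route as the paper: the paper proves this lemma by adapting the warm-up argument (Proposition \ref{prop:finitePrecisionD-IND_prime} together with the $\Omega(\epsilon^2)$-gap claim), and your ``gap lemma'' plus misclassification bound is exactly that adaptation, using \Lemref{1guessK=00003DgoodK} and \Lemref{2EstimateAssumingConsistent} to collapse the bound to $\frac{1}{2}(1+\valNC)+O(\epsilon^2)$ just as the paper's soundness proof does. The added Chernoff-sampling detail for obtaining the $O(\epsilon^3)$-accurate estimates is consistent with, and slightly more explicit than, what the paper states.
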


We prove the following contrapositive version of the soundness guarantee
in \Thmref{CompiledGameIsSecure}.
\begin{thm}[Soundness condition restated from \Thmref{CompiledGameIsSecure}]
  \label{thm:MainSoundness} Suppose
  \begin{itemize}
    \item ${\cal A}$ is any PPT algorithm that wins with probability $\Pr[\accept\leftarrow\left\langle {\cal A},{\cal C}\right\rangle ]\ge\frac{1}{2}(1+\valNC)+\epsilon$
          for some non-negligible function $\epsilon$, and
    \item the $\opad$ used is secure, then
  \end{itemize}
  there is a PPT algorithm ${\cal A}_{2,\epsilon}$ that wins the $2$-IND
  security game of the $\QFHE$ scheme with probability
  \[
    \Pr[\accept\leftarrow\left\langle {\cal A}_{2,\epsilon},C_{2}\right\rangle ]\ge\frac{1}{2}+\nonnegl,
  \]
    where $\nonnegl$ is a non-negligible function that depends on $\epsilon$.
\end{thm}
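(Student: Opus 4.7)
The plan is to execute the reduction sketched in \Algref{A_2_reduction}, using \Lemref{1guessK=00003DgoodK}, \Lemref{2EstimateAssumingConsistent} and \Lemref{3PrecisionNonIssue} as the three black boxes that connect the compiled contextuality game to the $2$-IND security game of $\QFHE$. Fix a PPT adversary $\mathcal{A}$ with $\Pr[\accept\leftarrow\langle\mathcal{A},\mathcal{C}\rangle]\geq\tfrac{1}{2}(1+\valNC)+\epsilon$. The very first step, provided by \Lemref{1guessK=00003DgoodK}, is to replace the ``correct'' second-round key $k$ (produced from $s'$ and $\hat k''$ via the secret keys of $\opad$ and $\QFHE$) by a uniform $k\leftarrow K$: this changes $\mathcal{A}$'s winning probability by at most $\negl(\lambda)$, and is essential because the eventual reduction $\mathcal{A}_{2,\epsilon}$ will never hold the $\QFHE$ secret key.

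Next, for each possible first-round question $q'\in Q$, rewinding $\mathcal{A}$ across the different choices of second-round question defines a distribution $\{p_{q'\tau}\}_\tau$ on truth tables $\tau:Q\to A$. By \Lemref{2EstimateAssumingConsistent} applied against the random-$k$ challenger, $\mathcal{A}$'s acceptance probability is upper-bounded by $\tfrac{1}{2}\bigl(1+\mathbb{E}_{C\leftarrow\calD}\mathbb{E}_{q'\in C}\sum_\tau p_{q'\tau}\pred(\tau[C],C)\bigr)$, which, together with the hypothesis, yields $\mathbb{E}_{C}\mathbb{E}_{q'\in C}\sum_\tau p_{q'\tau}\pred(\tau[C],C)\geq\valNC+2\epsilon-\negl$. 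The key structural observation is that if all the $\{p_{q'}\}_{q'}$ were within $\delta$ in $L^1$ of each other, then fixing a reference $q_0\in Q$ would give
\begin{align*}
\mathbb{E}_C\mathbb{E}_{q'\in C}\sum_\tau p_{q'\tau}\pred(\tau[C],C) &\leq \mathbb{E}_C\sum_\tau p_{q_0\tau}\pred(\tau[C],C)+\delta \\
&\leq \valNC+\delta,
\end{align*}
where the last inequality uses that $\{p_{q_0\tau}\}_\tau$, being a distribution on deterministic assignments, is by definition a non-contextual strategy for $\G$. Hence there must exist distinct $q_{*0},q_{*1}\in Q$ with $\|p_{q_{*0}}-p_{q_{*1}}\|_1\geq\Omega(\epsilon^2)$ on an infinite subset of $\lambda$'s, following the same logic as \Claimref{epsilonsquare}.

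The reduction $\mathcal{A}_{2,\epsilon}$ is now the one described in \Algref{A_2_reduction}: Phase~1 estimates $\hat p_{q'\tau}$ to additive precision $O(\epsilon^3)$ using $\mathsf{poly}(1/\epsilon)$ samples and identifies $q_{*0},q_{*1}$ together with the sets $T_0=\{\tau:\hat p_{q_{*0}\tau}\geq\hat p_{q_{*1}\tau}\}$ and $T_1=T_0^c$; Phase~2 hands $(q_{*0},q_{*1})$ to the $2$-IND challenger, receives $c_b$, internally simulates the $\opad$ and a uniform $k$ (safe by \Lemref{1guessK=00003DgoodK}), invokes $\mathcal{A}$ to extract a truth table $\tau$ by rewinding, and outputs $b'=i$ iff $\tau\in T_i$. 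With the \emph{exact} $p_{q'\tau}$, the same one-line calculation as \Eqref{p_istarj-p_kstarj} in \Propref{DINDprime} shows that the $2$-IND advantage is $\tfrac{1}{4}\|p_{q_{*0}}-p_{q_{*1}}\|_1\geq\Omega(\epsilon^2)$. By \Lemref{3PrecisionNonIssue}, working with the $O(\epsilon^3)$-accurate estimates $\hat p$ costs only $O(\epsilon^3)$ in the advantage, so $\mathcal{A}_{2,\epsilon}$ is PPT and wins with probability at least $\tfrac{1}{2}+\Omega(\epsilon^2)-O(\epsilon^3)\geq\tfrac{1}{2}+\nonnegl(\lambda)$.

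The main obstacle is the structural step in the second paragraph: bridging ``$\mathcal{A}$ exceeds the non-contextual value on the compiled game'' and ``$\mathcal{A}$'s extracted truth-table distribution is sensitive to which question was encrypted''. This is precisely where contextuality enters the security argument, and it rests on the fact that a $q'$-independent distribution over truth tables is by \emph{definition} a non-contextual strategy for $\G$ and hence bounded by $\valNC$. Every other ingredient---the sorting of truth tables into $T_0,T_1$, the bookkeeping of the estimation error, and the bound on the $2$-IND advantage in terms of $\|p_{q_{*0}}-p_{q_{*1}}\|_1$---is a direct adaptation of the warm-up proof ($2$-IND $\Rightarrow\calD$-IND$'$ of \Propref{DINDprime}), with \Lemref{1guessK=00003DgoodK} and \Lemref{2EstimateAssumingConsistent} supplying exactly the two ingredients needed to upgrade that warm-up to the present setting in which $\mathcal{A}_{2,\epsilon}$ cannot see the $\QFHE$ secret key and $\mathcal{A}$ need not answer consistently across rounds.
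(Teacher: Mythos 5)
Your proposal is correct and follows essentially the same route as the paper: invoke \Lemref{1guessK=00003DgoodK} to switch to a uniformly random $k$, use \Lemref{2EstimateAssumingConsistent} to bound the acceptance probability in terms of the extracted truth-table distributions $p_{q'\tau}$, argue (as the paper does by contradiction, you do it contrapositively with the explicit $\valNC+\delta$ bound) that two questions must have non-negligibly different distributions, and then run the reduction of \Algref{A_2_reduction} with the precision loss absorbed by \Lemref{3PrecisionNonIssue}. The only differences are presentational, e.g.\ stating the $\Omega(\epsilon^2)$ separation up front rather than deferring it to the precision lemma, so nothing further is needed.
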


In \Subsecref{Step1}, we prove \Thmref{MainSoundness} assuming \Lemref{1guessK=00003DgoodK, 2EstimateAssumingConsistent, 3PrecisionNonIssue}.
In \Subsecref{Step2}, we prove the lemmas.

\subsection{Proof assuming the lemmas (Step 1 of 2)\label{subsec:Step1}}

{\begin{proof}
    This part is analogous to the proof of \Propref{DINDprime}. First,
    recall the challenger ${\cal C}$ of the complied game $\G'$ and
    let ${\cal C}_{k\leftarrow K}$ denote the the same challenger, except
    that it samples $k$ uniformly at random. From \Lemref{1guessK=00003DgoodK},
    one can conclude that $|\Pr[\accept\leftarrow\left\langle {\cal A},{\cal C}\right\rangle ]-\Pr[\accept\leftarrow\left\langle {\cal A},{\cal C}_{k\leftarrow K}\right\rangle ]|\le\negl$.

    Recall the definition of $p_{q'\tau}$ from the discussion in \Subsecref{The-Reduction}.
    Using \Lemref{2EstimateAssumingConsistent}, one can write

    \begin{align}
      \Pr[\accept\leftarrow\left\langle {\cal A},{\cal C}\right\rangle ]-\negl & \le\sum_{C}\Pr(C)\cdot\frac{1}{2}\left(1+\frac{1}{|C|}\sum_{q'\in C}\sum_{\tau}p_{q'\tau}\pred(\tau[C],C)\right)\nonumber      \\
                                                                               & =\frac{1}{2}\left(1+\sum_{C}\Pr(C)\frac{1}{|C|}\sum_{q'\in C}\sum_{\tau}p_{q'\tau}\pred(\tau[C],C)\right)\label{eq:pr_acc_A_C}
    \end{align}

    Observe also that there exist $q_{*0}\neq q_{*1}$ such that
    \begin{equation}
      \left\Vert p_{q_{*0}}-p_{q_{*1}}\right\Vert _{1}:=\sum_{\tau}|p_{q_{*0}\tau}-p_{q_{*1}\tau}|\ge\eta\label{eq:p_q_star0tau_minus_p_q_star1tau}
    \end{equation}
    for some non-negligible function $\eta$. This is a consequence of
    the assumption that
    \begin{equation}
      \Pr[\accept\leftarrow\left\langle {\cal A},{\cal C}\right\rangle ]\ge\frac{1}{2}(1+\valNC)+\epsilon\label{eq:A_wins_against_C_w_non-negl}
    \end{equation}
    for some non-negligible function $\epsilon$. To see this, proceed by contradiction: Suppose that for all $q_{*0}\neq q_{*1}$, it is the
    case that $\sum_{\tau}|p_{q_{*0}\tau}-p_{q_{*1}\tau}|\le\negl$ for
    some negligible function, then one could write, using \Eqref{pr_acc_A_C},
    \begin{align}
      \Pr[\accept\leftarrow\left\langle {\cal A},{\cal C}\right\rangle ] & \le\frac{1}{2}\left(1+\sum_{C}\Pr(C)\cancel{\frac{1}{|C|}\sum_{q'\in C}}\sum_{\tau}p_{q_{*0}\tau}\pred(\tau[C],C)\right)+\negl'\nonumber \\
                                                                         & =\frac{1}{2}\left(1+\sum_{\tau}p_{q_{*0}\tau}\sum_{C}\Pr(C)\pred(\tau[C],C)\right)+\negl'\nonumber                                       \\
                                                                         & \le\frac{1}{2}\left(1+\valNC\right)+\negl'\label{eq:valNC_at_most}
    \end{align}
    where in the first step, we used $p_{q_{*0}\tau}$ instead of $p_{q'\tau}$
    at the cost of a $\negl'$ additive error, in the second step, we
    rearranged the sum, and in the final step, we observe that the expression
    is just a convex combination of values achieved using non-contextual
    strategies, and this is at most $\valNC$. However, \Eqref{valNC_at_most}
    contradicts \Eqref{A_wins_against_C_w_non-negl} that says, by assumption,
    ${\cal A}$ wins with probability non-negligibly more than $\frac{1}{2}\left(1+\valNC\right)$.

    So far, we have established \Eqref{p_q_star0tau_minus_p_q_star1tau}
    holds for some distinct questions $q_{*0}\neq q_{*1}$. We assume
    that ${\cal A}_{2}$ can learn $p_{q'\tau}$ exactly and invoke \Lemref{3PrecisionNonIssue}
    to handle the fact that these can only be approximated to inverse
    polynomial errors but that does not change the conclusion. Therefore,
    ${\cal A}_{2}$ can learn $q_{*0}\neq q_{*1}$ and construct the sets
    $T_{0},T_{1}$ of truth tables (recall: $\tau$ is in $T_{0}$ if
    $p_{q_{*0}\tau}\ge p_{q_{*1}\tau}$ and in $T_{1}$ otherwise). Proceeding
    as in the proof of \Propref{DINDprime}, and focusing on Phase 2
    of the interaction, it holds that
    \begin{align*}
      \Pr[\accept\leftarrow\left\langle {\cal A}_{2},C_{2}\right\rangle ] & =\frac{1}{2}\cdot\sum_{\tau\in T_{0}}\Pr[{\cal A}\text{ outputs }\tau|q_{*0}\text{ was encrypted}]+   \\
                                                                          & \ \ \frac{1}{2}\cdot\sum_{\tau\in T_{1}}\Pr[{\cal A}\text{ outputs }\tau|q_{*1}\text{ was encrypted}] \\
                                                                          & =\frac{1}{2}+\frac{1}{2}\sum_{\tau\in T_{0}}\left(p_{q_{*0}\tau}-p_{q_{*1}\tau}\right)                \\
                                                                          & =\frac{1}{2}+\frac{1}{4}\left\Vert p_{q_{*0}}-p_{q_{*1}}\right\Vert _{1}                              \\
                                                                          & =\frac{1}{2}+\frac{\eta}{4}.
    \end{align*}
    Since the $\QFHE$ scheme satisfies \Eqref{distinguisher-1}, we have
    a contradiction (via \Claimref{2-IND_QFHE}) which means our assumption
    that ${\cal A}$ wins with probability non-negligibly more than $\frac{1}{2}\left(1+\valNC\right)$
    is false, completing the proof.
  \end{proof}
}

\subsection{Proof of the lemmas (Step 2 of 2) \label{subsec:Step2}}

\Lemref{1guessK=00003DgoodK}, about using a uniformly random $k$ instead of
the correct one, is almost immediate but we include a brief proof.

{\begin{proof}[Proof of \Lemref{1guessK=00003DgoodK}]
    Consider the adversary ${\cal A}_{\opad}$ for $\opad$ as defined
    in \Algref{red_for_lemma1}.
        \begin{algorithm}[h]
      \begin{centering}
       \begin{tabular}{|>{\raggedright}p{0.1cm}|>{\centering}c|>{\raggedright}m{5.5cm}|>{\raggedright}c|>{\raggedright}m{5.0cm}|}
          \multicolumn{1}{>{\raggedright}p{0.3cm}}{${\cal A}$}             & \multicolumn{1}{>{\centering}p{1cm}}{}            & \multicolumn{1}{>{\raggedright}m{5.5cm}}{$\calA_{\OPad}$}            & \multicolumn{1}{>{\centering}m{1cm}}{}      & \multicolumn{1}{>{\raggedright}m{5.0cm}}{$C_{\opad}$}\tabularnewline
          \cline{1-1} \cline{3-3} \cline{5-5}
                                                                        &       &       &       &       \tabularnewline
                                                                         &                                                   & $\sk\leftarrow\QFHE.\gen(1^{\lambda})$

          $c_{q'}\leftarrow\QFHE.\enc_{\sk}(q')$                         & \centering{}                                      & $(\opad.\pk,\opad.\sk)\leftarrow\opad(1^{\lambda})$\tabularnewline
                                                                         & {\Large $ \xleftarrow{(c_{q'},\opad.\pk)}$}        &                                                                      & \centering{}{\Large $\xleftarrow{\opad.\sk}$} & \tabularnewline
                                                                         &                                                   &                                                                      & \centering{}                                  & \tabularnewline
                                                                         & {\Large $\xrightarrow{(c_{a'},\hat{k}'',s')}$}    &                                                                      & \centering{}{\Large $\overset{s'}{\longrightarrow}$}       & \tabularnewline
                                                                        & & & & $b\leftarrow\{0,1\}$, $k_{0}\leftarrow K$, $k_{1}=\opad.\dec(\opad.\sk,s')$\tabularnewline
                                                                        & & &  \centering{}{\Large $\overset{k_{b}}{\longleftarrow}$}  & \tabularnewline
                                                                         &                                                   & $B$ defines $k':=k_{b}$ and uses the secret key $\sk$ to compute
          $k''$ and then finds the $k$ satisfying $U_{k}=U_{k''}U_{k'}$. &        & \tabularnewline
                                                                         & {\Large $\underset{a}{\overset{(q,k)}{\circlearrowleft}}$} & Potentially rewinds ${\cal A}$ to this step and queries with $(q,k)$
          for arbitrary $q\in Q$.                                        & \centering{}                                      & \tabularnewline
                                                                         &                                                   & Runs an arbitrary procedure to compute a bit $b'$.                   &        & \tabularnewline
                                                                        & & &  \centering{}{\Large $\overset{b'}{\longrightarrow}$} & \tabularnewline
                                                                         &                                                   &                                                                      & \centering{}                                  & Accept if $b'=b$\tabularnewline
                                                                         &      &       &       &       \tabularnewline
          \cline{1-1} \cline{3-3} \cline{5-5}
        \end{tabular}
        \par\end{centering}
      \caption{\label{alg:red_for_lemma1}Whether a PPT adversary ${\cal A}$ for
        the compiled contextuality game $\protect\G'$ is used with the correct
        $k$ or a uniformly random $k$, it makes no difference, if all algorithms involved
        are PPT.}
    \end{algorithm}
    Observe that
    \begin{align}
      \Pr\left[\accept\leftarrow\left\langle {\cal A}_{\opad},C_{\opad}\right\rangle \right]= & \frac{1}{2}\Pr[{\cal A}_{\opad}\text{ outputs }b'=0|b=0]+\nonumber                                                                                                   \\
                                                                                              & \frac{1}{2}\Pr[{\cal A}_{\opad}\text{ outputs }b'=1|b=1]\nonumber                                                                                                    \\
      =                                                                                       & \frac{1}{2}\Big(\Pr[{\cal A}_{\opad}\text{ outputs }b'=0|b=0]-\nonumber                                                                                              \\
                                                                                              & \ \ \Pr[{\cal A}_{\opad}\text{ outputs }b'=0|b=1]\Big) + \frac{1}{2} \nonumber                                                                                                      \\
      =                                                                                       & \frac{1}{2}\left(\Pr[0\leftarrow\left\langle B_{0},{\cal A}\right\rangle ]-\Pr[0\leftarrow\left\langle B_{1},{\cal A}\right\rangle ]\right) + \frac{1}{2} \label{eq:securityLemma1}
    \end{align}
    where $B_{0}$ and $B_{1}$ interact with ${\cal A}$ as in \Algref{random-k-or-not}. 
    The last equality holds because for a random $k'$, $k$ also becomes
    random.

    Let ${\cal A}_{\opad}'$ be ${\cal A}_{\opad}$ except that it outputs
    $b'\oplus1$ instead of $b'$. Using the analogue of \Eqref{securityLemma1}
    for ${\cal A}'_{\opad}$, \Eqref{securityLemma1} itself and the security
    of $\opad$, it follows that there is a negligible function $\negl$
    such that
    \[
      \left|\Pr[0\leftarrow\left\langle B_{0},{\cal A}\right\rangle ]-\Pr[0\leftarrow\left\langle B_{1},{\cal A}\right\rangle ]\right|\le\mathsf{negl}.
    \]
  \end{proof}
}

We now look at the proof of \Lemref{2EstimateAssumingConsistent}
which crucially relies on the fact that the consistency test and the
predicate test happen with equal probability.

{\begin{proof}[Proof of \Lemref{2EstimateAssumingConsistent}]
    Consider the adversary ${\cal A}'$ in \Algref{red_for_lemma2}
    that uses ${\cal A}$ to interact with ${\cal C}_{k\leftarrow K}=:{\cal C}'$.
    \begin{algorithm}[h]
      \begin{centering}
        \begin{tabular}{|>{\raggedright}p{0.5cm}|>{\centering}c|>{\raggedright}m{4cm}|>{\raggedright}c|>{\raggedright}m{5cm}|}
          \multicolumn{1}{>{\raggedright}p{0.5cm}}{${\cal A}$} & \multicolumn{1}{>{\centering}p{2cm}}{}              & \multicolumn{1}{>{\raggedright}m{4cm}}{${\cal A}'$}          & \multicolumn{1}{>{\raggedright}m{2cm}}{}                         & \multicolumn{1}{>{\raggedright}m{5cm}}{${\cal C}_{k\leftarrow K}=:\calC'$}\tabularnewline
          \cline{1-1} \cline{3-3} \cline{5-5}
                                                             &                                                     &                                                              & \centering{}                                                     & \tabularnewline
                                                             & {\Large $\xleftarrow{(c_{q'},\opad.\pk)}$}          &                                                              & \centering{}{\Large $\xleftarrow{(c_{q'},\opad.\pk)}$}           & \tabularnewline
                                                             &                                                     &                                                              & \centering{}                                                     & \tabularnewline
                                                             & {\Large $\xrightarrow{(c_{a'},\hat{k}'',s')}$}      &                                                              & \centering{}                                                     & \tabularnewline
                                                             &                                                     & $\tilde{k}\leftarrow K$                                      & \centering{}                                                     & \tabularnewline
                                                             & {\Large $\overset{(\tilde{q},\tilde{k})}{\circlearrowleft}$} & Rewinds ${\cal A}'$ to learn the truth table $\tau$.         & \centering{}                                                     & \tabularnewline
                                                             &                                                     & Evaluate $c_{\tau(q')}\leftarrow\QFHE.\Eval_{\tau}(c_{q'}).$ & \centering{}{\Large $\xrightarrow{(c_{\tau(q')},\hat{k}'',s')}$} & \tabularnewline
                                                             &                                                     &                                                              & \centering{}{\Large $\xleftarrow{(q,k)}$}                        & \tabularnewline
                                                             &                                                     &                                                              & \centering{}{\Large $\xrightarrow{a:=\tau(q)}$}                  & Recall:

          ($a'=\dec_{\sk}(c_{\tau(q')})$

          If $q=q'$, accept if $a=a'$.

          If $q\neq q'$, accept if ${\rm pred}((a,a'),C)=1$.\tabularnewline
          &     &       &       &       \tabularnewline
          \cline{1-1} \cline{3-3} \cline{5-5}
        \end{tabular}
        \par\end{centering}
      \caption{\label{alg:red_for_lemma2}${\cal A}'$, a potentially QPT algorithm,
        uses the PPT algorithm ${\cal A}$ to play the compiled contextuality
        game $\protect\G'$. Its winning probability upper bounds that of
        ${\cal A}$ and can be computed in terms of $p_{q'\tau}$ for ${\cal A}$.}
    \end{algorithm}

    We show that
    \begin{align}
      \Pr[\accept\leftarrow\left\langle {\cal A},{\cal C}'\right\rangle ] & \le\Pr[\accept\leftarrow\left\langle {\cal A}',\calC'\right\rangle ]\label{eq:Aprime_does_just_as_well_or_better}                                    \\
                                                                          & =\sum_{C}\Pr(C)\cdot\frac{1}{2}\left(1+\frac{1}{|C|}\sum_{q'\in C}\sum_{\tau}p_{q'\tau}\pred(\tau[C],C)\right).\label{eq:Aprime_wins_with_atmost}
    \end{align}
    We note that ${\cal A}'$ may be a QPT algorithm because it runs $\QFHE.\Eval$.
    Indeed, existing $\QFHE$ schemes don't produce a classical procedure
    for $\QFHE.\Eval$, even when the ciphertext and the logical circuit
    are classical. However, we only use $\calA'$ to compute an upper
    bound on the performance of ${\cal A}$ in terms of $p_{q'\tau}$
    and therefore ${\cal A}'$ being QPT (as opposed to PPT) is not a
    concern here.

    We first derive \Eqref{Aprime_does_just_as_well_or_better}. Consider
    the interactions $\left\langle {\cal A},{\cal C}'\right\rangle $
    and $\left\langle {\cal A}',{\cal C}'\right\rangle $. Note that for
    any given $c_{q'}$, the challenger ${\cal C}'$ either asks $q=q'$
    or $q\neq q'$ with equal probabilities. Conditioned on $c_{q'}$,
    there are two cases: (1) ${\cal A}$ is consistent, in which case
    the answers to $q,q'$ given by ${\cal A}'$ and ${\cal A}$ are identical.
    (2) ${\cal A}$ is inconsistent, in which case, ${\cal C}'$ accepts
    ${\cal A}$ with probability \emph{at most} $1/2$ while ${\cal C}'$
    accepts ${\cal A}'$ with probability \emph{at least $1/2$. }

    \Eqref{Aprime_wins_with_atmost} follows because ${\cal C}'$ selects
    a context $C$ with probability $\Pr(C)$, the $1/2$ denotes whether
    a consistency test ($q=q'$) is performed or a predicate test ($q\neq q'$)
    is performed. By construction of ${\cal A}'$, the consistency test
    passes with probability $1$. The $1/|C|$ factor indicates that either
    of the two questions in $C$ could have been asked as the first question
    $q'$ (under the $\QFHE$ encryption; and $|C|=2$ here). Again, by
    construction of ${\cal A}'$, the probability of clearing the predicate
    test is the weighted average of $\pred(\tau[C],C)$ where the weights
    are given by $p_{q'\tau}$. This completes the proof.
  \end{proof}
}

\Lemref{3PrecisionNonIssue} follows by proceeding as in the proof
of \Propref{finitePrecisionD-IND_prime}.

\clearpage{}

\part{General Computational Test of Contextuality---beyond size-$2$ contexts}\label{part:GeneralCompilers}

{
  So far, we looked at contextuality games with size $2$ contexts. This part %
  explains how to generalise our compiler to games with contexts of arbitrary size. In fact, we consider two generalisations of our compiler, both of which produce single prover, 4-message (2-round) compiled games, irrespective of the size of contexts in the original contextuality game. 
  \paragraph{The $(|C|,1)$ compiler.}
  The compiled game asks all $|C|$ questions from one context under the $\QFHE$ encryption in the first round, and asks one question in the clear in the second round. It guarantees that no PPT algorithm can succeed with probability more than \[
            1-\const_{1}+\negl
          \]
          where $\const_{1}=1/|Q|$ when all questions are asked uniformly
          at random.\footnote{$\const_{1}=\min_{C\in\Call}\Pr(C)/|C|$ in general.} The proof idea is similar to the $(1,1)$ compiler with one major difference---the analogue of the bound in \Lemref{2EstimateAssumingConsistent} changes. In essence, there one could obtain the bound by constructing a prover that is always consistent (i.e.\ the encrypted answer and the answer in the clear match when the corresponding questions match). Here, we have the ``dual'' property instead. The bound is obtained by constructing a prover that always satisfies the constraint but may not be consistent. \\
          As stated in the introduction, while this compiler works for games with perfect completeness, the bound on the classical value is sometimes more than the honest quantum value---so this compiler fails to give a separation between quantum and classical for some games (including the KCBS game of \Exaref{KCBSintro}). However, for other games (like the magic square game of \Exaref{MagicSquare}, it yields a larger completeness-soundness gap than the following universal compiler.
  \paragraph{The $(|C|-1,1)$ compiler.}
  This compiler addresses the limitation of the one above and is truly universal---in the sense that for any contextuality game $\G$ with $\valNC < \valQu$, the compiled game $\G'$ will also be such that every PPT algorithm wins with probability strictly smaller than a QPT algorithm. More precisely, PPT algorithms cannot succeed with probability more than
      \[
        \frac{1}{|C|}\left(|C|-1+\valNC\right)+\negl
      \]
      while there is a QPT algorithm that wins with probability at least
      \[
        \frac{1}{|C|}\left(|C|-1+\valQu\right)-\negl.
      \]
      Note that for $|C|=2$, we recover
      \[
        \frac{1}{2}\left(1+\valNC\right)+\negl,\quad\frac{1}{2}(1+\valQu)-\negl
      \]
      respectively, which are the bounds for our $(1,1)$ compiler.\\
  The idea behind the construction is the following:          
  \begin{itemize}
    \item Sample a context $C$ and pick a question $q_{\skp}\leftarrow C$
          uniformly at random.
    \item Ask all questions in $C$ except $q_{\skp}$, i.e. ask $C\backslash q_{\skp}$,
          under $\QFHE$ encryption
    \item Sample $q\leftarrow C$ and ask $q$ in the clear.
    \item Now,
          \begin{enumerate}
            \item if $q=q_{\skp}$, test the predicate using the decrypted answers to $C\backslash q_{\skp}$
                  and to $q$;
            \item if $q\neq q_{\skp}$, and so $q\in C\backslash q_{\skp}$; check that the corresponding answers are consistent.
          \end{enumerate}
  \end{itemize}
  The proof is based on the following idea. %
  \begin{itemize}
    \item The key observation is that, just as in the $(1,1)$ compiler, given
          an adversary ${\cal A}$, one can consider an adversary ${\cal A}'$
          such that it is ``consistent'' and wins with at least as much probability
          as ${\cal A}$ and using this, one can obtain a bound on the success
          probability of ${\cal A}$ in terms of $\valNC$.
    \item The compiled game was purposefully designed to ensure the following: being ``consistent''
          can only help. As we saw, this was not the case for the $(|C|,1)$
          game---so selecting fewer questions in the first round is what, perhaps surprisingly,
          allows us to construct a universal test (in the sense described above). %
  \end{itemize}
  Before moving to formal descriptions and proofs, we remark that it would be nice to have the success probability be independent of $|C|$, but this somehow seems hard to avoid. Why? Because the dependence on $|C|$ comes essentially from the fact the we are only asking one question in the clear in the second round. So, it seems that to avoid this dependence one
  would have to ask more than one question in the clear. However, this complicates the task of extracting a non-contextual assignment by rewinding the PPT adversary, because, for instance, the prover can now assign different values to $q_{1}$ depending on whether it was asked with $q_{2}$ or $q_{3}$. Thus, obtaining a better compiler seems to require new ideas.

}

\pagebreak{}

\section{Construction of the $(|C|,1)$ compiler}

{We define the compiler formally first. }

\begin{algorithm}[h]
  \begin{centering}
    \begin{tabular}{|>{\raggedright}p{5cm}|>{\centering}p{3cm}|>{\raggedright}m{5.5cm}|}
      \multicolumn{1}{>{\raggedright}p{5cm}}{Honest Prover (${\cal A}$)}     & \multicolumn{1}{>{\centering}p{3cm}}{}                 & \multicolumn{1}{>{\raggedright}m{5.5cm}}{Challenger ($\calC$)}\tabularnewline
      \cline{1-1} \cline{3-3}
                                                                            &       &       \tabularnewline
                                                                             &                                                        & $\sk\leftarrow\QFHE.\gen(1^{\lambda})$

      $C\leftarrow\calD$

      $c\leftarrow\QFHE.\enc_{\sk}(C)$

      \,

      $(\opad.\pk,\opad.\sk)\leftarrow\opad(1^{\lambda})$\tabularnewline
                                                                             & {\Large $\xleftarrow{(c,\opad.\pk)}$}                  & \tabularnewline
      Under the QFHE encryption, measures $\{O_{q}\}_{q\in C}$ and obtains
      an encrypted answers $\{c_{\mathbf{a}}\}$ (where $\mathbf{a}$ are
      the answers indexed by $C$)

      and post-measurement state $(U_{k''}\left|\psi_{\mathbf{a}C}\right\rangle ,\hat{k}'')$.

      Applies an oblivious $\mathbf{U}$-pad to this state to obtain \,$\left(U_{k'}U_{k''}\left|\psi_{\mathbf{a}C}\right\rangle ,s'\right)$\,$\leftarrow$\,

      $\opad.\Enc(\opad.\pk,U_{k''}\left|\psi_{\mathbf{a}C}\right\rangle ).$ &                                                        & \tabularnewline
                                                                             & {\Large $\xrightarrow{(c_{\mathbf{a}},\hat{k}'',s')}$} & \tabularnewline
                                                                             &                                                        & Using the secret keys $\sk,\text{ \& }\opad.\sk$, finds the $k$
      such that $U_{k}=U_{k''}U_{k'}$, samples $q\leftarrow C$\tabularnewline
                                                                             & {\Large $\xleftarrow{(q,k)}$}                          & \tabularnewline
      Measures $U_{k}O_{q}U_{k}^{\dagger}$ and obtains $a$                   &                                                        & \tabularnewline
                                                                             & {\Large $\overset{a}{\longrightarrow}$}                             & \tabularnewline
                                                                             &                                                        & Computes $\mathbf{a}=\Dec_{\sk}(c_{\mathbf{a}})$.

      Accepts if both (1) and (2) hold:

      (1) $\pred(\mathbf{a},C)=1$, and

      (2) $\mathbf{a}[q]=a$.\tabularnewline
                                                                            &       &       \tabularnewline
      \cline{1-1} \cline{3-3}
    \end{tabular}
    \par\end{centering}
  \centering{}\caption{Game $\protect\G'$ produced by the $(|C|,1)$-compiler on input a contextuality
    game $\protect\G$, and security parameter~$\lambda$.\label{alg:C-1-compiler}}
\end{algorithm}

\subsection{Compiler Guarantees}

The compiler satisfies the following.
\begin{thm}[Guarantees of the $(|C|,1)$ compiled contextuality game $\G'$]
  \label{thm:C_1_CompiledGameIsSecure}
  Suppose $\QFHE$ and $\opad$ are secure (as in \Defref{QFHEscheme,Oblivious-U-pad}), and compatible (as in \Defref{QFHEcomptableopad}).
  Let $\G$ be any contextuality
  game with $\valNC<1$. %
  Let $\G'_{\lambda}$ be the compiled game
  produced by \Algref{C-1-compiler} on input $\G$ and a security parameter
  $\lambda$. Then, the following holds.
\begin{itemize}
    \item (Completeness) There is a negligible function $\negl$, such that, for all $\lambda \in \mathbb{N}$, the honest QPT prover from \Algref{C-1-compiler} wins $\G'_{\lambda}$ with probability at least
  \[
    c(\lambda):=\valQu - \negl(\lambda). %
  \]
    \item (Soundness) For every PPT adversary $\cal A$, there
  is a negligible function $\negl'$ such that, for all $\lambda \in \mathbb{N}$, the probability
  that $\cal A$ wins $\G'_{\lambda}$ is at most
  \[
    s(\lambda):=1-\const_{1}+\negl'(\lambda) \,,
  \]
  where $\const_{1}=\min_{C\in\Call}\Pr(C)/|C|=O(1)$.
\end{itemize}

 Furthermore, $G'_{\lambda}$ is faithful to $\G$ (as in \Defref{OperationalTestOfContextuality}) with parameters $s(\lambda)$ and $c(\lambda)$.

\end{thm}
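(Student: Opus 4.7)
\emph{Completeness and faithfulness.} Completeness is immediate: the honest prover homomorphically measures the commuting observables $\{O_q\}_{q\in C}$ on $\ket\psi$, so the encrypted tuple $c_{\mathbf a}$ is distributed according to the optimal quantum strategy and the post-measurement state is the joint eigenstate $\ket{\psi_{\mathbf a,C}}$; after the $\OPad$ re-encryption and the verifier's reveal of $k$, measuring $U_kO_qU_k^\dagger$ on $U_k\ket{\psi_{\mathbf a,C}}$ returns $\mathbf a[q]$ deterministically, so the consistency check always passes and the predicate check passes with probability $\valQu-\negl$. Faithfulness is witnessed by $\map_1=\QFHE.\Dec_{\sk}$ (with the canonical projection onto the $|C|$ questions of the context), $\map_6$ and $\map_8$ the identity, the remaining $\map_i=\map^\perp$, and $\qProver$ mimicking the honest prover with an arbitrary $\qstrat'$; the marginal and decision-faithfulness conditions are immediate, and classical completeness for commuting observables is handled by sampling a truth table $\tau$ from the implied distribution and running $\QFHE.\cEval$ together with $\OPad.\samp$, exactly as in \Thmref{CompiledGameIsSecure}.

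\emph{Soundness: setup.} The plan is to mirror \Subsecref{The-Reduction} but replace the consistency-based \Lemref{2EstimateAssumingConsistent} with a dual combinatorial bound suited to the $(|C|,1)$ structure. Assume for contradiction that a PPT adversary ${\cal A}$ wins with probability $\geq 1-\const_1+\epsilon$ for some non-negligible $\epsilon$. By the direct analog of \Lemref{1guessK=00003DgoodK} applied to the current $\OPad$, we may replace the correct $k$ in the third message with a uniform $k\leftarrow K$ at only a negligible loss in ${\cal A}$'s acceptance probability. For each first-round transcript $(c,\OPad.\pk,c_{\mathbf a},\hat k'',s')$, with $c$ encrypting a context $C$, rewind ${\cal A}$ over all $q\in C$ (with uniform $k$) to extract a truth table $\tau:C\to A$, and let $p_{C,\mathbf a,\tau}$ denote the joint probability of $(\mathbf a=\QFHE.\Dec_{\sk}(c_{\mathbf a}),\tau)$ conditioned on a fresh encryption of $C$. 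Then
\begin{equation*}
\Pr[\mathrm{accept}\leftarrow\langle{\cal A},{\cal C}\rangle]=\sum_{C}\Pr(C)\sum_{\mathbf a,\tau}p_{C,\mathbf a,\tau}\,\pred(\mathbf a,C)\cdot\frac{|\{q\in C:\mathbf a[q]=\tau(q)\}|}{|C|}.
\end{equation*}

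\emph{Soundness: combinatorial bound and QFHE reduction.} For each fixed $\tau$, the $\mathbf a$-maximum of $\pred(\mathbf a,C)\cdot|\{q:\mathbf a[q]=\tau(q)\}|/|C|$ is $\pred(\tau|_C,C)$ when $\mathbf a=\tau|_C$ and at most $1-1/|C|$ otherwise, giving the upper bound $\max(\pred(\tau|_C,C),1-1/|C|)=1-(1/|C|)\mathbf 1[\pred(\tau|_C,C)=0]$. Consequently, if the marginal $P$ of $\tau$ under $p_{C,\cdot,\cdot}$ were independent of $C$,
\begin{equation*}
\Pr[\mathrm{accept}]\leq 1-\mathbb E_{\tau\sim P}\!\left[\sum_{C}\tfrac{\Pr(C)}{|C|}\mathbf 1[\pred(\tau|_C,C)=0]\right]\leq 1-\const_1,
\end{equation*}
where the last inequality uses $\valNC<1$: every $\tau\in\supp(P)$ has at least one failing context $C^*$, contributing $\Pr(C^*)/|C^*|\geq\const_1$. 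Since this contradicts the assumed gap $\epsilon$, the marginal of $\tau$ under $p_{C,\cdot,\cdot}$ must differ non-negligibly between some pair $C_{*0}\neq C_{*1}$, and the $\QFHE$ reduction of \Subsecref{The-Reduction} applies verbatim: estimate the $p_{C,\mathbf a,\tau}$ to inverse-polynomial precision via the analog of \Lemref{3PrecisionNonIssue}, identify $C_{*0},C_{*1}$ and disjoint outcome sets $T_0,T_1$ on which the two distributions disagree, and use ${\cal A}$ as a distinguisher in the $\QFHE$ 2-IND game. The main obstacle is the dual of \Lemref{2EstimateAssumingConsistent}: rather than turning ${\cal A}$ into a \emph{consistent} ${\cal A}'$, we must turn it into an ${\cal A}'$ that is always \emph{predicate-satisfying}, by homomorphically evaluating any fixed $F:\Call\to A^{|C|}$ with $\pred(F(C),C)=1$, so that the $\epsilon$ excess is cleanly attributable to the $C$-dependence of $\tau_c$ and furnishes a genuine $\QFHE$ distinguisher.
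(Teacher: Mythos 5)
Your proposal is correct and, in substance, follows the same route as the paper: swap the correct $k$ for a uniform one via $\OPad$ security (the analogue of \Lemref{1guessK=00003DgoodK}), rewind the PPT prover to extract the joint statistics $p_{C,\mathbf a,\tau}$, argue that near-$C$-independence of the $\tau$-marginals forces the winning probability below $1-\const_1$ because $\valNC<1$ guarantees every $\tau$ a failing context of weight at least $\const_1$, and otherwise obtain a $2$-IND distinguisher for the $\QFHE$ scheme (with the precision of the estimates handled as in \Lemref{3PrecisionNonIssue}). The one genuine difference is how you dispose of the encrypted answers $\mathbf a$: the paper proves a ``feasibility only helps'' lemma (\Lemref{2_C1_feasibility_only_helps}) by constructing an auxiliary prover that substitutes a feasible answer only when $\mathbf a$ violates the predicate, whereas you keep the exact expression $\pred(\mathbf a,C)\cdot|\{q\in C:\mathbf a[q]=\tau(q)\}|/|C|$ and bound it pointwise over $\mathbf a$ by $1-\tfrac{1}{|C|}\mathbf 1[\pred(\tau|_C,C)=0]$; this is a slightly more direct derivation of the same intermediate bound and removes the need for the auxiliary adversary altogether. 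Given that, your closing sentence is superfluous, and as phrased it is also slightly off: an $\mathcal A'$ that \emph{always} homomorphically outputs a fixed feasible $F(C)$, ignoring $\mathcal A$'s encrypted answers, need not win at least as often as $\mathcal A$ (it can lose consistency checks that $\mathcal A$ would pass), so if you did want the paper-style lemma you must, as the paper does, substitute a feasible answer only when the decrypted $\mathbf a$ fails the predicate.
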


Completeness is straightforward to verify. The proof of faithfulness is analogous to that of \Thmref{CompiledGameIsSecure}. We prove soundness in Section \ref{sec:11}.

Note that in games where all questions are asked uniformly at random,
$\const_{1}=1/(|\Call|\cdot |C|)$, yielding the simple bound $1-1/(|\Call|\cdot|Q|)+\negl$
for PPT provers (that we quoted earlier).

{
Two brief remarks about the theorem are in order before we give the proof. First, we note that unlike the completeness value, which is $\valQu-\negl$, the soundness does not correspondingly depend on $\valNC$. %
Second, as mentioned earlier, this compiler is not universal: for KCBS the classical value in $\G'$ is $0.9$ which is greater than the honest quantum value which is approximately $0.8944$. However, the compiler is still outputs a non-trivial game, for instance, when given as input the GHZ game, because the latter has perfect completeness.}

\section{Soundness Analysis of the $(|C|,1)$ compiler}
\label{sec:11}

{The proof is analogous to that of the $(1,1)$ compiler with some
  crucial differences and, thus, we skip the intuition and details for
  parts that are essentially unchanged.}

\subsection{The Reduction\label{subsec:C_1_The-Reduction}}

{
   The procedure for TruthTable generation is the same as in \Secref{soundness-analysis}---except that one samples the context, and more importantly we consider the encrypted answer as well when conditioning, i.e. instead
  of writing $p_{q'\tau}$, we now use $p_{C,\mathbf{a},\tau}$;
   we have to add the $\mathbf{a}$ dependence more explicitly because
  of the following.

   The analogue of \Lemref{2EstimateAssumingConsistent} is slightly different. Instead of having the adversary be consistent (in the sense that the
        encrypted answers and the clear answers are always the same by construction),  
              we require the complementary property---the adversary always ensures
                        that the encrypted answer satisfies the predicate, but may not be
                        consistent.
        Since we now actually rely on the encrypted answers to check whether
                or not the predicate is satisfied
                   in the analysis, we cannot simply drop the encrypted answer.          

Here is the ``TruthTable'' algorithm and the actual reduction ${\cal A}_{2}$.
}

\begin{algorithm}[h]
  \begin{centering}
    \begin{tabular}{|>{\centering}p{2cm}|c|>{\centering}p{5.5cm}|}
      \multicolumn{1}{>{\centering}p{2cm}}{${\cal A}$} & \multicolumn{1}{c}{}                                   & \multicolumn{1}{>{\centering}p{5.5cm}}{${\cal A}_{2}.\truthtable(C)$}\tabularnewline
      \cline{1-1} \cline{3-3}
                                                       &                                                        & \tabularnewline
                                                       &                                                        & $\sk\leftarrow\QFHE.\gen(1^{\lambda})$

      $c\leftarrow\QFHE.\Enc(C)$

      $(\opad.\pk,\opad.\sk)\leftarrow\opad(1^{\lambda})$\tabularnewline
                                                       & {\Large $\xleftarrow{(c,\opad.\pk)}$}                  & \tabularnewline
                                                       & {\Large $\xrightarrow{(c_{\mathbf{a}},\hat{k}'',s')}$} & \tabularnewline
                                                       &                                                        & Samples $k\leftarrow K$\tabularnewline
                                                       & {\Large $\circlearrowleft$}                                     & For each $q\in Q$, ask $(q,k)$, receive $a$ and rewind until $\tau$
      is fully specified.\tabularnewline
                                                       &                                                        & \tabularnewline
                                                       &                                                        & Outputs $\mathbf{a}:=\QFHE.\dec_{\sk}(\mathbf{a})$ and $\tau$. \tabularnewline
                                                       &        &       \tabularnewline
      \cline{1-1} \cline{3-3}
    \end{tabular}
    \par\end{centering}
  \caption{Analogue of \Algref{ATruthTable}, except that it explicitly outputs $\mathbf{a}$, in addition to $\tau$. \label{alg:ATruthTable_C1}}

\end{algorithm}

\begin{itemize}
  \item The definition of $C_{*0},C_{*1}\in\Call$ is now ``indices'' such
        that
        \[
          \sum_{\tau}\left|\sum_{\mathbf{a}}p_{C_{*0}\mathbf{a}\tau}-\sum_{\mathbf{a}}p_{C_{*1}\mathbf{a}\tau}\right|\ge\eta
        \]
        where $\eta$ is non-negligible.
  \item $T_{0}$ and $T_{1}$ are defined as $\tau\in T_{0}$ if $\sum_{\mathbf{a}}p_{C_{*0}\mathbf{a}\tau}\ge\sum_{\mathbf{a}}p_{C_{*1}\mathbf{a}\tau}$
        and $\tau\in T_{1}$ otherwise, i.e. $\sum_{\mathbf{a}}p_{C_{*0}\mathbf{a}\tau}<\sum_{\mathbf{a}}p_{C_{*1}\mathbf{a}\tau}$.
\end{itemize}
\begin{algorithm}
  \begin{centering}
    \begin{tabular}{|>{\centering}p{0.5cm}|c|>{\centering}p{5cm}|c|>{\centering}p{4cm}|}
      \multicolumn{1}{>{\centering}p{0.5cm}}{}           & \multicolumn{1}{c}{}                                   & \multicolumn{1}{>{\centering}p{5cm}}{}                 & \multicolumn{1}{c}{}                     & \multicolumn{1}{>{\centering}p{4cm}}{}\tabularnewline
      \multicolumn{1}{>{\centering}p{0.5cm}}{${\cal A}$} & \multicolumn{1}{c}{}                                   & \multicolumn{1}{>{\centering}p{5cm}}{${\cal A}_{2}$}   & \multicolumn{1}{c}{}                     & \multicolumn{1}{>{\centering}p{4cm}}{}\tabularnewline
      \cline{1-1} \cline{3-3}
                                                       &                                                        &                                                        & \multicolumn{1}{c}{}                     & \multicolumn{1}{>{\centering}p{4cm}}{}\tabularnewline
                                                       &                                                        & \textbf{Phase 1}                                       & \multicolumn{1}{c}{}                     & \multicolumn{1}{>{\centering}p{4cm}}{}\tabularnewline
                                                       & {\Large $\longleftrightarrow$}                                      & Compute $p_{C\mathbf{a}\tau}$ (as in the description)

      for each $C\in\Call$ by

      running ${\cal A}_{2}.\truthtable(C)$.           & \multicolumn{1}{c}{}                                   & \multicolumn{1}{>{\centering}p{4cm}}{}\tabularnewline
      \cline{1-1}
      \multicolumn{1}{>{\centering}p{0.5cm}}{}           &                                                        &                                                        & \multicolumn{1}{c}{}                     & \multicolumn{1}{>{\centering}p{4cm}}{}\tabularnewline
      \multicolumn{1}{>{\centering}p{0.5cm}}{}           &                                                        & Use $p_{C\mathbf{a}\tau}$ to learn

      the contexts $C_{*0},C_{*1}$ and

      the disjoint sets $T_{0},T_{1}$ of truth tables. & \multicolumn{1}{c}{}                                   & \multicolumn{1}{>{\centering}p{4cm}}{}\tabularnewline
      \multicolumn{1}{>{\centering}p{0.5cm}}{}           &                                                        &                                                        & \multicolumn{1}{c}{}                     & \multicolumn{1}{>{\centering}p{4cm}}{}\tabularnewline
      \multicolumn{1}{>{\centering}p{0.5cm}}{${\cal A}$} &                                                        & \textbf{Phase 2}                                       & \multicolumn{1}{c}{}                     & \multicolumn{1}{>{\centering}p{4cm}}{$C_{2}$}\tabularnewline
      \cline{1-1} \cline{5-5}
                                                       &                                                        &                                                        & {\Large $\xrightarrow{(C_{*0},C_{*1})}$} & \tabularnewline
                                                       &                                                        & $(\opad.\pk,\opad.\sk)\leftarrow\opad(1^{\lambda})$    &                                          & $\sk\leftarrow\QFHE.\gen(1^{\lambda})$

      $b\leftarrow\{0,1\}$

      $c_{b}\leftarrow\QFHE.\Enc_{\sk}(C_{*b})$\tabularnewline
                                                       & {\Large $\xleftarrow{(c_{b},\opad.\pk)}$}              &                                                        & {\Large $\overset{c_{b}}{\longleftarrow}$}            & \tabularnewline
                                                       & {\Large $\xrightarrow{(c_{\mathbf{a}},\hat{k}'',s')}$} &                                                        &                                          & \tabularnewline
                                                       &                                                        & Samples $k\leftarrow K$                                &                                          & \tabularnewline
                                                       & {\Large $\circlearrowleft$}                                     & For each $q\in Q$, repeats $(q,k)$ to determine $\tau$ &                                          & \tabularnewline
                                                       &                                                        &                                                        &                                          & \tabularnewline
                                                       &                                                        & Set $b'=0$ if $\tau\in T_{0}$,

      and $b'=1$ if $\tau\in T_{1}$                    & {\Large $\overset{b'}{\longrightarrow}$}                            & \tabularnewline
                                                       &                                                        &                                                        &                                          & Accept if $b'=b$\tabularnewline
                                                       &        &       &       &       \tabularnewline
      \cline{1-1} \cline{3-3} \cline{5-5}
    \end{tabular}
    \par\end{centering}
  \caption{\label{alg:A_2_reduction-1}The algorithm ${\cal A}_{2}$ uses the
    adversary ${\cal A}$ for the compiled contextuality game $\protect\G'$,
    to break the $2$-IND security game for the $\protect\QFHE$ scheme.}
\end{algorithm}

\subsection{Proof Strategy}

\begin{lem}[{{[}Analogue of \Lemref{1guessK=00003DgoodK}{]} Uniformly random $k$ is equivalent to the correct $k$}]
  \label{lem:1_C1_guessk=00003Dgoodk} Let $B_{0}$ (resp. $B_{1}$)
  be a PPT algorithm that takes $q'\in Q$ as an input, interacts with
  ${\cal A}$ and outputs a bit, as described in \Algref{random-k-or-not}.
  Then there is a negligible function $\mathsf{negl}$ such that $\left|\Pr[0\leftarrow\left\langle B_{0},{\cal A}\right\rangle ]-\Pr[0\leftarrow\left\langle B_{1},{\cal A}\right\rangle ]\right|\le\mathsf{negl}$.
\end{lem}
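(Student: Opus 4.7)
The plan is to mimic the proof of Lemma \ref{lem:1guessK=00003DgoodK} essentially verbatim, with the changes in the message format dictated by the $(|C|,1)$ compiler (so $q'$ is replaced by a context $C$, and $c_{a'}$ by the vector of encrypted answers $c_{\mathbf{a}}$). The only property of the underlying protocol that the original argument used was the structure of the second verifier message $(q,k)$, namely that $k'$ is produced via the $\OPad$ and $k$ satisfies $U_k = U_{k''}U_{k'}$. Since this structure is preserved in the $(|C|,1)$ compiler, the same reduction carries over.

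First, I would describe the reduction $\mathcal{A}_{\opad}$ that uses the distinguisher between $B_0$ and $B_1$ to break the soundness of the $\OPad$. Concretely, $\mathcal{A}_{\opad}$ plays the role of the challenger towards $\mathcal{A}$: it generates $\sk \leftarrow \QFHE.\gen(1^\lambda)$ itself, computes $c \leftarrow \QFHE.\enc_\sk(C)$ for the context $C$ provided as input, and forwards the message $(c, \opad.\pk)$ (where $\opad.\pk$ comes from the $\OPad$ challenger). Upon receiving $(c_{\mathbf{a}}, \hat{k}'', s')$ from $\mathcal{A}$, it forwards $s'$ to the $\OPad$ challenger and receives back $k_b$, which is either $k_1 = \opad.\dec(\opad.\sk, s')$ or a uniformly random $k_0 \leftarrow K$. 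It then computes $k$ from $k_b$ and $k''$ (using $\sk$ to decrypt $\hat k ''$) by solving $U_k = U_{k''}U_{k_b}$, sends $(q,k)$ to $\mathcal{A}$ for any chosen $q \in Q$, and runs an arbitrary bit-producing procedure on $\mathcal{A}$'s responses (including potential rewindings).

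Next, I would observe that by construction:
\begin{itemize}
\item When $b = 1$, i.e.\ $k_b = k_1$ is the correct key, the view of $\mathcal{A}$ inside $\mathcal{A}_{\opad}$ is distributed exactly as in its interaction with $B_0$ (which uses the correct $k$).
\item When $b = 0$, i.e.\ $k_b = k_0$ is uniformly random, the key $k$ computed by $\mathcal{A}_{\opad}$ is also uniformly random (since $U_k = U_{k''}U_{k_0}$ and $\mathbf{U}$ is a group, left multiplication by $U_{k''}$ is a bijection), so $\mathcal{A}$'s view matches its interaction with $B_1$.
\end{itemize}
Therefore, if $\mathcal{A}_{\opad}$ simply outputs the bit $b'$ returned by running either $B_0$'s or $B_1$'s post-processing on $\mathcal{A}$'s transcript, a computation identical to Equation~\eqref{eq:securityLemma1} yields
\[
\Pr[\accept \leftarrow \langle \mathcal{A}_{\opad}, C_{\opad}\rangle] = \tfrac{1}{2} + \tfrac{1}{2}\bigl(\Pr[0 \leftarrow \langle B_1, \mathcal{A}\rangle] - \Pr[0 \leftarrow \langle B_0, \mathcal{A}\rangle]\bigr).
\]

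Finally, by also considering the variant $\mathcal{A}'_{\opad}$ that flips its output bit, we bound the absolute value $|\Pr[0 \leftarrow \langle B_0, \mathcal{A}\rangle] - \Pr[0 \leftarrow \langle B_1, \mathcal{A}\rangle]|$ by twice $\mathcal{A}_{\opad}$'s advantage over $1/2$. Since $\mathcal{A}_{\opad}$ is PPT (it only runs $\QFHE$ operations, $\mathcal{A}$, and the externally-provided $B_i$ post-processing, none of which requires $\opad.\sk$), the soundness of the $\OPad$ (Definition~\ref{def:Oblivious-U-pad}) implies that this advantage is negligible, concluding the proof. There is no substantive obstacle here; the only thing to double-check is that the syntactic differences between Algorithm~\ref{alg:The-1-1-compiler} and Algorithm~\ref{alg:C-1-compiler}---namely asking $|C|$ questions under encryption instead of one---do not affect the reduction, and indeed they do not, since the $\OPad$-relevant parts of the transcript ($\opad.\pk$, $s'$, $k$) are identical in both compilers.
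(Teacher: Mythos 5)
Your proposal is correct and matches the paper's treatment: the paper simply notes that this lemma is proved analogously to \Lemref{1guessK=00003DgoodK}, and your argument is exactly that analogue---the same reduction to the $\opad$ soundness game (with the QFHE keys generated by the reduction, $s'$ forwarded to the challenger, $k$ computed from $k_b$ and $k''$, and the group structure ensuring a random $k'$ yields a random $k$), followed by the same advantage calculation and the bit-flipped variant. The only differences are the syntactic ones you already identify (context $C$ and answer vector $c_{\mathbf{a}}$ in place of $q'$ and $c_{a'}$), which indeed do not affect the reduction.
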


\begin{algorithm}
  \begin{centering}
    \begin{tabular}{|>{\raggedright}p{5cm}|>{\centering}p{3cm}|>{\raggedright}m{5.5cm}|}
      \multicolumn{1}{>{\raggedright}p{5cm}}{${\cal A}$} & \multicolumn{1}{>{\centering}p{3cm}}{}                 & \multicolumn{1}{>{\raggedright}m{5.5cm}}{$B_{0}(C)$ (resp. $B_{1}(C)$)}\tabularnewline
      \cline{1-1} \cline{3-3}
                                                        &           &       \tabularnewline
                                                         &                                                        & $\sk\leftarrow\QFHE.\gen(1^{\lambda})$

      $c\leftarrow\QFHE.\enc_{\sk}(C)$

      \,

      $(\opad.\pk,\opad.\sk)\leftarrow\opad(1^{\lambda})$\tabularnewline
                                                         & {\Large $\xleftarrow{(c,\opad.\pk)}$}                  & \tabularnewline
                                                         &                                                        & \tabularnewline
                                                         & {\Large $\xrightarrow{(c_{\mathbf{a}},\hat{k}'',s')}$} & \tabularnewline
                                                         &                                                        & $B_{0}$ computes $k':=\opad.\Dec(\opad.\sk,s')$ and uses the secret
      key $\sk$ to compute $k''$ and then finds the $k$ satisfying $U_{k}=U_{k''}U_{k'}$.

      (resp. \textbf{$B_{1}$} samples a uniform $k\leftarrow K$).\tabularnewline
                                                         & {\Large $\overset{(q,k)}{\circlearrowleft}$}                    & Potentially rewinds ${\cal A}$ to this step and queries with $(q,k)$
      for arbitrary $q\in Q$.\tabularnewline
                                                         &                                                        & Runs an arbitrary procedure to compute a bit $b'$.\tabularnewline
                                                         &          &       \tabularnewline
      \cline{1-1} \cline{3-3}
    \end{tabular}
    \par\end{centering}
  \caption{\label{alg:C_1_random-k-or-not}Whether a PPT adversary ${\cal A}$
    for the compiled contextuality game $\protect\G'$ is used with the
    correct $k$ or a uniformly random $k$, it makes no difference, if all algorithms
    involved are PPT.}
\end{algorithm}

{We are finally ready to write the first step which is conceptually
  different from the previous analysis. It is the analogue of \Lemref{2EstimateAssumingConsistent}---except
  that:
  \begin{itemize}
    \item Earlier, we obtained the bound by essentially treating ${\cal A}$
          as being \emph{consistent} (with the answers under the encryption
          and those in the clear) where by virtue of being consistent, ${\cal A}$
          could not be \emph{feasible} i.e. ${\cal A}$ could not satisfy all
          the predicates.
    \item Now, we treat ${\cal A}$ as being \emph{feasible} (satisfies all
          predicates) , but by virtue of being feasible, it cannot be \emph{consistent
          }(can't have the encrypted answers be consistent with a global truth
          assignment)\emph{.}
  \end{itemize}
}
\begin{lem}[{{[}Analogue of \Lemref{2EstimateAssumingConsistent}{]} Feasibility
      only helps}]
  \label{lem:2_C1_feasibility_only_helps}%
  Let
    \begin{itemize}
      \item $\G'$ be the compiled game as in \Thmref{C_1_CompiledGameIsSecure},
            let ${\cal C}$ be the challenger in $\G'$, and let
      \item ${\cal C}_{k\leftarrow K}$ be exactly the same as the challenger
            ${\cal C}$ except that it samples a uniformly random $k$ instead
            of computing it correctly,
    \item ${\cal A}$ be any PPT algorithm that is designed to play the complied
          contextuality game $\G'$ and makes ${\cal C}_{k\leftarrow K}$ accept
          with probability $p$, i.e. $\Pr\left[\accept\leftarrow\left\langle {\cal A},{\cal C}_{k\leftarrow K}\right\rangle \right]=p$
          and denote by
    \item $p_{C\mathbf{a}\tau}$ be the probability that on being asked the
          context $C$, the answers given are $\mathbf{a}$ and the truth table
          produced is $\tau$ (as defined in \Subsecref{C_1_The-Reduction})
    \item $p'_{C\mathbf{a}\tau}$ be a probability distribution derived from
          $p_{C\mathbf{a}\tau}$ to be such that $p'$ is always feasible and
          whenever $p$ has support on $C,\mathbf{a}$ that satisfy the corresponding
          predicate, $p'$ and $p$ agree. More precisely, for any $(C,\mathbf{a},\tau)$
          satisfying $p_{C\mathbf{a}\tau}>0$ it holds that
          \begin{itemize}
            \item ($p'$ matches $p$ exactly when $p$ is feasible) if $p_{C\mathbf{a}\tau}>0\land\pred(\mathbf{a}[C],C)=1$,
                  \begin{itemize}
                    \item $p_{C\mathbf{a}\tau}'=p_{C\mathbf{a}\tau}$ and
                  \end{itemize}
            \item ($p'$ is always feasible) if $\pred(\mathbf{a}[C],C)=0$,
                  \begin{itemize}
                    \item $p'_{C\mathbf{a}\tau}=0$
                  \end{itemize}
            \item (when $p$ is infeasible, $p'$ preserves probabilities but makes
                  it feasible) if $p_{C\mathbf{a}\tau}>0\land\pred(\mathbf{a}[C],C)=0$,
                  there is an\footnote{The first property implies $\pred(\mathbf{a}'[C],C)=1$.}
                  $\mathbf{a}'$ such that
                  \begin{itemize}
                    \item $p'_{C\mathbf{a}'\tau}=p_{C\mathbf{a}\tau}$.
                  \end{itemize}
          \end{itemize}
  \end{itemize}
  Then
  \[
    p\le\sum_{C\mathbf{a}\tau}p_{C\mathbf{a}\tau}'\Pr(C)\frac{1}{|C|}\sum_{q\in C}\delta_{\mathbf{a}[q],\tau(q)}
  \]
  where $\Pr(C)$ denotes the probability with which ${\cal C}$ samples
  the context $C$.
\end{lem}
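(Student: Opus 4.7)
The plan is to mirror the proof of \Lemref{2EstimateAssumingConsistent} from the $(1,1)$-compiler analysis, with the roles of consistency and feasibility swapped. The winning condition here is the conjunction of (i) $\pred(\mathbf{a},C)=1$, depending only on the round-1 encrypted answer, and (ii) $\mathbf{a}[q]=a$, depending also on the round-2 answer. The central move will be to construct a QPT adversary $\mathcal{A}'$ that always satisfies~(i) at the possible cost of~(ii), and to show that (a) $\mathcal{A}'$ wins against $\mathcal{C}_{k\leftarrow K}$ with at least the probability of $\mathcal{A}$, and (b) the winning probability of $\mathcal{A}'$ equals the right-hand side of the lemma.

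The construction of $\mathcal{A}'$ is as follows: it forwards the first message from $\mathcal{C}_{k\leftarrow K}$ to $\mathcal{A}$, then invokes $\QFHE.\Eval$ on the returned encrypted answer together with the encrypted context, running a circuit $F$ that leaves $\mathbf{a}$ unchanged if $\pred(\mathbf{a},C)=1$ and otherwise replaces it with a hardcoded feasible replacement $\mathbf{a}'$ for $C$ (this replacement exists for every $C$ on which the predicate is satisfiable, which are the only contexts that contribute to $p$). The resulting ciphertext $c_{\mathbf{a}'}$ is sent to the challenger alongside $(\hat{k}'',s')$, and in round two $\mathcal{A}'$ relays $(q,k)$ and the answer $a$ transparently. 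For claim~(a), one splits contributions to $p$ by feasibility of $\mathbf{a}$: on feasible $\mathbf{a}$, $F$ is the identity and $\mathcal{A}'$ matches $\mathcal{A}$ exactly; on infeasible $\mathbf{a}$, $\mathcal{A}$ contributes zero whereas $\mathcal{A}'$ contributes a non-negative amount. For claim~(b), the joint distribution of $(C,\mathbf{a}',\tau)$ induced by $\langle\mathcal{A}',\mathcal{C}_{k\leftarrow K}\rangle$ is by construction one of the $p'$ satisfying the lemma's three bullet-point conditions; since $\mathcal{A}'$ always passes~(i), its acceptance probability reduces to the average of $\delta_{\mathbf{a}'[q],\tau(q)}$ over $C\leftarrow\mathcal{D}$ and $q\leftarrow C$, which is precisely the right-hand side.

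The main technical point, as in the $(1,1)$ case, is that $\mathcal{A}'$ uses $\QFHE.\Eval$ and is therefore QPT rather than PPT; this is harmless since $\mathcal{A}'$ is only an analytical device to upper bound $p$ and is never invoked as an attacker against any cryptographic primitive. The only anticipated obstacle is bookkeeping: the lemma specifies $p'$ only through an existential clause (``there is an $\mathbf{a}'$''), so one must verify that any concrete choice of feasibilising map hardcoded into $F$ gives rise to a well-defined $p'$ meeting the three bullet-point conditions, and that the inequality is insensitive to this choice. I do not expect this to be substantive.
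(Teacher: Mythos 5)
Your proposal is correct and follows essentially the same route as the paper: both build a feasibilised adversary $\mathcal{A}'$ that uses $\QFHE.\Eval$ to replace any infeasible encrypted answer by a hardcoded feasible one, argue that $\mathcal{A}'$ (being only an analytical device, so its being QPT is harmless) wins at least as often as $\mathcal{A}$, and identify its winning probability with the right-hand side via a pushforward distribution $p'$. The only difference is cosmetic and, if anything, a simplification: you apply the feasibilising circuit directly to $\mathcal{A}$'s own ciphertext and relay round~2 transparently, whereas the paper's $\mathcal{A}'$ rewinds $\mathcal{A}$ to extract $\tau$ and re-samples $\mathbf{a}$ from $p_{\mathbf{a}|\tau C}$ under the encryption; both couplings yield the same bound, and your bookkeeping caveat about the existential choice of $\mathbf{a}'$ matches the paper's own level of informality.
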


{The analogue of \Lemref{3PrecisionNonIssue} should go through with
  almost no changes.}
\begin{lem}[{{[}Analogue of \Lemref{3PrecisionNonIssue}{]} Precision is not an issue}]
  \label{lem:3_C1_precision_non_issue}Exactly as \Lemref{3PrecisionNonIssue}
  except that ${\cal C}$ is the challenger for the game compiled using
  the $(|C|,1)$ compiler.
\end{lem}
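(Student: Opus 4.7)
The plan is to adapt the proof of \Propref{finitePrecisionD-IND_prime} (which underlies \Lemref{3PrecisionNonIssue}) to the $(|C|,1)$ setting, with the relevant distributions now indexed by triples $(C,\mathbf{a},\tau)$ rather than pairs $(q',\tau)$. The key structural feature that makes the earlier argument work continues to hold here: $\Call$, $A^{|C|}$, and the space of truth tables $\tau:Q\to A$ all have size $O(1)$ in the security parameter $\lambda$, so the number of joint indices $(C,\mathbf{a},\tau)$ to estimate is constant, and any per-index additive error of $O(\epsilon^3)$ accumulates to at most $O(\epsilon^3)$ in total.

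First I would establish the analogue of \Claimref{epsilonsquare}: if ${\cal A}$ wins with probability at least $1-\const_1+\epsilon$, then there exist distinct $C_{*0}\neq C_{*1}\in\Call$ whose $\tau$-marginals are separated in $\ell_1$ by at least $\Omega(\epsilon^2)$, i.e.\
\[
\sum_\tau\bigl|\textstyle\sum_\mathbf{a}p_{C_{*0}\mathbf{a}\tau}-\sum_\mathbf{a}p_{C_{*1}\mathbf{a}\tau}\bigr|\;\ge\;\Omega(\epsilon^2).
\]
This follows by contradiction using \Lemref{2_C1_feasibility_only_helps}: if all pairs of context-conditioned marginals were within $o(\epsilon^2)$ in $\ell_1$, the bound in that lemma could be restricted to a single ``representative'' context $C_*$, giving $p\le1-\const_1+O(\epsilon^2)$ and contradicting the hypothesis $p\ge1-\const_1+\epsilon$. (The constants hidden by $O(\cdot)$ depend only on the game $\G$ and not on $\lambda$, since $|\Call|$, $|A|^{|C|}$, and $|A|^{|Q|}$ are constants.)

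Next I would show that an estimate $\hat{p}_{C\mathbf{a}\tau}$ with $|\hat{p}_{C\mathbf{a}\tau}-p_{C\mathbf{a}\tau}|\le O(\epsilon^3)$ suffices for $\mathcal{A}_{2,\epsilon}$. By the previous paragraph, for any such estimate the reduction can correctly identify some pair $(C_{*0},C_{*1})$ whose true marginal $\ell_1$ gap is $\Omega(\epsilon^2)$. Moreover, $\mathcal{A}_{2,\epsilon}$ may mis-classify a given $\tau$ into the wrong $T_b$ only when the true gap $|\sum_\mathbf{a}p_{C_{*0}\mathbf{a}\tau}-\sum_\mathbf{a}p_{C_{*1}\mathbf{a}\tau}|$ is itself at most $O(\epsilon^3)$; each such $\tau$ therefore contributes at most $O(\epsilon^3)$ to the discrepancy in Phase~2 between the decision rule of $\mathcal{A}_{2}$ (using exact $p$'s) and $\mathcal{A}_{2,\epsilon}$ (using $\hat p$'s). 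Summing over the constantly many truth tables, the overall deviation in the winning probability is $O(\epsilon^3)$, which yields
\[
\Pr[\accept\leftarrow\langle\mathcal{A}_{2,\epsilon},C_2\rangle]\;\ge\;\Pr[\accept\leftarrow\langle\mathcal{A}_{2},C_2\rangle]-O(\epsilon^3),
\]
as claimed.

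The main obstacle will be Step~1: writing down the exact aggregation of $p_{C\mathbf{a}\tau}$ that actually governs the success probability of the Phase~2 distinguisher in \Algref{A_2_reduction-1} (it is the $\tau$-marginal $\sum_\mathbf{a}p_{C\mathbf{a}\tau}$, since $T_0,T_1$ are defined via these marginals), and then carefully converting the quantitative assumption $p\ge 1-\const_1+\epsilon$ into an $\Omega(\epsilon^2)$ separation through \Lemref{2_C1_feasibility_only_helps} — the complication being that this lemma bounds $p$ using the re-weighted distribution $p'_{C\mathbf{a}\tau}$ rather than $p_{C\mathbf{a}\tau}$ directly, so one must either argue that the $\tau$-marginals of $p$ and $p'$ agree (which they do, by construction of $p'$) or perform the contradiction at the level of $p'$. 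Once this bookkeeping is in place, the remainder is a direct translation of the earlier precision analysis, since both the reduction template and the ``constant-size message space'' property are shared between the two compilers.
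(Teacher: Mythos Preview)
Your proposal is correct and follows the same approach as the paper, which simply states that the proof is analogous to that of \Lemref{3PrecisionNonIssue} (which in turn follows \Propref{finitePrecisionD-IND_prime}). You in fact provide more detail than the paper does, correctly identifying that the relevant quantity is the $\tau$-marginal $p_{\tau|C}=\sum_{\mathbf{a}}p_{C\mathbf{a}\tau}$, that $p'_{\tau|C}=p_{\tau|C}$ by construction (so the contradiction via \Lemref{2_C1_feasibility_only_helps} indeed goes through at the level of these marginals), and that the constant size of $\Call$, $A^{|C|}$, and $A^{|Q|}$ keeps the accumulated estimation error at $O(\epsilon^3)$.
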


{As before, we prove the contrapositive of the soundness condition. }
\begin{thm}[Soundness condition restated from \Thmref{C_1_CompiledGameIsSecure}]
  \label{thm:C1_MainSoundness}Suppose
  \begin{itemize}
    \item ${\cal A}$ is any PPT algorithm that wins with probability
          \[
            \Pr[\accept\leftarrow\left\langle {\cal A},{\cal C}\right\rangle ]\ge1-\const_{1}+\epsilon
          \]
          for some non-negligible function $\epsilon$ where $\const_{1}=\min_{C\in\Call}\Pr(C)/|C|=O(1)$,
          and
    \item the $\opad$ used is secure, then
  \end{itemize}
  there is a PPT algorithm ${\cal A}_{2,\epsilon}$ that wins the $2$-IND
  security game of the $\QFHE$ scheme with probability
  \[
    \Pr[\accept\leftarrow\left\langle {\cal A}_{2,\epsilon},C_{2}\right\rangle ]\ge\frac{1}{2}+\nonnegl,
  \]
  where $\nonnegl$ is a non-negligible function that depends on $\epsilon$.

\end{thm}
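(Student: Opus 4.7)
The plan mirrors the $(1,1)$-compiler soundness proof of \Subsecref{Step1}, but with a different combinatorial core adapted to the new test (all questions under encryption in round~1, a single consistency check in round~2). To begin, I would apply \Lemref{1_C1_guessk=00003Dgoodk} to replace the challenger's honestly-computed $k$ by a uniform $k \leftarrow K$, losing only a negligible additive term. Then \Lemref{2_C1_feasibility_only_helps} applied to $\mathcal A$ against the modified challenger $\calC_{k\leftarrow K}$ gives
\[
  \Pr[\accept\leftarrow\langle \mathcal A, \calC_{k\leftarrow K}\rangle] \;\le\; \sum_{C,\mathbf a,\tau} p'_{C\mathbf a\tau}\,\Pr(C)\,\frac{1}{|C|}\sum_{q\in C}\delta_{\mathbf a[q],\tau(q)}\,,
\]
where $p_{C\mathbf a\tau}$ is the conditional distribution sampled by the rewinding procedure of \Algref{ATruthTable_C1} and $p'$ is its ``feasibilisation''; crucially, $p'$ is obtained by relabelling only the $\mathbf a$-coordinate, so $\sum_{\mathbf a} p'_{C\mathbf a\tau} = \sum_{\mathbf a} p_{C\mathbf a\tau}$ for every pair $(C,\tau)$.

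The heart of the argument is the following contrapositive. Suppose that for every pair $C_0 \neq C_1$ the $\tau$-marginals are negligibly close, i.e.\ $\sum_\tau \bigl|\sum_{\mathbf a} p_{C_0\mathbf a\tau} - \sum_{\mathbf a} p_{C_1\mathbf a\tau}\bigr| \le \negl$. Then there is a single distribution $q(\tau)$ with $\sum_{\mathbf a} p'_{C\mathbf a\tau} = q(\tau) \pm \negl$ for every context $C$ in the support of $\mathcal D$. Because $\valNC < 1$, for every $\tau$ in $\supp(q)$ there exists a context $C^\star(\tau)$ with $\Pr(C^\star(\tau)) > 0$ and $\pred(\tau[C^\star(\tau)], C^\star(\tau)) = 0$; feasibility of $p'$ then forces the sampled $\mathbf a[C^\star(\tau)]$ to disagree with $\tau[C^\star(\tau)]$ on at least one coordinate of $C^\star(\tau)$, so the matching fraction on that context is at most $1 - 1/|C^\star(\tau)|$. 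Bounding all other contexts trivially by $1$ and averaging yields
\[
  \Pr[\accept\leftarrow\langle \mathcal A, \calC_{k\leftarrow K}\rangle] \;\le\; 1 - \sum_\tau q(\tau)\,\frac{\Pr(C^\star(\tau))}{|C^\star(\tau)|} + \negl \;\le\; 1 - \const_1 + \negl\,,
\]
contradicting the hypothesis $p \ge 1 - \const_1 + \epsilon$. Hence there must exist contexts $C_{*0} \neq C_{*1}$ with $\ell_1$-distance $\eta$ non-negligible between their $\tau$-marginals.

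With $C_{*0}, C_{*1}$ and the induced partition $(T_0, T_1)$ of truth tables defined as in \Subsecref{C_1_The-Reduction}, I would then plug them into the reduction $\mathcal A_2$ of \Algref{A_2_reduction-1}: it simulates the $\opad$ itself, forwards a QFHE challenge ciphertext of $C_{*b}$ to $\mathcal A$, samples $k$ uniformly in the second round (justified again by \Lemref{1_C1_guessk=00003Dgoodk}), rewinds to extract a truth table $\tau$, and outputs $b'=0$ if $\tau\in T_0$, else $b'=1$. The same algebra as in the $(1,1)$-case shows $\Pr[\accept\leftarrow\langle\mathcal A_2,C_2\rangle] \ge \tfrac12 + \eta/4$. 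Finally, \Lemref{3_C1_precision_non_issue} converts $\mathcal A_2$ into a genuinely PPT algorithm $\mathcal A_{2,\epsilon}$ that estimates the $p_{C\mathbf a\tau}$'s only to precision $O(\epsilon^3)$, losing at most $O(\epsilon^3)$ in advantage---still $\tfrac12 + \nonnegl$ overall. The main obstacle is the combinatorial step: the feasibilisation of \Lemref{2_C1_feasibility_only_helps} must preserve $\tau$-marginals exactly, so that QFHE-indistinguishability of the $p_{C\mathbf a\tau}$'s transfers cleanly to $p'_{C\mathbf a\tau}$ and one can appeal to $\valNC<1$ to extract the missing $1/|C^\star|$ factor; without this preservation, the truth tables alone would no longer suffice to distinguish the two ciphertexts.
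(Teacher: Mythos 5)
Your proposal is correct and follows essentially the same route as the paper's proof: switch to a uniformly random $k$ via \Lemref{1_C1_guessk=00003Dgoodk}, apply \Lemref{2_C1_feasibility_only_helps} while exploiting that feasibilisation preserves the $\tau$-marginals ($p'_{\tau|C}=p_{\tau|C}$), argue by contradiction using $\valNC<1$ that negligibly-close marginals would cap the winning probability at $1-\const_1+\negl$, and then feed the distinguishing pair $C_{*0}\neq C_{*1}$ into the reduction of \Algref{A_2_reduction-1} with precision handled by \Lemref{3_C1_precision_non_issue}. The only cosmetic difference is that you phrase the contradiction via a common marginal $q(\tau)$ rather than fixing $p_\tau:=p_{\tau|C_{*0}}$ as the paper does, which is an equivalent framing.
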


In \Subsecref{C1_proof_step1}, we prove \Thmref{C1_MainSoundness} assuming \Lemref{1_C1_guessk=00003Dgoodk, 2EstimateAssumingConsistent, 3PrecisionNonIssue}.
In \Subsecref{C1_proof_step2} we prove the lemmas.

\subsection{Proof assuming the lemmas (Step 1 of 2)\label{subsec:C1_proof_step1}}

We introduce some notation. Recall the definition of $p_{C\mathbf{a}\tau}$ from \Subsecref{C_1_The-Reduction}
and of $p_{C\mathbf{a}\tau}'$ from \Lemref{2_C1_feasibility_only_helps}. %
\begin{itemize}
  \item We use the notation $p_{\mathbf{a}\tau|C}:=p_{C\mathbf{a}\tau}$ since
        $C$ was given as input to the procedure ${\cal C}_{2}\cdot\truthtable$
        and it produced outputs $\mathbf{a},\tau$. Similarly define $p'_{\mathbf{a}\tau|C}:=p'_{C\mathbf{a}\tau}$. %
  \item We also use
        \begin{equation}
          p_{\mathbf{a}\tau|C}=p_{\mathbf{a}|\tau C}\cdot p_{\tau|C}\text{ and }p_{\mathbf{a}\tau|C}'=p_{\mathbf{a}|\tau C}'\cdot p_{\tau|C}'\label{eq:conditional}
        \end{equation}
        to denote conditionals.\footnote{Using $p(a,b|c)=p(a|b,c)\cdot p(b|c)=\frac{p(a,b,c)}{p(b,c)}\cdot\frac{p(b,c)}{p(c)}$.}
  \item Finally, we use the convention of denoting marginals by dropping the
        corresponding index, i.e.
        \[
          p_{\tau|C}:=\sum_{\mathbf{a}}p_{\mathbf{a}\tau|C}\text{ and }p_{\tau|C}':=\sum_{\mathbf{a}}p_{\mathbf{a}\tau|C}'.
        \]
  \item Note that by definition of $p'_{\mathbf{a}\tau|C}$, it holds that
        \begin{equation}
          p'_{\tau|C}=p_{\tau|C}.\label{eq:C1_p_tau_C_equals_pprime_tau_C}
        \end{equation}
        {\begin{proof}
            From \Lemref{1_C1_guessk=00003Dgoodk}, one concludes that
            \[
              \left|[\Pr[\accept\leftarrow\left\langle {\cal A},{\cal C}\right\rangle ]-\Pr[\accept\leftarrow\left\langle {\cal A},{\cal C}_{k\leftarrow K}\right\rangle ]\right|\le\negl.
            \]
            Recall the definition of $p_{C\mathbf{a}\tau}$ from \Subsecref{C_1_The-Reduction}.
            Using \Lemref{2EstimateAssumingConsistent}, one can write
            \begin{align}
              \Pr[\accept\leftarrow\left\langle {\cal A},{\cal C}\right\rangle ]\le & \sum_{C\mathbf{a}\tau}p_{\mathbf{a}\tau|C}'\Pr(C)\frac{1}{|C|}\sum_{q\in C}\delta_{\mathbf{a}[q],\tau[q]}+\negl\label{eq:C1_pr_acc_A_C}                                                                                                                               \\
              =                                                                     & \sum_{C\tau}p{}_{\tau|C}\sum_{\mathbf{a}}p_{\mathbf{a}|\tau C}'\Pr(C)\frac{1}{|C|}\sum_{q\in C}\delta_{\mathbf{a}[q],\tau[q]}+\negl     & \text{using }\prettyref{eq:conditional}\text{ \& }\prettyref{eq:C1_p_tau_C_equals_pprime_tau_C}\label{eq:C1_pr_acc_further}
            \end{align}
            Observe also that there exist $C_{*0}\neq C_{*1}$ such that
            \begin{equation}
              \sum_{\tau}\left|p_{\tau|C_{*0}}-p_{\tau|C_{*1}}\right|\ge\eta\label{eq:p_tau_C_star_0_minus_p_tau_C_star_1}
            \end{equation}
            for some non-negligible function $\eta$. This is a consequence of
            the assumption that
            \begin{equation}
              \Pr[\accept\leftarrow\left\langle {\cal A},{\cal C}\right\rangle ]\ge1-\const_{1}+\epsilon\label{eq:C1_A_wins_against_C-w_non-negl}
            \end{equation}
            for some non-negligible function $\epsilon$. To see this, proceed by contradiction: Suppose that for all $C_{*0}\neq C_{*1}$, it is the
            case that
            \begin{equation}
              \sum_{\tau}|p_{\tau|C_{*0}}-p_{\tau|C_{*1}}|\le\negl\label{eq:C1_p_tau_C_negligible}
            \end{equation}
            then, setting $p_{\tau}:=p_{\tau|C_{*0}}$ (for instance), it holds
            that (using \Eqref{C1_pr_acc_further})
            \begin{align}
              \Pr[\accept\leftarrow\left\langle {\cal A},{\cal C}\right\rangle ] & \le\sum_{\tau}p{}_{\tau}\sum_{C}\Pr(C)\sum_{\mathbf{a}}p_{\mathbf{a}|\tau C}'\sum_{q\in C}\frac{\delta_{\mathbf{a}[q],\tau[q]}}{|C|}+\negl'\nonumber                                         \\
                                                                                 & \le\sum_{\tau}p_{\tau}\Bigg(\underbrace{\sum_{C\neq C_{\tau}}\Pr(C)\cdot1}_{\text{Term 1}}+\underbrace{\Pr(C_{\tau})\cdot\left(1-\frac{1}{|C|}\right)}_{\text{Term 2}}\Bigg)+\negl'\nonumber \\
                                                                                 & \le\sum_{\tau}p_{\tau}\left(1-\frac{\Pr(C_{\tau})}{|C|}\right)+\negl'\le1-\frac{\min_{C\in\Call}\Pr(C)}{|C|}+\negl'=1-\const_{1}+\negl'\label{eq:C1_atmost_1-const_1}
            \end{align}
            where the first line uses \Eqref{C1_p_tau_C_negligible} to substitute
            $p_{\tau|C}$ with $p_{\tau}$ at the cost a negligible factor but
            the second line needs some explanation. Observe that (a) for each
            truth table $\tau$, there is a context $C_{\tau}\in\Call$ such that
            $\pred(\tau[C],C)=0$ because $\valNC<1$. Observe also that
            (b) by assumption on $p'_{\mathbf{a}\tau|C}$ (recall \Lemref{2_C1_feasibility_only_helps}),
            all $\mathbf{a}$ with non-zero weight are feasible, i.e. $\pred(\mathbf{a}[C_{\tau}],C_{\tau})=1$
            for any $p'_{\mathbf{a}|\tau C}>0$. This implies that there is at
            least one question $q\in C_{\tau}$ such that $\mathbf{a}[q]\neq\tau[q]$
            (i.e. $\delta_{\mathbf{a}[q],\tau[q]}=0$)---else $\tau$ would also
            have satisfied the predicate on $C_{\tau}$ which it does not. Using
            (a) and (b), in Term 1, we simply upper bound the remaining sum by
            $1$ while in term $2$, we upper bound the sum by concluding that
            for at least one question in $C_{\tau}$, the delta function vanishes.
            The last inequality follows by setting term 1 to be $1-\Pr(C_{\tau})$
            and rearranging and minimising. Now, since \Eqref{C1_atmost_1-const_1}
            contradicts \Eqref{C1_A_wins_against_C-w_non-negl} we conclude that
            our assumption \Eqref{C1_p_tau_C_negligible} must be false, establishing
            \Eqref{p_tau_C_star_0_minus_p_tau_C_star_1}.

            The remaining analysis goes through almost unchanged from the $(1,1)$
            compiler case.
            \begin{align*}
              \Pr[\accept\leftarrow\left\langle {\cal A}_{2},C_{2}\right\rangle ]= & \frac{1}{2}\cdot\sum_{\tau\in T_{0}}\Pr[{\cal A}\text{ outputs }\tau|C_{*0}\text{ was encrypted}]+         \\
                                                                                   & \frac{1}{2}\cdot\sum_{\tau\in T_{1}}\Pr[{\cal A}\text{ outputs }\tau|C_{*1}\text{ was encrypted}]          \\
              =                                                                    & \frac{1}{2}+\frac{1}{2}\sum_{\tau\in T_{0}}(p_{\tau|C_{*0}}-p_{\tau|C_{*1}})                               \\
              =                                                                    & \frac{1}{2}+\frac{1}{4}\sum_{\tau}\left|p_{\tau|C_{*0}}-p_{\tau|C_{*1}}\right|=\frac{1}{2}+\frac{\eta}{4}.
            \end{align*}
            This, together with \Lemref{3_C1_precision_non_issue}, yields a contradiction
            with the security of the $\QFHE$ scheme. We therefore conclude that
            \Eqref{C1_A_wins_against_C-w_non-negl} is false, which means $\Pr[\accept\leftarrow\left\langle {\cal A},{\cal C}\right\rangle ]\le1-\const_{1}+\negl$.
          \end{proof}
        }
\end{itemize}

\subsection{Proof of the lemmas (Step 2 of 2)}\label{subsec:C1_proof_step2}

The proofs of \Lemref{1_C1_guessk=00003Dgoodk} and \Lemref{3_C1_precision_non_issue}
are analogous to those of \Lemref{1guessK=00003DgoodK} and
\Lemref{3PrecisionNonIssue}. Here, we prove \Lemref{2_C1_feasibility_only_helps}.

{\begin{proof}[Proof of \Lemref{2_C1_feasibility_only_helps}]

    \begin{algorithm}[h]
      \begin{centering}
        \begin{tabular}{|>{\raggedright}p{0.5cm}|>{\centering}p{2cm}|>{\raggedright}m{5cm}|>{\raggedright}m{2cm}|>{\raggedright}m{4.8cm}|}
          \multicolumn{1}{>{\raggedright}p{0.5cm}}{${\cal A}$} & \multicolumn{1}{>{\centering}p{2cm}}{}                              & \multicolumn{1}{>{\raggedright}m{5cm}}{${\cal A}'$}                 & \multicolumn{1}{>{\raggedright}m{2cm}}{}          & \multicolumn{1}{>{\raggedright}m{4.8cm}}{${\cal C}_{k\leftarrow K}=:\calC'$}\tabularnewline
          \cline{1-1} \cline{3-3} \cline{5-5}
                                                             &                                                                     &                                                                     & \centering{}                                      & \tabularnewline
                                                             & {\Large $\xleftarrow{(c,\opad.\pk)}$}                               &                                                                     & \centering{}{\Large $\xleftarrow{(c,\opad.\pk)}$} & Recall that $c\leftarrow\QFHE.\enc_{\sk}(C)$.\tabularnewline
                                                             &                                                                     &                                                                     & \centering{}                                      & \tabularnewline
                                                             & {\Large $\xrightarrow{(c_{\mathbf{a}},\hat{k}'',s')}$}              &                                                                     & \centering{}                                      & \tabularnewline
                                                             &                                                                     & $\tilde{k}\leftarrow K$                                             & \centering{}                                      & \tabularnewline
                                                             & {\Large $\overset{(\tilde{q},\tilde{k})}{\circlearrowleft}$}                 & Rewinds ${\cal A}'$ to learn the truth table $\tau$.                & \centering{}                                      & \tabularnewline
                                                             &                                                                     & Under the QFHE encryption, with probability $p_{\mathbf{a}|\tau C}$
          outputs $c_{\mathbf{a}'}\leftarrow\QFHE.\Eval_{\mathbf{a}}(c)$ where

          $\mathbf{a}'=\mathbf{a}$ if $\pred(\mathbf{a}[C],C)=1$,

          otherwise $\mathbf{a}'=\mathbf{a}_{C}$.            & \centering{}{\Large $\xrightarrow{(c_{\mathbf{a}'},\hat{k}'',s')}$} & \tabularnewline
                                                             &                                                                     &                                                                     & \centering{}{\Large $\xleftarrow{(q,k)}$}         & \tabularnewline
                                                             &                                                                     &                                                                     & \centering{}{\Large $\xrightarrow{a:=\tau(q)}$}   & (Recall:)

          Accept if both (1) and (2) hold:

          (1) ${\rm pred}(\mathbf{a}',C)=1$ and

          (2) $\mathbf{a}'[q]=a$.\tabularnewline
                                                            &           &           &           &       \tabularnewline
          \cline{1-1} \cline{3-3} \cline{5-5}
        \end{tabular}
        \par\end{centering}
      \caption{\label{alg:C1_red_for_lemma2}${\cal A}'$ uses ${\cal A}$ (a PPT
        algorithm---crucial because it is rewound), to play the compiled
        contextuality game $\protect\G'$. Its winning probability upper
        bounds that of ${\cal A}$ and can be computed in terms of $p_{\mathbf{a}\tau|C}$
        for ${\cal A}$.}
    \end{algorithm}

    Let $p_{\mathbf{a}\tau|c}$ be as in \Subsecref{C1_proof_step1} and
    recall that $p_{\mathbf{a}\tau|C}=p_{\mathbf{a}|\tau C}p_{\tau|C}$.
    For each $C\in\Call$, denote by $\mathbf{a}_{C}$ answers such that
    $\pred(\mathbf{a}_{C},C)=1$. Consider the adversary ${\cal A}'$
    as in \Algref{C1_red_for_lemma2}. Since this is just a way to compute
    an upper bound, the running time of ${\cal A}'$ does not matter.
    We show that
    \begin{align*}
      \Pr[\accept\leftarrow\left\langle {\cal A},{\cal C}'\right\rangle ]\le & \Pr[\accept\leftarrow\left\langle {\cal A}',{\cal C}'\right\rangle ]                                                     \\
      =                                                                      & \sum_{C}\Pr(C)\sum_{\tau}p_{\tau|C}\sum_{\mathbf{a}'}p'_{\mathbf{a}'|\tau C}\sum_{q\in C}\delta_{\mathbf{a}'[q],\tau[q]}
    \end{align*}
    where the first line holds because, by construction, ${\cal A}'$
    can do no worse than ${\cal A}$. More specifically, ${\cal A}'$
    behaves exactly like ${\cal A}$ when $\mathbf{a}$ satisfies the
    predicate, i.e. $\mathbf{a}'=\mathbf{a}$, and when $\mathbf{a}$
    fails the predicate, ${\cal A}'$ only increases its probability of
    success by responding with $\mathbf{a}'\neq\mathbf{a}$ such that
    $\pred(\mathbf{a}',C)=1$. The second line is straightforward. Denote
    by $p'_{\tau|C}$ the probability with which ${\cal A}'$ responds
    with $\tau$ on being asked $C$. Similarly, let $p'_{\mathbf{a}'|\tau C}$
    be the probability that ${\cal A}'$ responds with $\mathbf{a}'$
    given $\tau$ and $C$. Then, the challenger ${\cal C}'$ asks a context
    $C$ with probability $\Pr(C)$, to which the prover ${\cal A}'$
    responds with $\tau$ with probability $p'_{\tau|C}=p_{\tau|C}$ and
    given $\tau C$, it responds with $\mathbf{a}'$ with probability
    $p'_{\mathbf{a}'|\tau C}$. It follows that $p'_{\mathbf{a}'\tau|C}:=p'_{\mathbf{a}'|\tau C}p'_{\tau|C}$
    satisfies the asserted properties: it has no support over $(\mathbf{a}',C)$
    that are not feasible (don't satisfy the predicate for the corresponding
    context $C$) and whenever $(\mathbf{a},C)$ is feasible, $\mathbf{a}'=\mathbf{a}$
    by construction.
  \end{proof}
}

\newpage{}

\section{Construction of the $(|C|-1,1)$ Compiler}

{We define the compiler formally.}

\begin{algorithm}[h]
  \centering{}%
  \begin{tabular}{|>{\raggedright}p{5cm}|>{\centering}p{3cm}|>{\raggedright}m{5.5cm}|}
    \multicolumn{1}{>{\raggedright}p{5cm}}{Honest Prover (${\cal A}$)} & \multicolumn{1}{>{\centering}p{3cm}}{}                  & \multicolumn{1}{>{\raggedright}m{5.5cm}}{Challenger ($\calC$)}\tabularnewline
    \cline{1-1} \cline{3-3}
                                                                        &           &           \tabularnewline
                                                                       &                                                         & $\sk\leftarrow\QFHE.\gen(1^{\lambda})$

    $C\leftarrow\calD$

    $q_{\skp}\leftarrow C$

    $C':=C\backslash q_{\skp}$

    $c\leftarrow\QFHE.\enc_{\sk}(C')$

    \,

    $(\opad.\pk,\opad.\sk)\leftarrow\opad(1^{\lambda})$\tabularnewline
                                                                       & {\Large $\xleftarrow{(c,\opad.\pk)}$}                   & \tabularnewline
    Under the QFHE encryption, measures $\{O_{q'}\}_{q'\in C'}$ and obtains
    encrypted answers $c_{\mathbf{a}'}$ (where $\mathbf{a}'$ are the
    answers, indexed by $C'$)

    and the post-measurement state $(U_{k''}\left|\psi_{C'\mathbf{a}'}\right\rangle ,\hat{k}'')$.

    Applies an oblivious $\mathbf{U}$-pad to this state to obtain \,$\left(U_{k'}U_{k''}\left|\psi_{C'\mathbf{a}'}\right\rangle ,s'\right)$\,$\leftarrow$\,

    $\opad.\Enc(\opad.\pk,U_{k''}\left|\psi_{q'a'}\right\rangle ).$    &                                                         & \tabularnewline
                                                                       & {\Large $\xrightarrow{(c_{\mathbf{a}'},\hat{k}'',s')}$} & \tabularnewline
                                                                       &                                                         & Using the secret keys $\sk,\opad.\sk$, finds the $k$ such that $U_{k}=U_{k''}U_{k'}$,
    samples $q\leftarrow C$\tabularnewline
                                                                       & {\Large $\xleftarrow{(q,k)}$}                           & \tabularnewline
    Measures $U_{k}O_{q}U_{k}^{\dagger}$ and obtains $a$               &                                                         & \tabularnewline
                                                                       & {\Large $\overset{a}{\longrightarrow}$}                              & \tabularnewline
                                                                       &                                                         & Computes $\mathbf{a}'=\Dec_{\sk}(c_{\mathbf{a}'})$.

    If $q\in C'$,

    $\quad$accept if $a=\mathbf{a}'[q]$

    If $q\notin C'$ (i.e. $q=q_{\skp}$),

    $\quad$accept if ${\rm pred}(\mathbf{a}'\cup a,C)=1$ $\quad$where
    $\mathbf{a}'\cup a$ denotes answers indexed by $C$.\tabularnewline
    &           &       \tabularnewline
    \cline{1-1} \cline{3-3}
  \end{tabular}\caption{Game $\protect\G'$ produced by the $(1,1)$-compiler for any contextuality
    game $\protect\G$ with contexts of size two.\label{alg:The-Cminus11-compiler}}
\end{algorithm}

\subsection{Compiler Guarantees}
The compiler satisfies the following. Without loss of generality, we restrict to contextuality games where all contexts have the same size (see Remark \ref{rem:samesizecontexts}). 

\begin{thm}[Guarantees of the $(|C|-1,1)$ compiled contextuality game $\G'$]
  \label{thm:Cminus11_CompiledGameIsSecure}
  Suppose $\QFHE$ and $\opad$ are secure (as in \Defref{QFHEscheme,Oblivious-U-pad}), and compatible (as in \Defref{QFHEcomptableopad}).
  Let $\G$ be any contextuality
  game with $\valNC<1$ where all contexts are of the same size (i.e.
  $|C|=|C'|$ for all $C,C'\in\Call$). Let $\G'_{\lambda}$ be the compiled game
  produced by \Algref{The-Cminus11-compiler} on input $\G$ and a security parameter
  $\lambda$. Then, the following holds.
    \begin{itemize}
        \item  (Completeness) There is a negligible function $\negl$, such that, for all $\lambda \in \mathbb{N}$, the honest QPT prover from \Algref{The-Cminus11-compiler} wins $\G'_{\lambda}$ with probability at least
  \[
    c(\lambda):=1-\frac{1}{|C|}+\frac{\valQu}{|C|}-\negl(\lambda).
  \]
        \item (Soundness) For every PPT adversary $\cal A$, there
  is a negligible function $\negl'$ such that, for all $\lambda \in \mathbb{N}$, the probability
  that $\cal A$ wins $\G'_{\lambda}$ is at most
  \[
    s(\lambda):=1-\frac{1}{|C|}+\frac{\valNC}{|C|}+\negl'(\lambda) \,,
  \]
    \end{itemize}
 Furthermore, $G'_{\lambda}$ is faithful to $\G$ (as in \Defref{OperationalTestOfContextuality}) with parameters $s(\lambda)$ and $c(\lambda)$.
\end{thm}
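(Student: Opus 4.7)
The plan is to follow the template established for the $(1,1)$ and $(|C|,1)$ compilers, adapting the key intermediate lemmas to the new acceptance rule.

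\emph{Completeness.} This is a direct calculation. The verifier samples $q$ uniformly from $C$, so with probability $\tfrac{|C|-1}{|C|}$ the question lies in $C'$ (the consistency test) and with probability $\tfrac{1}{|C|}$ it equals $q_{\skp}$ (the predicate test). By property~\ref{item:3QFHEproperty} of the $\QFHE$ scheme and the oblivious $\mathbf{U}$-pad, the honest prover's effective state after Round~1, once $k$ is revealed, is (up to $\negl(\lambda)$ in trace distance) the honest post-measurement state after measuring $\{O_{q'}\}_{q' \in C'}$ on $\ket{\psi}$. Thus the consistency test passes with probability $1-\negl(\lambda)$ (since remeasuring a projective $O_q$ for $q \in C'$ returns the same outcome), while in the $q=q_{\skp}$ case the joint distribution over $(\mathbf{a}', a)$ is (up to $\negl(\lambda)$) the honest quantum distribution on all of $C$, which satisfies $\pred$ with probability at least $\valQu$. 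Averaging gives $c(\lambda)=1-\tfrac{1}{|C|}+\tfrac{\valQu}{|C|}-\negl(\lambda)$. Faithfulness (in the sense of \Defref{OperationalTestOfContextuality}) follows by the same argument as in the proof sketch of \Thmref{CompiledGameIsSecure}, choosing the natural maps $\map_i$ induced by the $\QFHE$ encryption/decryption of $C'$ and $\mathbf{a}'$, and constructing $\qProver$ as the honest prover parametrised by $\qstrat$.

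\emph{Soundness.} We prove the contrapositive: any PPT $\mathcal{A}$ winning with probability $1-\tfrac{1}{|C|}+\tfrac{\valNC}{|C|}+\epsilon$ for non-negligible $\epsilon$ can be used to break the $2$-IND security of $\QFHE$. The proof proceeds via three lemmas mirroring \Lemref{1guessK=00003DgoodK}, \Lemref{2EstimateAssumingConsistent}, and \Lemref{3PrecisionNonIssue}. The first lemma uses OPad security to replace the correct decryption key $k$ in Message~3 with a uniformly random $k \leftarrow K$, incurring only $\negl(\lambda)$ change in the output distribution of $\mathcal{A}$. The third lemma shows that estimating $p_{C',\mathbf{a}',\tau}$ to $O(\epsilon^3)$ precision is sufficient, exactly as in the proof of \Propref{finitePrecisionD-IND_prime}. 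The heart of the argument is the second lemma: \emph{consistency only helps}. The key observation is that by design the consistency test in this compiler is triggered on $q\in C'$, which lets us argue the converse of what was needed for the $(|C|,1)$ compiler. Specifically, given $\mathcal{A}$, define $\mathcal{A}'$ that, upon receiving $c\leftarrow\QFHE.\enc_{\sk}(C')$, first rewinds $\mathcal{A}$ with a uniformly random $k$ to learn the truth table $\tau$ (indexed by $Q$), then homomorphically replaces $\mathcal{A}$'s encrypted answer $\mathbf{a}'$ with $\tau[C']$ and responds with $\tau(q)$ in the clear. This $\mathcal{A}'$ passes the consistency test with probability $1$, while the predicate test reduces to $\pred(\tau[C],C)$. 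A case analysis on whether $\mathbf{a}'=\tau[C']$ (where $\mathcal{A}'$ strictly dominates $\mathcal{A}$ pointwise) shows $\Pr[\mathcal{A}' \textrm{ wins}]\ge \Pr[\mathcal{A}\textrm{ wins}]$, and a direct computation yields
\[
  \Pr[\mathcal{A}'\textrm{ wins}] \;\le\; \frac{|C|-1}{|C|} + \frac{1}{|C|}\sum_{C}\Pr(C)\sum_{\tau} p_{\tau|C'}\, \pred(\tau[C],C).
\]
If for every pair $C_0,C_1$ with $C_0'\ne C_1'$ we had $\|p_{\cdot|C_0'}-p_{\cdot|C_1'}\|_1\le\negl$, then a single distribution $p_\tau$ approximates all $p_{\tau|C'}$, and the sum above is bounded by $\valNC$ (as it becomes a convex combination of non-contextual values), contradicting the assumed advantage $\epsilon$. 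Hence there exist $C_0\ne C_1$ with $C_0'\ne C_1'$ and $\|p_{\cdot|C_0'}-p_{\cdot|C_1'}\|_1\ge \Omega(\epsilon^2)$, which plugs into the same two-phase reduction $\mathcal{A}_{2,\epsilon}$ of \Subsecref{The-Reduction} (sending $(C_0',C_1')$ as the $2$-IND challenge and outputting $b'$ based on membership of the observed $\tau$ in $T_0$ vs.\ $T_1$) to win the $2$-IND game with advantage $\Omega(\epsilon^2)$.

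\emph{Main obstacle.} The delicate step is the consistency lemma. In the $(|C|,1)$ case we argued that \emph{feasibility} (of $\mathbf{a}$) only helps, because the predicate was always checked against encrypted answers; here the situation is dual and one must argue \emph{consistency} (of $\mathbf{a}'$ with $\tau$) only helps. The subtlety is that swapping $\mathbf{a}'$ for $\tau[C']$ can in principle worsen the predicate branch, since $\pred(\mathbf{a}'\cup\tau(q_{\skp}),C)$ need not equal $\pred(\tau[C],C)$. Resolving this requires the accounting to be done jointly over $q \in C'$ and $q=q_{\skp}$: whenever $\mathbf{a}'\neq \tau[C']$, the consistency branch already loses at least $\tfrac{1}{|C|}$ in expectation, which dominates any potential loss on the single predicate branch of weight $\tfrac{1}{|C|}$. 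This is precisely why the $|C|{-}1$ in Round~1 (rather than $|C|$) is essential: it keeps the fraction of consistency tests strictly larger than the predicate tests, so that the consistent strategy pointwise dominates. The remaining pieces—propagating the non-negligible $L^1$ gap into a non-negligible $2$-IND advantage, and handling finite precision—are routine adaptations of the arguments already developed in \Subsecref{Step1} and \Subsecref{Step2}.
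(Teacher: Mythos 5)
Your proposal is correct and follows essentially the same route as the paper: the same three-lemma structure (random $k$ via OPad security, a precision lemma, and the ``consistency only helps'' lemma proved by constructing an $\mathcal{A}'$ that rewinds $\mathcal{A}$ with a random key, homomorphically overwrites the encrypted answers with $\tau[C']$, and answers $\tau(q)$ in the clear), followed by the same contradiction argument yielding two contexts $C_0'\neq C_1'$ with non-negligible $L^1$-separated truth-table distributions that feed the $2$-IND reduction. Your ``main obstacle'' resolution—that an inconsistency already costs $\mathcal{A}$ at least $\tfrac{1}{|C|}$ while $\mathcal{A}'$ can lose at most the weight-$\tfrac{1}{|C|}$ predicate branch—is exactly the paper's case analysis in the proof of the consistency lemma.
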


Completeness is straightforward to verify.  The proof of faithfulness is analogous to that of \Thmref{CompiledGameIsSecure}. We prove soundness in Section \ref{sec:12}.

\section{Soundness Analysis of the $(|C|-1,1)$ compiler}
\label{sec:12}

\subsection{The Reduction\label{subsec:Cminus11,The-Reduction}}

Denote by $\Cprimeall:=\{C\backslash q_{\skp}\}_{C\in\Call,q_{\skp}\in C}$
the set consisting of contexts with exactly one question removed.

The reduction is very similar to that in \Subsecref{The-Reduction},
so we do not repeat the accompanying high-level explanations. Let ${\cal A}_{2}\cdot\truthtable$ be as defined in \Algref{Cminus11_A_2_reduction}. Define $p_{C'\tau}$ to be the probability that the procedure ${\cal A}_{2}\cdot\truthtable(C')$
outputs $\tau$, where $C'\in\Cprimeall$ is a context with exactly
one question removed.
\begin{algorithm}[h]
  \begin{centering}
    \begin{tabular}{|>{\centering}p{2cm}|c|>{\centering}p{5cm}|}
      \multicolumn{1}{>{\centering}p{2cm}}{${\cal A}$} & \multicolumn{1}{c}{}                                    & \multicolumn{1}{>{\centering}p{5cm}}{${\cal A}_{2}.\truthtable(C')$}\tabularnewline
      \cline{1-1} \cline{3-3}
                                                       &                                                         & \tabularnewline
                                                       &                                                         & $\sk\leftarrow\QFHE.\gen(1^{\lambda})$

      $c\leftarrow\QFHE.\Enc(C')$

      $(\opad.\pk,\opad.\sk)\leftarrow\opad(1^{\lambda})$\tabularnewline
                                                       & {\Large $\xleftarrow{(c,\opad.\pk)}$}                   & \tabularnewline
                                                       & {\Large $\xrightarrow{(c_{\mathbf{a}'},\hat{k}'',s')}$} & \tabularnewline
                                                       &                                                         & Samples $k\leftarrow K$\tabularnewline
                                                       & {\Large $\circlearrowleft$}                                      & For each $q\in Q$, ask $(q,k)$, receive $a$ and rewind until $\tau$
      is fully specified.\tabularnewline
                                                       &                                                         & \tabularnewline
                                                       &                                                         & Outputs $\tau$\tabularnewline
                                                       &            &       \tabularnewline
      \cline{1-1} \cline{3-3}
    \end{tabular}
    \par\end{centering}
  \caption{\label{alg:Cminus11_ATruthTable}The procedure ${\cal A}_{2}.\protect\truthtable$
    takes as input a context with one question excluded, $C'=C\backslash q_{\protect\skp}$.
    It produces a truth table $\tau$ corresponding to it. Note that this
    is a randomised procedure (depends on the $\protect\QFHE$ encryption
    procedure) so for the same $C'$ the procedure may output different
    $\tau$s. The goal is to learn the probabilities of different $\tau$s
    appearing for each question $C'$.}
\end{algorithm}

\begin{algorithm}
  \begin{centering}
    \begin{tabular}{|>{\centering}p{0.7cm}|c|>{\centering}p{5cm}|c|>{\centering}p{3.8cm}|}
      \multicolumn{1}{>{\centering}p{0.7cm}}{}           & \multicolumn{1}{c}{}                                    & \multicolumn{1}{>{\centering}p{5cm}}{}                 & \multicolumn{1}{c}{}                       & \multicolumn{1}{>{\centering}p{3.8cm}}{}\tabularnewline
      \multicolumn{1}{>{\centering}p{0.7cm}}{${\cal A}$} & \multicolumn{1}{c}{}                                    & \multicolumn{1}{>{\centering}p{5cm}}{${\cal A}_{2}$}   & \multicolumn{1}{c}{}                       & \multicolumn{1}{>{\centering}p{3.8cm}}{}\tabularnewline
      \cline{1-1} \cline{3-3}
                                                       &                                                         &                                                        & \multicolumn{1}{c}{}                       & \multicolumn{1}{>{\centering}p{3.8cm}}{}\tabularnewline
                                                       &                                                         & \textbf{Phase 1}                                       & \multicolumn{1}{c}{}                       & \multicolumn{1}{>{\centering}p{3.8cm}}{}\tabularnewline
                                                       & {\Large $\longleftrightarrow$}                                       & Compute $p_{C'\tau}$ (as in the description)

      for each $C'$ by

      running ${\cal A}_{2}.\truthtable(C')$.          & \multicolumn{1}{c}{}                                    & \multicolumn{1}{>{\centering}p{3.8cm}}{}\tabularnewline
      \cline{1-1}
      \multicolumn{1}{>{\centering}p{0.7cm}}{}           &                                                         &                                                        & \multicolumn{1}{c}{}                       & \multicolumn{1}{>{\centering}p{3.8cm}}{}\tabularnewline
      \multicolumn{1}{>{\centering}p{0.7cm}}{}           &                                                         & Use $p_{C'\tau}$ to learn

      the questions $C'_{*0},C'_{*1}$ and

      the disjoint sets $T_{0},T_{1}$ of truth tables. & \multicolumn{1}{c}{}                                    & \multicolumn{1}{>{\centering}p{3.8cm}}{}\tabularnewline
      \multicolumn{1}{>{\centering}p{0.7cm}}{}           &                                                         &                                                        & \multicolumn{1}{c}{}                       & \multicolumn{1}{>{\centering}p{3.8cm}}{}\tabularnewline
      \multicolumn{1}{>{\centering}p{0.7cm}}{${\cal A}$} &                                                         & \textbf{Phase 2}                                       & \multicolumn{1}{c}{}                       & \multicolumn{1}{>{\centering}p{3.8cm}}{$C_{2}$}\tabularnewline
      \cline{1-1} \cline{5-5}
                                                       &                                                         &                                                        & {\Large $\xrightarrow{(C'_{*0},C'_{*1})}$} & \tabularnewline
                                                       &                                                         & $(\opad.\pk,\opad.\sk)\leftarrow\opad(1^{\lambda})$    &                                            & $\sk\leftarrow\QFHE.\gen(1^{\lambda})$

      $b\leftarrow\{0,1\}$

      $c_{b}\leftarrow\QFHE.\Enc_{\sk}(q_{*b})$\tabularnewline
                                                       & {\Large $\xleftarrow{(c_{b},\opad.\pk)}$}               &                                                        & {\Large $\overset{c_{b}}{\longleftarrow}$}              & \tabularnewline
                                                       & {\Large $\xrightarrow{(c_{\mathbf{a}'},\hat{k}'',s')}$} &                                                        &                                            & \tabularnewline
                                                       &                                                         & Samples $k\leftarrow K$                                &                                            & \tabularnewline
                                                       & {\Large $\circlearrowleft$}                                      & For each $q\in Q$, repeats $(q,k)$ to determine $\tau$ &                                            & \tabularnewline
                                                       &                                                         &                                                        &                                            & \tabularnewline
                                                       &                                                         & Set $b'=0$ if $\tau\in T_{0}$,

      and $b'=1$ if $\tau\in T_{1}$                    & {\Large $\overset{b'}{\longrightarrow}$}                             & \tabularnewline
                                                       &                                                         &                                                        &                                            & Accept if $b'=b$\tabularnewline
                                                       &        &       &       &       \tabularnewline
      \cline{1-1} \cline{3-3} \cline{5-5}
    \end{tabular}
    \par\end{centering}
  \caption{\label{alg:Cminus11_A_2_reduction}The algorithm ${\cal A}_{2}$ uses
    the adversary ${\cal A}$ for the compiled contextuality game $\protect\G'$,
    to break the $2$-IND security game for the $\protect\QFHE$ scheme.}
\end{algorithm}

\subsection{Proof Strategy}
\begin{lem}[Uniformly random $k$ is equivalent to the correct $k$]
  \label{lem:Cminus11_1guessK=00003DgoodK}Let $B_{0}$ (resp. $B_{1}$)
  be a PPT algorithm that takes $C'\in\Cprimeall$ as an input, interacts
  with ${\cal A}$ and outputs a bit, as described in \Algref{Cminus11_random-k-or-not}.
  Then, there is a negligible function $\negl$ such that $\left|\Pr[0\leftarrow\left\langle B_{0},{\cal A}\right\rangle ]-\Pr[0\leftarrow\left\langle B_{1},{\cal A}\right\rangle ]\right|\le\mathsf{negl}$.
\end{lem}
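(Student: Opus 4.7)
The plan is to mirror essentially verbatim the argument used for Lemma~\ref{lem:1guessK=00003DgoodK} in the $(1,1)$-compiler case, with the only substantive change being that the first message now carries an encryption of a reduced context $C' = C\setminus q_{\skp}$ (together with a vector of encrypted answers $c_{\mathbf{a}'}$) rather than of a single question~$q'$. None of the algebra in the previous proof depends on whether the QFHE-encrypted message is one question or several, so the reduction goes through with cosmetic relabelling only.

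Concretely, I would construct an adversary $\calA_{\opad}$ against the $\opad$ security game of Section~\ref{subsec:Definition_OPad} that internally simulates the QFHE side of $B_0/B_1$ by itself. That is, $\calA_{\opad}$ samples $\sk\leftarrow\QFHE.\gen(1^{\lambda})$, prepares $c\leftarrow\QFHE.\enc_{\sk}(C')$, receives $\opad.\pk$ from its external $\opad$ challenger $C_{\opad}$, forwards $(c,\opad.\pk)$ to $\calA$, and receives $(c_{\mathbf{a}'},\hat{k}'',s')$ in return. It then hands $s'$ to $C_{\opad}$, obtains $k_{b}$ (which is either $\opad.\dec(\opad.\sk,s')$ or a uniform $k_{0}\leftarrow K$), computes $k''$ from $\hat{k}''$ via $\sk$, finds $k$ such that $U_{k}=U_{k''}U_{k_b}$, and plays the rest of the game with $\calA$ using this $k$ (rewinding $\calA$ as $B_{0}/B_{1}$ would in order to produce its final bit $b'$, which is finally forwarded to $C_{\opad}$).

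The central observation is that when $b=1$ the view of $\calA$ inside $\calA_{\opad}$ is identical to its view inside $B_{0}$ (the correct $k$ case), and when $b=0$ it is identical to its view inside $B_{1}$ (since mapping a uniform $k_{0}$ through $U_{k}=U_{k''}U_{k_{0}}$ still yields a uniform $k$, because $\mathbf{U}$ is a group). Hence, exactly as in the derivation leading to Equation~\eqref{eq:securityLemma1}, one obtains
\[
\Pr[\accept\leftarrow\langle\calA_{\opad},C_{\opad}\rangle]
= \tfrac{1}{2} + \tfrac{1}{2}\bigl(\Pr[0\leftarrow\langle B_{0},\calA\rangle] - \Pr[0\leftarrow\langle B_{1},\calA\rangle]\bigr).
\]
Considering also the complementary adversary that flips $b'$, and invoking $\opad$-soundness, the signed difference is bounded by $\negl(\lambda)$ in absolute value, giving the claimed indistinguishability.

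The only step that warrants care, and which I expect to be the main (minor) obstacle, is verifying that the simulation inside $\calA_{\opad}$ truly reproduces the distribution of $B_{0}$ (resp.\ $B_{1}$) on the nose, not just statistically. This requires noting that (i) $\calA_{\opad}$ can sample $\sk$ and encrypt $C'$ exactly as the honest challenger of $\G'$ does, so the first message seen by $\calA$ is identically distributed; (ii) the rewinding steps and final bit computation in $B_{0}/B_{1}$ are defined to be applied to $\calA$ after the $(q,k)$ message, and $\calA_{\opad}$ performs them with the same $k$ that $B_{b}$ would (correct when $b=1$, uniform when $b=0$); and (iii) the group property of $\mathbf{U}$ used in the third step is already guaranteed by Definition~\ref{def:QFHEcomptableopad}. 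With these checks in place, the argument of Lemma~\ref{lem:1guessK=00003DgoodK} transports without modification.
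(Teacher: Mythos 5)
Your proposal is correct and matches the paper's approach: the paper simply notes that this lemma is proved analogously to Lemma~\ref{lem:1guessK=00003DgoodK}, whose proof is exactly the $\opad$-reduction you describe (simulate the $\QFHE$ side internally, forward $s'$ to the $\opad$ challenger, use the group property of $\mathbf{U}$ so that a uniform $k_b$ yields a uniform $k$, and conclude via the flipped-bit adversary and $\opad$ soundness). The adaptation to a first message encrypting $C'$ and an answer vector $c_{\mathbf{a}'}$ is indeed purely cosmetic, just as you argue.
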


\begin{algorithm}
  \begin{centering}
    \begin{tabular}{|>{\raggedright}p{5cm}|>{\centering}p{3cm}|>{\raggedright}m{5.5cm}|}
      \multicolumn{1}{>{\raggedright}p{5cm}}{${\cal A}$} & \multicolumn{1}{>{\centering}p{3cm}}{}                  & \multicolumn{1}{>{\raggedright}m{5.5cm}}{$B_{0}(C')$ (resp. $B_{1}(C')$)}\tabularnewline
      \cline{1-1} \cline{3-3}
                                                        &           &       \tabularnewline
                                                         &                                                         & $\sk\leftarrow\QFHE.\gen(1^{\lambda})$

      $c\leftarrow\QFHE.\enc_{\sk}(C')$

      \,

      $(\opad.\pk,\opad.\sk)\leftarrow\opad(1^{\lambda})$\tabularnewline
                                                         & {\Large $\xleftarrow{(c,\opad.\pk)}$}                   & \tabularnewline
                                                         &                                                         & \tabularnewline
                                                         & {\Large $\xrightarrow{(c_{\mathbf{a}'},\hat{k}'',s')}$} & \tabularnewline
                                                         &                                                         & $B_{0}$ computes $k':=\opad.\Dec(\opad.\sk,s')$ and uses the secret
      key $\sk$ to compute $k''$ and then finds the $k$ satisfying $U_{k}=U_{k''}U_{k'}$.

      (resp. \textbf{$B_{1}$} samples a uniform $k\leftarrow K$).\tabularnewline
                                                         & {\Large $\overset{(q,k)}{\circlearrowleft}$}                     & Potentially rewinds ${\cal A}$ to this step and queries with $(q,k)$
      for arbitrary $q\in Q$.\tabularnewline
                                                         &                                                         & Runs an arbitrary procedure to compute a bit $b'$.\tabularnewline
                                                         &          &       \tabularnewline
      \cline{1-1} \cline{3-3}
    \end{tabular}
    \par\end{centering}
  \caption{\label{alg:Cminus11_random-k-or-not}Whether a PPT adversary ${\cal A}$
    for the compiled contextuality game $\protect\G'$ is used with the
    correct $k$ or a uniformly random $k$, it makes no difference, if all algorithms
    involved are PPT.}
\end{algorithm}

The following we will check carefully at the end. It says that one
can treat ${\cal A}$ as though it is consistent.
\begin{lem}[Consistency only helps]
  \label{lem:Cminus11_2EstimateAssumingConsistent}Let
  \begin{itemize}
    \item ${\cal C}_{k\leftarrow K}$ be exactly the same as the challenger 
          ${\cal C}$ for $\G'$ except that it samples $k\leftarrow K$ uniformly,
          instead of computing it correctly, let
    \item ${\cal A}$ be any PPT algorithm that plays $\G'$ and $\Pr[\accept\leftarrow\left\langle {\cal A},{\cal C}_{k\leftarrow K}\right\rangle ]=p$
          and denote by
    \item $p_{C'\tau}$ be as described above.
  \end{itemize}
  Then
  \[
    p\le\left(1-\frac{1}{|C|}\right)+\sum_{C\in\Call}\Pr(C)\sum_{q_{\skp}\in C}\frac{1}{|C|}\sum_{\tau}p_{C',\tau}\pred(\tau[C],C)
  \]
  where $C'=C\backslash q_{\skp}$, and $\Pr(C)$ denotes the probability
  with which ${\cal C}$ samples the context $C$.
\end{lem}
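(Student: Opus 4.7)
The plan is to adapt the proof of Lemma \ref{lem:2EstimateAssumingConsistent} from the $(1,1)$ compiler, generalising ``consistency'' from equality of a single encrypted/cleartext pair to agreement of the entire encrypted answer vector $\mathbf{a}'$ with the extracted truth table $\tau$ on $C'$. First I would construct an auxiliary adversary $\mathcal{A}'$ (which may be QPT, since it serves only to upper bound $p$) that on input $c \leftarrow \QFHE.\enc_{\sk}(C')$ runs $\mathcal{A}$ to obtain its first-round message, then rewinds $\mathcal{A}$ against a uniform $k \leftarrow K$ to extract a truth table $\tau$, exactly as in Algorithm~\ref{alg:Cminus11_ATruthTable}. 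In place of $\mathcal{A}$'s encrypted answer, $\mathcal{A}'$ sends $c_{\tau[C']} \leftarrow \QFHE.\Eval_{\tau}(c)$, and in round 2 responds to the cleartext question $q$ with $\tau(q)$. By construction, the marginal distribution of $(C',\tau)$ produced by $\mathcal{A}'$ matches $p_{C',\tau}$, and $\mathcal{A}'$'s encrypted answers are deterministically $\tau[C']$, so the consistency branch of the verifier's test is always passed.

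Next I would prove a pointwise bound, conditioning on $C$, $q_{\skp}$, $\tau$, and the encrypted answer vector $\mathbf{a}'$ that $\mathcal{A}$ produces, and averaging only over the uniform round-$2$ question $q \leftarrow C$: the conditional acceptance probability of $\mathcal{A}$ is at most that of $\mathcal{A}'$. The argument is a short case split on whether $\mathbf{a}' = \tau[C']$: in the consistent case both sides are equal, and in the inconsistent case the number of positions $q \in C'$ at which $\mathbf{a}'$ and $\tau$ agree is at most $|C|-2$, so (after scaling by $|C|$) the $\mathcal{A}$-side of the inequality is bounded above by $|C|-1$ while the $\mathcal{A}'$-side is bounded below by $|C|-1$. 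Averaging the pointwise inequality over all protocol randomness then yields $p \leq \Pr[\accept \leftarrow \langle \mathcal{A}', \mathcal{C}_{k\leftarrow K}\rangle]$.

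Finally, since $\mathcal{A}'$ passes the consistency test with probability $\tfrac{|C|-1}{|C|}$ and triggers the predicate test with probability $\tfrac{1}{|C|}$, a direct calculation using that $\mathcal{A}'$'s encrypted answers are $\tau[C']$ gives
$$\Pr[\accept \leftarrow \langle \mathcal{A}', \mathcal{C}_{k\leftarrow K}\rangle] = \left(1 - \tfrac{1}{|C|}\right) + \tfrac{1}{|C|}\sum_{C\in\Call}\Pr(C)\sum_{q_{\skp}\in C}\tfrac{1}{|C|}\sum_{\tau}p_{C',\tau}\pred(\tau[C],C),$$
which is in fact sharper than the lemma's claim; dropping the leading $\tfrac{1}{|C|}$ in front of the predicate term recovers the stated bound. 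The main obstacle is verifying the pointwise inequality in the inconsistent case, but the case split above makes it essentially immediate; the remainder is bookkeeping analogous to the $(1,1)$ case, with care that one indexes over $(C,q_{\skp})$ pairs rather than over $C'$ alone, since distinct parent contexts can produce the same $C'$.
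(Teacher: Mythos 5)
Your proposal is correct and follows essentially the same route as the paper: the same reduction adversary $\mathcal{A}'$ (truth-table extraction by rewinding against a uniform $k$, homomorphic evaluation of $\tau$ on the encrypted $C'$, clear answers $\tau(q)$), the same consistent/inconsistent case split showing $\mathcal{A}'$ is accepted at least as often as $\mathcal{A}$, and the same final accounting. Your exact expression with the extra leading $\frac{1}{|C|}$ on the predicate term is indeed sharper than the stated bound (which follows by dropping that factor since the summands are non-negative), and it is this sharper form that the subsequent soundness calculation for the $(|C|-1,1)$ compiler effectively relies on to arrive at the $\valNC/|C|$ term.
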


The first term captures the probability that the consistency test
passes and the second one captures the probability that the predicate
test passes.
\begin{lem}
  \label{lem:Cminus11_3PrecisionNonIssue}Exactly the same as \Lemref{3PrecisionNonIssue}
  except that
  \begin{itemize}
    \item ${\cal C}$ is the challenger for the game produced by the $(|C|-1,1)$
          compiler,
    \item instead of $p_{q'\tau}$, use $p_{C'\tau}$, and
    \item the construction of ${\cal A}_{2}$ is as in \Algref{Cminus11_A_2_reduction}.
  \end{itemize}
  Let ${\cal A}$ be such that $\Pr[\accept\leftarrow\left\langle {\cal A},{\cal C}\right\rangle ]\ge1-\frac{1}{|C|}+\frac{\valNC}{|C|}+\epsilon$
  and let ${\cal A}_{2,\epsilon}$ be as in \Lemref{3PrecisionNonIssue}.
  Then, it holds that
  \[
    \Pr[\accept\leftarrow\left\langle {\cal A}_{2,\epsilon},{\cal C}_{2}\right\rangle ]\ge\Pr[\accept\leftarrow\left\langle {\cal A}_{2},{\cal C}_{2}\right\rangle ]-O(\epsilon^{3}).
  \]

\end{lem}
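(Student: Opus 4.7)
The plan is to mirror the argument used to prove \Lemref{3PrecisionNonIssue} (equivalently, \Propref{finitePrecisionD-IND_prime}), adapted to the $(|C|-1,1)$ setting by replacing $p_{q'\tau}$ throughout with $p_{C'\tau}$ and invoking \Lemref{Cminus11_2EstimateAssumingConsistent} in place of \Lemref{2EstimateAssumingConsistent}. The key intermediate step is a separation claim analogous to \Claimref{epsilonsquare}: assuming $\Pr[\accept\leftarrow\langle{\cal A},{\cal C}\rangle]\ge 1-\frac{1}{|C|}+\frac{\valNC}{|C|}+\epsilon$, there exist $C'_{*0}\neq C'_{*1}\in\Cprimeall$ satisfying
\[
\|p_{C'_{*0}}-p_{C'_{*1}}\|_{1}:=\sum_{\tau}\bigl|p_{C'_{*0},\tau}-p_{C'_{*1},\tau}\bigr|\ge\Omega(\epsilon^{2})
\]
on the infinite subset of security parameters where $\epsilon$ is non-negligible (in the ``Big-$\Omega$ on $\Lambda$'' sense of \Subsecref{notation}).

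I would prove this separation claim by contrapositive, exactly as in the proof of \Claimref{epsilonsquare}. If every pair of distributions $p_{C'}$ were $O(\epsilon^{2})$-close in $L^{1}$, then fixing a reference $C'_{*}$ and substituting $p_{C'_{*},\tau}$ for $p_{C',\tau}$ inside the bound of \Lemref{Cminus11_2EstimateAssumingConsistent} incurs cumulative error $O(\epsilon^{2})$, since $|\Cprimeall|=O(1)$ in $\lambda$. After this substitution the second term reorganises---using $\sum_{q_{\skp}\in C}\tfrac{1}{|C|}=1$---into a convex combination over $\tau$ of $\sum_{C}\Pr(C)\pred(\tau[C],C)$, which is by definition bounded by $\valNC/|C|$ (after accounting for the appropriate normalisation, exactly as in the derivation leading to \Eqref{valNC_at_most}). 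This yields $\Pr[\accept\leftarrow\langle{\cal A},{\cal C}\rangle]\le 1-\frac{1}{|C|}+\frac{\valNC}{|C|}+O(\epsilon^{2})$, contradicting the assumption since $\epsilon\gg\epsilon^{2}$ on $\Lambda$.

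Given the separation claim, the finite-precision reduction ${\cal A}_{2,\epsilon}$ proceeds exactly like ${\cal A}_{2}$ but uses estimates $\hat{p}_{C',\tau}$ with $|\hat{p}_{C',\tau}-p_{C',\tau}|\le O(\epsilon^{3})$ instead of exact probabilities. Because the number of truth tables $\tau:Q\to A$ is $O(1)$ in the security parameter ($Q$ and $A$ being independent of $\lambda$), the precision $O(\epsilon^{3})$ suffices for ${\cal A}_{2,\epsilon}$ to (i) identify some pair $C'_{*0},C'_{*1}$ whose estimated $L^{1}$ distance is at least $\Omega(\epsilon^{2})$, and (ii) construct sets $\hat{T}_{0},\hat{T}_{1}$ that differ from the ideal $T_{0},T_{1}$ only on truth tables $\tau$ with $|p_{C'_{*0},\tau}-p_{C'_{*1},\tau}|\le O(\epsilon^{3})$. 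The total contribution of such misclassified $\tau$'s to the acceptance probability is therefore $O(\epsilon^{3})$, and computing $\Pr[\accept\leftarrow\langle{\cal A}_{2,\epsilon},C_{2}\rangle]$ via the same identity $\tfrac{1}{2}+\tfrac{1}{2}\sum_{\tau\in T_{0}}(p_{C'_{*0},\tau}-p_{C'_{*1},\tau})$ used in the base case yields the desired inequality $\Pr[\accept\leftarrow\langle{\cal A}_{2,\epsilon},C_{2}\rangle]\ge\Pr[\accept\leftarrow\langle{\cal A}_{2},C_{2}\rangle]-O(\epsilon^{3})$.

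The main delicacy, as before, is the separation claim: one must carefully verify that after substituting $p_{C'_{*},\tau}$ the $(C,q_{\skp})$-dependent weights disentangle from the $\tau$-dependent factor $\pred(\tau[C],C)$, so that the resulting expression is genuinely a convex combination over non-contextual strategies and is thus bounded by $\valNC/|C|$. Everything else is routine error accumulation, justified by the fact that $|\Cprimeall|$ and the number of truth tables are $O(1)$ in the security parameter.
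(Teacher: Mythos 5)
Your proposal is correct and follows essentially the same route as the paper, which proves this lemma simply by declaring it analogous to \Lemref{3PrecisionNonIssue}, i.e.\ by repeating the two-step argument of \Propref{finitePrecisionD-IND_prime}: an $\Omega(\epsilon^{2})$ separation claim (the analogue of \Claimref{epsilonsquare}, derived from \Lemref{Cminus11_2EstimateAssumingConsistent} exactly as in the paper's own soundness derivation for the $(|C|-1,1)$ compiler), followed by the $O(\epsilon^{3})$-precision error accounting over the constantly many contexts in $\Cprimeall$ and truth tables. The $1/|C|$ normalisation in front of the $\valNC$ term that you flag is handled the same way in the paper's derivation, so no further work is needed.
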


The following is the contrapositive of the soundness guarantee in \Thmref{Cminus11_CompiledGameIsSecure}.
\begin{thm}[Soundness condition restated from \Thmref{Cminus11_CompiledGameIsSecure}]
  \label{thm:Cminus11_MainSoundness} Suppose
  \begin{itemize}
    \item ${\cal A}$ is any PPT algorithm that wins with probability $\Pr[\accept\leftarrow\left\langle {\cal A},{\cal C}\right\rangle ]\ge1-\frac{1}{|C|}+\frac{\valNC}{|C|}+\epsilon$
          for some non-negligible function $\epsilon$ and
    \item the $\opad$ used is secure, then
  \end{itemize}
  there is a PPT algorithm ${\cal A}_{2,\epsilon}$ that wins the $2$-IND
  security game of the $\QFHE$ scheme with probability
  \[
    \Pr[\accept\leftarrow\left\langle {\cal A}_{2,\epsilon},C_{2}\right\rangle ]\ge\frac{1}{2}+\nonnegl,
  \]
  where $\nonnegl$ is a non-negligible function that depends on $\epsilon$.
\end{thm}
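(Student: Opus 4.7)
The plan is to mirror the reduction template already used for Theorem \ref{thm:MainSoundness} and Theorem \ref{thm:C1_MainSoundness}, adapting the three helper lemmas (Lemma \ref{lem:Cminus11_1guessK=00003DgoodK}, Lemma \ref{lem:Cminus11_2EstimateAssumingConsistent}, Lemma \ref{lem:Cminus11_3PrecisionNonIssue}) that have already been stated for the present $(|C|-1,1)$ compiler. First I would invoke Lemma \ref{lem:Cminus11_1guessK=00003DgoodK} to argue that substituting the verifier's correctly-computed key $k$ by a uniformly random $k \leftarrow K$ affects $\mathcal{A}$'s acceptance probability by at most a negligible amount, so that the analysis may proceed against ${\cal C}_{k\leftarrow K}$. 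Next, I would appeal to Lemma \ref{lem:Cminus11_2EstimateAssumingConsistent} to bound the acceptance probability in terms of the rewinding probabilities $p_{C',\tau}$, which are well-defined for $C' \in \Cprimeall$ via the procedure ${\cal A}_2.\truthtable$ of Algorithm \ref{alg:Cminus11_ATruthTable}.

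The key combinatorial step will be a contrapositive argument: I would show that if, for every pair $C'_{*0} \neq C'_{*1}$ in $\Cprimeall$, the distributions $p_{C'_{*0},\cdot}$ and $p_{C'_{*1},\cdot}$ are negligibly close in $\ell_1$, then $\Pr[\accept] \leq (1 - 1/|C|) + \valNC/|C| + \negl$, contradicting the assumed advantage $\epsilon$. Concretely, fix an arbitrary $C'_* \in \Cprimeall$, set $p_\tau := p_{C'_*,\tau}$, substitute $p_{C',\tau}$ by $p_\tau$ everywhere in the bound from Lemma \ref{lem:Cminus11_2EstimateAssumingConsistent} (incurring only a negligible additive error), exchange the order of summation, collapse the inner $\sum_{q_\skp}(1/|C|)$ since the predicate no longer depends on $q_\skp$, and recognise $\sum_\tau p_\tau \sum_C \Pr(C)\pred(\tau[C],C)$ as the value achieved by a convex combination of deterministic non-contextual assignments, which is at most $\valNC$. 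The contradiction yields $C'_{*0} \neq C'_{*1}$ with $\eta := \sum_\tau |p_{C'_{*0},\tau} - p_{C'_{*1},\tau}| \ge \Omega(\epsilon)$ non-negligible.

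With these indices fixed and the disjoint bucketing $T_0, T_1$ defined exactly as in the previous two compilers, I would run the reduction $\mathcal{A}_2$ of Algorithm \ref{alg:Cminus11_A_2_reduction}: it forwards $(C'_{*0}, C'_{*1})$ to the $2$-IND challenger $C_2$ of the underlying $\QFHE$ scheme, receives an encryption $c_b$ of $C'_{*b}$, simulates the $\opad$ itself, feeds $c_b$ together with its $\opad$ public key to $\mathcal{A}$, then rewinds $\mathcal{A}$ on uniformly-random $k$'s to extract a truth table $\tau$, and returns $b' = 0$ if $\tau \in T_0$ and $b' = 1$ otherwise. The standard chain of equalities, identical to the one performed in Section \ref{subsec:Step1}, yields
\[
\Pr[\accept \leftarrow \langle \mathcal{A}_2, C_2\rangle] \;=\; \tfrac{1}{2} + \tfrac{1}{2}\sum_{\tau \in T_0}\left(p_{C'_{*0},\tau} - p_{C'_{*1},\tau}\right) \;=\; \tfrac{1}{2} + \tfrac{\eta}{4}.
\]
Finally, Lemma \ref{lem:Cminus11_3PrecisionNonIssue} upgrades this to a PPT reduction by guaranteeing that estimating each $p_{C',\tau}$ to precision $O(\epsilon^3)$ (which is feasible in PPT since $|\Cprimeall|$ and the space of truth tables are polynomial in the relevant parameters) preserves the advantage up to a subtracted $O(\epsilon^3)$ term, which remains non-negligible.

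The main obstacle, as in the previous two compilers, is the contrapositive combinatorial step: one must carefully exploit the fact that, under the hypothesis that all $p_{C',\cdot}$ are (negligibly) identical, the many-to-one coupling $C \mapsto C' = C \setminus q_\skp$ collapses through the $\sum_{q_\skp}(1/|C|)$ averaging to expose a genuine non-contextual strategy. A secondary subtlety is that the $\QFHE$ challenger must accept $(C'_{*0}, C'_{*1}) \in \Cprimeall \times \Cprimeall$ as its two candidate plaintexts, but since $\QFHE$ $2$-IND holds for arbitrary distinct plaintext strings this is immediate; one simply encodes $C'$ as an ordered tuple of $|C|-1$ question labels. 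All other subtleties (consistency of $\mathcal{A}$, handling of $\hat k''$ and $s'$, finite-precision estimation of $p_{C',\tau}$, and the $\opad$-secure substitution of a uniform $k$ for the correct one) are already dispatched by the three helper lemmas.
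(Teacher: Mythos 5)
Your proposal matches the paper's proof essentially step for step: reduce to the uniform-$k$ challenger via Lemma \ref{lem:Cminus11_1guessK=00003DgoodK}, bound acceptance in terms of $p_{C',\tau}$ via Lemma \ref{lem:Cminus11_2EstimateAssumingConsistent}, argue by contradiction (collapsing the $\sum_{q_\skp}\frac{1}{|C|}$ average and recognising a convex combination of non-contextual assignments bounded by $\valNC$) that some pair $C'_{*0}\neq C'_{*1}$ has non-negligible $\ell_1$ separation, and then run the two-phase reduction of \Algref{Cminus11_A_2_reduction} to win $2$-IND with advantage $\eta/4$, with precision handled by Lemma \ref{lem:Cminus11_3PrecisionNonIssue}. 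The only minor quantitative slip is claiming $\eta\ge\Omega(\epsilon)$ rather than the $\Omega(\epsilon^{2})$ that the paper's asymptotic-negation argument actually yields, but since only non-negligibility is needed this does not affect correctness.
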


In \Subsecref{Cminus11_step1}, we prove \Thmref{Cminus11_MainSoundness} assuming \Lemref{Cminus11_1guessK=00003DgoodK, Cminus11_2EstimateAssumingConsistent, Cminus11_3PrecisionNonIssue}.
In \Subsecref{Cminus11_step2}, we prove the lemmas.

\subsection{Proof assuming the lemmas (Step 1 of 2)}\label{subsec:Cminus11_step1}

{\begin{proof}
    From \Lemref{Cminus11_1guessK=00003DgoodK}, we have that $|\Pr[\accept\leftarrow\left\langle {\cal A},{\cal C}\right\rangle ]-\Pr[\accept\leftarrow\left\langle {\cal A},{\cal C}_{k\leftarrow K}\right\rangle ]|\le\negl$.
    Recall the definition of $p_{C'\tau}$ and use \Lemref{Cminus11_2EstimateAssumingConsistent}
    to write

    \begin{equation}
      \Pr[\accept\leftarrow\left\langle {\cal A},{\cal C}\right\rangle ]-\negl\le\left(1-\frac{1}{|C|}\right)+\sum_{C\in\Call}\Pr(C)\sum_{q_{\skp}\in C}\frac{1}{|C|}\sum_{\tau}p_{C',\tau}\pred(\tau[C],C)\label{eq:Cminus11_Pr_acc_A_C}
    \end{equation}
    where recall that $C':=C\backslash q_{\skp}$. Observe also that there
    exists $C'_{*0}\neq C'_{*1}$ such that
    \begin{equation}
      \left\Vert p_{C'_{*0}}-p_{C'_{*1}}\right\Vert _{1}:=\sum_{\tau}\left|p_{C'_{*0}\tau}-p_{C'_{*1}\tau}\right|\ge\eta\label{eq:Cminus11_p_q_star0tau_minus_p_q_star1tau}
    \end{equation}
    for some non-negligible function $\eta$. This is a consequence of
    the assumption that
    \begin{equation}
      \Pr[\accept\leftarrow\left\langle {\cal A},{\cal C}\right\rangle ]\ge1-\frac{1}{|C|}+\frac{\valNC}{|C|}+\epsilon\label{eq:Cminus11_A_wins_against_C_w_non-negl}
    \end{equation}
    for some non-negligible function $\epsilon$. To see this, suppose for contradiction that for all $C'_{*0}\neq C'_{*1}$, it were the
    case that $\sum_{\tau}|p_{C'_{*0}\tau}-p_{C'_{*1}\tau}|\le\negl$
    for some negligible function, then one could write, using \Eqref{Cminus11_Pr_acc_A_C}
    and $p_{\tau}:=p_{C'_{*0}\tau}$
    \begin{align*}
      \Pr[\accept\leftarrow\left\langle {\cal A},{\cal C}\right\rangle ]\le & \left(1-\frac{1}{|C|}\right)+\sum_{C\in\Call}\Pr(C)\cancel{\sum_{q_{\skp}\in C}\frac{1}{|C|}}\sum_{\tau}p_{\tau}\pred(\tau[C],C)+\negl' \\
      =                                                                     & \left(1-\frac{1}{|C|}\right)+\sum_{\tau}p_{\tau}\sum_{C\in\Call}\Pr(C)\pred(\tau[C],C)+\negl'                                           \\
      \le                                                                   & 1-\frac{1}{|C|}+\frac{\valNC}{|C|}+\negl'.
    \end{align*}
    But this contradicts \Eqref{Cminus11_A_wins_against_C_w_non-negl}
    and thus \Eqref{Cminus11_p_q_star0tau_minus_p_q_star1tau} holds for
    some $C'_{*0}\neq C'_{*1}$ as claimed.

    The remaining analysis is the same as the $(1,1)$ case, briefly,
    note that
    \begin{align*}
      \Pr[\accept\leftarrow\left\langle {\cal A}_{2},{\cal C}_{2}\right\rangle ]= & \frac{1}{2}\cdot\sum_{\tau\in T_{0}}\Pr[{\cal A}\text{ outputs }\tau|C'_{*0}\text{ was encrypted}]+ \\
                                                                                  & \frac{1}{2}\cdot\sum_{\tau\in T_{1}}\Pr[{\cal A}\text{ outputs }\tau|C'_{*1}\text{ was encrypted}]  \\
      =                                                                           & \frac{1}{2}+\frac{1}{2}\sum_{\tau\in T_{0}}(p_{C'_{*0}\tau}-p_{C'_{*1}\tau})                        \\
      \ge                                                                         & \frac{1}{2}+\frac{\eta}{4}.
    \end{align*}
    Since $\eta$ is non-negligible, existence of a PPT ${\cal A}$ satisfying
    \Eqref{Cminus11_A_wins_against_C_w_non-negl} breaks the security
    of the underlying $\QFHE$ scheme. The precision issue is handled
    by invoking \Lemref{Cminus11_3PrecisionNonIssue}. This completes
    the proof.
  \end{proof}
}

\subsection{Proof of the lemmas (Step 2 of 2)}\label{subsec:Cminus11_step2}

We only prove \Lemref{Cminus11_2EstimateAssumingConsistent}. The proofs of
\Lemref{Cminus11_1guessK=00003DgoodK} and \Lemref{Cminus11_3PrecisionNonIssue}
are analogous to those of \Lemref{1guessK=00003DgoodK} and
\Lemref{3PrecisionNonIssue}, respectively.

{\begin{proof}[Proof of \Lemref{Cminus11_2EstimateAssumingConsistent}]
    Consider the adversary ${\cal A}'$ in \Algref{Cminus11_red_for_lemma2}
    that uses ${\cal A}$ to interact with ${\cal C}_{k\leftarrow K}=:{\cal C}'$
    (not to be confused with $C'$ which denotes a context with exactly
    one question removed).
    \begin{algorithm}[h]
      \begin{centering}
        \begin{tabular}{|>{\raggedright}p{1cm}|>{\centering}p{2cm}|>{\raggedright}m{4cm}|>{\raggedright}m{2cm}|>{\raggedright}m{5cm}|}
          \multicolumn{1}{>{\raggedright}p{1cm}}{${\cal A}$} & \multicolumn{1}{>{\centering}p{2cm}}{}                  & \multicolumn{1}{>{\raggedright}m{4cm}}{${\cal A}'$}     & \multicolumn{1}{>{\raggedright}m{2cm}}{}                         & \multicolumn{1}{>{\raggedright}m{5cm}}{${\cal C}_{k\leftarrow K}=:\calC'$}\tabularnewline
          \cline{1-1} \cline{3-3} \cline{5-5}
                                                             &                                                         &                                                         & \centering{}                                                     & \tabularnewline
                                                             & {\Large $\xleftarrow{(c,\opad.\pk)}$}                   &                                                         & \centering{}{\Large $\xleftarrow{(c,\opad.\pk)}$}                & Recall: $c\leftarrow\enc_{\sk}(C')$ for $C'=C\backslash q_{\skp}$.\tabularnewline
                                                             &                                                         &                                                         & \centering{}                                                     & \tabularnewline
                                                             & {\Large $\xrightarrow{(c_{\mathbf{a}'},\hat{k}'',s')}$} &                                                         & \centering{}                                                     & \tabularnewline
                                                             &                                                         & $\tilde{k}\leftarrow K$                                 & \centering{}                                                     & \tabularnewline
                                                             & {\Large $\overset{(\tilde{q},\tilde{k})}{\circlearrowleft}$}     & Rewinds ${\cal A}'$ to learn the truth table $\tau$.    & \centering{}                                                     & \tabularnewline
                                                             &                                                         & Evaluate $c_{\tau[C']}\leftarrow\QFHE.\Eval_{\tau}(c).$ & \centering{}{\Large $\xrightarrow{(c_{\tau(q')},\hat{k}'',s')}$} & \tabularnewline
                                                             &                                                         &                                                         & \centering{}{\Large $\xleftarrow{(q,k)}$}                        & \tabularnewline
                                                             &                                                         &                                                         & \centering{}{\Large $\xrightarrow{a:=\tau(q)}$}                  & Recall: $\mathbf{a}'=\dec_{\sk}(c_{\tau[C']})=\tau[C']$

          $\ $

          If $q\in C'$,

          $\quad$accept if $a=\mathbf{a}'[q]$

          If $q\notin C'$ (i.e. $q=q_{\skp}$),

          $\quad$accept if ${\rm pred}(\mathbf{a}'\cup a,C)=1$ $\quad$where
          $\mathbf{a}'\cup a$ denotes answers indexed by $C$.\tabularnewline
          &         &           &       &       \tabularnewline
          \cline{1-1} \cline{3-3} \cline{5-5}
        \end{tabular}
        \par\end{centering}
      \caption{\label{alg:Cminus11_red_for_lemma2}${\cal A}'$ uses ${\cal A}$
        (a PPT algorithm), to play the compiled contextuality game $\protect\G'$.
        Its winning probability upper bounds that of ${\cal A}$ and can
        be computed in terms of $p_{C'\tau}$ for ${\cal A}$.}
    \end{algorithm}
    We show that
    \begin{align}
      \Pr[\accept\leftarrow\left\langle {\cal A},{\cal C}'\right\rangle ] & \le\Pr[\accept\leftarrow\left\langle {\cal A}',{\cal C}'\right\rangle ]\label{eq:Cminus11_Aprime_just_as_well}                                                   \\
                                                                          & =\left(1-\frac{1}{|C|}\right)+\sum_{C\in\Call}\Pr(C)\sum_{q_{\skp}\in C}\frac{1}{|C|}\sum_{\tau}p_{C',\tau}\pred(\tau[C],C)\label{eq:Cminus11_Aprime_at_most}
    \end{align}
    where recall that $C'=C\backslash q_{\skp}$.

    Note that for any given $C'$, the challenger asks any specific $q$
    with probability $1/|C|$, which in particular means that $q=q_{\skp}$
    with probability $1/|C|$ and $q\neq q_{\skp}$ with probability $1-1/|C|$.

    Let's derive \Eqref{Cminus11_Aprime_just_as_well}. Consider the interactions
    $\left\langle {\cal A},{\cal C}'\right\rangle $ and $\left\langle {\cal A}',{\cal C}'\right\rangle $.
    Conditioned on $C'$, there are two cases: (1) ${\cal A}$ is consistent,
    in which case, the answers given by ${\cal A}'$ and ${\cal A}$ are
    identical, or (2) ${\cal A}$ is inconsistent in which case ${\cal A}$
    fails with probability \emph{at least} $1/|C|$ (because the challenger
    spots the inconsistency with probability at least $1/|C|$), while
    ${\cal A}'$ fails with probability \emph{at most} $1/|C|$ (because
    it at most (potentially) fails the predicate evaluation, which happens
    with probability exactly $1/|C|$).

    As for \Eqref{Cminus11_Aprime_at_most}, it follows because ${\cal C}'$
    selects a context $C$ with probability $\Pr(C)$, it asks $q\neq q_{\skp}$
    with probability $1-1/|C|$ which corresponds to doing a consistency
    test---which ${\cal A}'$ passes with probability $1$ by construction.
    Finally, note that $\mathbf{a}'\cup a=\tau[C]$. Now, ${\cal C}'$
    asks $q=q_{\skp}$ with probability $1/|C|$ and in this case, ${\cal A}'$
    responds with $\tau[C]$ with probability $p_{C',\tau}$. Thus, its
    success probability in this case is the weighted average of $\pred(\tau[C],C)=\pred(\mathbf{a}'\cup a,C)$
    where the weights are given by $p_{C',\tau}$ (recall that $C'=C\backslash q_{\skp}$).
    This completes the proof.
  \end{proof}
}

\newpage{}

\pagebreak{}

\bibliographystyle{alpha}
\bibliography{Vertical}

\end{document}